\definecolor{weborange}{rgb}{.8,.3,.3}
\definecolor{webblue}{rgb}{0,0,.8}
\definecolor{internallinkcolor}{rgb}{0,.5,0}
\definecolor{externallinkcolor}{rgb}{0,0,.5}
\providecommand{\remove}[1]{}
\newcommand{\Draft}[1]{\ifdefined\IsDraft \texttt{ #1} \fi}
 \newcommand{\authnote}[2]{{\bf [{\color{red} #1's Note:} {\color{blue} #2}]}}
 \newcommand{\authnote}[2]{}
\newcommand{\ToPR}[1]{{\bf [{\color{blue} #1}]}}
\newcommand{\ToPR}[1]{}
\def\endoc{

\newcommand{\GtC}{\mathsf{GapToComTransform}}
\newcommand{\ColFinder}{\MathAlg{ColFinder}}

\newcommand{\Gs}{{\widetilde{\Gc}}}
\newcommand{\Gss}{{\widehat{\Gc}}}
\newcommand{\Gso}{{\mathring{\Gc}}}

\newcommand{\tT}{{\widetilde{T}}}
\newcommand{\Ts}{\tT}
\newcommand{\tR}{{\widetilde{R}}}
\newcommand{\tr}{{\widetilde{r}}}
\newcommand{\tY}{{\widetilde{Y}}}
\newcommand{\tW}{{\widetilde{W}}}

\newcommand{\tGamma}{{\widetilde{\Gamma}}}

\newcommand{\hT}{{\widehat{T}}}
\newcommand{\hY}{{\widehat{Y}}}
\newcommand{\hy}{{\widehat{y}}}
\newcommand{\hR}{{\widehat{R}}}

\newcommand{\ts}{{\widetilde{t}}}
\newcommand{\tsso}{{\mathring{t}}}

\newcommand{\Tss}{\widehat{T}}
\newcommand{\tss}{\widehat{t}}

\newcommand{\Eso}{\widetilde{E}}
\newcommand{\Tso}{\mathring{T}}
\newcommand{\tso}{\mathring{t}}

\newcommand{\Gbs}{{\widetilde{\Gb}}}
\newcommand{\Tbs}{{\widetilde{\mathbb{T}}}}

\newcommand{\vg}{{\vect{g}}}
\newcommand{\vy}{{\vect{y}}}
\newcommand{\vz}{{\vect{z}}}
\newcommand{\vx}{{\vect{x}}}
\newcommand{\vxp}{{\vect{x'}}}
\newcommand{\vr}{{\vect{r}}}
\newcommand{\vX}{{\vect{X}}}
\newcommand{\vXp}{{\vect{X'}}}

\newcommand{\vR}{{\vect{R}}}
\newcommand{\vG}{{\vect{G}}}
\newcommand{\wtG}{\widetilde{G}}

\newcommand{\vt}{\textbf{t}}
\newcommand{\vT}{T}
\newcommand{\vtl}{{t^\ell}}
\newcommand{\vTl}{{T^\ell}}

\newcommand{\Gl}{\Gc^\ell}
\newcommand{\Rs}{{\widetilde{\Rc}}}
\newcommand{\Ss}{{\widetilde{\Sc}}}
\newcommand{\InvG}{\Inv^{\Gs}}

\newcommand{\vY}{{\bf{Y}}}

\newcommand{\Tuto}[1]{\ifdefined\IsTuto  #1 \fi}
\newcommand{\Full}[1]{\ifdefined\IsTuto \else  #1 \fi}
\newcommand{\ee}{w}
\newcommand{\Eq}{\myOptName{Equalizer}}

\newcommand{\kmaxp}{{k'_{\textsc{acc}}}}
\newcommand{\krealp}{{k'_{\textsc{real}}}}

\newcommand{\mt}{{\tilde m}}
\newcommand{\tell}{{\widetilde{\ell}}}

\newcommand{\Xl}{\widetilde{X}}
\newcommand{\Dl}{\widetilde{D}}

\newcommand{\Xbln}{\Xbl(n)} 
\newcommand{\Ybl}{Y^{\brk \ee} } 
\newcommand{\Ybln}{\Ybl(n)} 
\newcommand{\Xpln}{X^{\prn \ee}(n)} 
\newcommand{\Ypl}{Y^{\prn \ee} } 
\newcommand{\Ypln}{Y^{\prn \ee}(n)} 
\newcommand{\hYpl}{\widehat{Y}^{\prn \ee}}

\newcommand{\Xprn}{X^{\prn t}(n)} 
\newcommand{\Ypr}{Y^{\prn t} } 
\newcommand{\Yprn}{Y^{\prn t}(n)} 

\newcommand{\Xsq}{X^{\seq t} } 
\newcommand{\Xsqn}{X^{\seq t}(n)} 
\newcommand{\Ysq}{Y^{\seq t} } 
\newcommand{\Ysqn}{Y^{\seq t}(n)}

\newcommand{\Yl}{\widetilde{Y}}

\newcommand{\tRR}{R}	
\newcommand{\tYY}{Y}		
\newcommand{\vv}{v}

\newcommand{\Sss}{{\widehat{\Sc}}}
\newcommand{\pp}{c}

\newcommand{\Inote}[1]{\authnote{Iftach}{#1}}
\newcommand{\Snote}[1]{\authnote{Salil}{#1}}
\newcommand{\hwnote}[1]{\authnote{Hoeteck}{#1}}
\newcommand{\Onote}[1]{\authnote{Omer}{#1}}
\newcommand{\Gbl}{\Gc^{\brk \ee}}

\newcommand{\Gt}{\G^{\seq \vv}}
\newcommand{\RNum}[1]{\uppercase\expandafter{\romannumeral #1\relax}}
\newcommand{\imax}{{i_{\sf max}}}
\def\Fail{\mathsf{Fail}}

\title{Inaccessible Entropy  I:\\ Inaccessible Entropy Generators and Statistically Hiding Commitments from One-Way Functions\thanks{A preliminary version, with a  different notion of  accessible entropy, appeared in \cite{HaitnerReVaWe09}.}
	\Draft{\\\small \sc Working Draft: Please Do Not Distribute}
   }

\author{Iftach Haitner\thanks{School of Computer Science, Tel Aviv University. E-mail:
 \texttt{iftachh@tauex.tau.ac.il}. Research supported by ERC starting grant 638121 and US-Israel BSF grant 2010196. Member of the  Check Point Institute for Information Security.} \and  Omer Reingold\thanks{Computer Science Department, Stanford University, reingold@stanford.edu. Research supported by US-Israel BSF grant 2006060.} \and Salil Vadhan\thanks{School of Engineering and Applied Sciences and Center for Research on Computation and Society, Harvard University. E-mail: \texttt{salil@eecs.harvard.edu}. Work done in part while visiting U.C. Berkeley, supported by  the Miller Institute for Basic Research in Science and a Guggenheim Fellowship.  Also supported by NSF grant CNS-0831289 and US-Israel BSF grant 2006060.} \and Hoeteck Wee\thanks{ENS, Paris, France. E-mail:
 \texttt{wee@di.ens.fr}. Part of this work was done while a post-doc at Columbia University, supported in part by NSF Grants CNS-0716245 and SBE-0245014.}}

\begin{document}
\begin{titlepage}

\maketitle

\begin{abstract}
	
We put forth a new computational notion of entropy, measuring the
(in)feasibility of sampling high-entropy strings that are consistent with a
given generator. Specifically, the \ith output block of a generator
$\Gc$ has {\em accessible entropy} at most $k$  if the following holds:  when conditioning on its  prior coin tosses, no polynomial-time strategy $\Gs$ can generate valid output  for $\Gc$'s  \ith output block with entropy greater than $k$.  A generator has {\em inaccessible entropy} if the total accessible entropy (summed over the blocks) is noticeably smaller than the real entropy of $\Gc$'s output.

As an application of the above notion, we improve upon the result of \citeauthor*{HaitnerNgOnReVa09} [Sicomp '09],  presenting a much simpler and more efficient construction of statistically hiding commitment schemes from arbitrary one-way functions.
\end{abstract}

\vfill
{\bf Keywords:} computational complexity; cryptography; commitment schemes;  one-way functions; computational entropy

\thispagestyle{empty}
\end{titlepage}

\Tableofcontents

\sloppy
\section{Introduction}\label{sec:intro}

Computational analogues of information-theoretic notions have given rise to some of the most interesting phenomena in the theory of computation. For example, \emph{computational indistinguishability}, a computational analogue of statistical  indistinguishability introduced by \citet{GoldwasserMi84}, enabled the bypassing of \citet{Shannon67}'s impossibility results on perfectly secure encryption~\cite{Shannon67} and provided the basis for the computational theory of pseudorandomness~\cite{BlumM82,Yao82B}. \emph{Pseudoentropy}, a computational analogue of entropy introduced by \citet*{HastadImLeLu99}, was the key to their fundamental result that established  the equivalence of pseudorandom generators and one-way functions and has become a basic concept in complexity theory and cryptography. The above notions were further refined in  \cite{HaitnerReVa13,VadhanZheng2012}, leading to to much simpler and more efficient constructions of pseudorandom generators based on one-way functions.

In this work, we introduce another computational analogue of entropy  we call {\em accessible entropy}. We use this notion to build a simpler construction of  statistically hiding  and computationally binding  commitment schemes from arbitrary one-way functions, a construction that is significantly  simpler and more efficient than the previous construction of \citet{HaitnerNgOnReVa09}. Before describing accessible entropy (and the complementary notion of {\em inaccessible entropy}), we review the standard information-theoretic notion of entropy and the computational notion of pseudoentropy  \cite{HastadImLeLu99}.

\subsection{Entropy and Pseudoentropy}

Recall that the {\em entropy} of a random variable $X$ is defined to be $\HSh(X)\eqdef \eex{x\getsr X}{\log \frac1 {\pr{X=x}}}$, which measures the number of ``bits of randomness'' in $X$ (on average). We will refer to $\HSh(X)$ as the {\em real entropy} of $X$ to contrast with the computational analogues that we study.

\citet{HastadImLeLu99} define the  {\em pseudoentropy}  of a random variable $X$ to be (at least) $k$ if there exists a random variable $Y$ of entropy (at least) $k$ such that $X$ and $Y$ are computationally indistinguishable. Pseudoentropy is interesting and useful since, assuming one-way  functions exist,   there exist random variables whose pseudoentropy is larger than their real entropy. For example, the output of a pseudorandom generator $\Gc \colon \zo^\ell\mapsto \zn$ on a uniformly random seed has entropy at most $\ell$, but has pseudoentropy $n$ (by definition). \citet{HastadImLeLu99} proved that from {\em any} efficiently samplable distribution $X$ whose pseudoentropy is noticeably larger than its real entropy, it is possible to construct a pseudorandom generator. By showing, in addition, how to construct such a distribution $X$ from any one-way function, \citet{HastadImLeLu99} effectively proved their theorem that the existence of one-way functions implies the existence of pseudorandom generators.

Notions of pseudoentropy as above  are only useful as a \emph{lower} bound on the ``computational entropy'' in a distribution. Indeed, it can be shown that \emph{every} distribution on $\zn$ is computationally indistinguishable from a distribution of entropy at most $\polylog n$.  While several other computational analogues of entropy have been studied in the literature (\cf \cite{BarakShWi}), all are  meant to capture the idea that a distribution ``behaves like'' one of higher entropy. In this paper, we explore a way in which a distribution can ``behave like'' one of much {\em lower} entropy.

\subsection{Inaccessible Entropy}\label{sec:IAE}
We begin by defining accessible entropy, from which we can derive inaccessible entropy. The notion of {\em accessible entropy} presented below is useful as an \emph{upper} bound on computational entropy. We motivate the idea of accessible entropy with an example. Let $\G$ be the following two-block generator:

\begin{algorithm}[Generator $\G$]
	
	\item Let $m\ll n$ and let   $\h  = \set{h\colon \zn\mapsto \zm}$ be  a \textit{family of collision-resistant hash functions}.\footnote{Given $h \getsr\h$, it is infeasible to find distinct $x,x' \in \zn$  with $h(x) = h(x')$.}

	\item On public parameter $h \getsr \h$:
	\begin{enumerate}
		
		\item Sample $x\getsr \zn$.
		
		\item Output $y=h(x)$.
		\item Output $x$.
	\end{enumerate}
\end{algorithm}

Information-theoretically,  conditioned on  $h$ and its first output block $y$, the second output block of $\G$ (namely $x$) has entropy at least $n-m$.  This is since  $(h,y=h(x))$ reveals at most  $m$ bits of information about $x$. The collision-resistance property of $h$, however, implies  that given the {\em state} of $\G$ after it outputs its first block $y$, there is at most one consistent value of $x$ that can be computed in polynomial time  with non-negligible probability. (Otherwise, we would be able to find two distinct messages $x\neq x'$ such that $h(x)=h(x')$.) This holds even if $\G$ is replaced by any polynomial-time cheating strategy $\Gs$. Thus, there is ``real entropy'' in $x$ (conditioned on  $h$ and the first output of $\G$), but it is ``computationally inaccessible'' to $\Gs$, for which the entropy of  $x$ is effectively $0$.

We generalize this basic idea to  consider both the real and accessible entropy accumulated over several blocks of a  generator. Consider an \emph{$m$-block generator} $\G\colon \zn \mapsto (\zs)^m$, and let $(Y_1,\ldots,Y_m)$ be random variables denoting the $m$ output blocks generated by applying $\G$ over randomness $U_n$ (no public parameters are given). We define the {\em real entropy} of $\G$ as $\HSh(G(U_n))$, the Shannon entropy of $\G(U_n)$, which is equal to

$$\sum_{i\in [m]} \HSh(Y_i\mid Y_{< i})$$ 

\noindent
for $\HSh(X\mid Y)=\Exp_{y\getsr Y}[\HSh(X\mid _{Y=y})]$, which is the standard notion of (Shannon) conditional entropy, and $Y_{< i} = Y_1 \ldots, Y_{ i-1}$.

To define {\em accessible entropy}, consider the following  \ppt (probabilistic polynomial-time) cheating algorithm $\Gs$:  before outputting the \ith block, it tosses some fresh random coins $r_i$, and uses them to  output a string $y_i$. We restrict our attention to \emph{$\G$-consistent} (adversarial) generators --- $\Gs$'s  output is always in the support of $\G$ (though it might be distributed differently). Now, let $(R_1,Y_1,\ldots,R_m,Y_m)$ be random variables
corresponding to a random execution of $\Gs$. We define the {\em accessible entropy} achieved by $\Gs$ to be

$$\sum_{i\in [m]} \HSh(Y_i\mid Y_{<i}, R_{<i})= \sum_{i\in [m]} \HSh(Y_i\mid R_{<i}).$$

Namely,  we compute the entropy conditioned not just on the previous output blocks $Y_{<i}$ (which are  determined by $R_{<i}$), as  done when computing the real entropy of $\G$, but also on the local state of $\Gs$ prior to outputting the \ith block (which \wlg equals its coin tosses $R_{<i}$). For a given $\G$, we define its \emph{accessible entropy} as the maximal accessible entropy achieved by a $\G$-consistent, polynomial-time generator $\Gs$.\footnote{The above informal definitions are simplified or restricted compared to our actual definitions in several ways. In particular, for some of our reductions it is beneficial to work with real {\em min-}entropy and/or accessible {\em max-}entropy rather than real and accessible Shannon entropy as defined above, and formulating conditional versions of these measures is a bit more delicate.}  Thus, in contrast to pseudoentropy, accessible entropy is useful for expressing
the idea that the ``computational entropy'' in a distribution is {\em smaller} than its real entropy.
We refer to the difference $\text{(real entropy)}-\text{(accessible entropy)}$ as the {\em inaccessible entropy}
of the protocol.

It is important to note that if we put no  restrictions on the computational power of a $\G$-consistent $\Gs$, then its  accessible entropy can always be as high as the real entropy of $\Gc$; to generate its \ith block $y_i$, $\Gs$ samples $x$ uniformly at random from the set $\set{x'\colon \Gc(x')_1 = y_1,\ldots,\Gc(x')_{i-1} = y_{i-1}}$. This strategy, however, is not always possible for a computationally bounded $\Gs$.

The collision resistance example given earlier provides evidence   that when allowing public parameters,  there are efficient generators whose computationally accessible entropy is much smaller than their real Shannon entropy. Indeed, the real entropy of the generator we considered above is $n$ (namely, the total entropy in $x$), but its accessible entropy is at most $m+\negl(n) \ll n$, where $m$ is the output length of the collision-resistant hash function.

\subsection{Constructing  an Inaccessible Entropy Generator from One-Way Functions}
It turns out that  we do not need to assume collision resistance or use pubic parameters  to have a generator whose real entropy is significantly larger than its accessible entropy; any one-way function can be used to construct an inaccessible entropy generator (without public parameters).  In particular, we prove  the following result:
 
 \begin{theorem}[Inaccessible entropy generator from one-way functions, informal]\label{thm:IAEfromOEFInf}
  For a function $f\colon \zn \mapsto\zn$, let $\Gc$ be the  $(n+1)$-block generator defined by 
\begin{align*}
\Gc(x) = (f(x)_1,f(x)_2,\ldots,f(x)_n,x).
\end{align*}
Assuming $f$ is a one-way function, then the accessible entropy of $\Gc$ is at most  $n-\log n$. Since the real entropy of $\G(U_n)$ is $n$, it follows that $\Gc$   has $\log n$ bits of inaccessible entropy.

\end{theorem}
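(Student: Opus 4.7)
My plan is to prove the contrapositive: if there exists a $\Gc$-consistent \ppt adversarial generator $\Gs$ whose accessible entropy exceeds $n - \log n$, then I will construct an efficient inverter for $f$, contradicting one-wayness. The central structural observation I will exploit is that $\Gc$-consistency forces $\Gs$'s $(n{+}1)$-st block to be an $f$-preimage of its first $n$ output bits. So if I can steer $\Gs$ into emitting a chosen $n$-bit string $y^{*}$ as its first-$n$-block output, then its final block will automatically yield an $f$-preimage of $y^{*}$, which I can output as the inversion answer.

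First I set up the entropy accounting. The real entropy of $\Gc(U_n)$ equals exactly $n$, split as $\HSh(f(U_n))$ over the first $n$ output bits and $n - \HSh(f(U_n))$ over the last block. Since conditioning on $R_{<i}$ is at least as informative as conditioning on $Y_{<i}$, and since $Y_1,\ldots,Y_n$ are determined by $X' = Y_{n+1}$ via $Y_i = f(X')_i$, I obtain the master inequality
\begin{align*}
 \sum_{i=1}^{n+1} \HSh(Y_i \mid R_{<i}) \;\leq\; \HSh(Y_1,\ldots,Y_n) + \Exp_{y\leftarrow Y}\!\left[\log |f^{-1}(y)|\right],
\end{align*}
where $Y$ denotes the distribution of $\Gs$'s first $n$ output blocks. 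Combined with the assumed lower bound $n - \log n$ on the left-hand side, this forces $\Gs$'s behavior to closely mimic the honest sampler of $\Gc(U_n)$: both the Shannon entropy of $Y$ and the expected log-size of the preimage distribution realized by $\Gs$'s last block are within $\log n$ of the values attained by the real distribution.

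Next I construct the inverter. On input $y^{*} = f(x^{*})$ for uniform $x^{*}$, the inverter $\Ac$ simulates $\Gs$ block-by-block, forcing its first $n$ output bits to equal $y^{*}$ via adaptive rejection sampling on $\Gs$'s per-block coins $R_i$. Once all $n$ bits match, $\Ac$ runs $\Gs$'s final step and outputs the resulting $X'$, which by $\Gc$-consistency satisfies $f(X') = y^{*}$. To bound both the expected rejection cost and the overall success probability, I use the chain rule together with a Jensen-style inequality to convert the master bound above into the quantitative estimate
\begin{align*}
\Exp_{y^{*}\leftarrow f(U_n)}\!\left[\Pr_{\Gs}[Y = y^{*}]\right] \;\geq\; 1/\poly(n),
\end{align*}
which yields a \ppt algorithm inverting $f$ with non-negligible probability.

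The main technical obstacle I expect is the final step: translating $\sum_i \HSh(Y_i\mid R_{<i}) > n - \log n$ into a polynomially small rejection cost \emph{and} a $1/\poly(n)$ inversion probability simultaneously, in the face of the fact that high per-block conditional entropies can make individual per-step rejection probabilities small. This will require a careful averaging argument over $y^{*}$ and over the random prefixes $R_{<i}$, and is most cleanly carried out after passing from Shannon to the min-entropy/max-entropy formulations of real and accessible entropy alluded to in the footnote of \cref{sec:IAE}. The $\log n$ slack in the bound is exactly what is needed to preserve a $1/\poly(n)$ advantage after these conversions.
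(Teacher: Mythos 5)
Your reduction has exactly the right architecture: proceed by contrapositive, take the rewinding inverter that rejection-samples $\Gs$'s per-block coins until the first $n$ output bits hit the challenge $y^*$, and exploit $\Gc$-consistency to make the $(n+1)$-st block a preimage. That is the paper's inverter verbatim. The gap is in the quantitative analysis, and it is a real one.

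First, your ``master inequality'' is correct but too weak to force the conclusion you want. Its right-hand side is saturated at $n$ by the honest generator, and it can also be close to $n$ for quite different distributions on $Y$ (for instance, a $\Gs$ that always outputs a fixed high-preimage-mass $y$ has $\HSh(Y)=0$ but $\Exp_{y}[\log|f^{-1}(y)|]$ large). So the inequality by itself does not ``force $\Gs$'s behavior to closely mimic the honest sampler''; you need to compare the adversary's transcript distribution to the inverter's embedded transcript distribution directly, not just compare two scalar entropy quantities. Second, and more seriously, the displayed estimate
\begin{align*}
\Exp_{y^{*}\getsr f(U_n)}\!\left[\Pr_{\Gs}[Y = y^{*}]\right] \;\geq\; 1/\poly(n)
\end{align*}
is false even for the \emph{honest} generator: there it equals $\sum_{y}\Pr[f(U_n)=y]^2 = \cp(f(U_n))$, which is $2^{-n}$ when $f$ is a permutation. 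The quantity you actually need to lower-bound is the success probability of the rewinding inverter, which is an expectation of a \emph{product} of per-step acceptance probabilities $\Pr[Y_i=y^*_i \mid R_{<i}=r_{<i}]$ along the random prefixes the inverter itself samples; this has a very different structure from a single $\Pr_{\Gs}[Y=y^*]$.

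The paper closes this gap with a clean KL-divergence argument that dispenses with the averaging difficulties you anticipate and, contrary to your last paragraph, never needs to pass to min- or max-entropy. Letting $\tT$ be the transcript of a standalone execution of $\Gs$ and $\hT$ the transcript of the (unbounded) inverter's embedded execution on $f(U_n)$, the paper computes $\kld{\tT}{\hT}$ exactly via the chain rule for KL-divergence: the coin-choice terms cancel (they are identically distributed under both processes when conditioned on the same prefix), and the output-block terms telescope to
$\HSh(\tY_{m+1}\mid\tR_{\le m}) - \Exp[\Haccsam_{\Gs}(\tT)] - \Exp_{y}[\log\Pr[\hY_{\le m}=y]]$.
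Substituting $\HSh(\tY_{m+1}\mid\tR_{\le m}) = \Exp_y[\log|f^{-1}(y)|]$ and $\Pr[\hY_{\le m}=y]=|f^{-1}(y)|/2^n$ collapses this to $n - \Exp[\Haccsam_{\Gs}(\tT)]\le c\log n$. A data-processing step applied to the indicator that the inverter produces a valid preimage (intersected with a mild ``each per-step acceptance probability is not tiny'' event, which holds with probability $1/2$ under $\tT$) then yields success probability $\ge 1/n^{c+1}$, from which the polynomial rejection-sampling budget follows. You would do well to replace your scalar entropy inequality and Jensen step with this distribution-level KL comparison.
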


Interestingly,  the definition of  $\Gc$ used in the above theorem  is the same as the construction of a next-block pseudoentropy generator from a one-way function used by \citet{VadhanZheng2012}, except that we have broken  $f(x)$, other than   $x$, into one-bit blocks.

To show that $\Gc$ has accessible entropy at most $n-\log n$, we show that the existence of an efficient algorithm whose accessible entropy is too high yields an efficient inverter for $f$ (contradicting its one-wayness). Assume for simplicity that $f$ is a permutation and that there exists an efficient algorithm whose accessible entropy is as high as the real entropy of $\Gc$ (\ie $n$). Let $Y_i$ denote the $i$'th output block of $\Gs$ and let $R_i$ be the coins used by $\Gs$ to produce $Y_i$. By assumption, $\sum_{i\in [n]}\HSh(Y_i\mid R_{<i}) = \HSh(f(U_n)) = n$. Since $\HSh(Y_i\mid R_{<i}) \leq 1$ for every $i\in [n]$, it follows that $\HSh(Y_i\mid R_{<i}) =1$ for every $i\in [n]$.

The above observation yields the following strategy to invert $f$ on input $y$: for $i=1$ to $n$, keep sampling values for $R_{i}$ until the induced value of $Y_i$ is equal to $y_i$. The high accessible entropy of each output block of $\Gs$  yields that the expected number of samples per output bit is two. Therefore, with high probability, after a polynomial number of samples the above process terminates successfully, making the first $n$ output bits of $\Gs$ to be equal to $y$. This implies an efficient inverter for $f$, since when the above process is done, the definition of $\Gs$ yields that its ``justification'' for  the $n$'th output block is the preimage of $y$.

\subsection{Statistically Hiding Commitment from One-way Functions}
As an application of the above accessible entropy notion, we present  a much simpler and more efficient construction of statistically hiding commitment schemes from arbitrary one-way functions.   Our construction builds such a commitment from a generator with noticeable  accessible entropy whose existence is guaranteed by \cref{thm:IAEfromOEFInf}. The resulting scheme  conceptually unifies the construction of statistically \emph{hiding} commitments from one-way functions with the construction of statistically \emph{binding} commitments from one-way functions (the latter being due to \cite{HastadImLeLu99,Naor91}): the first step of both
constructions is to obtain a gap between real entropy and ``computational entropy'' (pseudoentropy in the case of statistical binding and accessible entropy in the case of statistical hiding), which is then amplified by repetitions and finally combined with various forms of hashing.

We start by describing the above mentioned commitment schemes, and then explain how to construct them from an inaccessible entropy generator. 

\subsubsection{Commitment Schemes}

A {\em commitment scheme} is the cryptographic analogue of a safe. It is a two-party protocol between a {\em sender} $\Sc$ and a {\em receiver} $\Rc$ that consists of two stages. The {\em commit stage} corresponds to putting an object in a safe and locking it. In this stage, the sender ``commits'' to a private message $m$. The {\em reveal stage} corresponds to unlocking and opening the safe. In this stage, the sender ``reveals'' the message $m$ and ``proves'' that it was the value committed to in the commit stage (without loss of generality, by revealing coin tosses consistent with $m$ and the transcript of the commit stage).

Commitment schemes have two security properties. The {\em hiding} property informally states that at the end of the commit stage, an adversarial receiver has learned nothing about the message $m$, except with negligible probability. The {\em binding} property states that after the commit stage, an adversarial sender cannot output valid openings for two distinct messages, except with negligible probability. Both of these security properties come in two flavors --- {\em statistical}, where we require security even against a computationally unbounded adversary, and {\em computational}, where we only require security against feasible (e.g.
polynomial-time) adversaries.

Statistical security is preferable to computational security, but it is easy to see that commitment schemes that are both statistically hiding
and statistically binding do not exist.  Instead we have to settle for one of the two properties being statistical and the other being computational.
Statistically binding (and computationally hiding) commitments have been well understood for a long time. Indeed, \citet{Naor91} showed
how to build a two-message statistically binding commitment using any pseudorandom generator;  thus, in combination with
the construction of pseudorandom generators from any one-way function \cite{HastadImLeLu99}, we obtain two-message statistically binding commitments
from the minimal assumption that one-way functions exist.

In contrast, our understanding of {\em statistically hiding} commitments has lagged behind.   \citet{HaitnerNgOnReVa09} have shown that statistically hiding commitment schemes can be constructed from any one-way function. Their construction, however, is very complicated and inefficient.  In this paper, we show that these two types of commitments are closely connected to the notion of inaccessible entropy, that is, with protocols having a gap between real entropy and accessible entropy.

Consider a statistically hiding commitment scheme
in which the sender commits to a message of length $k$, and suppose we run the protocol with the message $m$ chosen uniformly at random
in $\zo^k$. Then, by the statistical hiding property, the {\em real entropy} of the message $m$ after the commit stage is $k-\negl(n)$.
On the other hand, the computational binding property states that the {\em accessible entropy} of $m$ after the commit stage is at most $\negl(n)$. Our main technical contribution is the converse to the above observation.

\subsubsection{Statistically Hiding Commitment Schemes from  Inaccessible Entropy  Generators}\label{sec:intro:SHCfroAE}

\begin{theorem}[inaccessible entropy to commitment, informal]\label{thm:gap-to-commit-intro}
	Assume there exists an $m$-block efficient generator with real entropy $n$ and accessible entropy at most $k(1- \delta)$. Then there exists a   $\Theta(m^2/\delta^2)$-round  statistically hiding commitment scheme.  Moreover, if we allow the protocol to use nonuniform advice,  the round complexity is reduced to  $\Theta(m/\delta)$. 
\end{theorem}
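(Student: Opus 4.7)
The plan is to mimic the template of the \citet{HastadImLeLu99}--\citet{Naor91} statistically binding construction, with accessible (max-)entropy playing the role that pseudoentropy plays there. I would proceed in three stages: amplify the Shannon gap into a max-entropy gap via parallel repetition, execute a block-by-block hashed commit phase, and finish with a single extractor-based commitment to a bit.

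First I would reduce accessible \emph{Shannon} entropy to accessible \emph{max} entropy. Take $t$ independent copies of $\Gc$ to obtain an $m$-block generator $\Gc^t$ whose $i$-th block is the concatenation of $t$ independent copies of $\Gc$'s $i$-th block. By standard flattening arguments (AEP/Chernoff for independent samples), with $t = \Theta(m/\delta^2)$ copies the per-block \emph{real min-entropy} of $\Gc^t$ is within $o(t \delta)$ of $t$ times the corresponding real Shannon entropy of $\Gc$, while no efficient $\Gc^t$-consistent adversary can exceed $t$ times the accessible Shannon entropy of $\Gc$ in accessible max-entropy (the latter requires the reduction that turns a high accessible max-entropy strategy on $\Gc^t$ into a high accessible Shannon entropy strategy on $\Gc$, using projections onto single copies). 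The upshot is a new generator in which per-block real min-entropy exceeds per-block accessible max-entropy by roughly $t \delta = \Theta(m/\delta)$.

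Second, the commit phase implements $\Gc^t$ block-by-block with hashing interleaved. In round $i$, the sender plays the $i$-th block $\vy_i$, and the receiver responds with a fresh $2$-universal hash $h_i$ whose output length is chosen just above the per-block accessible max-entropy of $\Gc^t$; the sender must send back $h_i(\vy_i)$. Choosing the output lengths this way achieves two effects simultaneously. By the leftover hash lemma applied to the real min-entropy, the transcript leaks essentially none of the random bits in $\vy_i$ beyond the hashed values, so the joint residual min-entropy accumulated across blocks is at least the total gap $\Theta(m/\delta)$ bits. At the same time, because a cheating sender is committed to blocks of accessible max-entropy bounded by the output lengths of the $h_i$, any two inconsistent openings must produce two distinct sequences that agree under every $h_i$, which (via a union bound and the $2$-universality of the $h_i$) forces a contradiction with the bound on accessible max-entropy.

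Third, the sender commits to a bit $b$ by sending $b \oplus \Ext(\vy_1, \ldots, \vy_m; s)$, with seed $s$ supplied by the receiver and $\Ext$ a standard $2$-universal extractor. Statistical hiding follows from the leftover hash lemma, since the residual real min-entropy accumulated across the $m$ rounds is $\omega(1)$. Computational binding follows by the reduction sketched above: a \ppt sender opening the commitment to both $0$ and $1$ yields, after standard rewinding and book-keeping, a $\Gc$-consistent generator whose accessible max-entropy exceeds $k(1-\delta)$, contradicting the hypothesis. The round complexity $\Theta(m^2/\delta^2)$ arises because each of the $m$ block rounds is itself executed on $t = \Theta(m/\delta^2)$ parallel copies whose global concentration requires sequential structure in the amplification step; the nonuniform improvement to $\Theta(m/\delta)$ comes from using advice to fix the per-block real/accessible entropies of $\Gc$, thereby skipping the concentration repetition and needing only $\Theta(1/\delta)$ copies per block.

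The main obstacle, and the step I would grind through most carefully, is the binding reduction: one must convert a \ppt cheating sender of the full protocol back into a $\Gc$-consistent adversarial generator $\Gs$ whose \emph{accessible (max-)Shannon entropy} exceeds the stated upper bound, correctly attributing the entropy gained per round by the cheating sender to entropy in $\Gs$'s output blocks while absorbing the composition losses from the $t$-fold repetition and the $m$ interactive hashing rounds. This is where a clean definition of accessible max-entropy (and its robustness under concatenation and conditioning on short hashes) is essential, and it is the place where the $\Theta(m/\delta)$ (or $\Theta(m/\delta^2)$ without advice) blow-up is unavoidable.
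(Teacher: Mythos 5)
Your high-level plan --- direct product to turn Shannon entropy gaps into min/max-entropy gaps, per-block hashing that caps a cheater's accessible entropy while leaving residual real min-entropy, and a final two-universal extractor applied to the residual --- does track the paper's construction. But the binding argument as you sketch it would fail, and that is a fatal gap. You claim a single 2-universal hash $h_i$ with output length just above the per-block accessible max-entropy $a_i$ suffices, ``via a union bound.'' It does not: conditioned on the cheater's coins, the low-accessible-entropy set has size up to $2^{a_i}$, so a union bound over roughly $2^{2a_i}$ pairs against a hash of range size $2^{a_i + O(\log n)}$ gives collision probability around $2^{a_i}/\poly(n)$, which is nowhere near negligible; and boosting the output length to $\approx 2a_i$ (making $h_i$ injective on the low set) destroys the hiding margin. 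The paper instead uses the interactive-hashing idea of Ding et al.: a $t$-wise independent hash of output length exactly $a_i$, whose fibers intersect the low set in only $O(t)$ points with overwhelming probability (a concentration bound, not mere 2-universality), followed by a pairwise-independent hash of range $2^n$ to isolate one point, and then --- crucially --- a universal one-way hash function. The UOWHF is what handles adaptivity: the information-theoretic interactive hashing only shows that at most one element of a \emph{pre-determined} low set survives, but a cheating sender will typically exhibit one consistent string inside that set and a second one chosen \emph{adaptively} outside it. Only the computational target-collision-resistance of the UOWHF rules this out. Your proposal has no ingredient playing either the concentration role or the UOWHF role, so the claimed binding does not follow.

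A second, smaller gap is the missing entropy equalization step. Your per-block hash lengths must be tuned to the per-block real/accessible entropies, but a uniform protocol does not know these quantities (they need not be efficiently computable), and direct product does not equalize entropy across blocks --- it only scales each block by the same factor. The paper inserts a truncated sequential repetition of $\Gc$ with a random offset, which makes every block's conditional real entropy equal to the average; it is this sequential step, not the parallel copies, that inflates the round complexity by the extra $\Theta(m/\delta)$ factor. Your explanation that the parallel copies ``require sequential structure in the amplification step'' is not right --- parallel copies are free in round complexity --- and the nonuniform shortcut you describe (advice fixes the per-block entropies so equalization can be skipped) in fact confirms that equalization, not parallelism, is what costs the extra rounds.
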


We prove \cref{thm:gap-to-commit-intro} via few modular steps:
\begin{description}
	\item[Entropy equalization] Using sequential repetition with a ``random offset'', we convert the generator into one for which we  {\em know} the real entropy in each block  (rather than just knowing the total entropy) and there remains a noticeable gap between the real entropy and the accessible entropy. 
	
	\item[Gap amplification] Taking the direct product of the generator (\ie invoking it many times in  parallel)   has the effect of (a) converting the real entropy to real {\em min-}entropy, and (b) amplifying the gap between the real entropy and accessible entropy.
	
	\item[Constructing the commitment scheme] By applying a constant-round hashing protocol in each round (based on the interactive hashing protocol of \cite{DingHRS04} and universal one-way hash functions \cite{NaorYu89,Rompel90}), we obtain a receiver public-coin {\em weakly binding} statistically  hiding commitment scheme.  The  latter commitment is then  amplified into a full-fledged commitment  using  parallel repetition.

\end{description}

\paragraph{Statistically hiding commitments from one-way functions.}
Combining \cref{thm:IAEfromOEFInf,thm:gap-to-commit-intro} yields that the minimal assumption that one-way functions exist  implies the existence of statistically hiding commitment schemes,  reproving \cite{HaitnerNgOnReVa09}.

\begin{theorem}[Statistically hiding commitments from one-way functions, informal]\label{thm:oef-to-commit-intro}
Assume one-way functions exist.  Then there exists an  $\Theta(n^2/\log n^2)$-round statistically hiding commitment scheme, where $n$ is the input length of the one-way function.  Moreover, if we allow the protocol to use nonuniform advice, the round complexity is reduced to  obtain $\Theta(n/\log n)$. 
\end{theorem}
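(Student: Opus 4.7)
The plan is to derive \cref{thm:oef-to-commit-intro} as a direct composition of \cref{thm:IAEfromOEFInf} and \cref{thm:gap-to-commit-intro}. Given a one-way function $f \colon \zn \to \zn$ (assumed length-preserving without loss of generality by a standard padding/truncation argument, followed by an input-length alignment so that $f$ is hard to invert on $\zn$), I first apply \cref{thm:IAEfromOEFInf} to obtain the $(n+1)$-block inaccessible entropy generator
\[\Gc(x) = (f(x)_1, f(x)_2, \ldots, f(x)_n, x),\]
whose real entropy equals $n$ (already witnessed by the trailing block $x$, which is uniform) and whose accessible entropy is bounded by $n - \log n$ under the one-wayness of $f$. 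Thus the block count is $m = \Theta(n)$ and the inaccessible-entropy gap is $\Theta(\log n)$ bits.

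Next, I will feed $\Gc$ into \cref{thm:gap-to-commit-intro}. Matching its hypothesis ``real entropy $n$, accessible entropy at most $k(1-\delta)$'' against our generator fixes $k = n$ and the normalized gap parameter $\delta$ is the one induced by an additive gap of $\log n$ bits. Plugging $m = \Theta(n)$ and this $\delta$ into the round-complexity bounds of \cref{thm:gap-to-commit-intro} then yields $\Theta(n^2/\log^2 n)$ rounds in the uniform case (from the $\Theta(m^2/\delta^2)$ bound) and $\Theta(n/\log n)$ rounds in the nonuniform case (from the $\Theta(m/\delta)$ bound). Statistical hiding is inherited from the real-entropy structure of $\Gc$'s output, while computational binding against \ppt senders is preserved by the polynomial-time reduction of \cref{thm:gap-to-commit-intro} together with the one-wayness of $f$.

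Once the two component theorems are invoked, the proof is essentially bookkeeping. The only subtlety I anticipate is parameter alignment: one must verify that the additive $\log n$ gap output by \cref{thm:IAEfromOEFInf} is handled correctly by the entropy-equalization and gap-amplification steps that appear inside the proof of \cref{thm:gap-to-commit-intro}, so that the resulting round count matches the stated asymptotics rather than picking up spurious factors of $m$. For the nonuniform improvement, the relevant advice is an accurate estimate of the per-block real entropy needed by the entropy-equalization step; supplying it directly (rather than searching over polynomially many guesses in the uniform reduction) is precisely what eliminates the quadratic-in-$m/\delta$ blowup and delivers the $\Theta(n/\log n)$ nonuniform bound.
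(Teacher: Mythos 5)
Your high-level decomposition is exactly the paper's: the proof of Theorem~\ref{thm:OWFtoSHC} is literally ``combine Theorem~\ref{thm:AEGfromOWF} with Theorem~\ref{theorem:GapToCom}.'' However, the parameter bookkeeping in your sketch does not work out, for two related reasons.

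First, the block count. You take $m=\Theta(n)$ from the informal generator of Theorem~\ref{thm:IAEfromOEFInf} (single-bit blocks), but the construction actually used in the body (Construction~\ref{ctr:AEG}) groups $f(x)$ into $\log n$-bit chunks, so $m = n/\log n + 1$. This difference matters. Plugging the single-bit version into the formal Theorem~\ref{theorem:GapToCom}, whose round complexity is $O(mpg/k)$ with $p$ defined by ``accessible entropy $\le (1-1/p)k$'', gives: $m=\Theta(n)$, $k=n$, $g=\Theta(n)$ (since $\Hmax(\Gc(U_n))\le n$), and $p=\Theta(n/\log n)$, yielding $O(mpg/k)=O(n^2/\log n)$ --- a factor of $\log n$ worse than claimed. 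With the $\log n$-bit blocks of Construction~\ref{ctr:AEG}, $m=n/\log n$ and the same $p,g,k$ give $O(mpg/k)=O(n^2/\log^2 n)$, matching the theorem.

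Second, and independently, your arithmetic against the informal Theorem~\ref{thm:gap-to-commit-intro} is off. With $k=n$, the relative gap is $\delta=\log n/n$ as you state, but then $m^2/\delta^2 = \Theta(n)^2\cdot(n/\log n)^2 = \Theta(n^4/\log^2 n)$, not $\Theta(n^2/\log^2 n)$; similarly $m/\delta = \Theta(n^2/\log n)$, not $\Theta(n/\log n)$. The informal round-complexity expressions in the introduction simply cannot be taken at face value with $\delta$ read as the relative gap; to obtain the claimed bounds one has to go through the precise Theorem~\ref{theorem:GapToCom} (round complexity $O(mpg/k)$), together with Remark~\ref{remark:NonUniformCom}, which reduces the round complexity to $\Theta(m)=\Theta(n/\log n)$ in the nonuniform case by skipping the entropy-equalization step. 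You flag the ``parameter alignment'' risk at the end of your proposal, but then assert the asymptotics work out with $m=\Theta(n)$; they do not.
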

The above improves upon the protocol  of \cite{HaitnerNgOnReVa09},  which  has a large unspecified polynomial number of rounds, and the nonuniform variant meets the  lower bound of \cite{HaitnerHRS15} for ``black-box constructions'' (such as the one we used to prove the theorem).

\subsection{Related Work}\label{sec:relatedWork}
A preliminary version of  \citet{HaitnerReVaWe09} uses a more general, and more complicated, notion of accessible entropy in which the accessible entropy of \emph{protocols} rather than generators is measured.   This latter notion is used in that paper  to show that if $\NP$ has constant-round interactive proofs that are black-box zero knowledge under parallel composition,  then there exist constant-round statistically hiding commitment schemes. A subsequent work of \citet{HaitnerHolReVaWe10} uses a simplified version of accessible entropy to present a simpler and more efficient construction of \emph{universal one-way functions} from any one-way function.  One of the two  inaccessible entropy generators considered in \cite{HaitnerHolReVaWe10}, they considered for this construction, is very similar to the  construction of inaccessible entropy  generators discussed above. The notion of inaccessible entropy, of the simpler variant appearing in   that work, is in a sense implicit in the work of \citet{Rompel90}, who first  showed  how to base universal one-way functions on any one-way function. Very recently, \citet{BitanskyHKY18} used the reduction given in this paper from an inaccessible entropy generator to statistically hiding commitment, to construct constant-round statistically hiding commitments from \textit{distributional collision resistance hash functions}, a relaxation of collision resistance hash functions known to exist assuming  average hardness  of the class $\SZK$. Finally, a simplified presentation  of the definitions and reductions appearing in this paper can be found in  \cite{HaitnerVadhan17}.

\subsection*{Paper Organization}
Standard  notations and definitions are given in \cref{sec:prelim};  we also give there several useful facts about the conditional entropy of sequences of random variables. Formal definitions of the  real and accessible entropy of a generator are given in \cref{sec:realvsacc}.  In \cref{sec:IAEGenFromOWF}, we show how to construct inaccessible entropy generators from one-way functions.  In \cref{sec:manipulatingAE},  we  develop tools  to manipulate the real and accessible entropy of such generators. In \cref{sec:GapToCom}, we construct a statistically hiding  commitment scheme from a generator that has a (noticeable) gap between its real and accessible entropy, and then use this reduction together with the results of \cref{sec:IAEGenFromOWF,sec:manipulatingAE} to construct a statistically hiding  commitment scheme from one-way functions. Finally, in \cref{sec:ATEtoOWF} we prove that the existence of inaccessible entropy generators implies that of one-way functions (namely, we prove the converse of the main result of \cref{sec:IAEGenFromOWF}).

\section{Preliminaries}\label{sec:prelim}

\subsection{Notation}\label{sec:prelim:notation}
We use calligraphic letters to denote sets, upper-case for random variables, lower-case for values,  bold-face for vectors, and sans serif for algorithms (\ie Turing machines).  For $n\in\N$, let $[n]= \set{1,\ldots,n}$. For vector $\vy= (y_1,\ldots,y_n)$ and $\cJ \subseteq [n]$, let $\vy_\cJ = (y_{i_1},\ldots, y_{i_{\size{\cJ}}})$, where $i_1<\ldots <i_{\size{\cJ}}$ are the elements of  $\cJ$.  Let $\vy_{<j} = \vy_{[j-1]} = (y_1,\ldots,y_{j-1})$ and $\vy_{\le j} = \vy_{[j]} = (y_1,\ldots,y_j)$. Both notations naturally extend to an ordered list of elements that is embedded in  a larger vector (\ie given $(a_1,b_1,\ldots,a_n,b_n$), $a_{<3}$ refers to the vector $(a_1,a_2)$). Let $\poly$ denote the set of all positive polynomials.  A function $\nu \colon \N \mapsto [0,1]$ is \textit{negligible}, denoted $\nu(n) = \negl(n)$, if $\nu(n)<1/p(n)$ for every  $p\in\poly$ and large enough $n$. 


\subsection{Random Variables}
Let $X$ and $Y$ be random variables taking values in a discrete universe $\Uni$.
We adopt the convention that, when the same random variable appears multiple times in an expression, all occurrences refer to the same instantiation. For example, $\Pr[X=X]$ is 1.
For an event $E$, we write $X|_E$ to denote the random variable $X$ conditioned on $E$. We let $\ppr{X|Y}{x|y}$ stand for $\pr{X=x\mid Y=y}$. The {\em support} of a random variable $X$, denoted $\Supp(X)$, is defined as $\set{ x \colon \Pr[X=x]> 0}$.  Let $U_n$ denote a random variable that is uniform over $\zn$.  For $t\in \N$, let $X^{\prn{t}} =(X^{\idx{1}},\ldots,X^{\idx{t}})$, where $X^{\idx 1},\ldots,X^{\idx t}$ are independent copies of $X$.  


We write $X\equiv Y$ to indicate that $X$ and $Y$ are identically distributed. We write $\Delta(X,Y)$ to denote the {\em statistical difference} (\aka variation distance) between $X$ and $Y$, \ie
$$\Delta(X,Y) = \max_{T\subseteq \Uni} \size{\Pr[X\in T]-\Pr[Y\in T]}.$$
If $\Delta(X,Y)\leq \eps$ [\resp $\Delta(X,Y) > \eps$], we say that $X$ and $Y$ are {\em $\eps$-close} [\resp $\eps$-far]. Two random variables $X = X(n) $ and $Y = Y(n)$ are \emph{statistically  indistinguishable}, denoted $X \sindist Y$, if for any unbounded algorithm $\Dc$, it holds that $\size{\Pr[\Dc(1^n,X(n)) = 1] - \Pr[\Dc(1^n,Y(n)) = 1]} = \negl(n)$.\footnote{This is equivalent to to requiring  that $\Delta(X(n),Y(n)) = \negl(n)$.}  Similarly, $X$ and $Y$ are   computationally  indistinguishable, denoted   $X \indist Y$], if $\size{\Pr[\Dc(1^n,X(n)) = 1] - \Pr[\Dc(1^n,Y(n)) = 1]} = \negl(n)$ for every \ppt $\Dc$.
 
The {\sf KL-divergence} (\aka {\sf ~Kullback-Leibler divergence} and {\sf relative
	entropy}) between two distributions $P,Q$ over a discrete domain  $\cX$
is defined by
\begin{align*}
\kld{P}{Q} \eqdef \sum_{x\in\cX}P(x)\log\frac{P(x)}{Q(x)} = \Ex_{x \getsr P}\log\frac{P(x)}{Q(x)},
\end{align*}
letting $0\cdot\log\frac00 = 0$, and if there exists $x\in\cX$ such that
$P(x)>0=Q(x)$ then $\kld{P}{Q}\eqdef \infty$.

\subsection{Entropy Measures}\label{sec:measures}
We  refer to several measures of entropy. The relation and motivation of these measures is best understood by considering a notion that we will refer to as the sample-entropy: for a random variable $X$ and $x\in \Supp(X)$, the \emph{sample-entropy} of $x$ \wrt $X$ is the quantity  
$$\Hsam_X(x) \eqdef \log \tfrac 1{\Pr[X=x]},$$
letting $\Hsam_X(x) = \infty$ for $x\notin \Supp(X)$, and  $2^{-\infty} = 0$.

The sample-entropy measures the amount of ``randomness" or ``surprise" in the specific sample $x$, assuming that $x$ has been generated according to $X$. Using this notion, we can define the {\em Shannon entropy} $\HSh(X)$ and {\em min-entropy} $\Hmin(X)$ as follows:
\begin{align*}
\HSh(X) &\eqdef \Exp_{x\getsr X}[\Hsam_X(x)],\\
\Hmin(X) &\eqdef \min_{x\in \Supp(X)} \Hsam_X(x).
\end{align*}

The \emph{collision probability} of $X$ is defined by 
$$\cp(X) \eqdef \sum_{x\in \Supp(X)} \ppr{X}{x}^2 = \ppr{(x,x') \getsr X^2}{x= x'}$$
 and its  {\em \Renyi entropy} is defined by 
\begin{align*}
\HRen(X) \eqdef - \log \cp(X).
\end{align*}
We will also discuss the {\em max-entropy} $\Hmax(X) \eqdef \log |\Supp(X)|$. The term ``max-entropy'' and its relation to the sample-entropy will be made apparent below.

It can be shown that $\Hmin(X) \leq \HRen(X) \leq \HSh(X) \leq \Hmax(X)$ with each inequality being an equality if and only if $X$ is flat (uniform on its support). Thus, saying that $\Hmin(X)\geq k$ is a strong way of saying that $X$ has ``high entropy'' and $\Hmax(X)\leq k$ a strong way of saying that $X$ has ``low entropy''. 

The following fact quantifies the probability  that the sample-entropy is larger than the max-entropy.
\begin{lemma}\label{prop:ShanonoToSample}
	For   random variable $X$ it holds that 
	\begin{enumerate}
		
		\item $\eex{x\getsr X}{2^{\Hall_X(x)}} = \size{\Supp(X)}$.
		
		\item $\ppr{x \getsr  X}{\HSh_X(x) > \log \frac1\eps + \Hmax(X)} <\eps$, for any $\eps >0$.
	\end{enumerate}
\end{lemma}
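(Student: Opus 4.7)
The plan is to prove both parts by direct calculation from the definition $\Hsam_X(x) = \log(1/\Pr[X=x])$, so that $2^{\Hsam_X(x)} = 1/\Pr[X=x]$ for $x \in \Supp(X)$.

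For part~1, I would expand the expectation as $\Exp_{x \getsr X}[2^{\Hsam_X(x)}] = \sum_{x \in \Supp(X)} \Pr[X=x] \cdot 2^{\Hsam_X(x)}$. Since $\Pr[X=x] \cdot 2^{\Hsam_X(x)} = \Pr[X=x] \cdot (1/\Pr[X=x]) = 1$ on the support, the sum collapses to $|\Supp(X)|$. The only subtlety is confirming that the convention $2^{-\infty} = 0$ correctly disposes of the $x \notin \Supp(X)$ case, but this is handled by restricting the sum to $\Supp(X)$ since $x \getsr X$ lands outside the support with probability zero.

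For part~2, the inequality $\Hsam_X(x) > \log(1/\eps) + \Hmax(X)$ unwinds (using $\Hmax(X) = \log |\Supp(X)|$) to $\Pr[X=x] < \eps/|\Supp(X)|$. Letting $B = \{x \in \Supp(X) : \Pr[X=x] < \eps/|\Supp(X)|\}$ denote the ``bad'' set, I would bound $\Pr[X \in B] = \sum_{x \in B} \Pr[X=x] < |B| \cdot \eps/|\Supp(X)| \le \eps$, where the last inequality uses $|B| \le |\Supp(X)|$. This is essentially a Markov-style argument counting atoms rather than using the $2^{\Hsam_X}$ expectation directly.

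There is no genuine obstacle in this proof — both claims follow from one line of algebra each. If anything, the main thing to watch is bookkeeping around the support: making sure the event $\{\Hsam_X(x) = \infty\}$ has probability zero under $X$ (so it does not spoil the inequality in part~2), and making sure the $0 \cdot \log(1/0) = 0$ convention is invoked only where it belongs in part~1. The two parts can each be dispatched in two or three lines.
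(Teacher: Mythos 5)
Your part~1 argument is identical to the paper's: expand the expectation over the support, cancel $\Pr[X=x]\cdot 1/\Pr[X=x]$, and count atoms.

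For part~2 you take a genuinely different route. The paper proves part~2 as a \emph{corollary} of part~1 via Markov's inequality: the event $\Hsam_X(x) > \log\tfrac1\eps + \Hmax(X)$ is rewritten as $2^{\Hsam_X(x)} > \tfrac1\eps\cdot|\Supp(X)|$, and since $\Exp_{x\getsr X}[2^{\Hsam_X(x)}] = |\Supp(X)|$ by part~1, Markov gives the bound $\eps$ immediately. You instead bypass part~1 entirely and argue by direct counting on the atoms of the ``bad'' set $B = \{x : \Pr[X=x] < \eps/|\Supp(X)|\}$, bounding $\Pr[X\in B] < |B|\cdot\eps/|\Supp(X)| \le \eps$. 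Both are correct and about the same length; the paper's version has the structural virtue of exhibiting part~2 as a consequence of part~1 (and the same Markov trick reappears in the paper's generalizations, \cref{prop:ShanonoToSampleSeqAdv} and \cref{prop:subaddativiy2}, so it sets up a reusable pattern), while yours is more self-contained and arguably more transparent since it makes the union-bound intuition explicit. Your handling of the support and the $2^{-\infty}=0$ convention is correct.
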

\begin{proof}
	For the first item, compute 
	\begin{align*}
	\eex{x\getsr X}{2^{\Hall_X(x)}}&= \sum_{x\in \Supp(X)} 2^{-\Hall_X(x)} \cdot 2^{\Hall_X(x)}\\
	&=\sum_{x\in \Supp(X)} 1\\
	& = \size{\Supp(X)}. 
	\end{align*}
	The second item follows by the first item and Markov inequality. 
	\begin{align*}
	\ppr{x \getsr  X}{\HSh_X(x) > \log \frac1\eps + \Hmax(X)} &= \ppr{x \getsr  X}{2^{\HSh_X(x)} > \frac 1\eps \cdot  \size{\Supp(X)}}\\
	& < \eps.
	\end{align*}
\end{proof}

The following fact quantifies the contribution of unlikely events with high sample-entropy to the overall entropy of a random variable. 

\begin{lemma}\label{prop:HighAEContribution}
	Let  $X$ be a random variable with $\ppr{x\getsr X}{\HSh_X(x) > k} \le \eps \in [0,1]$. Then $\Hall(X) \le  (1 - \eps)k + \eps  \cdot (\Hmax(X) - \log \eps) \le k + \eps \cdot \Hmax(X) +1$.
\end{lemma}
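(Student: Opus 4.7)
The plan is to bound $\HSh(X) = \eex{x \getsr X}{\HSh_X(x)}$ by splitting the expectation via a \emph{fractional indicator} that exchanges the unknown ``bad-set'' probability $p := \ppr{x \getsr X}{\HSh_X(x) > k} \le \eps$ for the given bound $\eps$ itself. First I would set $B = \enss{x \in \Supp(X) \colon \HSh_X(x) > k}$ and pick $Z \colon \Supp(X) \to [0,1]$ with $Z(x) = 1$ on $B$ and with $Z$ chosen on $\Supp(X) \setminus B$ so that $\ex{Z(X)} = \eps$; such a $Z$ exists because the probability mass off $B$ is $1 - p \ge \eps - p$.

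Next I would decompose
\begin{align*}
\HSh(X) = \ex{(1-Z(X))\HSh_X(X)} + \ex{Z(X)\HSh_X(X)}
\end{align*}
and estimate each piece separately. The first piece is at most $k(1-\eps)$: on $B$ the factor $1-Z(x)$ vanishes, while off $B$ we have $\HSh_X(x) \le k$, so the sum is at most $k \cdot \ex{1-Z(X)} = k(1-\eps)$. For the second piece, I would apply Jensen's inequality to the concave function $\log$ against the probability distribution $\ppr{}{Y=x} = Z(x)\ppr{}{X=x}/\eps$, obtaining
\begin{align*}
\ex{Z(X)\HSh_X(X)} = \eps \cdot \eex{Y}{\log(1/\pr{X=Y})} \le \eps \log \eex{Y}{1/\pr{X=Y}} = \eps \log\bigl(\tfrac{1}{\eps}\textstyle\sum_x Z(x)\bigr),
\end{align*}
which is at most $\eps(\Hmax(X) - \log\eps)$ since $\sum_x Z(x) \le \size{\Supp(X)} = 2^{\Hmax(X)}$. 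Adding the two pieces yields the first claimed inequality.

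The second inequality is an easy calculation: $k + \eps\Hmax(X) + 1 - \bigl[(1-\eps)k + \eps(\Hmax(X) - \log\eps)\bigr] = \eps k + \eps\log\eps + 1 \ge 0$, since $\eps k \ge 0$ and $-\eps\log\eps \le (\log e)/e < 1$ for $\eps \in [0,1]$. The main subtlety I see is to resist splitting directly on the event $\enss{\HSh_X(X) > k}$: that route yields only the $p$-based bound $(1-p)k + p\Hmax(X) - p\log p$, and passing from this to the stated $\eps$-based bound via $p \le \eps$ is awkward because $t \mapsto -t\log t$ is not monotone on $[0,1]$. The fractional-indicator construction sidesteps this entirely by forcing $\ex{Z(X)} = \eps$ exactly \emph{before} invoking Jensen's inequality, so that the $-t\log t$ penalty only ever appears at $t = \eps$.
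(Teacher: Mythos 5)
Your proof is correct, and it takes a genuinely different route from the paper's. The paper conditions on the exact bad set $\cY = \set{x \colon \HSh_X(x) > k}$ of true probability $\eps' \le \eps$, derives $\Hall(X) \le (1-\eps')k + \eps'\bigl(\Hmax(X) - \log\eps'\bigr)$ via $\Hall(X|_{X\in\cY}) \le \Hmax(X)$, and then replaces $\eps'$ by $\eps$. That last replacement is exactly the monotonicity issue you flag: $t \mapsto (1-t)k + t(\Hmax(X)-\log t)$ has derivative $(\Hmax(X)-k) - \log t - \log e$, which is negative near $t=1$ whenever $\Hmax(X)-k < \log e$, so the paper's step $(1-\eps')k + \eps'(\Hmax(X)-\log\eps') \le (1-\eps)k + \eps(\Hmax(X)-\log\eps)$ can in fact fail as stated (e.g.\ $\Hmax(X)=k$, $\eps'=\tfrac12$, $\eps=1$ gives $k+\tfrac12 \not\le k$, even though the lemma's conclusion itself still holds there by $\Hall(X)\le\Hmax(X)$). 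Your fractional indicator $Z$ with $\ex{Z(X)}=\eps$ forces the split quantity to be $\eps$ from the outset, so the Jensen step lands directly on the $\eps$-based bound; that Jensen step is morally the same inequality the paper uses ($\Hall(Q)\le\log|\Supp(Q)|$ is itself Jensen), just applied to the smoothed auxiliary distribution rather than the hard-conditioned one. The two arguments use the same ingredients and are equally short, but your decomposition is what makes the stated $\eps$ appear cleanly, and as a byproduct it closes the gap in the paper's $\eps'\to\eps$ replacement.
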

\begin{proof}
Let $\cY = \set{x\in \Supp(X) \colon \HSh_X(x) > k}$ and 	let $Y = X|_{X \in \cY}$. Note that
\begin{align}
\Hall(Y) = \eex{y\gets Y}{\HSh_Y(y)} =  \eex{y\gets Y}{\HSh_X(y)} + \log \ppr{x\getsr X}{\HSh_X(x) > k}.
\end{align}
Let  $\eps'= \ppr{x\getsr X}{\HSh_X(x) > k}$ (hence, $ \eps' \le \eps$). We conclude that 
\begin{align*}
\Hall(X)  &= \sum_{x\in \Supp(X) \setminus \cY} \pr{X=x} \cdot \Hall_X(x)  + \sum_{x\in \cY} \pr{X=x} \cdot \Hall_X(x) \\
&\le (1 - \eps')k + \eps'  \cdot \sum_{x\in \cY} \pr{Y=x} \cdot \Hall_X(x) \\
&=  (1 - \eps')k + \eps'  \cdot (\Hall(Y) - \log \eps')\\
&\le (1 - \eps')k + \eps'  \cdot (\Hmax(X) - \log \eps')\\
&\le (1 - \eps)k + \eps  \cdot (\Hmax(X) -  \log \eps)\\
&\le k + \eps \cdot \Hmax(X) +1.
\end{align*}
\end{proof}

\paragraph{Conditional entropies.}
We will also be interested in conditional versions of entropy. For jointly distributed random variables $(X,Y)$ and $(x,y)\in \Supp(X,Y)$, we define
the {\em conditional sample-entropy} to be $\Hsam_{X| Y}(x| y) = \log \tfrac 1{\ppr{X|Y}{x|y}}=\log \tfrac 1{\Pr[X=x\mid Y=y]}$.
Then the standard {\em conditional Shannon entropy} can be written as
$$\HSh(X\mid Y) = \Exp_{(x,y)\getsr (X,Y)} \left[\Hsam_{X\mid Y}(x\mid y)\right]
= \Exp_{y\getsr Y} \left[\HSh(X|_{Y=y})\right] = \HSh(X,Y)-\HSh(Y).$$

The following known lemma  states that  conditioning on a ``short" variable is unlikely to change the sample-entropy significantly.\footnote{We could save a few bits, as compared to the result below, by considering the \emph{average} sample-entropy induced by the conditioning, and in particular  the \textit{average min-entropy} of the variable \cite{DodisORS2008}. Doing so, however, will only reduce the running  time and communication size of our commitment scheme by a constant, so we preferred to stay with the simpler statement  below.}
\begin{lemma}\label{prop:CondNotReduce}
	Let $X$ and $Y$ be random variables, let $k = \Hmin(X)$, and let $\ell = \Hmax(Y)$. Then, for any  $t>0$, it holds that
	$$\ppr{(x,y)\getsr(X,Y)}{\Hall_{X|Y}(x|y) <  k - \ell -t }< 2^{-t}.$$
\end{lemma}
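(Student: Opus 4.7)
The plan is to rewrite the event in terms of conditional probabilities and then apply a counting argument that separately uses the min-entropy bound on $X$ and the max-entropy bound on $Y$. Specifically, for $(x,y)\in\Supp(X,Y)$, the event $\Hall_{X|Y}(x|y) < k-\ell-t$ is equivalent to $\ppr{X|Y}{x|y} > 2^{\ell+t-k}$. Let $\cB \subseteq \Supp(X,Y)$ denote the set of such ``bad'' pairs; we want to show $\Pr[(X,Y)\in\cB] < 2^{-t}$.

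First I would bound the joint probability of any single pair. Since $\Hmin(X)=k$, we have $\pr{X=x,Y=y} \leq \pr{X=x} \leq 2^{-k}$ for every $(x,y)$. Next I would bound $|\cB|$. For any fixed $y\in\Supp(Y)$, the conditional distribution $X|_{Y=y}$ has total mass $1$, so at most $2^{k-\ell-t}$ values of $x$ can satisfy $\ppr{X|Y}{x|y} > 2^{\ell+t-k}$. Since $\Hmax(Y)=\ell$ gives $|\Supp(Y)|\leq 2^{\ell}$, summing over $y$ yields $|\cB| \leq 2^{\ell}\cdot 2^{k-\ell-t} = 2^{k-t}$.

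Combining the two bounds,
\begin{align*}
\ppr{(x,y)\getsr(X,Y)}{\Hall_{X|Y}(x|y) < k-\ell-t}
&= \sum_{(x,y)\in\cB} \pr{X=x,Y=y} \\
&\leq |\cB|\cdot 2^{-k} \leq 2^{k-t}\cdot 2^{-k} = 2^{-t},
\end{align*}
as claimed. I do not anticipate any real obstacle: the statement is essentially a union bound dressed up in entropy notation, and the only subtle point is noting that pairs outside $\Supp(X,Y)$ contribute zero probability, so restricting attention to $y\in\Supp(Y)$ (where the cardinality bound from $\Hmax(Y)$ applies) loses nothing.
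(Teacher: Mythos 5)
Your proof is essentially the same counting argument the paper gives: bound the number of bad $x$'s for each fixed $y$ via the conditional-probability mass, multiply by $|\Supp(Y)| \le 2^\ell$, and then charge each element $2^{-k}$ via the min-entropy of $X$. The only difference is cosmetic — you sum over bad \emph{pairs} $(x,y)$ directly, while the paper first forms the union $\cX = \bigcup_y \cX_y$ and bounds $\Pr[X\in\cX]$; both routes yield the same numerics.

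One small imprecision worth fixing: the lemma asserts a \emph{strict} inequality $<2^{-t}$, but you write ``at most $2^{k-\ell-t}$ values of $x$'' and carry only $\le$'s through, ending with $\le 2^{-t}$. Your own counting step actually gives the strict bound for free: if every bad $x$ has $\ppr{X|Y}{x|y} > 2^{\ell+t-k}$ and these conditional probabilities sum to at most $1$, then the number of such $x$ is \emph{strictly} less than $2^{k-\ell-t}$, and propagating this strictness recovers $<2^{-t}$ as stated.
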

\begin{proof}
	For $y\in \Supp(Y)$, let $\cX_y = \set{x\in \Supp(X) \colon \Hall_{X|Y}(x|y) <  k - \ell -t }$. We have $\size{\cX_y} < 2^{k - \ell -t }$. Hence, $\size{\cX = \bigcup_{y\in \Supp(Y)} \cX_y} < 2^\ell \cdot  2^{k - \ell -t } =  2^{k -t }$. It follows that
	\begin{align*}
	\ppr{(x,y)\getsr(X,Y)}{\Hall_{X|Y}(x|y) <  k - \ell -t } \le \ppr{(x,y)\getsr(X,Y)}{x\in \cX} < 2^{-k} \cdot 2^{k -t } = 2^{-t}.
	\end{align*}
\end{proof}

\paragraph{Smoothed entropies.}
The following lemma will allow us to  think of a  random variable $X$ whose sample-entropy is high, with high probability, as if it has high min-entropy (\ie as if its sample-entropy function is ``smoother'', with no peaks). 

\begin{lemma}\label{lemma:smoothEntropies}~
	Let $X$ and $Y$ be random variables and let $\eps >0$.
	\begin{enumerate}
		\item Suppose $\ppr{x\getsr X}{\Hsam_X(x) \ge k} \geq 1-\eps$. Then $X$ is $\eps$-close to a random variable $X'$ with $\Hmin(X')\geq k$.
		
		\item Suppose $\ppr{(x,y)\getsr (X,Y)}{\Hsam_{X|Y}(x|y) \ge k} \geq 1-\eps$. Then $(X,Y)$ is $\eps$-close to a random variable $(X',Y')$ with  $\Hsam_{X'|Y'}(x|y) \ge k$ for any $(x,y) \in \Supp(X',Y')$. Further, $Y'$ and $Y$ are identically distributed.  
	\end{enumerate}
\end{lemma}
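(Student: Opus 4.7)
The plan is to construct $X'$ (resp.\ $(X',Y')$) by \emph{locally} modifying the given distribution on the event that the sample-entropy falls below $k$, keeping the ``good'' atoms unchanged and redistributing the ``bad'' mass so that no atom has probability exceeding $2^{-k}$.

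For part~1, let $\cB = \{x\in\Uni : \Pr[X=x] > 2^{-k}\}$ be the set of atoms violating the min-entropy bound; the hypothesis rewrites as $\Pr[X\in\cB]\le\eps$. I will define $X'$ by setting $\Pr[X'=x] = \Pr[X=x]$ for each $x\notin\cB$ and then distributing the remaining mass $p = \Pr[X\in\cB]$ over atoms of $\Uni$ in any way such that the resulting probability of every atom is at most $2^{-k}$. This is feasible whenever $|\Uni|\ge 2^k$, which is a necessary condition for a distribution of min-entropy $\ge k$ to exist on $\Uni$ at all, so otherwise the lemma is vacuous. By construction $\Hmin(X')\ge k$. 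For the distance bound, observe that $\Pr[X'=x]\ge\Pr[X=x]$ for all $x\notin\cB$, so the only contributions to $\Delta(X,X')$ come from $x\in\cB$, and they sum to at most $\Pr[X\in\cB]\le\eps$.

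For part~2, I apply part~1 separately to each conditional distribution $X|_{Y=y}$. Let $\eps_y = \Pr\!\left[\Hsam_{X|Y}(X|y) < k \,\big|\, Y=y\right]$, so that the hypothesis reads $\Ex_{y\getsr Y}[\eps_y]\le\eps$. By part~1, for each $y\in\Supp(Y)$ there is a random variable $X'_y$ on $\Uni$ with $\Hmin(X'_y)\ge k$ and $\Delta(X|_{Y=y},\,X'_y)\le\eps_y$. Define $(X',Y')$ by sampling $Y'\getsr Y$ and then $X'\getsr X'_{Y'}$. By construction $Y'\equiv Y$ and $X'|_{Y'=y}\equiv X'_y$, so $\Hsam_{X'|Y'}(x|y)\ge k$ holds for every $(x,y)\in\Supp(X',Y')$. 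The total statistical distance decomposes as $\Delta((X,Y),(X',Y')) = \Ex_{y\getsr Y}\!\left[\Delta(X|_{Y=y},\,X'_y)\right]\le \Ex_{y\getsr Y}[\eps_y]\le\eps$, as required.

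The only real subtlety is the redistribution step in part~1, where one must check that the universe is large enough to absorb the bad mass without any atom exceeding the $2^{-k}$ cap; this reduces to the routine observation that the ``slack capacity'' $|\Uni|\cdot 2^{-k} - \Pr[X\in\cU\setminus\cB]$ is at least $\Pr[X\in\cB]$ whenever $|\Uni|\ge 2^k$. Everything else is bookkeeping.
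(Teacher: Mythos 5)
Your proof is correct and follows the same approach as the paper's (which is stated only as a two-sentence sketch): cap the atoms with probability exceeding $2^{-k}$, redistribute the excess mass without creating new violations, and in the conditional case do this per fiber $Y=y$ so that the $Y$-marginal is preserved. Your write-up just fills in the bookkeeping (the distance bound via the one-sided nature of the modification, the decomposition $\Delta((X,Y),(X',Y'))=\Ex_y[\Delta(X|_{Y=y},X'_y)]$, and the feasibility of the redistribution) that the paper leaves implicit.
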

\begin{proof}
	For the first item, we modify $X$ on an $\eps$ fraction of the probability space (corresponding to when $X$ takes on a value $x$ such that
	$\Hsam_X(x)\geq k$)  to bring all probabilities to be smaller  than or equal to $2^{-k}$.
	
	The second item is proved via similar means, while when changing $(X,Y)$, we do so  without changing the ``$Y$'' coordinate.  
\end{proof}

\paragraph{Flattening Shannon entropy.} It is well known that the Shannon entropy of a random variable can be converted to min-entropy (up to small statistical distance) by taking independent copies of this variable.

\begin{lemma}[\cite{Yang15}, Theorem 3.14]\label{lem:flattening}
	Let $X$ be a random variable taking values in a universe $\Uni$, let $t\in \N$, and let $0 < \eps \le 1/e^2$. Then with probability at least $1-\eps$ over $x\getsr X^{\prn{t}}$,
	$$\Hsam_{X^{\prn{t}}}(x) - t \cdot \Hall(X) \geq - O\left(\sqrt{t\cdot \log \tfrac1\eps} \cdot \log (|\Uni|\cdot t)\right).$$
\end{lemma}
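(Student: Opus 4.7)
The plan is to write the sample entropy of a random $x \getsr X^{\prn{t}}$ as a sum of i.i.d.\ terms $Z_i := \Hsam_X(X^{\idx{i}})$, each with mean $\Ex[Z_i] = \Hall(X)$, and then apply a Hoeffding-type concentration inequality to obtain a lower tail estimate for $\sum_{i=1}^{t} Z_i$. The main obstacle is that each $Z_i$ is unbounded (only its mean is known to be finite, and $\Hmax(X)$ is only an upper bound on the mean, not a pointwise one), so Hoeffding does not apply directly. The key technical step will be to truncate each $Z_i$ at a carefully chosen level $M$ just slightly above $\Hmax(X)$, so that the total truncation loss is negligible yet $M$ stays small enough to keep the concentration radius small.

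First, I will set $M := \Hmax(X) + 2\log t \le \log|\Uni| + 2\log t$ and define the truncated variables $\tilde Z_i := \min(Z_i, M) \in [0, M]$. Applying \cref{prop:ShanonoToSample} with parameter $\eps' = 2^{-s}/t^2$ gives $\Pr[Z_i > M+s] \le 2^{-s}/t^2$ for every $s \ge 0$. Integrating this tail, the per-coordinate truncation loss is $\Ex[Z_i - \tilde Z_i] = \int_0^\infty \Pr[Z_i > M+s]\,ds \le 1/(t^2\ln 2)$, hence $\sum_{i=1}^{t} \Ex[\tilde Z_i] \ge t\Hall(X) - 1/(t\ln 2)$.

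Second, since the $\tilde Z_i$ are i.i.d.\ and bounded in $[0, M]$, Hoeffding's inequality yields
$$\Pr\!\left[\sum_{i=1}^{t}\tilde Z_i < \sum_{i=1}^{t} \Ex[\tilde Z_i] - \lambda\right] \le \exp\!\left(-\Omega\!\left(\lambda^2/(tM^2)\right)\right),$$
so setting $\lambda := \Theta\!\bigl(M\sqrt{t\log(1/\eps)}\bigr)$ makes this probability at most $\eps$. Combining with the pointwise inequality $\tilde Z_i \le Z_i$, with probability at least $1-\eps$ over $x \getsr X^{\prn{t}}$,
$$\Hsam_{X^{\prn{t}}}(x) = \sum_{i=1}^{t} Z_i \ge t\Hall(X) - \tfrac{1}{t\ln 2} - O\!\bigl(M\sqrt{t\log(1/\eps)}\bigr).$$
Since $M = O(\log(|\Uni|\,t))$ and the hypothesis $\log(1/\eps) \ge 2$ absorbs the lower-order $1/(t\ln 2)$ term into the $O(\cdot)$, the right-hand side equals $t\Hall(X) - O\!\bigl(\sqrt{t\log(1/\eps)} \cdot \log(|\Uni|\,t)\bigr)$, which is the claimed bound.
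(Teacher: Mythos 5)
The paper does not prove \cref{lem:flattening}; it is imported wholesale from \cite{Yang15}, so there is no in-paper argument to compare against. Your proof is correct and follows the standard route to the flattening lemma: decompose the product sample-entropy as the i.i.d.\ sum $\sum_i Z_i$ of coordinate-wise sample-entropies, truncate each $Z_i$ at $M=\Hmax(X)+2\log t$ so that the per-coordinate truncation loss is $O(1/t^2)$ by the Markov-style tail bound, and apply Hoeffding's inequality to the bounded truncated sum. All the bookkeeping checks out: the tail estimate $\Pr[Z_i>M+s]\le 2^{-s}/t^2$ is exactly the second item of \cref{prop:ShanonoToSample} with $\eps'=2^{-s}/t^2$; the truncated variables lie in $[0,M]$ because sample-entropy is nonnegative; $M=O(\log(|\Uni|\,t))$ gives the claimed Hoeffding radius; and the hypothesis $\eps\le 1/e^2$ (hence $\log(1/\eps)\ge 2$) absorbs the $O(1/t)$ truncation residual into the big-$O$.
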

We will make use of the following ``conditional variant''  of \cref{lem:flattening}:  
\begin{lemma}\label{lem:flatteningCond}
	Let $X$ and $Y$ be jointly distributed random variables where $X$ takes values in a universe $\Uni$, let $t\in \N$, and  let $0 < \eps \le 1/e^2$. Then with probability at least $1-\eps$  over $(x,y)\gets (X',Y') = (X,Y)^{\prn{t}}$,
	$$\Hsam_{X'|Y'}(x\mid y) - t\cdot \HSh(X\mid Y)  \ge - O\left(\sqrt{t\cdot \log \tfrac1\eps} \cdot \log (|\Uni|\cdot t)\right).$$
\end{lemma}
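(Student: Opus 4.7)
My plan is to reduce the conditional flattening statement to concentration of a sum of i.i.d.\ random variables and to prove that concentration via truncation plus Hoeffding's inequality. The key observation is that sample-entropy satisfies a chain rule under independent copies: writing $(X',Y') = (X,Y)^{\prn{t}}$ and $(x,y) = ((x_1,y_1),\ldots,(x_t,y_t))$, one has
\[
\Hsam_{X'|Y'}(x \mid y) \;=\; \sum_{i=1}^t \Hsam_{X|Y}(x_i \mid y_i),
\]
so the target becomes a lower-tail bound for $\sum_i Z_i$, where $Z_i := \Hsam_{X|Y}(x_i \mid y_i)$ are i.i.d.\ with mean $\HSh(X \mid Y)$. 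A direct application of Hoeffding fails because there is no a~priori upper bound on $Z_i = -\log \Pr[X=x_i\mid Y=y_i]$.

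The first step I would carry out is a per-sample tail bound: for every $s > 0$,
\[
\ppr{(x,y) \getsr (X,Y)}{\Hsam_{X|Y}(x\mid y) > s} \;\le\; |\Uni| \cdot 2^{-s},
\]
since for each fixed $y$ there are at most $|\Uni|$ values of $x$ satisfying $\Pr[X=x\mid Y=y] < 2^{-s}$, each contributing at most $2^{-s}$ to the conditional mass. Setting $B := \log(2|\Uni| t/\eps)$, a union bound over the $t$ samples yields that all $Z_i \le B$ except with probability $\eps/2$. Define the truncated copies $Z_i' := \min(Z_i, B)$; integrating the tail bound gives $\Ex[Z_i - Z_i'] \le |\Uni| \cdot 2^{-B}/\ln 2 = O(\eps/t)$, so $\Ex[Z_i'] \ge \HSh(X \mid Y) - O(\eps/t)$.

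The second step is Hoeffding's inequality for the i.i.d.\ bounded variables $Z_1',\ldots,Z_t' \in [0, B]$: with probability at least $1-\eps/2$,
\[
\sum_{i=1}^t Z_i' \;\ge\; t\,\Ex[Z_1'] - O\!\left(B\sqrt{t \log(1/\eps)}\right) \;\ge\; t\cdot \HSh(X\mid Y) - O\!\left(\sqrt{t \log(1/\eps)} \cdot \log(|\Uni|\cdot t)\right).
\]
A final union bound with the truncation event $\set{\forall i : Z_i \le B}$ ensures $\sum_i Z_i = \sum_i Z_i'$ and gives the claimed lower bound with probability at least $1-\eps$. The main obstacle is the calibration of $B$: the threshold must be large enough to absorb essentially all mass of each $Z_i$ (so that the truncation event holds and the truncated mean is essentially $\HSh(X\mid Y)$), yet small enough that the Hoeffding deviation $B \sqrt{t \log(1/\eps)}$ matches the stated bound. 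The tail estimate $|\Uni| \cdot 2^{-s}$, depending only on the universe of $X$, is precisely what makes this calibration feasible and explains why the final bound involves only $|\Uni|$ and not any universe for $Y$. An alternative route would apply \cref{lem:flattening} to $(X,Y)$ and to $Y$ separately and subtract via the chain rule $\Hsam_{X'|Y'}(x\mid y) = \Hsam_{X',Y'}(x,y) - \Hsam_{Y'}(y)$, but this requires a two-sided (upper-tail) companion to \cref{lem:flattening} and introduces a $\log|\Supp(Y)|$ term into the bound, which the statement avoids.
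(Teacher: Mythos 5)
Your chain rule, the per-sample tail bound $\Pr[\Hsam_{X|Y}(x\mid y)>s]\le|\Uni|\cdot 2^{-s}$, and the truncate-then-Hoeffding strategy are all the right ideas (and, unpacked, this is what the paper's terse citation to \cite{Yang15} amounts to, applied to the random variable $\Hall_{X|Y}(X\mid Y)$ in place of $\Hall_X(X)$). But there is a genuine calibration gap. You set $B=\log(2|\Uni|t/\eps)$ so that a union bound makes $\{\forall i:Z_i\le B\}$ hold with probability $1-\eps/2$, and then assert that the Hoeffding deviation $O\bigl(B\sqrt{t\log(1/\eps)}\bigr)$ is $O\bigl(\sqrt{t\log(1/\eps)}\cdot\log(|\Uni|t)\bigr)$. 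That last step requires $B=O(\log(|\Uni|t))$, but your $B$ contains an additive $\log(1/\eps)$, contributing an extra $\sqrt{t}\,(\log\tfrac1\eps)^{3/2}$ term. For fixed $|\Uni|,t$ and $\eps\to 0$ this dominates the claimed bound, so your argument as written proves a weaker statement.

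The fix is to notice that you never need the event $\{\forall i:Z_i\le B\}$. The lemma asks only for a \emph{lower} bound on $\sum_i Z_i$, and truncation is monotone: $Z_i'=\min(Z_i,B_0)\le Z_i$ pointwise, hence $\sum_i Z_i'\le\sum_i Z_i$ always. So drop the union-bound step entirely and truncate at a threshold independent of $\eps$, say $B_0=2\log(|\Uni|t)$. Then $t\cdot\Ex[Z_1-Z_1']\le t\cdot|\Uni|2^{-B_0}/\ln 2\le O(1)$, so $t\,\Ex[Z_1']\ge t\,\HSh(X\mid Y)-O(1)$, and Hoeffding on the bounded variables $Z_i'\in[0,B_0]$ gives $\sum_i Z_i'\ge t\,\Ex[Z_1']-O\bigl(B_0\sqrt{t\log(1/\eps)}\bigr)$ with probability $\ge 1-\eps$. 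Since $\sum_i Z_i\ge\sum_i Z_i'$, this yields exactly the claimed $O\bigl(\sqrt{t\log(1/\eps)}\cdot\log(|\Uni|t)\bigr)$ deviation. You correctly flagged the calibration of $B$ as the crux; the resolution is that the one-sidedness of the target lets you decouple $B$ from $\eps$. Your observations about why the bound depends only on $|\Uni|$ and not on the universe of $Y$, and why the two-applications-of-\cref{lem:flattening} route is worse, are both apt.
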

\begin{proof}
Follows the same line  as the proof  of \cref{lem:flattening}, by considering the random variable $\Hall_{X|Y}(X|Y)$ instead of $\Hall_{X}(X)$.  
\end{proof}

\paragraph{Sequence of random variables.}
For  measuring the real and accessible entropy of a generator, we will measure  the entropy of  a subset of random variables, conditioned on the previous elements in the sequence,  
The following lemma generalizes  \cref{prop:HighAEContribution} to settings that come up  naturally when measuring   the accessible entropy of a generator (as we do in \cref{sec:manipulatingAE}).
\begin{definition}\label{def:HallSum}
	For a $t$-tuple random variable $\vX = (X_1,\ldots,X_t)$, $\vx \in \Supp(X)$ and $\cJ \subseteq [t]$,  let
	$$\Hall_{\vX,\cJ}(\vx) = \sum_{i\in \cJ} \Hall_{\vX_i| \vX_{<i}}(\vx_i | \vx_{<i}).$$
\end{definition}
The following fact is immediate by the chain rule.
\begin{proposition}\label{prop:ShanonoToSampleSeq}
	Let $\vX= (X_1,\ldots,X_t)$  be a sequence of random variables and let $\cJ \subseteq [t]$. Then, $\eex{\vx\getsr \vX}{\Hall_{\vX,\cJ}(\vx)} =  \sum_{j\in \cJ} \Hall(X_j | \vX_{<j}) \le \Hmax(\vX_\cJ)$.
\end{proposition}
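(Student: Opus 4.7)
The plan is to split the proposition into its two assertions and handle each separately, since both follow from standard facts about Shannon entropy.

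For the equality $\eex{\vx\getsr \vX}{\Hall_{\vX,\cJ}(\vx)} = \sum_{j\in \cJ} \Hall(X_j \mid \vX_{<j})$, I would simply unfold the definition of $\Hall_{\vX,\cJ}(\vx)$ and apply linearity of expectation, then recognize each resulting term $\eex{\vx\getsr \vX}{\Hall_{\vX_j \mid \vX_{<j}}(\vx_j \mid \vx_{<j})}$ as exactly the standard conditional Shannon entropy $\Hall(X_j \mid \vX_{<j})$, per the definition recalled in the ``Conditional entropies'' paragraph of \cref{sec:measures}.

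For the inequality $\sum_{j\in \cJ} \Hall(X_j \mid \vX_{<j}) \le \Hmax(\vX_\cJ)$, write $\cJ = \{j_1 < j_2 < \cdots < j_k\}$. The idea is to compare each summand $\Hall(X_{j_i} \mid \vX_{<j_i})$ to $\Hall(X_{j_i} \mid X_{j_1},\ldots,X_{j_{i-1}})$. Since $\{X_{j_1},\ldots,X_{j_{i-1}}\} \subseteq \{X_1,\ldots,X_{j_i-1}\} = \vX_{<j_i}$, and conditioning on more variables can only decrease Shannon entropy, we get
\[
\Hall(X_{j_i} \mid \vX_{<j_i}) \le \Hall(X_{j_i} \mid X_{j_1},\ldots,X_{j_{i-1}}).
\]
Summing over $i \in [k]$ and invoking the chain rule on the right-hand side yields
\[
\sum_{i\in[k]} \Hall(X_{j_i} \mid \vX_{<j_i}) \le \sum_{i\in[k]} \Hall(X_{j_i} \mid X_{j_1},\ldots,X_{j_{i-1}}) = \Hall(\vX_\cJ).
\]
Finally, the bound $\Hall(\vX_\cJ) \le \Hmax(\vX_\cJ)$ is one of the standard relations between entropy measures noted in \cref{sec:measures}.

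There is no real obstacle here: the only minor subtlety is keeping the indexing straight when comparing the ``prefix'' $\vX_{<j_i}$ used in the definition of $\Hall_{\vX,\cJ}$ against the subsequence $(X_{j_1},\ldots,X_{j_{i-1}})$ needed to apply the chain rule to $\vX_\cJ$; this is resolved by the observation that conditioning on the larger set $\vX_{<j_i}$ can only reduce the entropy. Accordingly, the write-up should be only a few lines.
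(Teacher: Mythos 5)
Your proof of the equality is identical in spirit to the paper's: unfold the definition, apply linearity of expectation, and recognize each term as the conditional Shannon entropy $\HSh(X_j\mid \vX_{<j})$. For the inequality, however, your route differs from the paper's—and this is worth flagging, because the paper's own chain as written appears to contain an error. The paper bounds
\[
\sum_{j\in \cJ} \HSh(X_j\mid\vX_{<j}) \;\le\; \sum_{j\in \cJ}\Hmax(X_j) \;\le\; \Hmax(\vX_\cJ),
\]
but the second step goes the wrong way: since $\Supp(\vX_\cJ)\subseteq \prod_{j\in\cJ}\Supp(X_j)$, one always has $\Hmax(\vX_\cJ)\le \sum_{j\in\cJ}\Hmax(X_j)$, with strict inequality whenever the coordinates of $\vX_\cJ$ are correlated (e.g.\ $X_1=X_2$ uniform on $\zo$ gives $\sum_j\Hmax(X_j)=2$ but $\Hmax(\vX_\cJ)=1$). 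Your argument—drop extra conditioning to get $\HSh(X_{j_i}\mid\vX_{<j_i})\le \HSh(X_{j_i}\mid X_{j_1},\ldots,X_{j_{i-1}})$, sum and apply the chain rule to obtain $\HSh(\vX_\cJ)$, then use $\HSh\le\Hmax$—is the correct way to close the gap, and is presumably what the paper intended. So your proof is not merely an alternative: it is the valid one, and the paper's displayed intermediate bound should be read as (or replaced by) $\le \HSh(\vX_\cJ) \le \Hmax(\vX_\cJ)$.
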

\begin{proof}
	Compute
	\begin{align*}
	\eex{\vx\getsr \vX}{\Hall_{\vX,\cJ}(\vx)} &= \eex{\vx\getsr \vX} {\sum_{j\in \cJ}  \Hsam_{X_j \mid \vX_{<j}}(\vx_j | \vx_{<j})}\\
	&= \sum_{j\in \cJ}   \eex{\vx\getsr \vX} {\Hsam_{X_j \mid \vX_{<j}}(\vx_j | \vx_{<j})}\\
	&= \sum_{j\in \cJ} \HSh(X_j|\vX_{<j})\\
	&\le \sum_{j\in \cJ} \Hmax(X_j) \le  \Hmax(\vX_\cJ).
	\end{align*}
\end{proof}

The following lemma generalizes  \cref{prop:HighAEContribution} to such a sequence of random variables.
\begin{lemma}\label{prop:HighAEContributionSeq}
	Let $\vX= (X_1,\ldots,X_t)$  be a sequence of random variables and let $\cJ \subseteq [t]$.  If  $\ppr{\vx\getsr \vX}{\Hall_{\vX,\cJ}(\vx)  > k} \le \eps \in [0,1]$, then $\eex{\vx\getsr \vX}{\Hall_{\vX,\cJ}(\vx)} \le (1 - \eps)k + \eps  \cdot (\Hmax(\vX_{\cJ}) - \log 1/\eps) \le k +\eps \cdot  \Hmax(\vX_\cJ) + 1$.
\end{lemma}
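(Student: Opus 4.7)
The plan is to mimic the proof of \cref{prop:HighAEContribution}, adapting the simple sample-entropy shift $\Hsam_Y - \Hsam_X = \log \eps'$ to the sum structure of $\Hall_{\vX, \cJ}$. Let $\cY = \set{\vx \colon \Hall_{\vX, \cJ}(\vx) > k}$, set $\eps' \eqdef \pr{\vX \in \cY} \le \eps$, and let $\vY = \vX |_{\vX \in \cY}$. Splitting the expectation according to whether $\vX \in \cY$ and using $\Hall_{\vX, \cJ}(\vx) \le k$ on the complement yields
\begin{align*}
\eex{}{\Hall_{\vX, \cJ}(\vX)} \le (1-\eps') k + \eps' \cdot \eex{\vy \gets \vY}{\Hall_{\vX, \cJ}(\vy)},
\end{align*}
so the main work is to bound the inner expectation.

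To do so, let $p_i(\vy) \eqdef \pr{X_i = y_i \mid \vX_{<i} = \vy_{<i}}$ and $q_i(\vy) \eqdef \pr{Y_i = y_i \mid \vY_{<i} = \vy_{<i}}$, and observe the identity
\begin{align*}
\Hall_{\vX, \cJ}(\vy) \;=\; \Hall_{\vY, \cJ}(\vy) + \sum_{i \in \cJ} \log \frac{q_i(\vy)}{p_i(\vy)}.
\end{align*}
Applying \cref{prop:ShanonoToSampleSeq} to $\vY$ bounds the first summand in expectation by $\Hmax(\vY_\cJ) \le \Hmax(\vX_\cJ)$. For the second, the telescoping identity $\sum_{i \in [t]} \log(q_i(\vy)/p_i(\vy)) = \log(\pr{\vY = \vy}/\pr{\vX = \vy}) = \log(1/\eps')$ holds constantly on $\Supp(\vY)$; combined with the per-coordinate non-negativity $\eex{\vy \gets \vY}{\log(q_i(\vy)/p_i(\vy))} \ge 0$ for each $i \in [t]$, this yields $\eex{\vy \gets \vY}{\sum_{i \in \cJ} \log(q_i/p_i)} \le \log(1/\eps')$. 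The per-coordinate inequality is obtained by a one-line marginalization showing $\eex{\vy \gets \vY}{p_i(\vy)/q_i(\vy)} = 1$ (the inner sum $\sum_{y_i} p_i(\vy)$ telescopes to $1$), followed by Jensen's inequality applied to $\log$.

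Combining the two parts, $\eex{\vy \gets \vY}{\Hall_{\vX, \cJ}(\vy)} \le \Hmax(\vX_\cJ) + \log(1/\eps')$, so $\eex{}{\Hall_{\vX, \cJ}(\vX)} \le (1-\eps')k + \eps'\cdot(\Hmax(\vX_\cJ) + \log(1/\eps'))$. Passing from $\eps'$ to $\eps$ by monotonicity of the right-hand side in $\eps'$ in the relevant regime (as in the proof of \cref{prop:HighAEContribution}) yields the first stated inequality, and the second then follows from $\eps \log(1/\eps) \le 1/(e \ln 2) < 1$. The step I expect to be the main obstacle is the per-coordinate non-negativity $\eex{\vy \gets \vY}{\log(q_i/p_i)} \ge 0$: in the non-sequence case ($\cJ = [t]$) the sum itself equals the deterministic constant $\log(1/\eps')$, but for a proper subset $\cJ \subsetneq [t]$ one must argue that dropping the $i \notin \cJ$ terms can only decrease the expectation, which is precisely what this KL-divergence-style inequality provides.
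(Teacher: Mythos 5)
Your proof follows the paper's own approach: same set $\cY$ and conditional variable $\vY=\vX|_{\vX\in\cY}$, same two-way split of the expectation, same decomposition $\Hall_{\vX,\cJ}(\vy)=\Hall_{\vY,\cJ}(\vy)+\sum_{j\in\cJ}\log(q_j(\vy)/p_j(\vy))$, the $\Hmax(\vX_\cJ)$ bound via \cref{prop:ShanonoToSampleSeq}, and the telescoping identity $\prod_{j\in[t]}(q_j/p_j)=\pr{\vY=\vy}/\pr{\vX=\vy}=1/\eps'$. The genuine content you add is a correct justification of the step relating the $\cJ$-sum to the $[t]$-sum: your per-coordinate KL-type bound $\eex{\vy\gets\vY}{\log(q_j/p_j)}\ge 0$ shows that dropping the $j\notin\cJ$ terms can only \emph{decrease} the expectation, so $\eex{\vy\gets\vY}{\sum_{j\in\cJ}\log(q_j/p_j)}\le\log(1/\eps')$. (One small correction: the marginalization gives $\eex{\vy\gets\vY}{p_j(\vy)/q_j(\vy)}\le 1$ rather than $=1$, since $\Supp(Y_j|_{Y_{<j}=\vy_{<j}})$ may be a strict subset of $\Supp(X_j|_{X_{<j}=\vy_{<j}})$; Jensen still gives the direction you need.) The paper, in contrast, writes the decomposition with the sign of the $\log(q_j/p_j)$ term flipped and then asserts the $\cJ$-vs.-$[t]$ inequality with ``$\ge$''---the direction the flipped sign would require, but the wrong one, since each summand has \emph{non}negative expectation exactly as you show.

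This sign error propagates, and you did not notice it. The intermediate bound your argument actually produces is $(1-\eps')k+\eps'\left(\Hmax(\vX_\cJ)+\log(1/\eps')\right)$, with a $+$ before $\log(1/\eps')$; you then assert this is ``the first stated inequality,'' which reads $\Hmax(\vX_\cJ)-\log(1/\eps)$. Those differ by sign, and the stated form is false: take $t=1$, $\cJ=\{1\}$, $X_1$ Bernoulli with $\pr{X_1=1}=1/4$, and $k=1$, so that $\eps=1/4$ and $\Hmax(X_1)=1$; then $\Hall(X_1)\approx 0.81$, exceeding $(1-\eps)k+\eps(\Hmax(X_1)-\log(1/\eps))=1/2$. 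The statement evidently carries a typo---the $t=1$ case must reduce to \cref{prop:HighAEContribution}, which reads $\Hmax(X)-\log\eps$ (i.e., $\Hmax(X)+\log(1/\eps)$), and the paper's own display in fact switches to ``$-\log\eps$'' in its penultimate line. So your derived intermediate bound is the correct one, and the clean corollary $k+\eps\cdot\Hmax(\vX_\cJ)+1$ follows from it via $\eps\log(1/\eps)\le 1/(e\ln 2)<1$, exactly as you say. Just do not read the typo'd sign of the stated lemma back into the bound you actually proved.
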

\begin{proof}
	The proof is similar to that of \cref{prop:HighAEContribution}. Let $\cY = \set{\vx\in \Supp(\vX) \colon\Hall_{\vX,\cJ}(\vx)  > k }$, let $\vY = \vX|_{\vX\in \cY}$, and  for $\vy_{\le k} \in \Supp(\vY_{< j})$ let $\eps_{\vy_{\le j}} =  \frac{\pr{\vY_j = \vy_j \mid \vY_{< j} = \vy_{< j}}}{\pr{\vX_j = \vy_j \mid \vX_{< j} = \vy_{< j}}}$.  Compute
	
\remove{	
	The distribution   $\vY$ can be  generated by the following process: sample $\vx \getsr  \vX|_{\vX\in \cY}$ and set $Y_1 = \vx_1$, then sample  $\vx \getsr  \vX|_{\vX\in \cY, \vX_1 = Y_1}$ and set $Y_2 = \vx_2$, and so on. It follows that  
	for every $\vy \in \Supp(\vY_{\le j})$: 
		\begin{align*}
	\lefteqn{\pr{\vX_j = \vy_j \mid \vX_{< j} = \vy_{< j}}}\\
	  &\ge \pr{\vX \in \Supp(\vY) \mid X_{< j} = \vy_{ <j} } \cdot \pr{\vX_j = \vy_j \mid \vX_{< j} = \vy_{< j} \land \vX \in \Supp(\vY)}\\
		&= \eps_{\vy_{<j}} \cdot \pr{\vY_j = \vy_j \mid \vY_{< j} = \vy_{< j}}
			\end{align*}
	and  thus
		\begin{align}
	\eps_{\vy_{<j}}  \le  \frac{\pr{\vX_j = \vy_j \mid \vX_{< j} = \vy_{< j}}}{\pr{\vY_j = \vy_j \mid \vY_{< j} = \vy_{< j}}}
	\end{align}
}

	\begin{align}
	\eex{\vy\getsr \vY}{\Hall_{\vY,\cJ}(\vy)}&= \eex{\vy\getsr \vY}{ \sum_{j\in \cJ} \Hall_{\vY_i| \vY_{<j}}(\vy_j | \vy_{<j})}\nonumber\\
	&= \eex{\vy\getsr \vY}{ \sum_{j\in \cJ} \Hall_{\vX_j| \vX_{<j}}(\vy_j | \vy_{<j}) + \log  \tfrac{\pr{\vY_j = \vy_j \mid \vY_{< j} = \vy_{< j}}}{\pr{\vX_j = \vy_j \mid \vX_{< j} = \vy_{\le j}}}}\nonumber\\
	&=  \eex{\vy\getsr \vY}{ \sum_{j\in \cJ} \Hall_{\vX_j| \vX_{<j}}(\vy_j | \vy_{<j}) + \log \eps_{\vy_{\le j}} } \nonumber \\
	&= \eex{\vy\getsr \vY}{ \sum_{j\in \cJ} \Hall_{\vX_j| \vX_{<j}}(\vy_j | \vy_{<j})}  + \eex{\vy\getsr \vY}{ \sum_{j\in \cJ}  \log \eps_{\vy_{\le j}} },
	\end{align}
and note that
	\begin{align}
	\eex{\vy\getsr \vY}{ \sum_{j\in \cJ}  \log \eps_{\vy_{\le j}} } &\ge 	\eex{\vy\getsr \vY}{ \sum_{j \in [t]}  \log \eps_{\vy_{\le j}} }\\
	& = \eex{\vy\getsr \vY}{ \log  \prod_j \eps_{\vy_{ \le j}} }\nonumber\\
	& = \eex{\vy\getsr \vY}{ \log \frac{\pr{\vY = \vy}}{\pr{\vX = \vy}}}\nonumber\\
	&= \log 1/\pr{\vX \in \cY}. \nonumber
	\end{align}
	Let  $\eps'= \pr{\vX \in \cY} $.  We conclude that
	\begin{align*}
	\eex{\vx\getsr \vX}{\Hall_{\vY,\cJ}(\vx)}  &= \sum_{\vx\in \Supp(\vX) \setminus \cY} \pr{\vX=\vx} \cdot \Hall_{\vX,\cJ}(\vx) + \sum_{\vx\in \cY} \pr{\vX=\vx} \cdot \Hall_{\vX,\cJ}(\vx)\\
	&\le (1 - \eps')k + \eps'  \cdot  \sum_{\vx\in \cY} \pr{\vY=\vx} \cdot \Hall_{\vX,\cJ}(\vx)\\
	&\le  (1 - \eps')k + \eps'  \cdot (	\eex{\vy\getsr \vY}{\Hall_{\vY,\cJ}(\vy)}  - \log 1/\eps')\\
	&\le (1 - \eps')k + \eps'  \cdot (\Hmax(\vX_{\cJ}) - \log 1/\eps')\\
	&\le (1 - \eps)k + \eps  \cdot (\Hmax(\vX_{\cJ}) - \log 1/\eps)\\
	&\le k +\eps \cdot  \Hmax(\vX_\cJ) + 1.
	\end{align*}
\end{proof}

The next two lemmas are only needed when measuring the \emph{max} accessible entropy of a generator, as we do  in \cref{sec:MaxAE},  and are not used in the main body of the paper. The first lemma generalizes \cref{prop:ShanonoToSample} to  sequence of random variables.

\begin{lemma}\label{prop:ShanonoToSampleSeqAdv}
	Let $\vX= (X_1,\ldots,X_t)$  be a sequence of random variables and let $\cJ \subseteq [t]$. Then, 
	\begin{enumerate}
		\item $\eex{\vx\getsr \vX}{2^{\Hall_{\vX,\cJ}(\vx) }} \le  \Hmax(\vX_\cJ)$.
		
		\item $\ppr{x \getsr  X}{\Hall_{\vX,\cJ}(\vx) > \log \frac1\eps + \Hmax(X_\cJ)} <\eps$, for any $\eps >0$.
	\end{enumerate}
\end{lemma}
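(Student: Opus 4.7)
The plan is to follow the template of \cref{prop:ShanonoToSample}: first establish the expectation bound in item 1, then derive the tail bound in item 2 as an immediate consequence via Markov's inequality. (I read the right-hand side of item 1 as $2^{\Hmax(\vX_\cJ)} = \size{\Supp(\vX_\cJ)}$, mirroring the first part of \cref{prop:ShanonoToSample}.)

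For item 1, the starting point is to unfold the expectation. Writing $p_i(\vx) = \pr{\vX_i = \vx_i \mid \vX_{<i} = \vx_{<i}}$, the chain rule gives $\pr{\vX = \vx} = \prod_{i \in [t]} p_i(\vx)$, while by definition $2^{\Hall_{\vX,\cJ}(\vx)} = \prod_{i \in \cJ} 1/p_i(\vx)$. Combining the two,
\begin{align*}
\eex{\vx \getsr \vX}{2^{\Hall_{\vX,\cJ}(\vx)}} = \sum_{\vx \in \Supp(\vX)} \prod_{i \in [t] \setminus \cJ} p_i(\vx).
\end{align*}
I would then partition this sum by $\vy = \vx_\cJ \in \Supp(\vX_\cJ)$ and reduce the problem to proving, for each such $\vy$,
\begin{align*}
\nu(\vy) \eqdef \sum_{\vx \in \Supp(\vX),\, \vx_\cJ = \vy} \prod_{i \in [t] \setminus \cJ} p_i(\vx) \;\le\; 1.
\end{align*}
Summing this pointwise bound over the at most $\size{\Supp(\vX_\cJ)}$ values of $\vy$ then yields the desired inequality.

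The main step is establishing $\nu(\vy) \le 1$, which I would prove via a probabilistic interpretation. Consider the experiment that walks sequentially through coordinates $i = 1, \ldots, t$: for $i \notin \cJ$, sample $\vx_i$ from $\vX_i \mid \vX_{<i} = \vx_{<i}$; for $i \in \cJ$, attempt to set $\vx_i = \vy_i$ (and halt with ``failure'' if $\vy_i$ lies outside the support of the corresponding conditional distribution). The probability that this experiment terminates successfully with a particular completion $\vx$ satisfying $\vx_\cJ = \vy$ is precisely $\prod_{i \in [t] \setminus \cJ} p_i(\vx)$, so $\nu(\vy)$ equals the total success probability of the experiment and is therefore at most $1$. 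This reinterpretation is the only nontrivial part of the argument; everything else is bookkeeping.

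Item 2 then follows routinely by applying Markov's inequality to the non-negative random variable $2^{\Hall_{\vX,\cJ}(\vx)}$ and invoking item 1:
\begin{align*}
\ppr{\vx \getsr \vX}{\Hall_{\vX,\cJ}(\vx) > \log \tfrac{1}{\eps} + \Hmax(\vX_\cJ)}
= \ppr{\vx \getsr \vX}{2^{\Hall_{\vX,\cJ}(\vx)} > \tfrac{1}{\eps} \cdot 2^{\Hmax(\vX_\cJ)}} < \eps,
\end{align*}
exactly paralleling the derivation of the second part of \cref{prop:ShanonoToSample}.
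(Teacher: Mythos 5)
Your proof is correct, and item~1 is established by a route genuinely different from the paper's. The paper proves $\eex{\vx}{2^{\Hall_{\vX,\cJ}(\vx)}} \le \size{\Supp(\vX_\cJ)}$ by induction on $(t,\size{\cJ})$, peeling off coordinate $1$: it conditions on $X_1$, applies the inductive hypothesis to $\vX_{-1}=(X_2,\ldots,X_t)$ with the shifted index set $\cJ_{-1}$, and then sums $\size{\Supp(\vX_{\cJ\setminus\set{1}}|_{X_1=x_1})}$ over $x_1\in\Supp(X_1)$ to recover $\size{\Supp(\vX_\cJ)}$ (using that $1\in\cJ$ makes the fibers disjoint). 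You dispense with induction entirely: you cancel the $\cJ$-factors against the chain-rule expansion of $\pr{\vX=\vx}$ to rewrite the expectation as $\sum_{\vx}\prod_{i\notin\cJ} p_i(\vx)$, group by $\vy=\vx_\cJ$, and interpret each inner sum $\nu(\vy)$ as the success probability of a coupled experiment that samples the non-$\cJ$ conditionals while pinning the $\cJ$-coordinates to $\vy$. That reading makes $\nu(\vy)\le 1$ immediate and, I think, more transparent than the paper's inductive bookkeeping, which hides the same cancellation inside the hypothesis; both are equally rigorous. Item~2 is the same Markov step as the paper, so no difference there. (You are also right that the displayed right-hand side of item~1 is a typo and should read $2^{\Hmax(\vX_\cJ)}=\size{\Supp(\vX_\cJ)}$.)
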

\begin{proof}

	The second item follows from the first one as in the proof of \cref{prop:ShanonoToSample}. We prove the first item by induction on $t$ and $\size{J}$. The case $t=1$ is immediate, so we assume for all $(t',\cJ')$ with $(t',\size{\cJ'}) < (t,\size{\cJ})$ and prove it for $(t,\cJ)$. Assume that $1 \in \cJ$ (the case $1\notin \cJ$ is analogous) and let $\vX_{-1} = (X_2,\ldots,X_t)$ and  $\cJ_{-1}= \set{i-1\colon i\in \cJ\setminus\set{1}}$. Compute 
	\begin{align*}
	\eex{\vx\getsr \vX}{2^{\Hall_{\vX,\cJ}(\vx) }} &= \sum_{x_1\in \Supp(X_1)} 2^{-\Hall_{X_1}(\vx_1)} \cdot 2^{\Hall_{X_1}(x_1)} \cdot \eex{\vx \getsr \vX_{-1}|_{X_1 =x_1}}{2^{\Hall_{\vX_{-1}|_{X_1=x_1},\cJ_{-1}}(\vx)}}\\
	&\le  \sum_{x_1\in \Supp(X_1)}   1\cdot \size{\Supp((\vX_{-1})_{\cJ_{-1}}|_{X_1 = x_1})}\\
	&=  \sum_{x_1\in \Supp(X_1)}   \size{\Supp(\vX_{\cJ \setminus \set{1}}|_{X_1 = x_1})}\\
	&= \size{\Supp(\vX_\cJ)}.
	\end{align*}
\end{proof}

\paragraph{Sub-additivity.}
The chain rule for Shannon entropy yields that 
$$\HSh(X= (X_1,\ldots,X_t)) = \sum_i \HSh(X_i | X_1,\ldots,X_{i-1}) \le \sum_i \HSh(X_i).$$
The following lemma shows that a variant of the above also holds for sample-entropy.

\begin{lemma}\label{prop:subaddativiy2}
	For random variables  $\vX=(X_1,\ldots,X_t)$, it holds that
	\begin{enumerate}
		\item $\eex{\vx \getsr \vX}{2^{\Hall_\vX(\vx) - \sum_t \Hall_{X_i}(\vx_i)}} \le 1$, and
		
		\item $\ppr{\vx \getsr  \vX}{H_\vX(\vx) > \log \frac1\eps + \sum_{i\in [t]} H_{X_i}(\vx_i)} <\eps$, for any  $\eps >0$.
		
	\end{enumerate}
\end{lemma}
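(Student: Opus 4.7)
The plan is to rewrite the quantity $2^{\Hall_\vX(\vx) - \sum_i \Hall_{X_i}(\vx_i)}$ in a form that makes its expectation transparent, and then derive the tail bound from Markov's inequality exactly as in \cref{prop:ShanonoToSample}.

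For the first item, I would observe that by definition of sample-entropy,
\begin{equation*}
2^{\Hall_\vX(\vx) - \sum_{i\in[t]} \Hall_{X_i}(\vx_i)} \;=\; \frac{\prod_{i\in[t]} \Pr[X_i = \vx_i]}{\Pr[\vX = \vx]}.
\end{equation*}
Therefore, expanding the expectation over $\vx \getsr \vX$, the $\Pr[\vX=\vx]$ factors cancel and we get
\begin{equation*}
\eex{\vx \getsr \vX}{2^{\Hall_\vX(\vx) - \sum_i \Hall_{X_i}(\vx_i)}} \;=\; \sum_{\vx \in \Supp(\vX)} \prod_{i\in[t]} \Pr[X_i = \vx_i] \;\le\; \sum_{\vx \in \Uni^t} \prod_{i\in[t]} \Pr[X_i = \vx_i] \;=\; 1,
\end{equation*}
where the last equality is because $\prod_i \Pr[X_i = \vx_i]$ is precisely the probability mass function of the product distribution of independent copies $\hat{X}_1,\ldots,\hat{X}_t$ of $X_1,\ldots,X_t$, summed over all of $\Uni^t$.

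For the second item, I would simply apply Markov's inequality to the nonnegative random variable $Z(\vx) \eqdef 2^{\Hall_\vX(\vx) - \sum_i \Hall_{X_i}(\vx_i)}$, using the bound $\Ex[Z] \le 1$ from the first item. The event $\Hall_\vX(\vx) > \log(1/\eps) + \sum_{i\in[t]} \Hall_{X_i}(\vx_i)$ is equivalent to $Z(\vx) > 1/\eps$, and Markov gives $\Pr[Z > 1/\eps] < \eps \cdot \Ex[Z] \le \eps$.

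I do not anticipate any real obstacle here: the first item is a one-line computation once one unwinds the definition of sample-entropy, and the second item is the same Markov-style deduction already used to pass from the first to the second item of \cref{prop:ShanonoToSample} and of \cref{prop:ShanonoToSampleSeqAdv}. The only thing worth being careful about is handling the $\vx \notin \Supp(\vX)$ case cleanly, which is immediate from the convention $\Hall_\vX(\vx) = \infty$ and $2^{-\infty}=0$, so such points contribute nothing to the expectation.
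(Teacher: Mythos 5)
Your proof is correct and takes essentially the same approach as the paper: rewrite the exponentiated quantity as $\prod_i\Pr[X_i=\vx_i]/\Pr[\vX=\vx]$, cancel against the weight in the expectation, bound the resulting sum by $1$, and then apply Markov for the second item. Your note that the final sum is the total mass of the product of the marginals merely makes explicit the reason the paper's last inequality holds.
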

\begin{proof}
	As in \cref{prop:ShanonoToSample}, the second part follows from the first by Markov's inequality. For the first part, compute
	\begin{align*}
	\eex{\vx \getsr \vX}{2^{\Hall_\vX(\vx) - \sum_t \Hall_{X_i}(\vx_i)}}&= \sum_{\vx \in \Supp(\vX)} \pr{\vX =\vx} \cdot \frac{\prod_{i\in [t]} \pr{X_i =\vx_i}}{\pr{\vX =\vx} }\\
	& = \sum_{\vx \in \Supp(\vX)}  \prod_i \pr{X_i =\vx_i}\\
	&\le 1.
	\end{align*}
\end{proof}

\subsection{Hashing}\label{sec:FunctionFamilies}
We will use two  types of  (combinatorial) ``hash'' functions.

\subsubsection{Two-Universal Hashing}\label{sec:Two-Uni}
\begin{definition}[Two-universal function family]\label{def:twoUniversalFamily} A function family $\h = \set{h\colon\Dom \mapsto \Rng}$ is {\sf two-universal} if   $\forall x\neq x' \in \Dom$, it holds that $\ppr{h\getsr \h}{h(x) = h(x')} \le  1/\size{\Rng}$.
\end{definition}
An example of  such a function family is the set $\h_{s,t} = \zo^{s\times t}$ of Boolean matrices,  where for  $h\in \h_{s,t}$ and $x\in \zo^s$, we let $h(x) = h \times x$  (\ie the matrix vector product over $\GF_2$). Another canonical example is $\h_{s,t} = \zo^s$ defined by  $h(x) \eqdef h\cdot x$ over $\GF(2^s)$, truncated to its first $t$ bits.

A useful application  of two-universal hash functions  is to convert a source   of high \Renyi entropy to  a (close to) uniform distribution.
\begin{lemma}[Leftover hash lemma \cite{ImpagliazzoLeLu89,ImpagliazzoZ89}]\label{lem:leftover}
	Let $X$ be a random variable over $\zn$ with $\HRen(X) \geq k$,  let $\h = \set{g\colon\zn \mapsto \zm}$ be two-universal, and let $H \getsr \h$. Then 
	$\SD((H,H(X)),(H,\Uni_m)) \leq \frac12\cdot  2^{(m -k)/2}$.
\end{lemma}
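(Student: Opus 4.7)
The plan is to reduce the statistical-distance bound to a collision-probability bound, and then exploit two-universality to control the collision probability. Let $Z=(H,H(X))$ and let $\Uni$ be uniform on $\h\times\zm$, so $|\Supp(\Uni)|=|\h|\cdot 2^m$.

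First I would recall (or prove in one line via Cauchy--Schwarz) the standard inequality
$$\SD(Z,\Uni)\ \le\ \tfrac12\sqrt{|\h|\cdot 2^m\cdot \cp(Z)-1},$$
which follows from $\SD(Z,\Uni)=\tfrac12\sum_{z}|\!\Pr[Z=z]-\tfrac{1}{|\h|2^m}|\le \tfrac12\sqrt{|\h|2^m}\cdot \bigl(\sum_z(\Pr[Z=z]-\tfrac{1}{|\h|2^m})^2\bigr)^{1/2}$, and the inner sum equals $\cp(Z)-\tfrac1{|\h|2^m}$.

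Next I would compute $\cp(Z)$. Taking independent copies $(H_1,X_1),(H_2,X_2)$ of $(H,X)$ and using the independence of $H$ from $X$,
\begin{align*}
\cp(Z)&=\Pr[H_1=H_2\ \wedge\ H_1(X_1)=H_2(X_2)]\\
&=\tfrac1{|\h|}\cdot\Pr[H_1(X_1)=H_1(X_2)].
\end{align*}
Splitting on whether $X_1=X_2$ and invoking two-universality on the event $X_1\neq X_2$,
\begin{align*}
\Pr[H_1(X_1)=H_1(X_2)]\ &\le\ \Pr[X_1=X_2]+\Pr[X_1\neq X_2]\cdot 2^{-m}\\
&\le\ \cp(X)+2^{-m}\ \le\ 2^{-k}+2^{-m},
\end{align*}
where I used $\HRen(X)\ge k$, i.e.\ $\cp(X)\le 2^{-k}$. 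Hence $\cp(Z)\le \tfrac{1}{|\h|}(2^{-k}+2^{-m})$.

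Plugging this into the first inequality gives
$$\SD(Z,\Uni)\ \le\ \tfrac12\sqrt{|\h|\cdot 2^m\cdot\tfrac{1}{|\h|}(2^{-k}+2^{-m})-1}\ =\ \tfrac12\sqrt{2^{m-k}}\ =\ \tfrac12\cdot 2^{(m-k)/2},$$
which is the claimed bound. I do not expect a real obstacle here: the only slightly delicate step is the Cauchy--Schwarz conversion from $L_1$ to $L_2$ norm, but this is a one-line calculation; the rest is a direct application of two-universality and of $\HRen(X)\ge k$.
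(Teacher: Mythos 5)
Your proof is correct, and it is the standard argument for the Leftover Hash Lemma (Cauchy--Schwarz to bound statistical distance by collision probability, then two-universality to bound the collision probability of $(H,H(X))$). The paper states this lemma as a cited result from \cite{ImpagliazzoLeLu89,ImpagliazzoZ89} and gives no proof of its own, so there is nothing to compare against; your derivation matches the textbook proof those references contain. Every step checks out: the inner $L_2$ sum does equal $\cp(Z)-\tfrac{1}{|\h|2^m}$, the decomposition $\cp(Z)=\tfrac{1}{|\h|}\Pr[H(X_1)=H(X_2)]$ is valid because $H$ is independent of $X$, and the final cancellation of the $-1$ against the $|\h|2^m\cdot\tfrac{1}{|\h|}\cdot 2^{-m}$ term gives exactly $\tfrac12\cdot 2^{(m-k)/2}$.
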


\subsection{Many-wise Independent Hashing}\label{sec:Many-Ind}
\begin{definition}[$\ell$-wise independent function family]\label{def:t-wiseFamily} A function family $\h = \set{h\colon\Dom \mapsto \Rng}$ is {\sf $\ell$-wise independent} if   for any distinct $x_1,\ldots,x_\ell \in \Dom$, it holds that $(H(x_1),\ldots,H(x_\ell))$ for $H \getsr \h$ is uniform over $\Rng^\ell$.
\end{definition}
The canonical example of  such an $\ell$-wise independent function  family is  $\h_{s,t,\ell} = (\zo^s)^\ell$ defined by  $(h_0,\ldots,h_{\ell-i})(x) \eqdef \sum_{0\le i\le \ell-1} h_i \cdot x^i$ over $\GF(2^s)$, truncated to its first $t$ bits. 

It is easy to see that, for $\ell>1$, an $\ell$-wise independent function  family is two-universal, but $\ell$-wise independent function families,  in particular with  larger value of $\ell$, have  stronger guarantees on their output distribution compared with two-universal hashing. We will state, and use,  one such guarantee in the construction of statistically hiding commitment schemes presented in \cref{sec:GapToCom}.

\subsection{One-way Functions}\label{subsec:OWF}
We recall the standard definition of one-way functions.

\begin{definition}[one-way functions]\label{def:OWF}
	A polynomial-time computable $f\colon \zo^n\mapsto \zo^\ast$ is  {\sf one-way} if for every \ppt $\Ac$
	\begin{align}\label{eq:OWF}
	\ppr{y\getsr f(U_{s(n)})}{\Ac(1^n,y)\in f^{-1}(y)} = \negl(n).
	\end{align}
\end{definition}

Without loss of generality, \cf \cite{HaitnerHR11}, it can be assumed that $s(n) =n$ and  $f$ is length-preserving (\ie $\size{f(x)} = \size{x}$).

In \cref{sec:ATEtoOWF} we prove that the existence of inaccessible entropy generators implies that of one-way functions.  The reduction uses the seemingly weaker notion of distributional one-way functions. Such a function is easy to compute, but it is hard to compute uniformly random preimages of random images.

\begin{definition}\label{def:dowf}
	A polynomial-time  computable  $f\colon\zn\to\zo^{\ell(n)}$ is  {\sf distributional one-way}, if $\exists p\in\poly$ such that
	$$\SD\left((x,f(x))_{x\getsr \zn},(\Ac(1^n,f(x)),f(x))_{x\getsr \zn}\right)\geq \frac{1}{p(n)}$$
	for any \pptm $\Ac$ and large enough $n$.
\end{definition}

Clearly, any one-way function is also a distributional one-way function. While the other direction is not necessarily always true, \citet{ImpagliazzoLu89} showed that the existence of distributional one-way functions  implies that of (standard) one-way functions. In particular, they proved that if one-way functions do not exist, then any efficiently computable function has an inverter of the following form.
\begin{definition}[$\xi$-inverter]\label{sef:gammaInverter}
	An algorithm $\Inv$ is an {\sf $\xi$-inverter} of $f\colon \zn\mapsto \zo^{\ell(n)}$ if the following holds:
	\begin{align*}
	\ppr{x\getsr \zn; y = f(x)}{\SD\left((x')_{x'\la f^{-1}(y)},(\Inv(1^n,y))\right) > \xi} \leq \xi.
	\end{align*} 
\end{definition}
\begin{lemma}[{\cite[Lemma 1]{ImpagliazzoLu89}}, full details in {\cite[Theorem 4.2.2]{ImpagliazzoPHD}}]\label{lemma:NoDistOWF}
	Assume one-way functions do not exist. Then for any polynomial-time computable function $f\colon\zn\mapsto \zo^{\ell(n)}$ and $p\in \poly$, there exists a \pptm algorithm $\Inv$ such that the following holds for infinitely many $n$'s:  on security parameter $1^n$, algorithm $\Inv$ is a $1/p(n)$-inverter of $f_n$  (\ie $f$ is restricted to $\zn$).
\end{lemma}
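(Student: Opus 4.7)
The plan is to prove the contrapositive: if there exist a polynomial-time computable $f$ and a polynomial $p$ such that no \pptm is a $1/p(n)$-inverter of $f_n$ for infinitely many $n$, then one-way functions exist. Given such $f$ and $p$, I would construct a candidate one-way function $g$ directly from $f$ using a 2-universal hash family $\h_n = \set{h\colon \zn \mapsto \zn}$ (\cref{def:twoUniversalFamily}):
$$g(x, h, i) = \bigl( f(x),\; h,\; h(x)_{\le i},\; 1^i \bigr),$$
where $x\in \zn$, $h\in \h_n$, and $i \in \set{0, 1, \ldots, n}$. On a uniform input, the output reveals $f(x)$, the hash $h$, the unary index $i$, and the first $i$ bits of the 2-universal hash of $x$.

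The bulk of the work is to show that $g$ is one-way. Suppose towards contradiction that some \pptm $\Ac$ inverts $g$ with probability at least $1/q(m)$ on infinitely many input lengths $m$, for some $q\in\poly$; I would use $\Ac$ to build a \pptm $\Inv$ that is a $1/p(n)$-inverter of $f_n$ for infinitely many $n$, contradicting our assumption. On input $(1^n,y)$, $\Inv$ repeatedly picks $i$ uniformly from $\set{0,\ldots,n}$, samples $h \getsr \h_n$ and $z \getsr \zo^i$, runs $\Ac(y, h, z, 1^i)$, and records each returned $x'$ for which $f(x') = y$ and $h(x')_{\le i} = z$; after polynomially many trials $\Inv$ outputs a uniformly random such $x'$ (or $\perp$ if none).

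The key observation is that for the ``right'' value $i^* \approx \log \size{f^{-1}(y)}$, sampling $(h,z)$ uniformly and conditioning on the existence of a $g$-preimage $(x', h, i^*)$ with $f(x') = y$ and $h(x')_{\le i^*} = z$ yields an $x'$ statistically close to uniform on $f^{-1}(y)$. This follows from the leftover hash lemma (\cref{lem:leftover}): when $i^*$ is slightly below $\log \size{f^{-1}(y)}$, for $X$ uniform on $f^{-1}(y)$ the distribution $(h, h(X)_{\le i^*})$ is statistically close to $(h, U_{i^*})$; hence sampling $X$ from $f^{-1}(y)$ conditioned on $h(X)_{\le i^*} = z$ for uniform $(h,z)$ is close to sampling $X$ uniformly from $f^{-1}(y)$ outright.

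The main obstacle is that $\Inv$ does not know $i^*$ in advance, and for ``wrong'' values of $i$ the distribution of $\Ac$'s output need not be close to uniform. I would address this by iterating over all $i \in \set{0, \ldots, n}$, losing only a factor of $n+1$ in success probability (absorbed into the polynomial slack by starting from $\Ac$ with inversion probability at least $1/q(m)$ for suitably large $q$), and by amplifying the per-trial success probability via polynomially many independent repetitions. This drives the total failure below the $1/p(n)$ tolerance allowed by \cref{sef:gammaInverter} in both the ``probability-over-$y$'' and ``statistical-distance-from-uniform'' senses, yielding the desired contradiction.
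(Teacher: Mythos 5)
The paper does not prove \cref{lemma:NoDistOWF}: it is stated as a quoted result, with the text explicitly remarking that \citet{ImpagliazzoLu89} ``only gave a proof sketch'' and deferring to \cite[Theorem 4.2.2]{ImpagliazzoPHD} (and to \cite{BermanHT18}) for the complete argument. So there is no in-paper proof to compare against. That said, your contrapositive framing, the candidate $g(x,h,i)=(f(x),h,h(x)_{\le i},1^i)$, and the invocation of the leftover hash lemma are indeed the known Impagliazzo--Luby construction, so as an outline you are pointing at the right ingredients.

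However, the steps your sketch waves away are exactly the ones that make the cited proof a substantial piece of work, and as written I do not think they go through. First, your uniformity claim reasons about ``conditioning on the existence of a $g$-preimage,'' but $\Ac$ does not hand you that conditional distribution --- it hands you one preimage of its own choosing. That choice is forced only when the consistent preimage is essentially unique \emph{and} $\Ac$ succeeds on nearly all such $(h,z)$; you only assume $\Ac$ inverts with probability $1/q(m)$ on average, so $\Ac$ is free to succeed only on a biased sub-collection of $(h,z)$ and return a preimage far from uniform. Second, drawing $i$ uniformly and pooling all recovered $x'$ across trials does not control the output law: for $i$ well below $\log|f^{-1}(y)|$ many preimages match $(h,z)$ and $\Ac$ may always return the same one, and this bias is a fixed statistical-distance term, not a failure probability --- it is not shrunk by ``polynomially many independent repetitions.'' Third, $1/p(n)$ is a parameter you are handed, but your $g$ does not depend on $p$, and the $q$ for which some $\Ac$ inverts $g$ is whatever the non-one-wayness of $g$ delivers; the $\xi$ of the resulting $\xi$-inverter is tied to that $q$ and to the slack in $i^*$, not to $p$. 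To hit an arbitrary $1/p$ you need the construction (number of hashed parallel copies of $f$, slack in $i^*$) to depend on $p$, together with an explicit preimage-size estimation step, as in the cited proofs. Repetition only amplifies the probability of returning \emph{some} valid preimage, not the closeness of the returned preimage's distribution to uniform.
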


\citet{ImpagliazzoLu89} only gave a proof sketch for the above lemma. The full proof can be found in \cite[Theorem 4.2.2]{ImpagliazzoPHD} (see more details in \cite{BermanHT18}).
\section{Inaccessible Entropy Generators}\label{sec:realvsacc}
In this section we formalize the notion of a block-generator, and the real and accessible entropy of such generators. As discussed in the introduction, these entropies and the gap between them (\ie the inaccessible entropy of the generator) play a pivotal role in our work.


We begin by informally recalling the definition of a block-generator from the introduction. Let $\Gc\colon \zn \mapsto (\zs)^m$ be an $m$-block generator over $\zn$ and let $\Gc(1^n) = (Y_1,\ldots,Y_m)$ denote the output of $\Gc$ over a uniformly random input. The \emph{real entropy} of $\Gc$ is the (Shannon) entropy in $\Gc$'s output blocks:  for each block $Y_i$ we take its entropy conditioned on the previous blocks $Y_{<i} = (Y_1,\ldots,Y_{i-i})$.  The \emph{accessible entropy} of an arbitrary, adversarial $m$-block generator $\Gs$, with the same block structure as of $\Gc$,  is the entropy of the block of $\Gs$ conditioned not only on the previous blocks but also on the \emph{coins used by $\Gs$ to generate the previous blocks}. The generator $\Gs$ is allowed to flip fresh random coins to generate its next block, and this is indeed the source of entropy in the block (everything else is fixed). We only consider generator $\Gs$ whose messages are \emph{always}  consistent with $\Gc$: the support of $\Gs$'s messages is contained in that of $\Gc$.

Moving to the formal definitions, we first define  an $m$-block generator and  then define the real and accessible entropy of such a generator.   The generators below  might get  ``public parameter'' chosen uniformly, and we measure the entropy of the generator conditioned on this parameter.\footnote{The public parameter extension  is not used for our reduction from one-way functions to statistically  hiding commitments presented in this work, but  this natural extension  was already found useful in the   work of \citet{BitanskyHKY18}.}

\begin{definition}[Block-generators]\label{def:blockGenrator}
	Let $n$ be a security parameter, and let $\pp=\pp(n)$, $s=s(n)$ and $m=m(n)$. An {\sf $m$-block generator} is a function $\Gc \colon \zo^{\pp} \times \zo^{s} \mapsto (\zs)^{m}$, and it is {\sf efficient} if its running time on input of length $\pp(n)+ s(n)$ is polynomial in $n$.
	
	We   call parameter $n$ the {\sf security parameter}, $\pp$ the public parameter length, $s$ the {\sf seed length},  $m$  the  {\sf number of blocks}, and  $\ell(n) = \max_{(z,x)\in \zo^{\pp(n)} \times \zo^{s(n)},i\in [m(n)]} \size{\Gc(z,x)_i}$ the {\sf maximal block length} of $\Gc$.  
\end{definition}

\subsection{Real Entropy}\label{SHC:def:RealE}
Recall that we are interested in lower bounds on the real entropy of a block-generator. We define two variants of real entropy: real Shannon entropy and real min-entropy. We connect these two notions through the notion of real sample-entropy. In other words, for a fixed  $m$-tuple output of the generator, we ask ``how surprising were the blocks output by $\Gc$ in this tuple?'' We then get the real Shannon entropy by taking the expectation of this quantity over a random execution and the min-entropy by taking the minimum (up to negligible statistical distance). An alternative approach would be to define the notions through the sum of conditional entropies (as we do in the intuitive description in the introduction). This approach would yield closely related definitions, and in fact exactly the same definition in the case of Shannon entropy (see \Cref{lem:realShannondef}).

\begin{definition}[Real sample-entropy]\label{def:RealSamEnt}
Let $\Gc$ be an $m$-block generator over $\zo^{\pp} \times \zo^{s}$,  let $n\in\N$, let  $Z_n$ and $X_n$ be uniformly distributed over $\zo^{\pp(n)}$ and $\zo^{s(n)}$ respectively, and let $\vY_n= (Y_{1},\ldots,Y_m) =  \Gc(Z_n, X_n)$.  For $n\in\N$ and $i\in [m(n)]$, define the {\sf real sample-entropy of $\vy\in \Supp(Y_{1},\ldots,Y_i)$ given $z\in\Supp(Z_n)$} as
	$$\Hrealsam_{\Gc,n}(\vy| z) = \sum_{j=1}^i \Hsam_{Y_j|Z_n,Y_{<j}}(\vy_j|z,\vy_{<j}).$$
\end{definition}
\noindent We omit the security parameter from the above notation when clear from the context.
  
\begin{definition}[Real entropy]
	Let $\Gc$ be an $m$-block generator, and  let  $Z_n$ and $\vY_n$ be as in $\cref{def:RealSamEnt}$. Generator $\Gc$ has {\sf real entropy at least $k = k(n)$}, if
	$$\eex{(z,\vy) \getsr (Z_n,\vY_n)}{\Hrealsam_{\Gc,n}(\vy| z)} \geq k(n)$$
	for every $n\in \N$.
	
	The generator $\Gc$ has {\sf real min-entropy at least $k(n)$ in its \ith block} for some $i = i(n)\in [m(n)]$, if
	$$\ppr{(z,\vy) \getsr (Z_n,\vY_n)}{\Hsam_{Y_i|Z_n,Y_{<i}}(\vy_i|z,\vy_{<i}) < k(n)} = \negl(n).$$

We say the above bounds	 are {\sf invariant to the public parameter} if they hold for any fixing of the public parameter $Z_n$.\footnote{In particular, this is the case when there is no public parameter, \ie $\pp= 0$.}
\end{definition}

We observe that the real Shannon entropy simply amounts to measuring the standard conditional Shannon entropy of $\Gc$'s output blocks.

\begin{lemma} \label{lem:realShannondef}
	Let  $\Gc$,  $Z_n$ and $\vY_n$ be as in \cref{def:RealSamEnt} for some $n\in \N$. Then
	$$\eex{(z,\vy) \getsr (Z_n,\vY_n)}{\Hrealsam_{\Gc,n}(\vy|z)} =  \Hall(\vY_n|Z_n).$$
\end{lemma}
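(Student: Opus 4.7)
The plan is to unfold the definition of $\Hrealsam_{\Gc,n}$ as a telescoping sum of conditional sample-entropies, exchange sum with expectation by linearity, observe that each summand's expectation is exactly the standard conditional Shannon entropy, and finally invoke the chain rule for Shannon entropy to collapse the sum into $\Hall(\vY_n \mid Z_n)$.

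More concretely, first I would expand
$$\eex{(z,\vy) \getsr (Z_n,\vY_n)}{\Hrealsam_{\Gc,n}(\vy\mid z)} = \eex{(z,\vy) \getsr (Z_n,\vY_n)}{\sum_{j=1}^{m} \Hsam_{Y_j\mid Z_n,Y_{<j}}(\vy_j\mid z,\vy_{<j})}$$
and use linearity of expectation to pull out the sum. For each $j$, I would then observe that by definition of conditional Shannon entropy (as noted in \cref{sec:measures}),
$$\eex{(z,\vy) \getsr (Z_n,\vY_n)}{\Hsam_{Y_j\mid Z_n,Y_{<j}}(\vy_j\mid z,\vy_{<j})} = \Hall(Y_j \mid Z_n, Y_{<j}),$$
since the marginal of $(Z_n,Y_{\le j})$ coming from $(Z_n,\vY_n)$ is exactly $(Z_n,Y_{\le j})$, and the summand only depends on these coordinates.

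The final step is the chain rule for conditional Shannon entropy,
$$\sum_{j=1}^{m} \Hall(Y_j \mid Z_n, Y_{<j}) = \Hall(\vY_n \mid Z_n),$$
which gives the claim. There is no real obstacle here: the statement is essentially a repackaging of the chain rule, and the only point worth checking is the bookkeeping that the expectation of the $j$th conditional sample-entropy under the joint distribution $(Z_n,\vY_n)$ coincides with the conditional Shannon entropy $\Hall(Y_j\mid Z_n, Y_{<j})$, which is immediate once one writes out the definitions.
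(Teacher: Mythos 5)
Your proof is correct and takes essentially the same route as the paper: the paper simply invokes \cref{prop:ShanonoToSampleSeq}(1) (applied to $\vX=(Z,\vY)$ and $\cJ=\{2,\ldots,m+1\}$) to get $\eex{}{\Hrealsam_\Gc(\vy\mid z)}=\sum_i \HSh(Y_i\mid Z,Y_{<i})$ and then applies the chain rule, whereas you inline the short proof of that proposition (linearity of expectation plus the definition of conditional Shannon entropy) before applying the chain rule.
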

\begin{proof}
Omit $n$ for clarity. Applying  \cref{prop:ShanonoToSampleSeq}(1)  to $\vX = (Z,\vY)$ and $\cJ = \set{2,\ldots,m+1}$ yields that $\eex{(z,\vy) \getsr (Z,\vY)}{\Hrealsam_\Gc(\vy|z)} = \sum_{i\in [m]} \HSh(Y_i|Z,Y_{<i})$, where  by the chain rule for Shannon entropy ,	$\sum_{i\in [m]} \HSh(Y_i|Z,Y_{<i}) =  \Hall(\vY|Z)$.
\end{proof}

\subsection{Accessible Entropy}\label{SHC:def:IE}
Recall that we are interested in \emph{upper bounds} on the accessible entropy of a block-generator. As in the case of real entropy, we define  this   notion through the notion of accessible sample-entropy. For a fixed execution of the adversary $\Gs$, we ask how surprising were the messages sent by $\Gs$, and then get accessible Shannon entropy by taking the expectation of this quantity over a random execution.


%



\begin{definition}[Online block-generator]
	Let $n$ be a security parameter, and let $\pp = \pp(n)$ and $m=m(n)$. An $m$-block  {\sf online} generator is a function $\Gs \colon \zo^\pp \times (\zo^{v})^{m} \mapsto  (\zs)^{m}$ for some $v =v(n)$, such that the \ith output block of $\Gs$ is a function of (only) its first $i$  input blocks. We denote the {\sf transcript} of $\Gs$ over random input by  $T_{\Gs}(1^n) = (Z,R_1,Y_1,\ldots,R_m,Y_m)$, for  $Z \getsr \zo^\pp$, $(R_1,\ldots,R_m) \getsr (\zo^{v})^{m}$ and $(Y_1,\ldots,Y_m) =\Gs(Z,R_1,\ldots,R_m)$.
\end{definition}
That is, an online block-generator is a special type of block-generator that tosses fresh random coins before outputting each  new block. In the following, we let $\Gs(z,r_1,\ldots,r_i)_i$  stand for $\Gs(z,r_1,\ldots,r_i,r^\ast)_i$ for arbitrary $r^\ast \in (\zo^{v})^{m -i}$ (note that the choice of $r^\ast$ has no effect on the value of $\Gs(z,r_1,\ldots,r_i,r^\ast)_i$).

\begin{definition}[Accessible sample-entropy]\label{def:AccessibleSampleEntropy}
	Let $n$ be a security parameter, and let $\Gs$ be an online $m=m(n)$-block online  generator. The {\sf accessible sample-entropy of $\vt =(z,r_1,y_1,\ldots,r_m,y_m)\in \Supp(Z,R_1,Y_1\ldots,R_m,Y_m) = T_{\Gs}(1^n)$} is defined by
	
	$$\Haccsam_{\Gs,n}(\vt) = \sum_{i=1}^{m} \Hall_{Y_i|Z,R_{<i}}(y_i|z,r_{<i}).$$
\end{definition}
\noindent 
Again, we omit the security parameter from the above notation when clear from the context.

As in the case of real entropy, the expected accessible entropy of a random transcript can  be expressed in terms of the standard conditional Shannon entropy.
\begin{lemma}\label{lem:accessible-as-conditional}
	Let $\Gs$ be an online $m$-block generator  and let  $(Z,R_0,R_1,Y_1,\ldots,R_m,Y_m) = T_\Gs(1^n)$ be its transcript.	Then, 
	$$\eex{\vt\getsr T_{\Gs}(Z_n,1^n)} {\Haccsam_{\Gs}(\vt)} = \sum_{i\in[m]} \Hall(Y_i|Z,R_{<i}).$$
\end{lemma}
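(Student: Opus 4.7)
My plan is to prove this by direct computation, mirroring the proof of \cref{lem:realShannondef}. Write $(Z, R_1, Y_1, \ldots, R_m, Y_m) = T_\Gs(1^n)$. The accessible sample-entropy is defined as the sum
$$\Haccsam_{\Gs}(\vt) = \sum_{i=1}^{m} \Hall_{Y_i|Z,R_{<i}}(y_i|z,r_{<i}),$$
so by linearity of expectation,
$$\eex{\vt \getsr T_\Gs(1^n)}{\Haccsam_{\Gs}(\vt)} = \sum_{i=1}^m \eex{(z,r_{<i},y_i) \getsr (Z,R_{<i},Y_i)}{\Hall_{Y_i|Z,R_{<i}}(y_i|z,r_{<i})}.$$
The key step is then to observe that each summand on the right is exactly the conditional Shannon entropy $\HSh(Y_i \mid Z, R_{<i})$, by the standard identity $\HSh(X \mid Y) = \Exp_{(x,y) \getsr (X,Y)}[\Hsam_{X|Y}(x|y)]$ (a special case of \cref{prop:ShanonoToSampleSeq} with a single target coordinate). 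Summing over $i$ yields the claim.

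There is essentially no obstacle here: once the definitions are unwound, the statement reduces to linearity of expectation combined with the definition of conditional Shannon entropy. Note that unlike in \cref{lem:realShannondef}, we do \emph{not} apply the chain rule to collapse the sum into a single joint entropy, because the conditioning in the $i$th summand is on $(Z, R_{<i})$ rather than on $(Z, Y_{<i})$; indeed, conditioned on $(Z, R_{<i})$ the block $Y_i$ retains genuine entropy coming from the fresh coins $R_i$, which is exactly the quantity being measured. The only minor subtlety worth remarking on is that since $\Gs$ is an online generator, $Y_{<i}$ is a deterministic function of $(Z, R_{<i})$, so one could equivalently write $\HSh(Y_i \mid Z, R_{<i}, Y_{<i})$; this observation is not needed for the proof but clarifies the relationship with the real-entropy formulation.
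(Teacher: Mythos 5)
Your proof is correct and is exactly what the paper means by ``follows similar lines to that of \cref{lem:realShannondef}'': linearity of expectation plus the defining identity $\HSh(X \mid Y) = \Exp_{(x,y)\getsr (X,Y)}[\Hsam_{X|Y}(x|y)]$ applied to each summand. Your remark that the chain rule does \emph{not} collapse the sum here (because each term conditions on the coins $R_{<i}$ rather than on the outputs $Y_{<i}$, and conditioning on $R_{\le i}$ would kill the entropy entirely) is precisely the right point of contrast with \cref{lem:realShannondef}; the paper handles the analogous issue elsewhere, e.g.\ in the proof of \cref{lem:MaxAvgAE}, by reordering the transcript to $(Z, Y_1, R_1, \ldots)$ before invoking \cref{prop:ShanonoToSampleSeq}, a step your direct computation sidesteps cleanly.
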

\begin{proof}
Follows similar lines to  that of \Cref{lem:realShannondef}. 
\end{proof}
The following  observation relates the  accessible entropy to the  accessible sample-entropy of likely events.

\begin{lemma}\label{lem:MaxAvgAE}
	Let $\Gs$ be an online $m$-block generator of maximal block length $\ell$ and  let  $T= T_{\Gs}(1^n)$. If $\pr {\Haccsam_{\Gs}(T) > k} <\eps$, then $\ex{\Haccsam_{\Gs}(T)} \le  (1 - \eps)k + \eps(  m \ell - \log 1/\eps)  < k +  \eps m \ell + 1$. 	
\end{lemma}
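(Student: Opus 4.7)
The plan is to reduce this statement directly to \cref{prop:HighAEContributionSeq}, which already gives exactly the form $(1-\eps)k + \eps(\Hmax(\vX_\cJ) - \log 1/\eps) \le k + \eps \Hmax(\vX_\cJ) + 1$. So the work is in packaging the transcript $T$ as a sequence $\vX$ and picking the subset $\cJ$ so that $\Hall_{\vX,\cJ}$ coincides with $\Haccsam_{\Gs}$, and then bounding $\Hmax(\vX_\cJ) \le m\ell$.

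First I would order the components of the transcript as $\vX = (Z, Y_1, R_1, Y_2, R_2, \ldots, Y_m, R_m)$, so that each $Y_i$ sits at position $2i$ immediately before its own coins $R_i$, and set $\cJ = \{2, 4, \ldots, 2m\}$. The key point is that for $i \in [m]$,
\begin{equation*}
\Hall_{X_{2i}\mid \vX_{<2i}}(y_i\mid z,y_{<i},r_{<i}) = \Hall_{Y_i\mid Z,Y_{<i},R_{<i}}(y_i\mid z,y_{<i},r_{<i}) = \Hall_{Y_i\mid Z,R_{<i}}(y_i\mid z,r_{<i}),
\end{equation*}
where the second equality uses that $Y_{<i}$ is a deterministic function of $(Z,R_{<i})$ (since $\Gs$ is an online block-generator), so conditioning on $Y_{<i}$ adds no information. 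Summing over $i \in [m]$, this gives $\Hall_{\vX,\cJ}(\vx) = \Haccsam_{\Gs}(\vt)$ for the natural correspondence between $\vx$ and $\vt$.

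Next I would bound $\Hmax(\vX_\cJ) = \Hmax(Y_1,\ldots,Y_m) \le m\ell$, since each $Y_i$ takes values in a set of size at most $2^\ell$ by the maximal block-length assumption. Applying \cref{prop:HighAEContributionSeq} to $\vX$ and $\cJ$ under the hypothesis $\Pr[\Haccsam_{\Gs}(T) > k] < \eps$ then yields
\begin{equation*}
\ex{\Haccsam_{\Gs}(T)} = \eex{\vx\getsr\vX}{\Hall_{\vX,\cJ}(\vx)} \le (1-\eps)k + \eps(m\ell - \log 1/\eps) \le k + \eps m\ell + 1,
\end{equation*}
which is exactly the claim. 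There is no real obstacle here; the only subtlety worth double-checking is the determinism step that lets us drop $Y_{<i}$ from the conditioning, which is what makes the accessible sample-entropy fit the form required by \cref{prop:HighAEContributionSeq}.
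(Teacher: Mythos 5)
Your proof is correct and takes essentially the same approach as the paper: the paper's (terse) proof also applies \cref{prop:HighAEContributionSeq} to the reordered tuple $\vX=(Z,Y_1,R_1,\ldots,Y_m,R_m)$ with $\cJ=\{2,4,\ldots,2m\}$. You have usefully spelled out the one step the paper leaves implicit, namely that $Y_{<i}$ is determined by $(Z,R_{<i})$, so conditioning on $\vX_{<2i}$ agrees with conditioning on $(Z,R_{<i})$ as required for $\Hall_{\vX,\cJ}$ to equal $\Haccsam_{\Gs}$.
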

\begin{proof}
	Follows by \cref{prop:HighAEContributionSeq} with $\vX = \vT_\Gs(1^n,Z_n) = (Z_n,Y_1,R_1,\ldots, Y_n,R_n)$, for  $(Z_n,R_1,Y_1,\ldots, R_n,Y_n) = \vT_\Gs(1^n,Z_n)$ and $\cJ = \set{2,4,6, \ldots}$ (\ie the indices of the output blocks of $\Gs$).
\end{proof}

The above definition is only interesting when placing restrictions on the generator's actions \wrt the underlying generator $\Gc$. (Otherwise, the accessible entropy of $\Gs$ can be arbitrarily large by outputting arbitrarily long strings.) In this work, we focus on efficient generators that are consistent  \wrt  $\Gc$. That is, the support of their output  is contained in that of $\Gc$.\footnote{In the more complicated notion of accessible entropy considered in \cite{HaitnerReVaWe09}, the ``generator" needs to \emph{prove} that its output blocks are in the support of $\Gc$, by providing an input of $\Gc$  that would have generated the same blocks. It is also allowed there for   a generator to fail to prove the latter with some probability, which requires  a measure of  accessible  entropy that discounts entropy that may come from failing.}

\begin{definition}[Consistent  generators]\label{def:nonFailingGen}
	Let $\Gc$ be a block-generator over $\zo^{\pp(n)} \times \zo^{s(n)}$. A block (possibly online) generator $\Gc'$ over $\zo^{\pp(n)} \times \zo^{s'(n)}$ is {\sf $\Gc$ consistent} if, for every $n\in \N$, it holds that  $\Supp(\Gc'(U_{\pp(n)},U_{s'(n)})) \subseteq \Supp(\Gc(U_{\pp(n)},U_{s(n)}))$.
\end{definition}

\begin{definition}[Accessible entropy]\label{def:accessible-entropy}
	A block-generator $\Gc$ has {\sf accessible entropy at most $k = k(n)$} if, for  every efficient $\Gc$-consistent, online generator $\Gs$ and all large enough $n$,
	$$\eex{\vt\getsr T_\Gs(1^n)} {\Haccsam_\Gs(\vt)} \leq k.$$
\end{definition}

We call a  generator whose real entropy is noticeably higher than its  accessible entropy an inaccessible entropy generator.

\begin{remark}[Maximal accessible entropy]
An alternative  to the  (average) accessible entropy discussed above is to consider the \emph{maximal} accessible sample-entropy of a  random  execution of the generator (ignoring negligible events).  The resulting notion is somewhat harder to work with, but when applicable it typically yields  more efficient constructions (time wise and communication wise). Indeed, working with maximal  accessible entropy  would have yielded a  statistically  hiding commitment based on one-way functions that is more computation and communication efficient than the one we construct in     \cref{sec:GapToCom}. Yet, preferring simplicity  over efficiency, in the main body of the paper we chose to work with the simpler notion of (average) accessible entropy, and formally define and prove basic facts on the maximal accessible entropy notion   in \cref{sec:MaxAE}.
\end{remark}

\section{Inaccessible Entropy Generator from One-way Functions}\label{sec:IAEGenFromOWF}
In this section we show how to build an  inaccessible entropy generator from any one-way function. In particular, we prove the following result:
\begin{construction}\label{ctr:AEG}
	For $f\colon\zn \mapsto \zn$, define  the $(n/\log n) +1$  block-generator $\Gc^f$ over $\zn$,\footnote{We assume for simplicity that $n/\log n \in \N$. Otherwise we ``pad" $f$'s output.} by
	$$\Gc(x) = f(x)_{1,\ldots,\log n},f(x)_{\log n +1,\ldots,2\log n},\ldots,f(x)_{n-\log n +1,\ldots,n},x $$
\end{construction}
Namely, the first $n/\log n$ blocks of $\Gc^f(x)$ are the bits of  $f(x)$ partitioned  into $1\log n$-bit  sequences, and its final block is $x$.

\begin{theorem}[Inaccessible entropy generators from  one-way functions]\label{thm:AEGfromOWF}
	If  $f\colon\zn \mapsto \zn$ is  a one-way function, then the efficient block-generator  $\Gc=\Gc^f$ defined in \cref{ctr:AEG} has accessible entropy $n-\omega(\log n)$.
\end{theorem}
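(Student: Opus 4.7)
The plan is to argue by contradiction: assume some efficient $\Gc^f$-consistent online generator $\Gs$ achieves accessible entropy at least $n-k$ for some $k = O(\log n)$, and from $\Gs$ construct an efficient inverter $\Inv$ for $f$ whose success probability on $y^* \getsr f(U_n)$ is at least $1/\poly(n)$, contradicting the one-wayness of $f$. The inverter uses sequential rejection sampling on $\Gs$'s online process: on input $y^*$, it parses $y^* = y_1^* \| \cdots \| y_{m-1}^*$ into blocks of length $\log n$; then for $i = 1, \ldots, m-1$ it keeps the previously fixed coins $r_{<i}$ and repeatedly draws fresh $R_i$ until $\Gs_i(r_{<i}, R_i) = y_i^*$, giving up after $N = n^c$ attempts for a suitable constant $c$. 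If all $m-1$ matches succeed, $\Inv$ samples $R_m$ and outputs $Y_m = \Gs_m(r_{\leq m-1}, R_m)$; by the $\Gc^f$-consistency of $\Gs$ this guarantees $Y_m \in f^{-1}(y^*)$.

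For the analysis, let $s_i(r_{<i}) = \Pr_{R_i}[\Gs_i(r_{<i}, R_i) = y_i^*]$ be the per-block match probability. Block $i$ is matched within $N$ tries with probability at least $\min(1, N s_i / 2)$, so $\Inv$ succeeds whenever the accessible sample-entropy $\sum_i \log 1/s_i$ along the target transcript is $O(\log n)$ per block. The hypothesis $\sum_i \H(Y_i \mid R_{<i}) \geq n - k$, combined with the trivial per-block caps $\H(Y_i \mid R_{<i}) \leq \log n$ (for $i<m$, since blocks are $\log n$ bits) and the consistency-induced bound on $\H(Y_m \mid R_{<m})$, forces the per-block Shannon entropies to be on average close to their maxima, so $\Gs$'s conditional next-block distributions are close to uniform on $[n]$. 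A Markov-type argument applied to the accessible sample-entropy of a random $\Gs$-transcript then shows that with probability $1/\poly(n)$ over the joint randomness, each $s_i$ along the path is at least $1/N$, yielding overall success probability $1/\poly(n)$. If one-way functions are only assumed to be distributional, \cref{lemma:NoDistOWF} finishes the contradiction.

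The principal obstacle is a change-of-measure issue: the accessible-entropy bound concerns random transcripts of $\Gs$, whereas the inverter targets transcripts in which $Y_{<m}$ equals an externally drawn $y^* \sim f(U_n)$, and these distributions need not coincide. When $f$ is a permutation the argument is clean---the high-entropy hypothesis forces $\Gs$'s output marginal on $Y_{<m}$ to be close to uniform on $\zn = f(U_n)$ via the chain rule and Pinsker's inequality---but for general $f$ one must quantify how $\Gs$'s marginal on $Y_{<m}$ can differ from $f(U_n)$ in terms of the entropy deficit $k$, and track the multiplicative loss incurred when translating between the two measures in the rejection analysis. Because only an inverse-polynomial success probability is required, fairly loose bounds suffice, but carefully handling the last block $Y_m$ (whose entropy budget is constrained by the preimage structure of $f$, through the bound $\H(Y_m \mid R_{<m}) \leq \E[\log |f^{-1}(Y_{<m})|]$) is the most delicate part of the entropy accounting.
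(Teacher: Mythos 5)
Your inverter and overall strategy match the paper's exactly: rejection-sample each block of $\Gs$ against the target image, then read the preimage off the last block. You also correctly identify the crux, namely the change of measure between a random standalone transcript of $\Gs$ and the transcript the inverter induces when the image is drawn from $f(U_n)$. But the analysis you sketch has a real gap, and the one explicit tool you name would not work. You propose, even in the permutation case, to argue via Pinsker's inequality that $\Gs$'s marginal on $Y_{<m}$ is statistically close to $f(U_n)$. With an accessible-entropy deficit of $k = \Theta(\log n)$, Pinsker gives $\Delta(Y_{<m}, U_n) \le \sqrt{k/2} = \Theta(\sqrt{\log n})$, which exceeds $1$ and is vacuous. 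Statistical distance is simply the wrong intermediary here: the inverter's success event has probability about $1$ under the standalone distribution, and you need it to have probability $1/\poly(n)$ under the shifted distribution, which is a \emph{multiplicative} transfer that additive statistical distance cannot deliver once the distance is already a constant or larger.

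The paper instead proves (\cref{claim:AEGfromOWF}) that $\kld{\tT}{\hT} \le c\log n$, where $\tT$ is the standalone transcript and $\hT$ is the transcript produced by the (unbounded) inverter, and then applies the data-processing inequality for KL to the indicator of the success-and-no-abort event. A KL bound of $c\log n$ together with probability $\ge 1/2$ under $\tT$ directly yields probability $\ge 1/\poly(n)$ under $\hT$, which is exactly the multiplicative transfer you need. The other piece you flag as ``the most delicate part of the entropy accounting'' is precisely what makes the KL computation close cleanly: by chain rule the conditionals on the $R_i$'s and on the last block contribute zero, leaving $\kld{\tT}{\hT} = \Hall(\tY_{m+1}\mid \tR_{\le m}) - \ex{\Haccsam_\Gs(\tT)} - \eex{y}{\log \pr{\hY_{\le m} = y}}$; the first term equals $\eex{y}{\log|f^{-1}(y)|}$ (by $\Gc$-consistency) and the last equals $\eex{y}{\log(|f^{-1}(y)|/2^n)}$, so the preimage-size terms cancel and one is left with $n - \ex{\Haccsam_\Gs(\tT)} \le c\log n$. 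Your proposal names this cancellation as an open issue but does not carry it out, so as written the proof does not go through; replacing Pinsker with the KL chain-rule argument and performing the cancellation would close the gap.
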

\begin{proof}	
Suppose the theorem does not hold and let $\Gs$  be  an efficient,   $\Gc$-consistent online  block-generator  such that
	\begin{align}\label{eq:AEGfromOWF:1}
	\ex{\Haccsam_{\Gs}(\tT)} \ge n-c\cdot \log n
	\end{align}
	for some $c>0$, and infinitely many $n$'s. In the following,  fix $n\in \N$ for which the above equation holds,  and omit it from the notation when its value is clear from the context. Let $m = n/\log n$ and let $\vv$ be a bound on the number of coins used by $\Gs$ in each round. The inverter $\Inv$ for $f$ is defined as follows:
	\begin{algorithm}[Inverter $\Inv$ for $f$ from the accessible entropy generator $\Gs$]\label{alg:Inv}~
		\begin{description}
			\item[Input:] $z\in \zn$ 
			\item[Operation:] 
		\end{description}
		\begin{enumerate}

			\item For $i = 1$ to $m$: \label{alg:Inv:step:1}
			\begin{enumerate}
				\item Sample $r_i \getsr \zo^\vv$ and let $y_i = \Gs(r_1,\ldots,r_i)_i$.
				\item If $y_{i\cdot \log n +1,\ldots,(i+1)\cdot \log n} = z_i$, move to next value of $i$.\label{alg:Inv:step:1:c}
				
				\item Abort after  $n^3$ failed attempts for sampling  a good $r_i$.
			\end{enumerate}
			\item Sample $r_{m+1} \getsr \zo^\vv$ and output $\Gs(r_1,\ldots,r_{m+1})_{m+1}$.\label{alg:Inv:step:2}  \footnote{Choosing $r_{m+1}$ uniformly at random is merely for the sake of the analysis.  Setting, for instance,   $r_{m+1} = 0^v$  induces the same inversion probability.}
		\end{enumerate}
		
	\end{algorithm}
	Namely, $\Inv(y)$ does the only natural thing that  can be done with $\Gs$; it tries to make, via rewinding,  $\Gs$'s first $m$ output blocks equal to $y$, knowing that if this happens then, since  $\Gs$ is $\Gc$-consistent, $\Gs$'s $(m+1)$ output block is a preimage of $y$. 
	
	It is clear that  $\Inv$ runs in polynomial time, so we will finish the proof by showing that  it inverts $f$ with high probability.  We prove the latter by showing that transcript distribution  induced  by  the (standalone execution)  of $\Gs(1^n)$ is close in KL-divergence to the   execution of $\Gs$ embedded (emulated) in $\Inv(f(U_n))$.  Since  the last output of $\Gs$ is always the preimage of the image point defined by  its first $m$ output elements, it follows that it is also the case, with high probability, for embedded    execution  $\Gs$, and thus $\Inv(f(U_n))$ finds a preimage with high probability.

	Let $\tT = (\tR_1,\tY_1,\ldots,\tR_{m+1},\tY_{m+1}) = \vT_\Gs$, and recall that $\tT$ is associated  with a random execution of $\Gs$ on security parameter $n$ by
	\begin{itemize}
		\item $\tR_i$ -- the random coins of $\Gs$ in the \ith round, and
		\item $\tY_i$ -- $\Gs$'s \ith output block.
	\end{itemize}
	Let $\hT = (\hR_1,\hY_1,\ldots,\hR_{m+1},\hY_{m+1})$ denote the value of $\Gs$'s coins and  output blocks, set by the end of  \Stepref{alg:Inv:step:2}  in a random execution of the \emph{unbounded} version of $\Inv$ (\ie \Stepref{alg:Inv:step:1:c}  is removed) on input $f(U_n)$,  setting it arbitrary in case the execution of (the unbounded) $\Inv$ does not end.\footnote{The unboundedness change is only an intermediate step in the proof that does not significantly change  the inversion probability of $\Inv$, as shown below.}    The heart of the proof lies in the following claim:
	
	\begin{claim}\label{claim:AEGfromOWF}
	$\kld{\tT}{\hT} \le  c\cdot \log n$.
	\end{claim}
	We prove \cref{claim:AEGfromOWF} below but first use it for proving the theorem.	Let  $w(r_1,y_1,\ldots,r_{m+1},y_{m+1})$ be the indicator for $y_{m+1} \in f^{-1}(y_{\le m})$. Clearly, 
		\begin{align}
		\pr{w(\tT)} =1
		\end{align}
	Let  $e(r_1,y_1,\ldots,r_{m+1},y_{m+1})$ be the indicator that for all $i\in [m]$ it holds that 
	$\pr{\tY_i = y_i \mid \tR_{<i} = r_{<i}} \ge 1/2mn$.
	Since $\tY_i$ takes at most $n$ values,  a straightforward union bound argument yields that 
	\begin{align}
	\pr{e(\tT)} \ge 1/2
	\end{align}
	By the above two facts, we conclude that  $\pr{(w\land e)(\tT)  = w(\tT) \land e(\tT)} \ge 1/2$.	
	 Hence, by \cref{claim:AEGfromOWF} and the data-processing property of KL-divergence, 
	 $\kld{(w\land e)(\tT)}{(w\land e)(\hT)} \le  c\cdot \log n$.  Thus, a simple calculation yields that
	\begin{align}
	\pr{(w\land e)(\hT)} \ge 1/ n^{c+1}
	\end{align}
	for large enough $n$. Since a transcript $t$ with $(w\land e)(t)=1$ is generated by (bounded) $\Inv$ with probability at  $(1-m\cdot 2^{-n})$ times the probability it is generated by the unbounded variant of $\Inv$, we deduce that 
	$$\ppr{y\getsr f(U_n)}{\Inv(y) \in f^{-1}(y)}  \ge  \pr{(w\land e)(\hT)}  -  m\cdot 2^{-n} \ge 1/ n^{c+1} - n\cdot 2^{-n}$$
	for large enough $n$, contradicting the one-wayness of $f$.
\end{proof}

	\paragraph{Bounding $\kld{\tT}{\hT}$ (proving \cref{claim:AEGfromOWF}).}
	
	\begin{proof}[Proof of \cref{claim:AEGfromOWF}]
   By definition,  
	\begin{align}\label{eq:AEGfromOWF:2}
	\tR_i|_{\tY_{  i} =y_i,\tR_{<i} = r_{<i}} \equiv \hR_i|_{\hY_{ i} =y_i,\hR_{<i} = r_{<i}}
	\end{align}
	 for every $i\in [m+1]$, $y_i\in \Supp(\tY_{  i})$ and $r_{<i}\in \Supp(\tR_{<i})$. In addition,  
	\begin{align}\label{eq:AEGfromOWF:3}
	\tY_{m+1}|_{\tR_{ \le m} =r_{<m}} \equiv \hY_{m+1}|_{\hR_{ \le m} =r_{<m}}
	\end{align}
	 for every $r_{<m} \in \Supp(\tR_{ \le m})$.  Compute
	
	\begin{align}\label{eq:AEGfromOWF:3:5}
	\lefteqn{\kld{\tT}{\hT}}\\
	&=\sum_{i=\in \set{2,4,\ldots,2m+2}}  \eex{t\getsr \tT_{<i}}{\kld{\tT_i|_{\tT_{<i}=t}}{\hT_i|_{\hT_{<i}=t}}} + \sum_{i=\in \set{1,3,\ldots,2m+1}}  \eex{t\getsr \tT_{<i}}{\kld{\tT_i|_{\tT_{<i}=t}}{\hT_i|_{\hT_{<i}=t}}}\nonumber\\
	&=\sum_{i=1}^{m+1} \eex{r\getsr\tR_{<i}}{\kld{\tY_i|_{\tR_{<i}=r}}{\hY_i|_{\hR_{<i}=r}}} + \sum_{i=1}^{m+1} \eex{(r,y)\getsr(\tR_{<i},\tY_i)}{\kld{\tR_i|_{\tR_{<i}=r,\tY_i =y}}{\hR_i|_{\hR_{<i}=r},\hY_i=y}}\nonumber\\
	&=\sum_{i=1}^m \eex{r\getsr\tR_{<i}}{\kld{\tY_i|_{\tR_{<i}=r}}{\hY_i|_{\hR_{<i}=r}}}  +0+0\nonumber.
	\end{align}
	The first 	equality holds by chain-rule of KL-divergence, and since KL-divergence is invariant to permutation. The last equality holds by \cref{eq:AEGfromOWF:2,eq:AEGfromOWF:3}. It follows that
	\begin{align}	\label{eq:AEGfromOWF:4}
	\kld{\tT}{\hT} &=\sum_{i=1}^m-\Hall(\tY_i\mid \tR_{<i}) -  \sum_{i=1}^m \eex{r_{<i}\getsr \tR_{< i}}{\log \left(\pr{\hY_i = y_i \mid \hR_{<i}=  r_{<i} }\right)}\\
	&=\sum_{i=1}^m-\Hall(\tY_i\mid \tR_{<i}) -  \sum_{i=1}^m \eex{y_{< i}\getsr\tY_{<  i}}{\log \left(\pr{\hY_i = y_i \mid \hY_{<i} = y_{<i} }\right)}\nonumber\\
	&=\sum_{i=1}^m-\Hall(\tY_i\mid \tR_{<i}) - \eex{y_{\le m}\getsr\tY_{\le m}}{\sum_{i=1}^m \log \left(\pr{\hY_i = y_i \mid \hY_{<i} = y_{<i} }\right)}\nonumber\\
	&=\sum_{i=1}^m-\Hall(\tY_i\mid \tR_{<i}) - \eex{y_{\le m}\getsr\tY_{\le m}}{ \log \left(\pr{\hY_{\le m} = y_{\le m} }\right)}\nonumber\\
	&=\Hall(\tY_{m+1}\mid \tR_{\le m})  -\ex{\Haccsam_{\Gs}(\tT)}   - \eex{y_{\le m}\getsr\tY_{\le m}}{ \log \left(\pr{\hY_{\le m} = y_{\le m} }\right)}.\nonumber
	\end{align}	
 The first inequality holds  by \cref{eq:AEGfromOWF:3:5} and definition of KL-divergence. The  second  equality holds since  $\hY_i$ and $\hR_{<i}$ are independent conditioned on  $\hY_{<i}$.  The last   equality is by the definition  of $\Haccsam_{\Gs}$. Since $\Hall(\tY_{m+1}\mid \tR_{\le m}) = \eex{y_{\le m}\getsr\tY_{\le m}}{ \log \size{f^{-1}(y_{\le m})}}$ and $\pr{\hY_{\le m} = y_{\le m}}= \size{f^{-1}(y_{\le m})}/2^n$, \cref{eq:AEGfromOWF:4} yields  that 
	\begin{align*}
	\kld{\tT}{\hT} &\le -\ex{\Haccsam_{\Gs}(\tT)}  +n \le c\cdot \log n.
	\end{align*}
	The second inequality is by the assumption on $\Gs$ accessible entropy (\cref{eq:AEGfromOWF:1}). 
	\end{proof}

\section{Manipulating Real and Accessible Entropy}\label{sec:manipulatingAE}
In this section, we  develop tools  to manipulate the real and accessible entropy of a block-generator. These tools are later used in \cref{sec:GapToCom} as a first step toward using the one-way function inaccessible entropy generator constructed in \cref{sec:IAEGenFromOWF} to construct  statistically hiding commitments. The tools considered  below are rather standard ``entropy manipulations": entropy equalization (\ie picking  a random variable at random from a set of random variables to get a new random variable whose entropy is the average entropy) and direct product, and their effect on the real entropy of random variables is rather clear. Fortunately, these manipulations have  the same effect also on the  \emph{accessible  entropy} of  a block-generator.

\subsection{Entropy Equalization via Truncated Sequential Repetition}
This tool concatenates several independent executions of an $m$-block  generator, and then  truncates, at random, some of the first and final output blocks of the concatenated string. Assuming that the (overall) real entropy of the original generator is at least $\kreal$, then the  real entropy of each block of the resulting  generator is at least $\kreal/m$.  This per-block knowledge of the  real entropy  is very handy when considering  applications of inaccessible entropy generators,  and in particular for  constructions of statistically hiding commitments (see \cref{sec:GapToCom}).  The price of this manipulation is that we ``give away"  some real entropy (as we do not output all blocks), while we cannot claim that the same happens to the accessible entropy.  Hence,  the additive gap between the real and accessible entropy of the resulting generator gets smaller. Yet, if we do enough repetition, this loss is insignificant.

\begin{definition}\label{def:EqRealEnt}
	For security parameter $n$, let $m=m(n)$, let   $\pp = \pp(n)$, $s=s(n)$, $\ee=\ee(n)$, and let $\pp' = \pp'(n)= \ee(n) \cdot \pp(n)$ and $s' = s'(n)  = \log(m(n)) + \ee(n) \cdot s(n)$. Given an $m$-block generator $\Gc$ over $\zo^{\pp} \times \zo^{s}$, define the $\left((\ee-1)\cdot m\right)$-block generator  $\Gbl$  over $(\zo^{\pp})^\ee \times ([m] \times (\zo^{s})^\ee)$ as follows: on  input $ (z_1,\ldots,z_\ee, j,(x_1,\ldots,x_{\ee}))$, it sets  $\vy = (y_1,\ldots,y_{\ee  m}) = (\Gc(z_1,x_1),\ldots,\Gc(z_\ee,x_{\ee}))$, and outputs $((j,y_j),y_{j+1},\ldots,y_{(\ee-1) m +j  -1 })$.
\end{definition}

That is, $\Gbl$  truncates the first $j-1$  and  last $m+1-j$ blocks of $\vy$. It then  outputs the remaining $(\ee-1)\cdot m$  blocks of $\vy$ one by one, while appending $j$, indicating the truncation location,   to the first block. 

\begin{lemma}\label{lem:EqRealEnt}
	For security parameter $n$, let $m =m(n)$ be a power of $2$, let $\pp = \pp(n)$, $s=s(n)$, let $\Gc$ be an efficient $m$-block generator over $\zo^{\pp}\times \zo^{s}$, and let $\ee = \ee(n)$ be a polynomially computable and bounded integer function. Then $\Gbl$ defined according to \cref{def:EqRealEnt} is an efficient\footnote{Since $m$ is a power of $2$, standard techniques can be used to change the input domain of $\Gbl$ to  $\zo^{s'}$ for some polynomial-bounded and polynomial-time computable $s'$, to make it  an efficient  block-generator  according to \cref{def:blockGenrator}.} $\left((\ee-1)\cdot m\right)$-block generator  satisfying the following properties:
	\begin{description}
		\item[Real entropy:] If $\Gc$ has real entropy at least $\kreal = \kreal(n)$, then each block of $\Gbl$  has real entropy at least $\kreal/m$. Furthermore, if the bound on the real entropy $\Gc$ is invariant to the public parameter, then  this  is also the case for the above bound on $\Gbl$.
		
		\item[Accessible entropy:]   If $\Gc$ has accessible entropy  at most $\kmax = \kmax(n)$, then $\Gbl$ has accessible  entropy at most
		$$\kmaxp \eqdef (\ee-2)  \cdot \kmax + 2\cdot \Hmax(\Gc(U_\pp,U_s)) +  \log(m).$$  
	\end{description}
\end{lemma}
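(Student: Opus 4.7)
The construction is manifestly an efficient $((\ee-1)m)$-block generator (the extra $\log m$ seed bits encode $j\in[m]$), so the work lies in the two entropy claims, which I handle separately.

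\textbf{Real entropy.} Let $B_1,\ldots,B_{(\ee-1)m}$ denote $\Gbl$'s output blocks, let $\vY^{(k)} = (Y^{(k)}_1,\ldots,Y^{(k)}_m) = \Gc(Z_k, X_k)$ denote the $k$-th copy, and write $h_p \eqdef \HSh(Y_p \mid Y_{<p}, Z)$, so that $\sum_p h_p \ge \kreal$ by hypothesis. I will show $\HSh(B_i \mid B_{<i}, Z) \ge \kreal/m$ for every $i$. For $i=1$, $B_1 = (J, Y^{(1)}_J)$ with $J$ uniform on $[m]$ and independent of $Z$, giving
\begin{align*}
\HSh(B_1 \mid Z) = \log m + \tfrac{1}{m}\textstyle\sum_j \HSh(Y^{(1)}_j \mid Z_1) \ge \log m + \kreal/m,
\end{align*}
since the marginal entropy of $Y^{(1)}_j$ dominates its conditional entropy given $Y^{(1)}_{<j}$. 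For $i \ge 2$, $B_i = Y^{(k)}_p$ where $(k,p)$ is determined by $i+J-1 = (k-1)m+p$; conditioning on $B_{<i}$ reveals $J$ and all $\Gc$-blocks preceding $B_i$ in the concatenated stream. By independence of copies given $Z$, when $k\ge 2$ one gets $\HSh(B_i \mid B_{<i}, Z, J=j) = h_p$, whereas when $k=1$ we condition only on a suffix of copy~$1$, so monotonicity of Shannon entropy under conditioning yields $\HSh(B_i \mid B_{<i}, Z, J=j) \ge h_p$. As $j$ varies uniformly over $[m]$ the map $j\mapsto p$ is a bijection of $[m]$, so $\Ex_J[h_{p}] = \tfrac{1}{m}\sum_p h_p \ge \kreal/m$. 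The whole argument conditions on $Z$, so the invariant-to-public-parameter version transfers.

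\textbf{Accessible entropy, setup.} Fix an efficient $\Gbl$-consistent online generator $\Gs'$ with transcript blocks $(R'_i, Y'_i)$. Define $h'_i \eqdef \HSh(Y'_i \mid R'_{<i}, Z)$ and, for $i \ge 2$, $h'_i(j) \eqdef \HSh(Y'_i \mid R'_{<i}, Z, J=j)$, which is well-defined because $\Gbl$-consistency forces $J$ to be a function of $R'_1$. For every fixed $j$, the positions $\{2,\ldots,(\ee-1)m\}$ partition into intervals $I_k(j)$ corresponding to copy~$k$: a suffix of copy~$1$ (length $m-j$), all of each copy $k\in\{2,\ldots,\ee-1\}$ (length $m$), and a prefix of copy~$\ee$ (length $j-1$). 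The heart of the proof is, for each interior $k\in\{2,\ldots,\ee-1\}$, to construct a $\Gc$-consistent online generator $\Gs^{(k)}$ whose $q$-th output block equals $\Gs'$'s block at position $I_q^{(k)}(J) \eqdef (k-1)m - J + 1 + q$. Concretely, $\Gs^{(k)}$ internally simulates $\Gs'$: its first coin block (of sufficiently large per-block length to carry $R'_1,\ldots,R'_{I_1^{(k)}(J)}$) lets $\Gs^{(k)}$ read $J$ from $Y'_1$ and reach copy $k$'s first block, and each subsequent coin block feeds one more fresh $R'$ to $\Gs'$.

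\textbf{Accessible entropy, bookkeeping.} A direct randomness-accounting gives $\HSh(B^{(k)}_q \mid R''_{<q}, Z) = \Ex_J[h'_{I_q^{(k)}(j)}(j)]$ for $q\ge 2$ (the two conditionings fix the same prefix of $\Gs'$'s coins) and $\HSh(B^{(k)}_1 \mid Z) \ge \Ex_J[h'_{I_1^{(k)}(j)}(j)]$ (the right side additionally conditions on $J$ and on $R'_{<I_1^{(k)}(J)}$, which only reduces entropy). Summing and using the hypothesis on $\Gc$:
\begin{align*}
\Ex_J\Bigl[\textstyle\sum_{i\in I_k(j)} h'_i(j)\Bigr] \le \text{AccEnt}(\Gs^{(k)}) \le \kmax.
\end{align*}
Combining $h'_1$ with the copy-$1$ contribution via the chain rule ($R'_{<i}$ determines $Y'_{<i}$) and $\Gbl$-consistency gives
\begin{align*}
h'_1 + \Ex_J\Bigl[\textstyle\sum_{i\in I_1(j)} h'_i(j)\Bigr] \le \log m + \Ex_J\bigl[\HSh(Y^{(1)}_j,\ldots,Y^{(1)}_m \mid Z_1)\bigr] \le \log m + \Hmax(\Gc(U_\pp, U_s)),
\end{align*}
and analogously the copy-$\ee$ contribution is at most $\Hmax(\Gc(U_\pp, U_s))$. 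Summing all three types of contributions yields $\text{AccEnt}(\Gs') \le \log m + 2\Hmax(\Gc(U_\pp, U_s)) + (\ee-2)\kmax = \kmaxp$, as desired. The most delicate step will be verifying that $\Gs^{(k)}$ is a legitimate $\Gc$-consistent online generator (its per-block coins must be long enough to fund the intermediate $\Gs'$-coins they absorb, yet finitely bounded) and that the coin prefixes carried by $\Gs^{(k)}$ exactly match those conditioned on in $h'_{I_q^{(k)}(j)}(j)$ so that the inequality $\text{AccEnt}(\Gs^{(k)}) \ge \Ex_J[\sum_{i\in I_k(j)} h'_i(j)]$ holds cleanly.
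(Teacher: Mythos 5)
Your real-entropy argument is correct and essentially mirrors the paper's (the paper also averages $\Hall(Y_{i'} \mid Z, Y_{<i'})$ over a uniform shift and uses monotonicity under conditioning; you just organize the $i=1$ and $i\ge 2$ cases separately).

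For accessible entropy, however, there is a genuine gap. You construct $\ee(n)-2$ \emph{separate} adversarial generators $\Gs^{(2)},\ldots,\Gs^{(\ee-1)}$ (one per interior copy, with $k$ hardcoded), apply the hypothesis $\text{AccEnt}(\Gs^{(k)})\le \kmax$ to each, and sum. But the accessible-entropy guarantee for $\Gc$ is of the form ``for every efficient $\Gs$ there exists $N_0(\Gs)$ such that for all $n\ge N_0(\Gs)$ the bound holds'' — the threshold depends on the generator. Since $\ee(n)$ is only polynomially bounded (and in the applications of this lemma it does grow with $n$), you would need the bound to hold simultaneously for all $k\le \ee(n)-1$ at security parameter $n$, which requires $n\ge\max_{k\le \ee(n)-1}N_0(\Gs^{(k)})$; nothing forces this to hold for any $n$ if the $N_0(\Gs^{(k)})$ grow with $k$. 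The paper sidesteps this by building a \emph{single} $\Gc$-consistent generator that samples $v\getsr\{2,\ldots,\ee-1\}$ as part of its first round's coins and outputs the blocks from copy $v$; one then argues (by contradiction) that if $\Gbs$'s accessible entropy exceeded $\kmaxp$ then the \emph{average over $v$} of the copy-$v$ contribution would exceed $\kmax$, so this single generator would violate the bound. That averaging step is the missing idea in your writeup.

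Two smaller remarks. First, in bounding the copy-$1$ and copy-$\ee$ contributions you pass through $\HSh(Y^{(1)}_j,\ldots,Y^{(1)}_m\mid Z_1)$, but the intermediate comparison to the \emph{honest} Shannon entropy is not justified — $\Gs'$ may distribute its blocks very differently from $\Gc$. The correct step (as in Proposition~\ref{prop:ShanonoToSampleSeq}) bounds the sum of conditional sample-entropies by the \emph{max-entropy} of the corresponding coordinates, which is $\le\Hmax(\Gc(U_\pp,U_s))$ by $\Gbl$-consistency; your final bound lands in the right place but the intermediate step should use $\Hmax$. Second, $\Gs^{(k)}$ (or the single merged $\Gs$) must also sample the public parameters $\vz_{-k}$ for the other copies as part of its coins so that the emulated $\Gs'$ sees a uniformly distributed $\vZ$; you should make this explicit, since it is exactly what makes the ``$\ge$'' for the first block (coarser conditioning) go through.
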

That is, each of the $(\ee-2)$ non-truncated executions of $\Gc$ embedded in $\Gbl$ contributes its accessible entropy to the overall accessible entropy of  $\Gbl$. In addition,  we pay the max-entropy of  the two truncated executions of $\Gc$ embedded in $\Gbl$ and the max-entropy of the index $j$.

\def\EEGen
{
	Consider the following efficient $\Gc$-consistent generator.
	\begin{algorithm}[Generator $\Gs$]
		\item Input: public parameter $z\in \zo^\pp$.
		
		\item Operation:
		
		\begin{enumerate}
			\item Sample $v\getsr \set{2,\ldots,\ee -1}$.  Let $\vz= (z_1,\ldots,z_\ee)$ for $z_v = z$, and $z_i$,  for $i\neq v$,  sampled uniformly in  $\zo^{\pp}$. 
			
			We will refer to the part of $\vz$ sampled by the generator as $\vz_{-v}$, and (abusing notation) assume $\vz = (z,\vz_{-v})$.
			
			\item Start a random execution of $\Gbs(\vz)$.  After $\Gbs$ locally outputs its first block $(j,\cdot)$, continue the execution of $\Gbs$ while outputting, block by block,  the output blocks of $\Gbs$ indexed by $\set{f =  (v-1)m +2 -j,\ldots,f+m-1}$. 
		\end{enumerate}   
	\end{algorithm}
	It is clear that  $\Gs$ is indeed an efficient $\Gc$-consistent generator. We will show that the accessible entropy of $\Gs$ violates  the assumed  bound on the accessible entropy of $\Gc$.
}

\begin{proof}
	To avoid notational clutter let $\Gb = \Gbl$.

	\paragraph{Real entropy.}
	 Fix $n\in \N$ and omit it from the notation when clear from the context. Let $\mt= (\ee -1) m$,  let $\vZ = (Z_1,\ldots,Z_\ee) \getsr(\zo^\pp)^\ee$ and  $\Yl = \Gbl(\vZ,U_{s'}= (J,X_1,\ldots,X_{\ee}))$.  Let $\Ypl = (\Gc(Z_1,X_1),\ldots,\Gc(Z_\ee,X_{\ee}))$, and finally  for $i\in [\ee m]$, let $\hYpl_i= (J,\Ypl_i)$ if $J=i$, and $\hYpl_i=\Ypl_i$ otherwise. 
	 
	 Fix $i \in [\mt]$. By the chain rule for Shannon entropy, it holds that 
	\begin{align}
	\Hall(\Yl_i\mid Z,\Yl_{<i}) &=  \Hall(\hYpl_{i+J -1} \mid \vZ,\hYpl_{J,\ldots, i+J-2})\\
		&\ge  \Hall(\hYpl_{i+J -1} \mid \vZ, \Ypl_{J,\ldots, i+J-2},J).\nonumber\\
	&=  \Hall(\Ypl_{i+J -1} \mid \vZ, \Ypl_{J,\ldots, i+J-2},J).\nonumber
	\end{align}

		Let $(Y_1,\ldots,Y_{m}) = \Gc(Z = U_{\pp},U_{s})$,  and let $m \bmod m$ be  $m$ (rather than $0$). It follows that 
		\begin{align*}
		\Hall(\Ypl_{i+J -1} \mid Z, \Ypl_{< i+J-1},J)& = \eex{j\getsr J}{	\Hall(\Ypl_{i+j -1} \mid \vZ,\Ypl_{< i+j-1})}\\
		&=  \eex{j\getsr J}{\Hall(Y_{i+j -1 \bmod m} \mid Z, Y_{< i+J-1 \bmod m})}\nonumber\\
		&= \Ex_{i'\getsr [m]}[\Hall(Y_{i'}\mid Z, Y_{ < i'})]\nonumber\\
		&= \frac1{m} \cdot  \sum_{i'\in [m]} \Hall(Y_{i'}\mid Z,Y_{<i'})\nonumber\\
		& \ge \kreal/m,\nonumber
		\end{align*}
		yielding that $\Hall(\Yl_i\mid \vZ, \Yl_{<i})\ge \kreal/m$. The second  equality follows from the fact that, for any $t \in [\ee m]$, $(Z_{\ceil{t/m}},\Ypl_{t' = \floor{t/m}\cdot m + 1},\ldots,\Ypl_t)$ is independent of $(\vZ_{< \ceil{t/m}},\vZ_{> \ceil{t/m}},\Ypl_{< t'})$, and has the same distribution as  $(Z,Y_{1},\ldots,Y_{t \bmod m})$. The third equality holds since $(i+J -1 \bmod m)$ is uniformly distributed in $[m]$.

	It readily follows from the above proof that if the bound on the real entropy $\Gc$ is invariant to the public parameter, then  the above $\kreal/m$ bound on the  real entropy of $\Gbl$ is also  invariant to the public parameter.

	\paragraph{Accessible entropy.} 
	 Let $\Gbs$ be  an efficient  $\Gb \ (= \Gbl)$-consistent generator, and assume
	\begin{align}\label{eq:EqRealEnt:0}
	k_ \Gbs= \eex{\vt \getsr  \Tbs}{\Haccsam_{\Gbs}(\vt)} > \kmaxp
	\end{align}
	for $\Tbs = T_{\Gbs}(1^n)$.	We show that  \cref{eq:EqRealEnt:0} yields that  a  random sub-transcript of  $\Tbs$  contributes more than  $\kmax$ bits of  accessible entropy, and use it to construct a cheating generator for $\Gc$ whose  accessible entropy is greater than $\kmax$, in contradiction to the assumed accessible entropy of $\Gc$.

	Let $ (\vZ,\tRR_1,\tYY_1,\ldots,\tRR_\mt,\tYY_\mt) = \Tbs$ and let $J$ be the first part of  $\tYY_1$. Since $\Gbs$ is $\Gb$-consistent, $\tYY_1$ is   of the form $(J,\cdot)$. Fix $j\in [m]$, let   $(\vZ, \tRR_1^j,\tYY_1^j,\ldots,\tRR_\mt^j,\tYY^j_\mt) = \Tbs^j = \Tbs|_{J=j}$ and let $\I = \I(j)$ be the indices of the output blocks coming from the truncated executions of $\Gc$ in $\Gb$ (\ie $\I = \set{1,\ldots,m+1-j} \cup \set {\mt+2-j,\ldots,\mt}$). It is easy to see that these  blocks do not contribute  more entropy than  twice the max-entropy of $\Gc$. Indeed  \cref{prop:ShanonoToSampleSeq}, letting $\vX=(\vZ,\tYY_1^j,\tRR_1^j,\ldots,\tYY_\mt^j,\tRR_\mt^j)$ and   $\cJ$  being the indices of the  blocks of $\I$ in $\vX$, yields that 
	\begin{align}\label{EqRealEnt:1}
	\eex{(\vz,r_1,y_1,\ldots,r_{\mt},y_\mt) \getsr  \Tbs^j}{\sum_{i\in \I} \Hall_{\tYY_i^j|\vZ,\tRR_{<i}^j}(y_i|\vz,r_{<i})} \le  2\cdot \Hmax(\Gc(U_\pp,U_s))
	\end{align}

	Consider  $(\vz,r_1,y_1 = (j,\cdot),\ldots,r_{\mt},y_\mt) \in \Supp(\Tbs)$. Since $J$ is determined by  $(Z,R_1)$, for $i >1$ it holds that $\Hall_{\tYY_i|\vZ,\tRR_{<i}}(y_i|\vz,r_{<i}) = \Hall_{\tYY_i^{j}|\vZ,\tRR_{<i}^j}(y_i|\vz,r_{<i})$, whereas  for $i=1$,  $\Hall_{\tYY_1|\vZ}(y_1|\vz) =\Hall_{J|\vZ}(j|\vz) + \Hall_{\tYY_1^{j}| \vZ}(y_1|\vz)$, where $\eex{(\vz,j) \getsr (Z,J)}{ \Hall_{J|\vZ}(j|\vz)} \le \log m$. These observations, and \cref{eq:EqRealEnt:0,EqRealEnt:1}, yield that
	\begin{align}\label{eq:EqAccEnt:2}
	\eex{(\vz,r_1,y_1= (j,\cdot),\ldots,r_{\mt},y_\mt) \getsr  \Tbs}{\sum_{i\in [\mt] \setminus \I(j)} \Hall_{\tYY_i|\vZ,\tRR_{<i}}(y_i|\vz,r_{<i})  }  &\ge k_ \Gbs   - 2 \cdot \Hmax(\Gc(U_\pp,U_s)) - \log m\\
	&> (\ee -2)\cdot \kmax, \nonumber
	\end{align}
and we conclude that
	\begin{align}\label{eq:EqAccEnt:3}
	&\eex{(\vz,r_1,y_1=(j,\cdot),\ldots,r_{\mt},y_\mt) \getsr  \Tbs; v\getsr \set{2,\ldots,\ee -1}}{\sum_{i \in \set{f= (v-1)m +2 -j,\ldots,f+m-1}} \Hall_{\tYY_i|\vZ,\tRR_{<i}}(y_i|\vz,r_{<i})} > \kmax
	\end{align}

	\EEGen

	 Let $(Z,\tR_1 = (V,\vZ_{-V},F, R'_{< F},R'_F),\tY_1,\ldots,\tR_m,\tY_m) = T_\Gs$ be the transcript of $\Gs(Z)$, for $Z\getsr \zo^\pp$,  $(V,\vZ_{-V},F)$ being the value of $(v,\vz_{-v},f)$ sampled in the first step of  $\Gs$, and $R'_{< F}$  and $R'_F$ being the randomness used by the emulated execution of $\Gbs(Z,\vZ_{-V})$  done  in the second step of $\Gs$, for outputting its first $(F-1)$ blocks and the $F$'th block, respectively.  Let  $(z,\tr_1 = (v,\vz_{-v},f,r'_{< f},r'_f),y_1,\ldots,\tr_m,y_m) \in \Supp(T_\Gs)$. It is easy to verify that
	
	\begin{align}
	\Hall_{\tY_1|Z,V,Z_{-V},F,R'_{<F}}(y_1 \mid z,v,\vz_{-v},f,r'_{<f}) = \Hall_{\tYY_{f}|\vZ,\tRR_{<f}}(y_1\mid (z,\vz_{-f}), r_{< f})
	\end{align}
	and that  for $i>1$: 
	\begin{align}
	\Hall_{\tY_i|Z,\tR_{<i}}(y_i \mid z,\tr_{<i}) = \Hall_{\tYY_{f + i}|\vZ,\tRR_{<f + i}}(y_i \mid (z,\vz_{-f}), (r'_{< f},r_f',\tr_2,\ldots,\tr_{i-1}))
	\end{align}

It follows that
	\begin{align*}
\lefteqn{\eex{(\vz,r_1,y_1=(j,\cdot),\ldots,r_{\mt},y_\mt) \getsr  \Tbs; v\getsr \set{2,\ldots,\ee -1}}{\sum_{i \in \set{f= (v-1)m +2 -j,\ldots,f+m-1}} \Hall_{\tYY_i|\vZ,\tRR_{<i}}(y_i|\vz,r_{<i})}}\\
 &= \eex{ (z,\tr_1 = (v,\vz_{-v},f,r'_{< f},r'_f),y_1,\ldots,\tr_m,y_m) \getsr  T_\Gs} {\Hall_{\tY_1|Z,V,Z_{-V},F,R'_{<F}}(y_i \mid z,v,\vz_{-v},f,r'_{<f}) + \sum_{i=2}^m \Hall_{\tY_i|Z,\tR_{<i}}(y_i \mid z,\tr_{<i}) } \\
&=\Hall(\tY_1 \mid Z,V,Z_{-V},F,R'_{<F}) +  \sum_{i=2}^m \Hall(\tY_i \mid Z,\tR_{<i})\\
&\le \Hall(\tY_1 \mid Z) +  \sum_{i=2}^m \Hall(\tY_i \mid Z,\tR_{<i})\\
&= \eex{\vt \getsr  T_\Gs} {\Haccsam_\Gs(\vt)}.
\end{align*}
The inequality is by the chain rule of Shannon entropy and the last equality by \cref{lem:accessible-as-conditional}. Thus,  by \cref{eq:EqAccEnt:3}, 	the accessible entropy of $\Gs$ is greater than $\kmax$, in contradiction to the assumed accessible entropy of $\Gc$.
\end{proof}

\subsection{Gap Amplification Via Direct Product}
This manipulation simply takes  the direct product of a generator. The effect of this manipulation is  two-fold. The first  effect is that  the overall real entropy of a $\vv$-fold direct product repetition of a generator $\G$ is  $\vv$ times the  real entropy of $\G$. Hence, if $\G$'s real entropy is larger than its accessible entropy,  this gap gets multiplied  by $\vv$ when we perform direct product.  The second effect of such repetition is that per-block real entropy is turned into per-block  \emph{min-entropy}. The price of this manipulation is a slight decrease in the per block  min-entropy of the resulting generator, compared to  the sum of the per block real entropies of the independent copies of the generators  used to generate it. (This loss is due to the move from Shannon entropy to min-entropy, rather than from the direct product itself.) But  when performing sufficient repetitions, this loss can be ignored.

\begin{definition}\label{def:PR}
	Let $m =m(n)$, $\pp = \pp(n)$ and $s = s(n)$, and  $\vv= \vv(n)$. Given an    $m$-block generator $\Gc$ over $\zo^\pp \times \zo^s$, define the $m$-block   generator $\Gt$  over $(\zo^\pp)^\vv \times (\zo^s)^\vv$ as follows: on input  $(\vz,\vx) $, the \ith block of $\Gt$ is $(\Gc(\vz_1,\vx_1)_i,\ldots,\Gc(\vz_\vv,\vx_{\vv})_i)$. 
\end{definition}

\begin{lemma}\label{lem:PR}
	For security parameter $n$, let $m =m(n)$, let  $\vv= \vv(n)$ be polynomial-time   computable and bounded integer functions, and let $\Gc$ be an efficient\footnote{Again, standard techniques can be used to change the input domain of $\G$ to  $\zo^{s'(n)}$ for some polynomial-bounded and polynomial-time computable $s'$, to make it an efficient  block-generator  according to \cref{def:blockGenrator}.}  $m$-block generator. Then $\Gt$,  defined according to \cref{def:PR}, is an efficient $m$-block generator that satisfies the following properties:
	\begin{description}
		\item[Real entropy:] If   the \ith  block of $\Gc$ has  real min-entropy at least $\kreal =\kreal(n)$, then the \ith  block of  $\Gt$ has real min-entropy at least $\krealp(n) = \vv \cdot \kreal - O( (\log n + \ell)\cdot\log n \cdot \sqrt{\vv})$, for  $\ell = \ell(n)$ being the  length of the \ith block.  If the bound on the real entropy $\Gc$ is invariant to the public parameter, then  this  is also the case for the above bound on $\Gt$. 
		
		\item[Accessible entropy:]  If $\Gc$ has accessible  entropy at most $\kmax = \kmax(n)$, then $\Gt$ has accessible entropy at most $\vv \cdot \kmax$.
	\end{description}
\end{lemma}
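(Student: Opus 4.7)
The plan is to exploit the independence of the $\vv$ parallel copies in both parts. For the real entropy claim, I would first observe that conditioned on the public parameters $\vZ$ and all previous output blocks $\tY_{<i}$, the $i$-th block of $\Gt$ factorizes into $\vv$ independent samples of the conditional distribution of $\Gc$'s $i$-th block. Writing $\tY_i = (Y_i^1,\ldots,Y_i^\vv)$, this gives
$$\Hsam_{\tY_i \mid \vZ, \tY_{<i}}(\tilde y_i \mid \vz, \tilde y_{<i}) \;=\; \sum_{j=1}^{\vv} \Hsam_{Y_i \mid Z, Y_{<i}}(y_i^j \mid z_j, y_{<i}^j).$$
Each summand is at least $\kreal$ with all but negligible probability, so the Shannon entropy $\Hall(Y_i \mid Z, Y_{<i})$ is at least $\kreal - \negl(n)$, and the support of each block has max-entropy at most $\ell$. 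Applying \cref{lem:flatteningCond} to the joint variable $(Y_i, (Z, Y_{<i}))$ with $t=\vv$ and $\eps = 2^{-\log^2 n}$ (a negligible quantity with $\log(1/\eps) = \log^2 n$), I would conclude that with all but negligible probability the sample entropy above is at least $\vv \cdot \Hall(Y_i\mid Z, Y_{<i}) - O(\sqrt{\vv}\cdot \log n \cdot (\ell + \log n))$, which matches the target bound $\krealp(n)$. Invariance to the public parameter follows immediately since the same argument goes through for any fixing of $\vZ$.

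For the accessible entropy claim, I would argue by contradiction. Suppose there exists an efficient $\Gt$-consistent online generator $\Gs$ with $\ex{\Haccsam_\Gs(T_\Gs(1^n))} > \vv \cdot \kmax$. For each $j\in [\vv]$, I define a $\Gc$-consistent online generator $\Gs_j$ that samples fresh coins $\vR$ exactly as $\Gs$ would, runs $\Gs$ internally, but in each round $i$ outputs only the $j$-th coordinate $Y_i^j$ of the block $Y_i = (Y_i^1,\ldots,Y_i^\vv)$ produced by $\Gs$. Since $\Gs$ is $\Gt$-consistent, its output has the form $(\Gc(\vz_1,\vx_1),\ldots,\Gc(\vz_\vv,\vx_\vv))$, so the projected transcript $(Y_1^j,\ldots,Y_m^j)$ lies in $\Supp(\Gc(U_\pp, U_s))$ and $\Gs_j$ is indeed $\Gc$-consistent. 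Applying the chain rule for Shannon entropy and then monotonicity under conditioning,
$$\Hall(Y_i \mid \vR_{<i}) \;=\; \sum_{j=1}^{\vv} \Hall(Y_i^j \mid \vR_{<i}, Y_i^{<j}) \;\le\; \sum_{j=1}^{\vv} \Hall(Y_i^j \mid \vR_{<i}).$$
Summing over $i$ and invoking \cref{lem:accessible-as-conditional}, one obtains $\ex{\Haccsam_\Gs} \le \sum_{j=1}^\vv \ex{\Haccsam_{\Gs_j}}$, so by averaging some $\Gs_j$ has expected accessible entropy exceeding $\kmax$, contradicting the assumed bound on $\Gc$.

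The main conceptual point, and the place where one must be careful, is in the accessible entropy half: it is essential that $\Gs_j$ is allowed to use \emph{all} of $\Gs$'s internal randomness (across all $\vv$ copies) as its own coin tosses, so that the conditioning random variable $\vR_{<i}$ in the definition of accessible entropy for $\Gs_j$ matches the one for $\Gs$, and so that the chain rule ``projects'' $\Gs$'s total accessible entropy onto the $\vv$ per-coordinate accessible entropies with a clean inequality. The real entropy half is more routine, the only subtlety being the choice of the flattening parameter $\eps$ small enough to be negligible yet with $\log(1/\eps) = O(\log^2 n)$, so that the $O(\sqrt{\vv \log(1/\eps)} \cdot \log(\size{\Uni}\cdot \vv))$ error of \cref{lem:flatteningCond} collapses to exactly the stated $O((\ell + \log n)\log n\cdot \sqrt{\vv})$ slack.
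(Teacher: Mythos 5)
Your real-entropy argument follows the paper's proof essentially verbatim: the $i$-th block of $\Gt$ conditioned on $(\vZ,\tY_{<i})$ factorizes into $\vv$ i.i.d.\ copies of the conditional $i$-th block of $\Gc$, and \cref{lem:flatteningCond} with $\eps = 2^{-\log^2 n}$ (so $\log(1/\eps)=\log^2 n$) and universe size $\le \ell\cdot 2^\ell$ gives exactly the claimed slack. That part is fine.

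In the accessible-entropy half you have the right decomposition but there is a genuine gap in how you close the contradiction. You derive $\eex{}{\Haccsam_\Gs}\le \sum_{j=1}^{\vv}\eex{}{\Haccsam_{\Gs_j}}$ (correct: chain rule for $\Hall(\tY_i\mid\vZ,\vR_{<i})$ across the $\vv$ coordinates, drop the within-block conditioning, then note that $\Gs_j$ conditions on \emph{less} in its first block than $\Gs$ does, since $\Gs_j$ internally samples $\vz_{-j}$), and you conclude that ``by averaging some $\Gs_j$ has expected accessible entropy exceeding $\kmax$.'' But this index $j$ depends on $n$: you get, for each $n$ where $\Gs$ cheats, some $j = j(n) \in [\vv(n)]$ for which $\Gs_{j(n)}$ has high accessible entropy at that single $n$. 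Since $\vv(n)$ grows polynomially, no pigeonhole forces a single fixed $j^\ast$ to recur for infinitely many $n$, so you cannot exhibit one efficient uniform $\Gc$-consistent generator that violates \cref{def:accessible-entropy} on an infinite set of input lengths. (Hard-wiring $j(n)$ would require $\Theta(\log n)$ bits of nonuniform advice, which the definition does not allow.)

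The paper's proof avoids this by constructing a \emph{single} $\Gc$-consistent generator that, on public parameter $z$, first samples a random coordinate $j\getsr[\vv(n)]$ and then plays $\Gs_j$: it embeds $z$ in position $j$ of $\vz$, samples the remaining public parameters and the adversary's coins, runs the cheating $\Gt$-generator, and outputs the $j$-th coordinate of each block. Since the random choice of $j$ is part of the first round's coins (which the accessible-entropy functional does not condition on for the first block), this generator's accessible entropy is at least $\eex{j\getsr[\vv]}{k_j} = \frac{1}{\vv}\sum_j k_j > \kmax$, giving a bona fide contradiction with a single uniform adversary. Your averaging inequality already establishes $\sum_j k_j > \vv\kmax$, so your argument is one small step away---replace ``pick the best $j$'' by ``pick $j$ at random inside a single generator.''

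One small presentation point: you should say explicitly that $\Gs_j$, on public parameter $z$, places $z$ in coordinate $j$ of $\vz$ and samples $\vz_{-j}$ internally---this is exactly what makes the projected output lie in $\Supp(\Gc(z,\cdot))$ and hence $\Gs_j$ $\Gc$-consistent. You currently only mention the coins $\vR$, not the public parameters.
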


	\def\PRGen{
	
	Consider the following efficient  $\Gc$-consistent generator.
	
	\begin{algorithm}[Generator $\Gs$]
		\item Input: public parameter $z\in \zo^\pp$.
		
		\item Operation:
		
		\begin{enumerate}
			\item Sample $j\getsr [\vv]$.  Let $\vz= (z_1,\ldots,z_\ee)$ for $z_v = z$, and $z_i$,  for $i\neq v$,  sampled uniformly in  $\zo^{\pp}$. 
			
			We will refer to the part of $\vz$ sampled by the generator as $\vz_{-v}$, and (abusing notation) assume $\vz = (z,\vz_{-v})$.
			
			\item Start  a random execution of $\Gbs(\vz)$ and output the \jth entry of each output block.
			
		\end{enumerate}   
	\end{algorithm}
	It is clear that  $\Gs$ is indeed an efficient $\Gc$-consistent generator. We will show that the accessible entropy of $\Gs$ violates  the assumed  bound on the accessible entropy of $\Gc$.
}

\begin{proof}
	The bound on real entropy follows readily from  \Cref{lem:flatteningCond} by taking $\eps =  2^{-\log^2 n}$, and noting that the support size of each block of $\Gc$ is at most $\ell\cdot 2^{\ell}$. Therefore, we focus on establishing the bound on accessible entropy. Let $\Gb = \Gt$, let $\Gbs$  be  an efficient, nonuniform,  $\Gb$-consistent generator, and assume
	\begin{align}\label{eq:PR}
	k_\Gbs = \eex{\vt \getsr  \Tbs}{\Haccsam_{\Gbs}(\vt)} > \kmaxp = \vv \cdot \kmax
	\end{align}
	for $(\vZ,\vR_1,\vY_1,\ldots,\vR_m,\vY_m) = \Tbs = T_{\Gbs}(1^n)$. 	We show that \cref{eq:PR} yields that  a  random ``column'' of  $\Tbs$  contributes more than  $\kmax$ bits of  accessible entropy, and use it to construct a cheating generator for $\Gc$ whose  accessible entropy is greater than $\kmax$, in contradiction to  the assumed accessible entropy of $\Gc$.

	Let $(\vZ,\vR_1,\vY_1,\ldots,\vR_m,\vY_m)= \Tbs$. Since $\Gbs$ is   $\Gb$-consistent, each  $Y_i$ is of the form $(Y_{i,1},\ldots, Y_{i,\vv})$. It follows that 
	\begin{align}\label{eq:PrAccEnt:2}
	k_\Gbs &=  \sum_{i=1}^m \Hall (\vY_i \mid \vZ,\vR_{\le i})\\
	&  =  \sum_{i\in [m]} \sum_{j\in  [\vv]} \Hall (Y_{i,j} \mid \vZ,\vR_{<i},\vY_{i,1},\ldots,\vY_{i,j-1}  )\nonumber\\
	&  \le  \sum_{i\in [m]} \sum_{j\in  [\vv]} \Hall (Y_{i,j} \mid \vZ,\vR_{<i},Y_{<j} )\nonumber\\
	&=  \sum_{j\in  [\vv]}  k_j \nonumber
	\end{align}
	for $k_j =  \sum_{i\in [m]} \Hall (Y_{i,j} \mid \vZ,\vR_{<i} )$. 
	The first equality is by  \cref{lem:accessible-as-conditional}, and the  inequality is by  the chain rule of Shannon entropy.

	\PRGen

	 Let $(Z,\tR_1= (J,\vZ_{-J},R_1'),\tY_1,\ldots,\tR_m,\tY_m) = T_\Gs$, for  $(J,\vZ_{-J})$ be the value of $(j,\vz_{-f})$ sampled in the first step of $\Gs$  with $R_1'$ being  the randomness used by the emulated $\Gbs$ to  generate its first output bock.  Let  $(z,\tr_1 = (j,\vz_{-j},r'_1),y_1,\ldots,\tr_m,y_m) \in \Supp(T_\Gs)$. It is easy to verify that
	 
	 \begin{align}
	 \Hall_{\tY_1|Z,J,\vZ_{-J}}(y_i \mid z,j,\vz_{-j}) = \Hall_{Y_{1,j}|\vZ}(y_1\mid (z,\vz_{-f}))
	 \end{align}
	 and that  for $i>1$: 
	 \begin{align}
	 \Hall_{\tY_i|Z,\tR_{<i}}(y_i \mid z,\tr_{<i}) = \Hall_{Y_{i,j}|\vZ,\tRR_{<i}}(y_i \mid (z,\vz_{-f}), (r_1',\tr_2,\ldots,\tr_{i-1}))
	 \end{align}
	 It follows that
	 	\begin{align*}
	\lefteqn{ \eex{j \getsr [\vv]}{k_j}}\\
	 &{\eex{ (\vz,r_1,y_1,\ldots,r_{m},y_m) \getsr  \Tbs; j\getsr [\vv]}{\sum_{i \in[m]} \Hall_{Y_{i,j}|\vZ,\tR_{<i}}(y_i|\vz,r_{<i})}}\\
	 &= \eex{(z,\tr_1 = (j,\vz_{-j},r'_1),y_1,\ldots,\tr_m,y_m) \getsr  T_\Gs} {\Hall_{\tY_1|Z,J,Z_{-J}}(y_i \mid z,j,\vz_{-j}) + \sum_{i=2}^m \Hall_{\tY_i|Z,\tR_{<i}}(y_i \mid z,\tr_{<i}) } \\
	 &=\Hall(\tY_1 \mid Z,J,Z_{-J}) +  \sum_{i=2}^m \Hall(\tY_i \mid Z,\tR_{<i})\\
	 &\le \Hall(\tY_1 \mid Z) +  \sum_{i=2}^m \Hall(\tY_i \mid Z,\tR_{<i})\\
	 &= \eex{\vt \getsr  T_\Gs} {\Haccsam_\Gs(\vt)}.
	 \end{align*}
	The inequality is by the chain rule of Shannon entropy and the last equality by \cref{lem:accessible-as-conditional}. Thus,  by \cref{eq:PrAccEnt:2}, 	the accessible entropy of $\Gs$ is greater than $\kmax$, in contradiction to the assumed accessible entropy of $\Gc$.
\end{proof}

\section{Entropy Gap to Commitment}\label{sec:GapToCom}
\newcommand{\Ssh}{\Sc_{\SB}^\ast}
\newcommand{\lem}{v}
\newcommand{\Low}{\cl}
\newcommand{\rim}{{r_{\le i}}}
\newcommand{\Rim}{{R_{\le i}}}
\newcommand{\RIm}{{R_{\le L}}}
\newcommand{\li}{{\le i}}
\newcommand{\lI}{{\le I}}
\newcommand{\low}{\mathsf{low}}

In this section we construct statistically hiding commitments  from inaccessible entropy generators. Combined with the main result of \cref{sec:IAEGenFromOWF}, the above yields a construction of  statistically hiding commitments  from one-way functions, reproving the result of \cite{HaitnerNgOnReVa09}.  

\medskip

We start by recalling the definition of  statistically hiding commitment schemes.
\vspace{-0.2in}
\paragraph{Statistically hiding commitment schemes.}
A {\em commitment scheme} is the cryptographic analogue of a safe. It is a two-party protocol between a {\em sender} $\Sc$ and a {\em receiver} $\Rc$ that consists of two stages. The {\em commit stage} corresponds to putting an object in a safe and locking it; the sender ``commits'' to a private message $m$. The {\em reveal stage} corresponds to unlocking and opening the safe; the sender ``reveals'' the message $m$ and ``proves'' that it was the value committed to in the commit stage (without loss of generality by revealing coin tosses consistent with $m$ and the transcript of the commit stage).

\begin{definition}\label{def:com}
	A {\sf (bit) commitment scheme}\footnote{We present the definition for bit commitment. To commit to multiple bits, we may simply run a bit commitment scheme in parallel.} is an efficient two-party protocol $\Com = (\Sc,\Rc)$  consisting of two stages. Throughout, both parties receive the security parameter $1^n$ as input.
	\begin{quote}
		
		{\sc Commit}. The sender $\Sc$ has a private input $b \in \zo$, which it wishes to commit to the receiver $\Rc$, and a
		sequence of coin tosses $\sigma$. At the end of this stage, both parties receive as common output a \emph{commitment} $z$.
		\medskip
		
		{\sc Reveal}. Both parties receive as input a commitment $z$. $\Sc$ also receives the private input $b$ and coin tosses $\sigma$ used in the commit stage. After the interaction of $(\Sc(b,r),\Rc)(z)$, $\Rc$ either outputs a bit, or the reject symbol $\perp$.
		
	\end{quote}
	
	The  commitment is {\sf receiver public-coin} if the messages the receiver sends are merely the coins it flips at each round.

	For the sake of this tutorial, we focus on commitment schemes with a \emph{generic} reveal scheme: the commitment  $z$ is simply the transcript of the commit stage, and in the noninteractive reveal stage,  $\Sc$ sends $(b,\sigma)$ to $\Rc$, and $\Rc$ outputs $b$ if  $\Sc$, on input $b$ and randomness $\sigma$, would have acted as the sender did in $z$; otherwise, it outputs $\perp$. 
\end{definition}

Commitment schemes have two security properties. The {\em hiding} property informally states that, at the end of the commit stage, an adversarial receiver has learned nothing about the message $m$, except with negligible probability. The {\em binding} property states that, after the commit stage, an adversarial sender cannot produce valid openings for two distinct messages (\ie to both $0$ and $1$), except with negligible probability. Both of these security properties come in two flavors---{\em statistical}, where we require security even against a computationally unbounded adversary, and {\em computational}, where we only require security against feasible (\eg
polynomial-time) adversaries.

Statistical security is preferable to computational security, but it is impossible to have commitment schemes that are both statistically hiding
and statistically binding. In this tutorial, we focus on  constructing statistically hiding
(and computationally  binding) schemes, which are  closely connected to  the notion of inaccessible entropy generators.

\begin{definition}[commitment schemes]
	A commitment scheme $\Com=(\Sc,\Rc)$ is {\sf statistically hiding} if the following holds:
	\begin{quote}
		{\sc Completeness}. If both parties are honest, then for any bit $b \in \zo$ that $\Sc$ gets as private input, $\Rc$ accepts and outputs $b$ at the end of the reveal stage.
		
		\medskip
		
		{\sc Statistical  Hiding}. For  every  unbounded strategy $\Rs$, the distributions $\view_{\Rs}((\Sc(0),\Rs)(1^n))$ and $\view_{\Rs}((\Sc(1),\Rs)(1^n))$ are statistically  indistinguishable, where $\view_{\Rs}(e)$ denotes the collection of all messages exchanged and the coin tosses of $\Rs$ in $e$. 
		
		The commitment is  {\sf honestreceiver} statistically hiding, if the above is only guaranteed for $\Rs= \Rc$.
		\medskip
		
		{\sc Computational   Binding}. A \ppt  $\Ss$ succeeds in the following game (breaks the commitment) only with negligible probability in $n$:
		\begin{itemize}
			\item $\Ss = \Ss(1^n)$ interacts with an honest $\Rc = \Rc(1^n)$ in the commit stage, on security parameter $1^n$, which yields a commitment $z$.
			\item $\Ss$ outputs two messages $\tau_0,\tau_1$ such that $\Rc(z,\tau_b)$ outputs
			$b$, for both $b\in \zo$.
		\end{itemize}
		\Com is {\sf $\delta$-binding} if no \ppt  $\Ss$ wins the above game with probability larger than  $\delta(n) + \negl(n)$.
	\end{quote}
\end{definition}

We now discuss the intriguing connection between  statistically hiding commitment  and inaccessible entropy generators. Consider a statistically hiding commitment scheme
in which the sender commits to a message of length $k$, and suppose we run the protocol with the message $m$ chosen uniformly  at random
in $\zo^k$. Then, by the statistically  hiding property, the {\em real entropy} of the message $m$ after the commit stage is $k-\negl(n)$. On the other hand, the computational binding property states that the {\em accessible entropy} of $m$ after the commit stage is at most $\negl(n)$. This is only an intuitive connection, since we have not discussed real and accessible entropy for protocols, but only for generators. Such definitions can be found in \cite{HaitnerReVaWe09}, and for them it can be proven that statistically hiding commitments imply protocols in which the real entropy is much larger than the accessible entropy. Here our goal is to establish the converse, namely that a generator with a gap between its real and accessible entropy implies a statistically hiding commitment scheme.  The extension of this fact for protocols can   be found in \cite{HaitnerReVaWe09}.

\begin{theorem}[Inaccessible entropy generator to  statistically hiding commitment]\label{theorem:GapToCom}
	Let $k= k(n)$, $\pp=\pp(n)$, $s=s(n)$ be  polynomial-time computable functions and let $p = p(n)\in \poly$. Let  $\Gc$ be an efficient  $m=m(n)$-block generator over $\zo^{\pp} \times \zo^{s}$, and assume that  $\Gc$'s real Shannon entropy is at least $k$ and  its accessible entropy is at most $(1-1/p)\cdot k$. Then  for any polynomial-time computable $g= g(n)  \ge \max\set{\Hmax(G(U_\pp,U_s)),\log m}$, there exists an  $O(m pg/ k)$-round,   statistically hiding and computationally binding commitment scheme. Furthermore, if the bound on the real entropy is invariant to the public parameter, then  the commitment is receiver public-coin.

\end{theorem}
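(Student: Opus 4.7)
The plan is to follow the three-step outline sketched in \cref{sec:intro:SHCfroAE}, leveraging the entropy-manipulation tools from \cref{sec:manipulatingAE} to first transform $\Gc$ into a generator whose \emph{per-block real min-entropy} noticeably exceeds its averaged accessible entropy, and then feeding this generator into a per-block hashing subprotocol to obtain the commitment.

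First I would apply the entropy equalization transformation (\cref{lem:EqRealEnt}) with a parameter $\ee = \Theta(pg/k)$, obtaining an $(\ee-1)m$-block generator $\Gbl$ in which each block has real Shannon entropy at least $k/m$ and whose total accessible entropy is at most $(\ee-2)(1-1/p)k + 2g + \log m$. Since the total real entropy is at least $(\ee-1)k$, the additive gap is at least $k + (\ee-2)k/p - 2g - \log m$, which for my choice of $\ee$ is $\Omega(\ee k/p)$; equivalently, averaged per block, the real entropy exceeds the accessible entropy by roughly $k/(mp)$ bits. Next I would apply the direct-product amplification of \cref{lem:PR} with a suitably large polynomial $\vv$, yielding a generator $\Gt$ whose per-block real \emph{min}-entropy is $\vv \cdot k/m$ up to a lower-order $O(\sqrt{\vv}\cdot g\log n)$ correction, while its total accessible entropy is at most $\vv$ times that of $\Gbl$. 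Choosing $\vv$ large enough (polynomial in $n,m,p,g$) makes the per-block gap $\omega(\log n)$ bits and the correction from \cref{lem:flatteningCond} negligible relative to the real min-entropy.

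Third, I would build the commitment block-by-block. In the $i$th round, the receiver sends a description of a universal one-way hash function $h_i$ (together with fresh uniform coins for the interactive-hashing subprotocol), the sender evaluates $\Gt$ to obtain the next block $\tilde y_i$, and the two parties run a constant-round interactive hashing subprotocol in the spirit of \cite{DingHRS04} to reveal just enough bits of $h_i(\tilde y_i)$ that the residual real min-entropy in $\tilde y_i$ (given the transcript so far, via \cref{prop:CondNotReduce}) still exceeds the averaged per-block accessible entropy by $\omega(\log n)$ bits. After all $(\ee-1)m$ blocks are processed, the sender uses one additional two-universal hash (analyzed via the leftover hash lemma, \cref{lem:leftover}) to distill a single near-uniform bit from the surviving min-entropy, and XORs it with the message bit $b$. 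Statistical hiding follows because at every step the revealed hash consumes strictly less real min-entropy than is available, so the final extracted bit is statistically close to uniform given the receiver's view. Computational (weak) binding follows by contrapositive: any \ppt\ sender producing valid openings to both $0$ and $1$ can be packaged into an efficient $\Gt$-consistent online generator $\Gs$ whose coin-tosses-conditioned output entropy exceeds the accessible-entropy bound of $\Gt$ — the two distinct openings produce, with noticeable probability and by the UOWHF property, two distinct preimages consistent with the same hash in some block, contributing the forbidden extra bit of accessible entropy. Finally, I would amplify weak binding to full binding via standard parallel repetition, which preserves statistical hiding.

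The round complexity is dominated by the $(\ee-1)m = \Theta(mpg/k)$ per-block invocations of the constant-round hashing subprotocol; direct-product and parallel-repetition steps do not increase rounds. Receiver-public-coin-ness is immediate whenever the real-entropy bound of $\Gc$ is invariant to the public parameter, since then the per-block public parameters of $\Gbl$ and $\Gt$ can be absorbed into the receiver's public coins. The main obstacle I anticipate is the binding reduction: carefully defining the online $\Gt$-consistent generator extracted from a cheating sender so that its two successful openings translate cleanly into a measurable surplus of accessible entropy — this requires balancing the UOWHF collision-resistance, the interactive-hashing guarantee, and the per-block accounting used in \cref{prop:HighAEContributionSeq,lem:MaxAvgAE}, and is where the polynomial losses in $\vv$ and $\ee$ get paid.
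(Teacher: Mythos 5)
Your reduction is faithful to the paper's outline through the entropy-manipulation phase: truncated sequential repetition with $\ee=\Theta(pg/k)$, then direct product with a large enough polynomial $\vv$, yielding per-block real min-entropy that beats the averaged accessible entropy — this matches \cref{lem:EqRealEnt,lem:PR} and the choice of parameters in the paper's proof. The divergence, and the gap, is in how you turn the resulting generator into a commitment.

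You propose to run the per-block hashing subprotocol on \emph{every} block and then, after all $(\ee-1)m$ blocks are processed, distill a single masking bit from the \emph{joint} surviving min-entropy. The paper's construction (\cref{protocol:WeakBinding}) is different in an essential way: it has the receiver commit, via a sequence of random coins $c_1,\ldots,c_m$, to a uniformly random stopping index $\is$, and the masking bit is extracted (by an inner product) from the single block $y_\is$ alone. This is not a stylistic choice; it is what makes the binding reduction go through. The accessible-entropy hypothesis only bounds $\sum_i \HSh(Y_i\mid R_{<i})$, so it guarantees that \emph{some} block has low conditional sample-entropy (hence lies in a set $\Low$ of size $\le 2^{k-3n}$), but the adversary controls which block that is and may make another block have conditional entropy as large as $k-3n$ (or slightly more) while still satisfying the sum bound. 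For a block whose conditional distribution has support larger than $2^{k-3n}$, \cref{proposition:StatHashing} no longer pins the block to a single candidate before the UOWHF is sent, and since $\h^1,\h^2$ are merely combinatorial and $\FFam$ is only \emph{target}-collision-resistant, an efficient sender seeing the whole hash description before choosing its openings may well find two distinct values of that block with identical hash transcript. In your scheme those two values would flip the jointly-extracted bit, breaking binding, and the resulting collision is not a target collision, so it does not contradict \cref{def:UOWHF}. The receiver's random $\is$ in the paper is exactly what prevents this: with probability $\ge 1/m$ the masked block is forced to be the low-entropy one, where $\Low$ is fixed before the UOWHF is chosen and \cref{lemma:Hashing} applies; this gives only $(1-\Theta(1/mp))$-binding, which is why the paper follows with parallel repetition (\cref{claim:GapToCom:BindingWeak} and the amplification via \cite{HastadPPW10} in \cref{lemma:GapToCom}). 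Your proposal therefore needs the random-challenge structure (or an equivalent device that forces the cheating sender's differing block onto the low-entropy block), and its binding claim as stated does not follow from the accessible-entropy hypothesis.
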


Given per $n$  polynomial-size advice, the commitment round complexity can be reduced to $O(m)$;  see \cref{remark:NonUniformCom} for details. In \cref{SHC:ConstantRound} we  use this fact to prove that an inaccessible  generator with a  \emph{constant} number of blocks yields a  constant round commitment.

Combining the  above theorem with \cref{thm:AEGfromOWF} reproves the following fundamental  result:
\begin{theorem}[One-way functions to statistically hiding commitment]\label{thm:OWFtoSHC}
	Assume there exists  one-way function $f\colon\zn \mapsto \zn$.  Then there exists an $O(n^2/\log^2 n)$-round, receiver public-coin  statistically hiding commitment scheme. 
\end{theorem}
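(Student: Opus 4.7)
The plan is to combine \cref{thm:AEGfromOWF} with \cref{theorem:GapToCom} directly, feeding the generator produced by the former into the latter. First, I apply \cref{thm:AEGfromOWF} to $f$ to obtain the efficient $(n/\log n + 1)$-block generator $\Gc = \Gc^f$ from \cref{ctr:AEG}, which has no public parameter ($\pp = 0$). Its real Shannon entropy equals $n$: the last output block of $\Gc$ is $x$ itself, which determines the entire transcript, so $\HSh(\Gc(U_n)) = \HSh(U_n) = n$, and by \cref{lem:realShannondef} this coincides with the quantity controlled by the real-entropy definition. On the other hand, by \cref{thm:AEGfromOWF} the accessible entropy of $\Gc$ is at most $n - \omega(\log n)$. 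This is exactly the gap between real and accessible entropy required to invoke \cref{theorem:GapToCom}.

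Next, I invoke \cref{theorem:GapToCom} with parameters $k(n) = n$, $m(n) = n/\log n + 1$, $p(n) = n/\log n$, and $g(n) = n$. The hypothesis that accessible entropy is bounded by $(1-1/p)k = n - \log n$ holds for all sufficiently large $n$, since $n - \omega(\log n) \le n - \log n$ eventually. The condition $g(n) \ge \max\{\Hmax(\Gc(U_n)),\log m(n)\}$ is satisfied because $\Supp(\Gc(U_n))$ has size at most $2^n$ (so $\Hmax \le n$) and $\log m(n) = O(\log n) \le n$. Because $\Gc$ has no public parameter, the real-entropy lower bound is trivially invariant to it, so the ``furthermore'' clause of \cref{theorem:GapToCom} yields a \emph{receiver public-coin} statistically hiding, computationally binding commitment scheme.

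Finally, plugging the chosen parameters into the round-complexity bound of \cref{theorem:GapToCom} gives
\[
O\!\left(\frac{m\,p\,g}{k}\right) \;=\; O\!\left(\frac{(n/\log n)\cdot(n/\log n)\cdot n}{n}\right) \;=\; O(n^2/\log^2 n),
\]
matching the claim of the theorem. Since all the substantive work is absorbed into \cref{thm:AEGfromOWF} (producing the inaccessible entropy generator) and \cref{theorem:GapToCom} (converting such a generator into a commitment), no genuine obstacle remains beyond the parameter bookkeeping above; the only thing to double-check is that the $\omega(\log n)$ gap from \cref{thm:AEGfromOWF} translates to the $1/\poly$-multiplicative gap demanded by \cref{theorem:GapToCom}, which it does for $p(n) = n/\log n$.
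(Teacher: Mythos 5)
Your proposal is correct and follows exactly the approach the paper takes: the paper proves \cref{thm:OWFtoSHC} by simply combining \cref{thm:AEGfromOWF} with \cref{theorem:GapToCom}, and your argument spells out the same combination with the parameter choices $k=n$, $m=n/\log n+1$, $p=n/\log n$, $g=n$ made explicit. The only tiny imprecision is in the phrasing "$n-\omega(\log n)\le n-\log n$"; what \cref{thm:AEGfromOWF} actually delivers is that for every constant $c$ the accessible entropy is at most $n-c\log n$, so one should invoke it with $c=1$ to obtain the bound $(1-1/p)k=n-\log n$—but that is exactly the instantiation you need and the conclusion stands.
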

Given  per $n$ polynomial-size advice, the round complexity of the above commitment can be reduced to $O(n/\log n)$, matching the lower bound for such fully black box constructions of \cite{HaitnerHRS15}.  

The heart of the proof of \cref{theorem:GapToCom} lies in the following lemma.

\begin{lemma}\label{lemma:GapToCom}
	Let $k(n) \ge 4n$ be   a polynomial-time computable function, let $\Gc$ be an efficient  $m$-block generator, and assume one-way functions exist. Then  for  every efficiently computable $p(n) \ge  1/\poly(n)$   there exists a polynomial-time, $O(m)$-round, receiver public-coin, commitment scheme $\Com$   with the following properties: 
	\begin{description}
		\item[Hiding:] If each block of $\Gc(U_{\pp(n)},U_{s(n)})$ has  real min-entropy at least $k(n)$, then \Com is statistically hiding.  Furthermore, if the bound on the real entropy is invariant to the public parameter, then  the commitment is receiver public-coin.
		
		\item[Binding:]

		If for  every efficient $\Gc$-consistent, online generator $\Gs$ and all large enough $n$,
		$$\ppr{\vt\getsr T_\Gs(1^n)} {\Haccsam_\Gs(\vt) > m(n) (k(n)  - 3n)} \le 1 - 1/p(n),$$
		
		then \Com is computationally binding. 
		
	\end{description}
\end{lemma}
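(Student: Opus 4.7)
My plan is to build the commitment block-by-block, using two-universal hashing (or UOWHFs obtained from the assumed one-way functions) to compress each generator output block to just below its per-block real min-entropy, followed by a final two-universal extractor that one-time-pads the message bit. In the commit phase the sender samples $x \getsr \zo^s$, privately computes $\vy = \Gc(z,x)$ using the receiver-chosen public parameter $z$, and for $i=1,\ldots,m$ replies to a fresh receiver-chosen hash $h_i$ of output length roughly $k-2n$ with $v_i = h_i(y_i)$; after all $m$ blocks, the receiver sends a two-universal $g\colon \zo^s \to \zo$ and the sender sends $c = g(x) \oplus b$. Reveal sends $(x,b)$ and the receiver checks consistency with every $v_i$ and with $c$. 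This costs $O(m)$ rounds, is receiver public-coin when the real-entropy bound on $\Gc$ is invariant to the public parameter, and completeness is immediate.

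For hiding I would apply the leftover hash lemma (\cref{lem:leftover}) iteratively across blocks. By the per-block min-entropy hypothesis together with \cref{lemma:smoothEntropies} and \cref{prop:CondNotReduce}, each $y_i$ has min-entropy at least $k - O(n)$ given $(z,y_{<i})$ up to negligible statistical distance; since $h_i$ compresses $y_i$ to only $k-2n$ bits, the chain rule for min-entropy losses leaves $x$ with $\Omega(n)$ bits of residual min-entropy given the entire commit transcript (this is where the hypothesis $k(n)\geq 4n$ is used to ensure the residual budget stays positive across all $m$ blocks). A final invocation of the leftover hash lemma on $(g,x)$ then gives that $g(x)$ is $\negl(n)$-close to uniform conditioned on the transcript, so the receiver's view is statistically indistinguishable under $b=0$ and $b=1$.

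For binding I would argue the contrapositive. Given a \ppt cheating sender $\Ss^*$ that opens a commitment to both bits with non-negligible probability $\eps$, I build an efficient $\Gc$-consistent online generator $\Gs^*$ whose accessible sample-entropy exceeds $m(k-3n)$ on more than a $1/p(n)$ fraction of transcripts, contradicting the hypothesized bound. $\Gs^*$ emulates $\Ss^*$ round by round: in round $i$ it draws fresh coins $r_i$ that include an independent $h_i$, runs $\Ss^*$ to obtain $v_i$, and uses a standard openings-finder subroutine (rewinding in the spirit of the interactive-hashing-based commitment arguments of \cite{NaorYu89,DingHRS04,HaitnerNgOnReVa09}) to emit a block $y_i$ consistent with the partial transcript. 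The analytic heart is the lower bound $\Hall(Y_i\mid R_{<i}) \geq k - O(n)$: because a binding-breaking $\Ss^*$ must, for typical prior transcripts, answer a typical $h_i$ with a $v_i$ admitting many preimages, the random choice of $h_i$ inside $r_i$ spreads $y_i$ over nearly all of its real-min-entropy support, and summing over $m$ blocks pushes $\Haccsam_{\Gs^*}(\vt)$ above $m(k-3n)$.

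The main obstacle, and the step I expect to demand the most care, is justifying the per-block entropy lower bound in the binding reduction: one must rule out the possibility that $\Ss^*$ concentrates its $y_i$ answers on a tiny set as a function of $(r_{<i}, h_i)$. This is precisely what the UOWHF choice of $h_i$ (\cite{NaorYu89,Rompel90}, available from the assumed one-way functions) buys: any such concentration would either already violate $\Gc$'s per-block real min-entropy or would exhibit target collisions against the UOWHF family, both impossible. Once this weakly binding scheme is in place (with binding error $1-1/p(n)+\negl(n)$), standard parallel repetition across $\poly(n)$ independent copies---combined with secret-sharing $b$ across the copies so that hiding is preserved---amplifies the binding error down to $\negl(n)$ while leaving the round complexity at $O(m)$, completing the construction and proving the lemma.
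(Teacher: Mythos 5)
Your high-level plan (build a weakly binding, $O(m)$-round, receiver public-coin scheme from per-block hashing of $\Gc$'s output, then amplify by parallel repetition with secret sharing of the bit) matches the paper's structure, and your hiding argument via \cref{lemma:smoothEntropies} and \cref{prop:CondNotReduce} is essentially the paper's. But the binding mechanism in your protocol has two genuine gaps, each of which would make the scheme insecure as written.

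First, you extract the masking bit from the \emph{seed} $x$ (via $c = g(x)\oplus b$, reveal $(x,b)$). The lemma places no injectivity hypothesis on $\Gc$, and the accessible-entropy bound says nothing about collisions in $x$: if $\Gc(z,x)$ ignores some bits of $x$ (e.g., $\Gc(z,(x_1,x_2)) = \Gc'(z,x_1)$, which still satisfies all the hypotheses), an honest-looking sender can hold $x_0=(x_1,0)$, $x_1'=(x_1,1)$ that give identical $\vy$'s and hence identical $v_1,\ldots,v_m$, yet with $g(x_0)\neq g(x_1')$ with probability $1/2$ over $g$ --- a clean binding break. The paper avoids this by masking with $\iprod{u, y_\is}$, i.e., extracting from the output block, so two preimages of the same output carry the same masked bit. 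If you want to salvage your architecture, extract from $\vy$ (or from a single $y_\is$), not from $x$.

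Second, a single receiver-chosen hash $h_i$ per block --- whether two-universal or a UOWHF --- is not enough to bind $y_i$. Your sender sees $h_i$ before answering $v_i = h_i(y_i)$; this is an \emph{adaptive} or free-start collision attack, and UOWHF target-collision resistance gives no protection against it (the UOWHF game requires the target to be fixed before $f$ is sampled). The paper's per-round gadget is the two-stage interactive hashing of \cref{protocol:StatHashing} (\cite{DingHRS04}) composed with a UOWHF as in \cref{protocol:Hashing}: the $t$-wise $h^1$ and pairwise $h^2$ force the sender's candidate set down to (essentially) one element within any a-priori $2^{k-3n}$-size set \emph{before} the UOWHF key $f$ appears, which is exactly the structure that lets the collision finder in \cref{lemma:Hashing} extract a valid UOWHF target and a colliding second point. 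Your ``openings-finder subroutine'' sketch does not supply the missing step, which is the proof that the sender is already committed (within the low-sample-entropy set $\Low$) \emph{before} the UOWHF is revealed.

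Related to this, your stated binding claim ($1-1/p(n)+\negl$) skips the factor of $m$ that the paper gets (Claim~\ref{claim:GapToCom:BindingWeak} gives $1-1/3mp$), and the reason is the paper's random choice of a \emph{single} round $\is$ for the masked bit: the accessible-entropy hypothesis only guarantees a \emph{total} sample-entropy drop of $3mn$ on a $1/p$ fraction of transcripts, so some round has a drop of $3n$ only with probability $\geq 1/mp$, and only that round is bound. Your ``sum over $m$ blocks pushes $\Haccsam_{\Gs^*}(\vt)$ above $m(k-3n)$'' elides where that probability goes; you would need the analogue of the paper's $\Low$-set argument and the move from an arbitrary cheating sender to a non-failing one (\cref{claim:GapToComUni:FronNonFailingToArbitrary}) to make the reduction go through. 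Finally, for the amplification step, the paper invokes \cite{HastadPPW10} (parallel repetition amplifies binding for receiver public-coin schemes) and \cite{HaitnerHoKaKoMoSh09Jor} (honest-receiver hiding to full hiding when the public parameter is not entropy-invariant); both are substantive citations that your ``standard parallel repetition'' sentence glosses over.
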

We prove  \cref{lemma:GapToCom} in \cref{SHC:ProvingGapToCom}, but  first use it to prove \cref{theorem:GapToCom}.


\paragraph{Proving \cref{theorem:GapToCom}.}

\begin{proof}[Proof of \cref{theorem:GapToCom}]
	We prove \cref{theorem:GapToCom} by  manipulating the real and accessible entropy of  $\Gc$ using the tools described in \cref{sec:manipulatingAE}, and then applying \cref{lemma:GapToCom}  on the resulting generator. 

	\paragraph{Truncated sequential repetition: real entropy equalization.}  	In this step we use  $\Gc$ to define  a generator  $\Gc^{\brk\ee}$ that   each of whose  blocks has the  same amount of  real entropy: the average of the real entropy of the blocks of $\Gc$.   In relative terms, the entropy gap of $\Gc^{\brk \ee}$  is essentially  that of $\Gc$.  
	
	We assume \wlg that $m(n)$ is a power of  two.\footnote{Adding $2^{\ceil{\log m(n)}} -m(n)$ final blocks of constant value transforms a  block-generator  to one whose block complexity is a power of two, while maintaining the same amount of real and accessible entropy.}  Consider the  efficient  $m'= m'(n)=(\ee-1)\cdot m)$-block generator $\Gc^{\brk \ee}$ resulting by  applying  truncated sequential repetition (see \cref{def:EqRealEnt}) on $\Gc$ with parameter $\ee = \ee(n)=\max\set{4,\ceil{12gp/k}} \le \poly(n)$. By \cref{lem:EqRealEnt}: 	
	\begin{itemize}
		\item  \emph{Each  block}  of $\Gc^{\brk \ee}$ has real entropy  at least $k'= k'(n)  = k/m$.
		
		\item   The accessible entropy   of $\Gc^{\brk \ee}$ is at most  
		\begin{align*}
		a'= a'(n) &= (\ee-2)   \cdot (1-1/p)\cdot k  +  \log m + 2\Hmax(G(U_\pp,U_s)))\\
		&\le (\ee-2)  \cdot (1-1/p)\cdot k   + 3g\\
		&= (\ee-2)  \cdot (1-1/2p)\cdot k - (\ee -2) \cdot k/2p +  3g\\
		&\le (\ee-2)  \cdot (1-1/2p)\cdot k  - \ee\cdot k/4p +  3g\\
		&\le  (\ee-2)  \cdot (1-1/2p)\cdot k\\
		&<  m'  \cdot (1-1/2p)\cdot k'.
		\end{align*}
	\end{itemize}

	\paragraph{Direct product: converting real entropy to   min-entropy   and gap amplification.} In this step we use   $\Gc^{\brk \ee}$ to define a generator  $(\Gc^{\brk \ee})^{\seq \vv}$ that each of whose blocks has the same amount of real  min-entropy, about $\vv$ times the per-block real entropy of  $\Gc^{\brk \ee}$. The accessible entropy of  $(\Gc^{\brk \ee})^{\seq \vv}$  is at most  $\vv$ times the accessible entropy of $\Gc^{\brk \ee}$.  
	
	We assume \wlg  that the  output blocks are of all of the same length $\ell = \ell(n)\in \Omega(\log n)$.\footnote{A standard padding technique can be used to  transform a block-generator to one whose   blocks are all of the same length, without changing its real  and its accessible entropy.}   Let $\vv = \vv(n) = \max\set{32np/k',\ceil{c\cdot \left(\log(n) \ell p)/k'\right)^2}}$ for $c>0$ to be determined by the analysis.  Consider the efficient  $m'$-block generator $(\Gc^{\brk \ee})^{\seq \vv}$,  generated by taking the direct product of $\Gc^{\brk \ee}$ according to \cref{def:PR}. By  \cref{lem:PR}:
	\begin{itemize}
		\item Each block of $(\Gc^{\brk \ee})^{\seq \vv}$ has real \emph{min-entropy}  at least  $k'' = k''(n) =  \vv \cdot k' - O\left(\log (n)\cdot  \ell \cdot  \sqrt{\vv} \right)$.
		
		\item The  accessible entropy of $(\Gc^{\brk \ee})^{\seq \vv}$  is at most $a'' = a''(n) =  \vv \cdot a'$.
	\end{itemize}
	Hence for large enough $n$, it holds that 
	\begin{align}
	m' \cdot k'' - a''&\ge  m' \cdot \left(\vv \cdot k' - O\left(\log (n)\cdot  \ell \cdot  \sqrt{\vv} \right)\right) - v\cdot a'  \nonumber\\
	&>  m' \cdot \left(\vv \cdot k' - O\left(\log (n)\cdot  \ell \cdot  \sqrt{\vv} \right)\right) - v\cdot m'  \cdot (1-1/2p)\cdot k'\nonumber\\
	&=  m' \cdot v \cdot \left(k'/2p - O( \log (n) \cdot \ell/\sqrt{\vv}) \right)\nonumber\\
	&\ge  m' \cdot v \cdot k'/4p \nonumber\\
	&\ge 4m' n.\nonumber
	\end{align}
	The penultimate inequality holds by taking a large enough value of $c$ in the definition of $\vv$. 
	Hence,  by an averaging argument  for any efficient $(\Gc^{\brk \ee})^{\seq \vv}$-consistent, online generator $\Gs$ and all large enough $n$, it holds that
	\begin{align}
	\ppr{\vt\getsr T_\Gs(1^n)} {\Haccsam_\Gs(\vt) > m' (n)(k''(n)  - 3n)} \le 1 - 1/p'(n)
	\end{align}
	for $p'(n) = m'(n)(k''(n) - 3n)/n$.

By \cref{lemma:IEGtoOWF}, the  existence   of $\Gc$ implies that of one-way functions. 	Hence, we can apply   \cref{lemma:GapToCom} with $(\Gc^{\brk \ee})^{\seq \vv}$,  $k =k''$ and $p = p'$, to get  the claimed  $\left(m'= m\cdot (\ee-1) = O(m \cdot gp/ k)\right)$-round,   statistically hiding and computationally binding commitment.

Finally, it readily  follows from  the above proof that  if the bound on the real entropy of $\Gc$ is invariant to the public parameter, then so is that of   $(\Gc^{\brk \ee})^{\seq \vv}$. Hence, \cref{lemma:GapToCom}  yields  that in this case the resulting commitment is receiver public-coin.
\end{proof}

\begin{remark}[Comparison with the construction of next-block  pseudoentropy generators to pseudorandom generators]
	It is interesting to  see the similarity between the  manipulations we apply above on the  inaccessible entropy generator $\Gc$ to construct statistically hiding commitment, and those applied by \citet{HaitnerReVa13,VadhanZheng2012} on the next-block  pseudoentropy generator to construct a pseudorandom generator.  The manipulations applied in both constructions are essentially the same and achieve similar goals: to convert real entropy to per-block min-entropy whose overall sum is significantly larger than the accessible entropy in the above, and  to convert next-block  pseudoentropy to per-block pseudo-min-entropy whose  overall sum is significantly larger than the real entropy in  \cite{HaitnerReVa13,VadhanZheng2012}. This fact, together with the similarity in the initial steps of constructing the above generators from one-way functions  (inaccessible entropy generator above and next-block pseudoentropy generator in \cite{VadhanZheng2012}) yields that the  constructions of statistically hiding commitment schemes and pseudorandom generators from  one-way functions are surprisingly similar.     
\end{remark}

\begin{remark}[Omitting the entropy equalizing step]\label{remark:NonUniformCom}
		If the amount of real entropy of each block of   $\Gc$ is \emph{efficiently computable}, the entropy equalizing step in the   proof of \cref{theorem:GapToCom} above is not needed.  Rather, we can take a direct product  of $\Gc$  itself to get the desired generator.  This argument yields  an $\Theta(m)$-round, nonuniform (the parties use a nonuniform polynomial-size advice per security parameter)  commitment scheme,  assuming  the bound on the accessible entropy of  $\Gc$  holds for nonuniform generators. When combined with \cref{thm:AEGfromOWF}, the latter yields a $\Theta(n/\log n)$-round nonuniform commitment statistically hiding scheme from any nonuniform one-way function, matching the lower bound of \cite{HaitnerHRS15}.\footnote{The bound of \cite{HaitnerHRS15} is stated for uniform commitment schemes, but the  same bound  for nonuniform commitment schemes readily follows from their proof.}

	If the  generator's number of blocks is  \emph{constant}, the knowledge of the per-block entropy is not needed. Rather,  the above  reduction can be applied to  all possible values for the real entropy of the blocks  (up to some $1/\poly$ additive accuracy level), yielding   polynomially many commitments that are all binding and at least one of them is hiding. Such commitments can then be  combined in a standard way to get a single scheme that is statistically hiding and computationally binding. See \cref{SHC:ConstantRound} for details.

\end{remark}

\subsection{Proving \cref{lemma:GapToCom}}\label{SHC:ProvingGapToCom}
To prove \cref{lemma:GapToCom}, we use a random  block of $\Gc$ to mask the committed bit. The guarantee about the real  entropy of $\Gc$ yields that  the resulting commitment is hiding, whereas the guarantee about  $\Gc$'s  accessible entropy yields that  the commitment is weakly (\ie  $\Omega(1/mp)$) binding. This  commitment is then  amplified via parallel repetition, into  a full-fledged computationally binding and statistically hiding  commitment. In more detail, the construction of the aforementioned weakly binding commitment scheme goes as follows: $\Rc$ samples the public parameter $z$ and sends it to $\Sc$, and $\Sc$  starts (privately) computing a  random execution of  $\Gc(z, \cdot)$. At the \ith round, $\Rc$ tells $\Sc$ whether to send it the \ith  block of $\Gc$ to $\Rc$, or  to use the \ith block $y_i$ as the sender input for  a (constant round)  ``hashing" subprotocol. This subprotocol  has the following properties:
\begin{itemize}
	\item Conditioned on  $y_1,\ldots,y_{i-1}$ and the hash value of $y_i$ (\ie the transcript of the hashing protocol), the (real) min-entropy of $y_i$ is still high (\eg $\Omega(n)$), and
	
	\item if the accessible entropy of $\Gc$ in the \ith block is lower than $k - 2n$ (\ie given an adversarial  generator view, the support size of $y_i$ is smaller than $2^{k - 2n}$), then $y_i$ is \emph{determined} from the point of view of (even a cheating) $\Sc$ after sending the hash value.
\end{itemize}
Next,   $\Sc$ ``commits" to its secret  bit $b$ by masking it (via XORing) with a  bit extracted (via an inner product with a random string) from $y_i$, and the commit stage halts. 

The hiding of the above scheme  follows from the guarantee about the min-entropy of $\Gc$'s blocks. The $1/mp$-binding of the scheme follows since the bound on the accessible  entropy of $\Gc$ yields that with some probability, the accessible entropy of at least one of $\Gc$'s blocks is low, and thus the sender is bounded to a single bit if  the receiver  has chosen this block to use  for the commitment.

The aforementioned hashing protocol is defined and analyzed  in \cref{sec:GapToCom:ComHash},  the weakly binding commitment is defined in \cref{sec:GapToCom:weakcommitment}, and we put it all together to prove the lemma in \cref{sec:GapToCom:PutTogether}.\footnote{A simplified version of the   somewhat complicated hashing protocol and the resulting weak commitment defined below is given in \cite{HaitnerVadhan17}.}

\subsubsection{Strongly Binding Hashing Protocol}\label{sec:GapToCom:ComHash}
A building block  of our hashing protocol is the  following variant of ``weakly binding'' (interactive) hashing protocol of \citet{DingHRS04}.

\medskip 


\noindent
Let   $\h^1$ and  $\h^2$ be function families over $\zo^\ell$.
\begin{protocol}[Weakly binding hashing protocol 
	$(\Sc_{\WB},\Rc_{\WB})^{\h_1,\h_2}$]\label{protocol:StatHashing} ~
	
	\item[$\Sc_{\WB}$'s private input:] $x\in \zo^\ell$
	
	\begin{enumerate}
		\item $\Rc_{\WB}$ sends $h^1 \getsr \h^1$ to $\Sc_{\WB}$.
		
		\item $\Sc_{\WB}$ sends $y^1 = h^1(x)$ back to $\Rc_{\WB}$.
		
		\item $\Rc_{\WB}$ sends $h^2 \getsr \h^2$  to $\Sc_{\WB}$.
		
		\item $\Sc_{\WB}$ sends $y^2 = h^2(x)$ back to $\Rc_{\WB}$.
	\end{enumerate}
\end{protocol}
We will use two properties of the above protocol. The first, which we will use for hiding, is that if $\Sc_{\WB}$ sends only $k'$ bits to  $\Rc_{\WB}$ and $\Sc_{\WB}$'s input $x$ comes from a distribution of min-entropy significantly larger than $k'$, the input of $\Sc_{\WB}$  has   high min-entropy conditioned on $\Rc_{\WB}$'s view of the protocol (with high probability). On the other hand,
the following binding property, which we refer to as ``weak'' to distinguish it from the binding property of the final protocol,  states that if $x$ has max-entropy smaller than $\len$ (\ie is restricted to come from a set of size at most $2^\len$) and $\h_1$ and $\h_2$ are ``sufficiently'' independent  and their total output length is sufficiently larger than $k$, then after the interaction ends,  $x$ will be uniquely determined, except with exponentially small  probability.

The proof of the following fact, proved here for completeness,  follows similar lines to the proof of \cite[Theorem 5.7]{DingHRS04}:
\begin{lemma}[$(\Sc_\WB,\Rc_\WB)$ is weakly binding] \label{proposition:StatHashing}
Let $\h^1\colon \zo^\ell\mapsto \zo^{\len}$ and $\h^2\colon \zo^\ell\mapsto \zn$, and let $\Sc_{\WB}^\ast$ be an (unbounded) adversary playing the role of $\Sc_{\WB}$ in $(\Sc_{\WB},\Rc_{\WB}) = (\Sc_{\WB},\Rc_{\WB})^{\h^1,\h^2}$. Assuming  $\h^1$ ad $\h^2$ are  $t$-wise and pairwise independent hash function families,  respectively, and that $n \ge 4 (1+ \log t)$, then  the following holds for any  $2^{\len}$-size
set $\cl\subseteq \zo^\ell$:  

Let $H^1,H^2,Y^1,Y^2$ and $X = (X_0,X_1)$, be the values of $h^1,h^2,y^1,y^2$ and the final output of $\Sc_{\WB}^\ast$, in  a random execution of $(\Sc_{\WB}^\ast,\Rc_{\WB})$,  then

	$$\Pr[X_0 \neq X_1 \in \cl \   \land \ \forall j\in \zo \colon\ H^1(X_j)= Y^1 \land H^2(X_j)= Y^2] < 2^{k - \floor{t/2}} + 2^{-n/2}.$$
\end{lemma}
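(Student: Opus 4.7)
The plan is to separate the analysis into a ``bucket-size'' event on $H^1$ and a residual collision event on $H^2$. I define $\cl_y := \{x \in \cl : H^1(x) = y\}$ and the good event $E := \{\forall y \in \{0,1\}^{\len},\ |\cl_y| \le \lfloor t/2\rfloor\}$. Note that $E$ depends only on $H^1$, and whenever $E$ holds, no matter which $Y^1$ the cheating $\Sc_{\WB}^\ast$ chooses, the set of $x\in\cl$ consistent with $(H^1,Y^1)$ has size at most $\lfloor t/2\rfloor$. This phrasing is what lets us absorb the sender's adaptivity in step one.

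The first step is to bound $\Pr[\neg E]$ using $t$-wise independence of $\h^1$. Setting $s := \lfloor t/2\rfloor + 1 \le t$, the event $\neg E$ is covered by the existence of some $s$ distinct elements of $\cl$ sharing a common $H^1$-image. By $s$-wise independence, any fixed $s$-tuple of distinct elements maps to a single common value with probability $\sum_{y} 2^{-\len s} = 2^{-\len(s-1)}$, so a union bound over the $\binom{|\cl|}{s} \le 2^{\len s}/s!$ such tuples yields $\Pr[\neg E]\le 2^\len/s! \le 2^{\len-\lfloor t/2\rfloor}$, where the last inequality uses $s! \ge 2^{s-1}$.

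The second step is to fix any $(H^1,Y^1)$ with $E$ holding and to use that $H^2$ is sampled independently of $(H^1,Y^1)$, so pairwise independence of $\h^2$ applies. For each pair $x\ne x'$ in $\cl_{Y^1}$ we have $\Pr[H^2(x)=H^2(x')] = 2^{-n}$, and a union bound over the at most $\binom{\lfloor t/2\rfloor}{2}\le t^2/8$ such pairs, combined with the hypothesis $n\ge 4(1+\log t)$ (which gives $t \le 2^{n/4-1}$ hence $t^2/8 \le 2^{n/2-5}$), shows this collision probability is at most $2^{-n/2}$. Since the adversary can produce two distinct valid preimages only if such a colliding pair exists inside $\cl_{Y^1}$, summing the two bounds yields the claim.

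The main non-routine ingredient is the $t$-wise independence moment calculation in the first step; everything else is a direct union bound. The subtle point I would be careful about is the adversary's adaptivity: the event $E$ must be defined uniformly in $y$ so that it holds irrespective of which $Y^1$ the cheater adaptively selects based on $H^1$, and step two crucially relies on $H^2$ being sampled only after $Y^1$ is already fixed.
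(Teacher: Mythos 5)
Your proposal is correct and follows the same high-level decomposition as the paper: bound the probability (over $H^1$ alone) that some $H^1$-bucket of $\cl$ is large, uniformly over all possible $Y^1$ so that the sender's adaptive choice of $Y^1$ is absorbed; then, conditioned on all buckets being small and using that $H^2$ is sampled fresh after $Y^1$, union-bound the $H^2$-collisions within the bucket $\cl_{Y^1}$ via pairwise independence, using $n \ge 4(1+\log t)$ to make the bucket size small enough. Where you diverge is in Step~1: the paper invokes a Chernoff-type tail bound for sums of $t$-wise independent indicators (from Katz--Koo) to get $\Pr[A^y > 2t] < 2^{-\floor{t/2}}$ per bucket and then union-bounds over the $2^{\len}$ buckets, relaxing the resulting bucket bound to $2^{n/4}$ for Step~2; you instead do a direct first-moment count — sum, over all $s$-subsets of $\cl$ with $s = \floor{t/2}+1 \le t$, the probability $2^{-\len(s-1)}$ (via $s$-wise independence) that they share an image, giving $2^{\len}/s! \le 2^{\len - \floor{t/2}}$ — which is more elementary and self-contained, yields the tighter bucket bound $\floor{t/2}$, and avoids citing an external concentration inequality. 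Both routes produce exactly the stated bound $2^{k-\floor{t/2}} + 2^{-n/2}$, and both handle adaptivity the same way (by quantifying the bucket event over all $y$ so it is measurable with respect to $H^1$ alone).
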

Namely,  with save but exponentially small probability, there are no two distinct items in $\cl$ that are consistent with the protocol transcript (\ie with the two hash values).
\begin{proof}
For $x\in \zl$ and $y\in \zo^k$, let $I^{x,y}$ be the indicator for $H^1(x)= y$. The one-wise independence of  $\h^1$ yields that $\mu \eqdef \ex{I^{x,y}} = 2^{-k}$. Let $A^y = \sum_{x\in \cl} I^{x,y}$. Since the $I^{x,y}$'s are $t$-wise independent,  \cite[Corollary 6]{KatzKo05} yields that for any $\delta>0$:
\begin{align}
\pr{A^y > \delta \mu} < \left(\frac{t^2}{e^{2/3} \delta^2 \mu^2}\right)^{\floor{t/2}}
\end{align}
Taking  $\delta = 2t/\mu$,  we get that $\pr{A^y > 2t} < 2^{-\floor{t/2}}$, and by a union bound 
\begin{align}
	\pr{\exists y\in \zo^k \colon A^y > 8k} < 2^k \cdot 2^{-\floor{t/2}}= 2^{k-\floor{t/2}}
\end{align}
Since, by assumption, $n \ge 4 (1+ \log t)$,  we deduce that 
\begin{align}
	\pr{\exists y\in \zo^k \colon A^y > 2^{n/4}} \le 2^{k-\floor{t/2}}
\end{align}
It follows that  with save but probability $ 2^{k-\floor{t/2}}$, there are at most $2^{n/4}$ elements of $\cl$ that are consistent with $H^1$ and $Y^1$. 

The pairwise independent of $\h^2$ yields that $\pr{H^2(x) = H^2(x')} = 2^{-n}$ for any $x\neq x' \in \zl$. Assume there  are at most $2^{n/4}$ elements of $\cl$ that are consistent with $H^1$ and $Y^1$.  By a union bound, with save but probability $2^{-n} \cdot (2^{n/4})^2= 2^{-n/2}$, none of these pairs collides  \wrt $H^2$. We conclude that binding  is violated with probability at most $ 2^{k-\floor{t/2}}+ 2^{-n/2}$.
\end{proof}

Our strongly binding hashing protocol is obtained by adding \emph{universal one-way hash functions} on top of the above protocol.\footnote{The following protocol is of similar flavor to (and indeed inspired by)  the protocol used by \citet{HaitnerRe07} in their transformation of ``two-phase" commitment to statistically hiding commitment. In fact, their protocol  can be seen as a special case of \cref{protocol:Hashing}, designed to work for singleton sets $\cl_v$' (see \cref{lemma:Hashing}).}

\begin{definition}[universal one-way hash functions \cite{NaorYu89}]\label{def:UOWHF}
	An efficient function family  $\FFam = \set{\FFam_n = \set{f \colon \zo^{\ell(n)} \mapsto
			\zo^{m(n)}}}_{n\in \N}$ is {\sf universal one-way (hash)} if the following holds.
	\begin{description}
		\item [Compression.] $\ell(n) > m(n)$.
		
		\item [Target Collision Resistance.] The probability that a \ppt $\Ac$
		succeeds in the following game is negligible in $n$:
		\begin{enumerate}
			\item $(x,\state)\getsr \Ac(1^n)$
			\item $f\getsr \FFam_n$
			\item $x'\getsr \Ac(x,\state,f)$ and $\Ac$ succeeds whenever $x' \neq x$ and
			$f(x')= f(x)$.
		\end{enumerate}
	\end{description}
\end{definition}
By \citet{Rompel90} (full proof in \cite{KatzKo05}; see also \cite{HaitnerHolReVaWe10}),  and the length reduction of \cite[Lemma 2.1]{NaorYu89}, the existence of  one-way functions implies that of a family of universal one-way hash functions for any poly-time computed and bounded length function $\ell$.\footnote{The Target Collision Resistance
	property of \cref{def:UOWHF} is somewhat stronger than the
	one given in \cite{KatzKo05} (and somewhat weaker than the original
	definition in~\cite{NaorYu89}). The strengthening is in allowing $\Ac$
	to transfer additional information, \ie  $\state$, between the
	selection of $x$ and finding the collision. We note that the proof
	in~\cite{KatzKo05} holds also \wrt  our stronger definition
	(and even \wrt the original definition of~\cite{NaorYu89}).}

\begin{theorem}[\cite{Rompel90, NaorYu89, KatzKo05}]\label{thm:pwf-to-uowhf}
	Assume that one-way functions exist. Then, for any  poly-time computed and bounded $\ell(n) > n$, there exists a family of universal one-way hash functions mapping strings of length $\ell(n)$ to strings of length $n$.
\end{theorem}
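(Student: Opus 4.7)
The plan is to assemble the result from two components that are standard in the UOWHF literature. First, I would construct a UOWHF $F$ that compresses its input by just a single bit, starting from the given one-way function $f$. Second, I would apply the Naor-Yung length-reduction transformation from \cite{NaorYu89}, which composes a one-bit-compressor in a tree-like fashion using $O(\ell(n)-n)$ independent keys to yield a UOWHF mapping $\zl$ to $\zn$. Since the Naor-Yung step is a generic black-box reduction, all of the difficulty lies in the single-bit compression step, and the final parameters (key length and evaluation time polynomial in $n$ and $\ell(n)$) come out automatically.

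For the single-bit compression, I would follow Rompel's high-level recipe. Assuming (by standard transformations) that $f \colon \zn \to \zn$ is length preserving, one partitions the domain into shells $H_i = \set{x \colon 2^{i-1} \le \size{f^{-1}(f(x))} < 2^i}$ according to the preimage size at $f(x)$. By a pigeonhole argument at least one shell $H_{i^\ast}$ has density at least $1/n$ under the uniform distribution, and intuitively on this shell inverting $f$ amounts to locating one element of a known-size preimage class, which is exactly the kind of structure that universal hashing attacks. The candidate compressor takes roughly the form $F_h(x) = (f(x), h(x))$, where $h$ is drawn from a carefully chosen combination of pairwise and many-wise independent families with output length balanced against $i^\ast$ so that $F_h$ compresses by at least one bit on the heavy shell while still having enough hashing to isolate preimages.

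The main obstacle, and where the argument gets technical, is proving target collision resistance by reduction to the one-wayness of $f$. Given a collision-finding adversary $\Ac$ that, on a chosen $x$ and a random $h$, outputs $x' \ne x$ with $F_h(x) = F_h(x')$, the reduction inverts $f$ on a random image $y$ by sampling a dummy $x$, hoping its image lies in the correct shell, running $\Ac$, and extracting $x' \in f^{-1}(y)$ from the collision; the pairwise/many-wise independence of $h$ (used in the spirit of \cref{proposition:StatHashing}) controls the distribution of the recovered preimage, and guessing $i^\ast$ costs only a $\poly(n)$ factor. Noticeable success is then boosted to overwhelming by parallel repetition over independent keys. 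As an alternative that fits the present paper's framework, a conceptually cleaner path is to first construct an inaccessible-entropy generator from $f$ via \cref{thm:AEGfromOWF} and then build a UOWHF directly from it, as is done in \cite{HaitnerHolReVaWe10}, bypassing the shell analysis entirely in favor of entropy manipulations analogous to those of \cref{sec:manipulatingAE}.
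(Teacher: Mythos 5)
Your proposal matches the paper's own treatment of this statement: the paper does not prove \cref{thm:pwf-to-uowhf} but simply cites \cite{Rompel90} (with the full proof in \cite{KatzKo05}, and the alternative via inaccessible entropy in \cite{HaitnerHolReVaWe10}) for the core OWF-to-UOWHF construction and \cite[Lemma 2.1]{NaorYu89} for the length reduction, exactly the two components you assemble. Your sketch of Rompel's shell-partitioning argument and of the Naor--Yung composition is a reasonable high-level account of those cited proofs, and your closing remark about the route through \cref{thm:AEGfromOWF} and \cite{HaitnerHolReVaWe10} is precisely the alternative the paper points to.
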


\noindent
Let    $\h^1 = \set{\h^1_n}$,   $\h^2 = \set{\h^2_n}$  and $\FFam = \set{\FFam_n}$ be  function families over  $\zo^{\ell(n)}$.

\begin{protocol}[Strongly  binding hashing protocol $(\Sc_\SB,\Rc_\SB)^{\h_1,\h_2, \FFam}$]\label{protocol:Hashing}
	
	\item[Common input:] $1^n$.

	\item[$\Sc_\SB$'s private input:] $x\in \zo^{\ell(n)}$.
	
	\begin{enumerate}
		\item The two parties interact in  $(\Sc_{\WB}(x),\Rc_{\WB})^{\h^1_{n},\h^2_n}$, with $\Sc_\SB$ and $\Rc_\SB$ taking the role of  $\Sc_{\WB}$ and $\Rc_{\WB}$ respectively.
		
		\item $\Rc_\SB$ sends $f\getsr \FFam_n$ to $\Sc_\SB$.
		\item $\Sc_\SB$ sends $w = f(x)$ back to $\Rc_\SB$.
	\end{enumerate}
\end{protocol}
It is clear that if $\h^1$, $\h^2$ and $\FFam$ are efficient families, then the above protocol is efficient. 
We prove following ``strong'' binding property. 

\begin{lemma}[$(\Sc_\SB,\Rc_\SB)$ is strongly binding]\label{lemma:Hashing}
	Let    $\h^1 = \set{\h^1_n}$,   $\h^2 = \set{\h^2_n}$  and $\FFam = \set{\FFam_n}$ be  efficient function families, mapping strings of length $\ell(n)$ to strings of length $\len(n)$, $n$ and $n$, respectively, and let $\Sc_{\SB}^\ast$ be a \ppt adversary playing the role of $\Sc_{\SB}$ in $(\Sc_{\SB},\Rc_{\SB})= (\Sc_{\SB},\Rc_{\SB})^{\h^1,\h^2,\FFam}$. Assuming $\h^1_n$ is $4k(n)$-wise  independent,  that $\h^2_n$ is pairwise independent,  $\FFam$ is a universal one-way hash family and that  $k(n) \in \omega(\log n)$, then the following holds for any  set ensemble  $\set{\set{\cl^v_n\subseteq \zo^{\ell(n)}}_{v\in \zo^\ast}}_{n\in \N}$ with $\size{\cl^v_n} \le 2^{\len(n)}$ for any $v,n$:

	 Let $H = (H^1,H^2), Y= (Y^1,Y^2), F, W$, $V$ and  $X = (X_0,X_1)$, be the values of $h^1,h^2,y^1,y^2,f$ and $w$, the value $\Sc_{\WB}^\ast$ outputs  \emph{before} the interaction starts, and the final output of $\Sc_{\WB}^\ast$, in  a random execution of $(\Sc_{\WB}^\ast,\Rc_{\WB})(1^n)$, then 

	\begin{align*}
	\pr{X_0 \neq X_1 \ \land  \ \set{X_0,X_1} \cap \cl^v_n  \neq \emptyset \ \land \atop   \forall j\in \zo\colon   H(X_j)= Y \land \ F(X_j) = W} = \negl(n).
	\end{align*}
 
\end{lemma}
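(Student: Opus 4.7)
The plan is to assume for contradiction that $\Sc_\SB^\ast$ produces the bad event with nonnegligible probability $p = p(n)$, and to build from it a \ppt target-collision adversary $\Ac$ against $\FFam$, thereby contradicting its universal one-way property. The key observation is that the weakly binding guarantee of \Cref{proposition:StatHashing} already pins down, by the end of the weakly-binding phase, the unique $\cl^v_n$-consistent preimage $x^\ast$ of $(H,Y)$; hence any second $F$-preimage of $W$ that $\Sc_\SB^\ast$ produces after seeing $F$ must form a target collision with $x^\ast$ under $F$.

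First, I would invoke \Cref{proposition:StatHashing} with $t = 4k(n)$, separately for each $v$, taking $\cl := \cl^v_n$. The hypotheses all hold: $\h^1_n$ is $4k$-wise independent, $\h^2_n$ is pairwise independent, $\size{\cl^v_n} \le 2^{k(n)}$, and the condition $n \ge 4(1 + \log 4k)$ is met for large $n$ since $k = \poly(n)$. The conclusion gives $2^{k - 2k} + 2^{-n/2} = 2^{-k} + 2^{-n/2} = \negl(n)$ by the hypothesis $k \in \omega(\log n)$. Averaging over $v$, the good event
\[E_{\mathrm{uniq}} := \set{\size{\cl^V_n \cap \set{z : H(z) = Y}} \le 1}\]
thus fails only with negligible probability. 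Crucially, $E_{\mathrm{uniq}}$ is determined by the prefix $T := (\omega, H^1, H^2)$, where $\omega$ denotes $\Sc_\SB^\ast$'s coins through the weakly-binding phase. On $E_{\mathrm{uniq}}$ intersected with the bad event, there is a unique $x^\ast \in \cl^V_n$ consistent with $(H,Y)$; it must lie in $\set{X_0, X_1}$, and the other element of $\set{X_0, X_1}$ is a distinct $F$-preimage of $W$.

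The adversary $\Ac$ runs two independent completions of $\Sc_\SB^\ast$ sharing the prefix $T$. In its first stage, $\Ac(1^n)$ samples $\omega, H^1, H^2$, simulates the weakly-binding phase to obtain $(V,Y)$, then additionally samples $F_1 \getsr \FFam_n$ with fresh sender coins and completes a full simulation to obtain a candidate pair $(X_0^1, X_1^1)$; it picks a uniform bit $b$, outputs $x := X_b^1$ as the target, and records $\state$ containing the entire prefix $T$ and $\Sc_\SB^\ast$'s state at the end of the weakly-binding phase. On receiving the challenge $F$, $\Ac(x, \state, F)$ re-enters $\Sc_\SB^\ast$ from $\state$, plays $F$ as the receiver's third message, samples fresh later sender coins, obtains $(X_0^2, X_1^2)$, and outputs whichever of these differs from $x$.

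For the analysis, let $q(T)$ denote the probability that a single simulation wins conditioned on the prefix $T$. Then $\Ex_T[q(T)] = p$, and since the two runs share only $T$ and use independent $F$ and independent later sender coins they are conditionally independent given $T$; hence both runs win simultaneously with probability $\Ex_T[q(T)^2] \ge (\Ex_T[q(T)])^2 = p^2$ by Cauchy--Schwarz. Intersecting with $E_{\mathrm{uniq}}$ (a function of $T$ alone) costs only $\negl(n)$. When both runs succeed together with $E_{\mathrm{uniq}}$, uniqueness forces the same $x^\ast$ to appear in $\set{X_0^1, X_1^1}$ and in $\set{X_0^2, X_1^2}$, so $x = X_b^1 = x^\ast$ holds with probability $\tfrac12$ over $b$, and in that case the element of $\set{X_0^2, X_1^2} \setminus \set{x}$ is a valid $F$-collision with $x$. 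Hence $\Ac$ succeeds with probability at least $p^2/2 - \negl(n)$, contradicting the target collision resistance of $\FFam$. The main subtlety I anticipate is the careful bookkeeping required to make the two simulations conditionally independent given $T$, so that the Cauchy--Schwarz step is rigorously justified.
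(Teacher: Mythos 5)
Your proposal is correct and follows essentially the same approach as the paper: both reduce to the target-collision resistance of $\FFam$ by rewinding $\Sc_{\SB}^\ast$ after the weakly-binding phase, both invoke \cref{proposition:StatHashing} to show the $\cl^v_n$-consistent preimage is unique except with negligible probability, and both use the Jensen/Cauchy--Schwarz squaring trick (with the extra factor of $1/2$ for guessing which element of $\{X_0, X_1\}$ is the pinned-down preimage) to conclude a collision probability of $\Omega(p^2)$. The only difference is cosmetic bookkeeping of when the negligible correction from uniqueness is subtracted.
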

Namely, a cheating  efficient sender committing  to a small set $\cl^v_n$ before the interaction starts   cannot find two distinct strings that are consistent with  the interaction and (even) one of which  is in $\cl^v_n$. This is a strengthening of the binding of \cref{protocol:StatHashing} (see \cref{proposition:StatHashing}), which only guarantees that there are no \emph{two} items in a  predetermined small set that are consistent  with the transcript.  Note, however, that the weak binding of  \cref{protocol:StatHashing} holds unconditionally (against any cheating strategy), where the binding  of the above protocol is only guaranteed to hold against an efficient adversary.  For our application of  constructing statistically hiding and computationally commitment, having binding against an efficient adversary suffices.
 
\begin{proof}
	Assume towards a contradiction that there exists a \ppt $\Sc_{\SB}^\ast$ that for infinitely  many $n$'s violates the ``binding'' of $(\Sc_{\SB},\Rc_{\SB})$ with success probability at least $1/p(n)$, for some $p\in \poly$. Consider the following efficient algorithm for violating the target collision resistance of $\FFam_n$.
	\begin{algorithm}[Collision finder $\ColFinder$]\label{alf:UOWHFInverter}~
		
		\paragraph{Committing stage.}
		\begin{description}
			\item[Input:] security parameter $1^n$.
		\end{description}
		
		\begin{enumerate}
			
			\item Emulate a random execution of $(\Sc_{\SB}^\ast,\Rc_{\SB})(1^n)$ until the end of the embedded execution of $(\Sc_{\WB},\Rc_{\WB})$, and denote the state of the emulated protocol by $\state$.
			
			\item Continue  the execution $(\Sc_{\SB}^\ast,\Rc_{\SB})$  until it ends, and let $\set{x_0,x_1}$ be the two values $\Sc_{\SB}^\ast$ outputs at the end of the emulation. 
			
			\item Output $(x,\state)$, for $x \getsr\set{x_0,x_1}$.

		\end{enumerate}
		
		\paragraph{Finding collision.}
		\begin{description}
			\item[Input:] $f\in \FFam_n$, $x\in \zo^\ell$ and $\state\in \zo^\ast$.
		\end{description}
		\begin{enumerate}
			\item Emulate a random execution of $(\Sc_{\SB}^\ast,\Rc_{\SB})(1^n)$ conditioned on $\state$ and $f$, and let $x_0$ and $x_1$ be the output of $\Sc_{\SB}^\ast$ in the end of the emulation.
			
			\item Output $x' \in \set{x_0,x_1} \setminus \set{x}$.
		\end{enumerate}
	\end{algorithm}
	Fix $n$  such that $\Sc_{\SB}^\ast$ breaks the binding in  $(\Sc_{\SB}^\ast,\Rc_{\SB})(1^n)$ with probability at least $ 1/p(n)$.  For a given execution of $(\Sc_{\SB}^\ast,\Rc_{\SB})(1^n)$, let $v$ be the string output by $\Sc_{\SB}^\ast$ before the interaction starts, and let $z$ be the value of the  element of $\cl^v_n$ that is consistent with the embedded execution of   $(\Sc_{\WB},\Rc_{\WB})$, setting it to $\perp$ if the number of consistent elements is not one.  Since $n \in \omega(\log k)$, otherwise $\FFam$ cannot be target collision resistant, \cref{proposition:StatHashing} yields that the probability $\Sc_{\SB}^\ast$  breaks the binding and $z\neq \perp$, is  at least $1/p(n) - \negl(n) > 1/2p(n)$. 
	
	Let $\State$ be the value of $\state$ in a random execution of $\ColFinder(1^n)$. For $s \in \Supp(\State)$, let $q(s)$ be the probability that  $\Sc_{\SB}^\ast$ breaks the binding in  $(\Sc_{\SB}^\ast,\Rc_{\SB})(1^n)$ and $z\neq \perp$, conditioned that  its state after the execution of  $(\Sc_{\WB},\Rc_{\WB})$ is $s$.  It is  easy to verify that conditioned on $\State = s$, it holds that $\ColFinder(1^n)$ finds a collision in $\FFam_n$ (\ie $x\neq x'$ and $f(x)=f(x')$) with probability at least $q(s)^2/2$. Hence, $\ColFinder(1^n)$ finds a collision with probability at least $\ex{q(\State)^2/2}$. By the Jensen inequality, the latter is at least $\ex{q(\State)}^2/2 \ge 1/8p(n)^2$, in contradiction to the  target collision resistance of $\FFam_n$.
\end{proof}

\subsubsection{Constructing Weakly Binding Commitment}\label{sec:GapToCom:weakcommitment}
We are finally ready to define the  weakly binding commitment.  Let $\Gc\colon \zo^{\pp(n)} \times \zo^{s(n)} \mapsto (\zo^{\ell(n)})^{m(n)}$ be an  $m$-block generator.  Let    $\h^1 = \set{\h^1_n}$,   $\h^2 = \set{\h^2_n}$  and $\FFam = \set{\FFam_n}$ be  function families, mapping strings of length $\ell(n)$ to strings of length $k(n)-3n$, $n$, and $n$, respectively.  The weakly binding commitment is defined as follows:

\begin{protocol}[Weakly binding, receiver public-coin  commitment scheme $\Com = (\Sc,\Rc)$]\label{protocol:WeakBinding}~
	\begin{description}
		\item[Common input:] security parameter $1^n$
		
		\item[$\Sc$'s private input:] $b\in \zo$

		\item[Commit stage:]~

		\begin{enumerate}
			
			\item $\Rc$ samples $z\getsr \zo^{\pp(n)}$ and send it to $\Sc$.
			
			\item  $\Sc$ starts (internally) an execution of $\Gc(z,x)$ for $x \getsr \zo^{s(n)}$.
			
			\item For  $i=1$ to $m$, the parties do the following:
			\begin{enumerate}

			\item The two parties interact in  $(\Sc_\SB(y_i = \Gc(z,x)_i),\Rc_\SB)^{\h^1,\h^2,\FFam}(1^n)$, with $\Sc$ and $\Rc$ taking  the roles of  $\Sc_\SB$ and $\Rc_\SB$, respectively.
			
			\item $\Rc$ flips a coin $c_i$ to be one with probability $1/(m+1-i)$, and sends it to $\Sc$.
			
			\medskip
			\it If $c_i = 0$, $\Sc$ sends $y_i$ to $\Rc$.
			
			\medskip
			Otherwise,  
			
			\begin{enumerate}

				\item $\Sc$ samples  $u\getsr  \zo^{\ell(n)}$ and sends $(\iprod{u,y_i} \xor b,u)$ to $\Rc$, for  $\iprod{\cdot,\cdot}$ being inner product modulo $2$.

				\item The parties end the execution.
			\end{enumerate}
			
		\end{enumerate}

			
		\end{enumerate}		
	\end{description}
\end{protocol}
Assuming  $\Gc$  is efficient and that $\h_1$, $\h_2$ and  $\FFam$  are efficiently computable (\ie sampling and evaluation time are polynomial in $n$),  then clearly \Com is an  efficient (poly-time computable)  correct public-message commitment scheme. It is left to  prove the hiding and binding properties of \Com.  

 In the following  let $\is$ be the round for which $c_i$ takes the value $1$.  Note that $\is$ is uniform over $[m]$.

\begin{claim}[Statistically hiding]\label{claim:GapToCom:Hiding}
	Assume  each block of $\Gc(1^n)$ has  real min-entropy at least $k(n)$. Then \Com is  honest-receiver statistically hiding. If the bound on the real min-entropy is invariant to the public parameter, then  \Com is  statistically hiding.
\end{claim}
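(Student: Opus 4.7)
The plan is to reduce hiding to the fact that, on the commit round $\is$, the sender's final message $(\iprod{u,y_\is}\oplus b,u)$ is statistically indistinguishable from a uniformly random element of $\zo^{\ell}\times\zo$, from the (possibly cheating) receiver's point of view. Since every other message of $\Sc$ in the commit stage is independent of $b$, this will suffice to conclude that the receiver's entire view is statistically close under $b=0$ versus $b=1$.

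I will condition on the identity $\is=i$ of the commit round, on the receiver's randomness $\rho$ (which determines $z$, all of the hash functions $h^1_j,h^2_j,f_j$, and the coin flips $c_j$), and on the revealed prefix $y_{<i}$. In the ``invariant to the public parameter'' setting this conditioning is valid for any $\rho$ produced by a cheating receiver, since the real min-entropy hypothesis holds uniformly for every fixed $z$. In the honest-receiver setting I instead combine the real min-entropy hypothesis with \cref{lemma:smoothEntropies}(2) to replace $(Y_i,(Z,Y_{<i}))$ by a $\negl(n)$-close variable whose conditional distribution $Y_i\mid (Z,Y_{<i})=(z,y_{<i})$ has min-entropy at least $k(n)$ for every fixing of $(z,y_{<i})$. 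Either way, conditionally on $(\rho,y_{<i})$ I may treat $Y_i$ as having min-entropy at least $k(n)$.

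Next, I track the entropy loss due to the three hash values $h^1(y_i),h^2(y_i),f(y_i)$ sent in round $i$ before the commit coin is revealed; their combined length is $(k-3n)+n+n=k-n$. Applying \cref{prop:CondNotReduce} with $t=n/2$, the conditional sample-entropy of $Y_i$ given these three messages is at least $k-(k-n)-n/2=n/2$ except with probability $2^{-n/2}$; a second application of \cref{lemma:smoothEntropies}(2) lets me assume that, conditionally on $(\rho,y_{<i})$ and the round-$i$ hash outputs, $Y_i$ is distributed as a variable of min-entropy at least $n/2$, at an additional negligible statistical-distance cost.

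Finally, on the commit round $\Sc$ samples $u\getsr\zo^{\ell}$ independently of everything preceding and sends $(\iprod{u,Y_i}\oplus b,u)$. The family $\{x\mapsto\iprod{u,x}\}_{u\in\zo^{\ell}}$ is two-universal, so the leftover hash lemma (\cref{lem:leftover}) applied to the $n/2$-min-entropy distribution of $Y_i$ yields that $(u,\iprod{u,Y_i})$ is within statistical distance $\tfrac12\cdot 2^{(1-n/2)/2}=2^{-\Omega(n)}$ of uniform over $\zo^{\ell}\times\zo$. Consequently $(u,\iprod{u,Y_i}\oplus b)$ is negligibly close to uniform for both $b\in\zo$, and the two views are statistically indistinguishable. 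I expect the main delicacy to be verifying that the two smoothing steps compose cleanly, keeping all non-$Y_i$ components of the view identically distributed; the ``Further'' clause of \cref{lemma:smoothEntropies}(2), which preserves the marginal of the conditioning variable, is exactly what makes this composition go through and lets the argument apply uniformly across all $(\rho,y_{<i})$.
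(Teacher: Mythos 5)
Your proof is correct and follows essentially the same route as the paper's: smooth the per-block real min-entropy bound via Lemma~\ref{lemma:smoothEntropies}(2), deduct the $k-n$ bits of round-$\is$ hash communication via Lemma~\ref{prop:CondNotReduce} and a second smoothing to land at conditional min-entropy $n/2$, and finish with the leftover hash lemma applied to the inner-product family. The paper routes the argument through the receiver's view $V^{\Rs}$ rather than directly through $(\rho,y_{<i})$, but since $V^{\Rs}$ is determined by the receiver's coins together with the sender's prior messages (and the extra receiver coins you condition on are independent of $Y_\is$), the two formulations are equivalent, and the paper likewise invokes the "preserves the marginal of the conditioning variable" clause to keep the two smoothing steps composable.
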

\begin{proof}
	Fix $n\in \N$ and omit it from the notation when clear from the context. For $i\in [m]$, let $Y_i$ denote the \ith block of $\Gc(Z,X)$ for $(Z,X) \la \zo^\pp \times\zo^s$. By assumption,  $\ppr{\vy \getsr Y_{1,\ldots,i} }{\Hall_{Y_\is|Z,Y_{<i}}(\vy_\is | z,\vy_{<i})< k} =\negl(n)$. Thus  \cref{lemma:smoothEntropies} yields that there exists a random variable $Y_i'$ such that
	\begin{enumerate}
		\item  $(Z,Y_{<i},Y_i')$ is statistically  indistinguishable  from $(Z,Y_{<i},Y_i)$, and
		\item  $\Hmin(Y_i|_{Z=z,Y_{<i} = \vy}) \ge k$ for every $(z,\vy)  \in \Supp(Z,Y_{<i})$.
	\end{enumerate}	
	Let $\Rs$ be an arbitrary algorithm playing the  role of $\Rc$ in \Com that samples $z$ as instructed (\ie uniformly in $\zo^\pp$). Let  $\tY_1,\ldots, \tY_\is$ be the first $\is$ computed by $\Sc$  in a random execution of $(\Sc,\Rs)$, and let $V^\Rs$ be $\Rs$'s view  right after $\Sc$ sent $\tY_{\is -1}$ (all variables are  arbitrarily set if the execution has aborted). Since, by assumption,  $\Rs$ samples $z$ uniformly and since $V_\is^\Rs$ is a probabilistic function of the public parameter $Z$ and  $\tY_{< \is}$, there exists  a random variable $\tY_\is'$ such that
	\begin{enumerate}
		\item $(V^{\Rs},\tY_\is)$ is statistically indistinguishable from $(V^{\Rs},\tY_\is')$, and 
		\item $\Hmin(\tY_\is'|_{V^{\Rs} = v}) \ge k$, for every non-aborting view $v \in \Supp(V^{\Rs})$.
	\end{enumerate}
	
	Let $W$ be the messages sent by $\Sc$ in the embedded execution of the   interactive hashing  $(\Sc,\Rs)$. Since $\size{W} =k- n$, by \cref{lemma:smoothEntropies,prop:CondNotReduce} there exists a random variable $\tY_\is''$ such that
	\begin{enumerate}
		\item $(V^{\Rs},W,\tY_\is)$ is $(\negl(n) + 2^{-\Omega(n)})$-close to $(V^{\Rs},W,\tY_\is'')$, and
		\item $\Hmin(\tY_\is''|_{V^{\Rs} = v,W=w}) \ge n/2$, for every  non-aborting view $v \in \Supp(V^{\Rs})$ and $w\in \Supp(W)$.
	\end{enumerate}
Let  $V^\Rs_b$ denotes $\Rs$'s view at the \emph{end} of  the commit stage of  $(\Sc(b),\Rs)$. By the above observation,  the leftover hash lemma (\cref{lem:leftover}) and the two-universality of the family $\set{h_u(y) = \iprod{u,y} \colon u\in \zn}$, it holds that $V^\Rs_0$ and $V^\Rs_1$ are  statistically indistinguishable.

It is clear by the above analysis that if the bound on the real min-entropy is invariant to the public parameter, then  the hiding holds for any $\Rs$ (that  might choose the public parameter arbitrarily).
\end{proof}

\begin{claim}[Weak computational binding]\label{claim:GapToCom:BindingWeak}
	Assume $\Gc$  is efficient and that  for  every efficient $\Gc$-consistent, online generator $\Gs$ and all large enough $n$,
	$\ppr{\vt\getsr T_\Gs(1^n)} {\Haccsam_\Gs(\vt) > m (k  - 3n)} \le  1 - 1/p$. Assume further    that $\FFam$ is a family of universal one-way hash functions, that $\h_1$ and $\h_2$ are efficiently computable, and are  $(k(n)  - 3n)$-wise and pairwise independent, respectively, and that $k(n) \ge 4n$,  then \Com is $(1 - 1/3mp)$-binding.
\end{claim}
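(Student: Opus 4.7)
The plan is to reduce to the strong binding of the hashing subprotocol (\cref{lemma:Hashing}). Suppose for contradiction that a \ppt $\Ss$ wins the binding game of $\Com$ with probability at least $1 - 1/(3mp) + 1/q$ for some $q \in \poly$ and infinitely many $n$. From $\Ss$ I will construct an online $\Gc$-consistent generator $\Gs$ on which to apply the accessible-entropy hypothesis, and, in parallel, an efficient hashing-adversary $\Sc^\ast_\SB$ whose success probability violates \cref{lemma:Hashing}.

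The generator $\Gs$ simulates $\Ss$ against an internal honest receiver while always sending $c_i = 0$: at round $i$ it relays $\Ss$'s hashing messages, feeds $c_i=0$, and outputs the block $y_i$ that $\Ss$ returns. Without loss of generality $\Ss$ always returns a hashing-consistent block lying in $\Supp(\Gc)$, since any inconsistent response would cause every subsequent opening to be rejected and cannot contribute to winning. The accessible-entropy hypothesis then gives $\ppr{\vt \getsr T_\Gs}{\Haccsam_\Gs(\vt) \le m(k-3n)} \ge 1/p$, and on this event, by averaging over rounds, some index $i$ satisfies $\Hall_{Y_i \mid Z,R_{<i}}(y_i\mid z,r_{<i}) \le k-3n$; equivalently $y_i \in \cl^{z,r_{<i}}_i \eqdef \set{y \colon \Pr[Y_i = y \mid Z=z, R_{<i}=r_{<i}] \ge 2^{-(k-3n)}}$, a set of size at most $2^{k-3n}$. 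Because the round $\is$ chosen by $\Rc$ is uniform on $[m]$ and the joint distribution of $(z,r_{<\is},y_{<\is})$ in the real protocol matches that in $\Gs$'s simulation through round $\is-1$, the event $y_\is \in \cl^{z,r_{<\is}}_\is$ occurs with probability at least $1/(mp)$.

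The hashing adversary $\Sc^\ast_\SB$ works as follows: sample $\is \getsr [m]$ and $\Ss$'s randomness, simulate the commit stage through round $\is-1$ with $c_i = 0$, and output the committed index $v = (\is, z, r_{<\is}, y_{<\is})$, thereby defining $\cl^v \eqdef \cl^{z,r_{<\is}}_\is$, which has size at most $2^{k-3n}$ and hence is admissible for \cref{lemma:Hashing}. Relay round $\is$'s hashing with the external $\Rc_\SB$, and then rewind past the point where $c_\is$ is sent: in branch~A send $c_\is = 0$ and record $\Ss$'s response $\tY^{(0)}_\is$; in branch~B send $c_\is = 1$, collect the XOR message, and run the reveal phase to obtain $\Ss$'s pair of openings $y^0_\is \neq y^1_\is$. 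All three of $\tY^{(0)}_\is, y^0_\is, y^1_\is$ are consistent with the common round-$\is$ hashing transcript (by the WLOG simplification above), so $\Sc^\ast_\SB$ outputs $(\tY^{(0)}_\is, y^b)$ for any $y^b \in \set{y^0_\is, y^1_\is}$ distinct from $\tY^{(0)}_\is$.

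Combining the two bounds, the joint event ``$\tY^{(0)}_\is \in \cl^v$ and $\Ss$ wins in branch~B'' has probability at least $1/(mp) - (1/(3mp) - 1/q) = 2/(3mp) + 1/q - \negl(n)$, which is non-negligible. On this event $\Sc^\ast_\SB$'s output is a pair of distinct hashing-consistent values whose intersection with $\cl^v$ is non-empty, precisely the bad event ruled out by \cref{lemma:Hashing}; this contradiction finishes the proof. The main obstacle I anticipate is the hashing-consistency of $\tY^{(0)}_\is$: for an arbitrary cheating $\Ss$ the response in the $c_\is=0$ branch need not match the hashing it just performed, and the WLOG simplification (restricting attention to $\Ss$'s whose $c_i=0$ responses are always hashing-consistent and in $\Supp(\Gc)$) is what makes the reduction clean. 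A secondary check is that $\Sc^\ast_\SB$ runs in polynomial time despite the single rewind of the round-$\is$ continuation, which follows since the rewind only re-runs the post-hashing portion of the protocol.
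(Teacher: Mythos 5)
Your overall plan mirrors the paper's: convert a cheating sender $\Ss$ into a $\Gc$-consistent online generator $\Gs$, use the accessible-entropy hypothesis to find a low-entropy round, and then build an adversary against the strong binding of the hashing subprotocol (\cref{lemma:Hashing}) by rewinding over the choice of $c_{\is}$. That skeleton is correct.

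The gap is your ``WLOG'' that $\Ss$ always returns hashing-consistent blocks lying in $\Supp(\Gc)$. Your justification --- ``any inconsistent response would cause every subsequent opening to be rejected and cannot contribute to winning'' --- establishes only that \emph{when $\Ss$ wins}, all its blocks are consistent. It does not give you an efficient modified sender whose blocks are \emph{always} consistent, which is what you need for $\Gs$ to be $\Gc$-consistent (and thus for \cref{def:accessible-entropy} to apply at all). There is no efficient way for $\Gs$ to detect on the fly whether $\Ss$'s current block admits a consistent justification: checking whether $(y_1,\dots,y_i)$ is a prefix of something in $\Supp(\Gc)$ is exactly as hard as producing a justification, which a cheating $\Ss$ need not possess. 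This is not a cosmetic cleanup; it is the whole reason the paper splits the claim into \cref{claim:GapToComUni:BindingNonFailing} (binding against \emph{non-failing} senders, where the reduction you sketched goes through) and \cref{claim:GapToComUni:FronNonFailingToArbitrary} (transforming an arbitrary sender into a non-failing one). The latter transformation is nontrivial: $\Sss$ repeatedly samples random continuations of $(\Ss,\Rc)$ to hunt for a justification string before forwarding each block, and falls back to honest behavior if none is found --- and it relies essentially on the receiver being public-coin so that such continuations can be simulated efficiently. Without that lemma, the probability space over which you invoke the accessible-entropy hypothesis is not well-defined, and the quantitative combination in your last paragraph (which treats $\Pr[\tY^{(0)}_{\is}\in\cl^v]$ and $\Pr[\Ss\text{ wins}]$ as if they lived on the same space) does not close. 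You flagged this obstacle yourself, but the resolution requires more than declaring it a WLOG.
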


The proof of \cref{claim:GapToCom:BindingWeak} immediately follows from the next two claims.

\begin{definition}[Non-failing senders]
	A sender $\Ss$ is called {\sf non-failing} \wrt a commitment scheme $(\Sc,\Rc)$ if the following holds. Let $Z$ be the transcript of the commit stage of $(\Ss,\Rc)(1^n)$, and let $\Sigma$ be the first decommitment string  that $\Ss$ outputs in the (generic) reveal stage. Then $\pr{\Rc(Z,\Sigma)= \perp} = 0$. 
\end{definition}
That is, a non-failing sender never fails to  justify its actions in the commit stage.

\begin{claim}[Weak computational binding against non-failing senders]\label{claim:GapToComUni:BindingNonFailing}
	Let  $\Gc$, $\Com$, $\FFam$, $\h_1$ and  $\h_2$  be as in \cref{claim:GapToCom:BindingWeak}. Then \Com is $(1-1/2mp))$-binding against {\sf non-failing} senders. 
\end{claim}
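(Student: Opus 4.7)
I will prove the contrapositive. Suppose there exists a non-failing \ppt sender $\Ss$ that, for infinitely many $n$, breaks the binding of $\Com$ with probability greater than $1 - 1/(2mp(n)) + \epsilon(n)$ for some non-negligible $\epsilon$. My plan is to construct an efficient $\Gc$-consistent online generator $\Gs$ for which $\ppr{\vt \getsr T_\Gs(1^n)}{\Haccsam_\Gs(\vt) > m(k - 3n)} > 1 - 1/p(n)$, contradicting the accessible-entropy hypothesis of \cref{claim:GapToCom:BindingWeak}.

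The generator $\Gs$ will sample $z \getsr \zo^{\pp}$ and emulate the commit stage of $(\Ss, \Rc)(1^n)$ round by round, playing honest-receiver in each interactive hashing subprotocol with fresh randomness. At each round $i$, $\Gs$ flips $c_i = 1$ with probability $1/(m+1-i)$ per the protocol. For $i$ less than the first index $\is$ with $c_i = 1$, $\Ss$ reveals $y_i$ in response to $c_i = 0$ and $\Gs$ outputs $y_i$. At round $\is$, $\Ss$ sends its commit message; $\Gs$ then continues into the generic reveal stage to obtain $\Ss$'s first decommitment $(b, \sigma)$, which is valid by the non-failing property. From $\sigma$, $\Gs$ extracts $x$ satisfying $\Gc(z,x)_j = y_j$ for all $j < \is$, and outputs $y_i = \Gc(z,x)_i$ for every $i \ge \is$. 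Thus $(y_1,\ldots,y_m) = \Gc(z,x)$, making $\Gs$ efficient and $\Gc$-consistent.

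To lower-bound the accessible sample-entropy, I will invoke the strong binding of the interactive hashing (\cref{lemma:Hashing}) round by round. For each $i$ and prefix $r_{<i}$, define the ``heavy'' set $\cl^{r_{<i}} = \set{y \colon \Pr[Y_i = y \mid R_{<i} = r_{<i}] \ge 2^{-(k-3n)}}$, which has size at most $2^{k-3n}$ by definition. Strong binding then guarantees that with overwhelming probability, $\Ss$ cannot produce two distinct $y^0 \ne y^1$ consistent with the round-$i$ hashing transcript such that $\set{y^0, y^1} \cap \cl^{r_{<i}} \ne \emptyset$. Consequently, whenever $\Ss$ is able to break binding at round $i$ (which requires exhibiting such a pair via two valid decommitments), the value $Y_i$ that $\Gs$ extracts lies outside $\cl^{r_{<i}}$ and therefore contributes sample-entropy $\Hall_{Y_i \mid R_{<i}}(y_i \mid r_{<i}) > k - 3n$.

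The main obstacle is aggregating this per-round guarantee into the global bound $\Haccsam_\Gs > m(k-3n)$ holding with probability exceeding $1 - 1/p$. In the naive emulation above, only round $\is$ is a genuine ``committed'' round where strong binding directly yields sample-entropy $> k - 3n$, while rounds $i > \is$ contribute $0$ (since $Y_i = \Gc(z, x)_i$ is deterministic given $R_{<i}$). Overcoming this will require a refined generator that exploits the commit-decommit structure of $\Ss$ at every round, combined with an averaging argument over the uniformly random $\is \in [m]$: since $\Ss$'s breaking probability is close to $1$, a pigeonhole/averaging argument shows that $\Ss$ breaks binding with near-certainty conditioned on almost every value of $\is$, enabling a union bound across rounds to conclude that every block $Y_i$ individually contributes sample-entropy $> k - 3n$ with high probability. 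Summing over $m$ rounds yields $\Haccsam_\Gs > m(k-3n)$ with probability exceeding $1 - 1/p(n)$, the desired contradiction.
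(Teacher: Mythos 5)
Your plan --- reduce a binding-breaking non-failing sender $\Ss$ to a $\Gc$-consistent online generator $\Gs$, define a per-round heavy set $\cl^{r_{<i}}$ of size at most $2^{k-3n}$, and invoke the strong binding of \cref{protocol:Hashing} to keep each block out of that set --- is the right skeleton, and the heavy set you define is exactly the paper's $\Low_{z,r_{<i}}$. The genuine gap is the one you flag yourself in the last paragraph: by letting $\is$ be chosen as the honest receiver would, only round $\is$ is an adaptively produced, hashing-constrained block, while for $j>\is$ the block $Y_j=\Gc(z,x)_j$ is a deterministic function of $R_{\le\is}\subseteq R_{<j}$ and contributes zero sample-entropy. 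The proposed rescue (``refined generator \dots\ combined with an averaging argument over $\is$'') is left unspecified, and averaging \emph{inside} $\Gs$ cannot close this gap: for any fixed $\Gs$-transcript the tail rounds $j>\is$ are entropy-free, so $\Haccsam_\Gs(\vt)$ cannot approach $m(k-3n)$ along that trajectory regardless of how $\is$ is distributed. Nothing about $\Ss$ breaking binding on other values of $\is$ restores entropy to already-deterministic blocks.

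The missing idea is to \emph{hardwire} $\is=m$ in $\Gs$: run $(\Ss,\Rc)$ with $c_1=\cdots=c_{m-1}=0$ and $c_m=1$, output the $m-1$ blocks $\Ss$ reveals in the clear, and extract the $m$'th block from $\Ss$'s always-valid decommitment. Now every $Y_i$ is produced afresh at round $i$ using that round's hashing coins $R_i$, so the per-round bound $\Hall_{Y_i\mid Z,R_{<i}}(y_i\mid z,r_{<i})>k-3n$ is a meaningful claim for all $i$ simultaneously. The averaging over $\is$ belongs not in $\Gs$ but in the reduction $\Ssh$ against \cref{lemma:Hashing}: $\Ssh$ samples a $\Gs$-transcript (hence with $\is=m$), picks $i\getsr[m]$, declares $v=r_{<i}$ (thereby fixing $\Low_{z,r_{<i}}$) before the hashing, replays $\Ss$ on rounds $<i$ with $c_{<i}=0$, plays round $i$'s hashing, and then forks --- sending $c_i=0$ to learn the block $\Gs$ would output, and $c_i=1$ to learn the two decommitted blocks. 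If ``some round is heavy'' has probability at least $1/p$ over $\Gs$-transcripts, then with probability $\Omega(1/mp)$ the $c_i=0$ branch lands in $\Low_{z,r_{<i}}$ while $\Ss$ still breaks binding on the $c_i=1$ branch, giving two distinct hashing-consistent preimages with one in the predeclared set, contradicting \cref{lemma:Hashing}. So your contradiction hypothesis, heavy set, and appeal to strong binding are all correct; what must change is to fix $\is=m$ in $\Gs$ and move the average over $i$ into $\Ssh$.
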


\begin{claim}\label{claim:GapToComUni:FronNonFailingToArbitrary}
	Assume a  receiver public-coin  commitment scheme is $\alpha$-binding  against non-failing senders. Then it is $(\alpha + \negl)$-binding.
\end{claim}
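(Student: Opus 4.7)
The plan is to prove the claim by the contrapositive: assume there exists a \ppt sender $\Ss^*$ that wins the binding game for \Com with probability $p(n) > \alpha(n) + 1/q(n)$ for some polynomial $q$ and infinitely many $n$. I will construct a \ppt \emph{non-failing} sender $\Ss^\dagger$ that wins the binding game with probability at least $p(n) - \negl(n)$. Applying the hypothesis of $\alpha$-binding against non-failing senders to $\Ss^\dagger$ then yields $p(n) - \negl(n) \le \alpha(n) + \negl(n)$, contradicting $p(n) > \alpha(n) + 1/q(n)$ and establishing $(\alpha+\negl)$-binding.

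The construction of $\Ss^\dagger$ leverages the receiver-public-coin property in an essential way. Because $\Rc$'s messages are simply its random coins (recorded in the transcript $Z$) and the reveal stage is generic, $\Ss^\dagger$ can efficiently evaluate the acceptance predicate $\Rc(Z,\Sigma) \ne \perp$ on any candidate decommitment $\Sigma$ internally, without further interaction. As a first approximation, $\Ss^\dagger$ runs $\Ss^*$ against external $\Rc$, captures the transcript $Z$ and $\Ss^*$'s candidates $(\tau_0,\tau_1)$, and verifies $\tau_0$ locally. On the event that $\Ss^*$ wins (probability $p$), both $\tau_0$ and $\tau_1$ are valid by definition of the game, so $\Ss^\dagger$ outputs $(\tau_0,\tau_1)$ and both wins and has a valid first decommitment on that event.

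The main obstacle—and what makes the claim non-trivial—is that the definition of non-failing requires the first decommitment to be valid with probability \emph{exactly} zero of failing, rather than merely with probability $p$. To achieve strict non-failing, I would equip $\Ss^\dagger$ with a pre-sampled honest backup: randomness $\sigma^*$ for an execution of $\Sc(0,\sigma^*)$, maintained in parallel against the same $\Rc$-coins. At each round $i$ of the commit phase, $\Ss^\dagger$ checks—using the simulated $\Rc$-predicate and the generic reveal structure—whether $\Ss^*$'s proposed message keeps the partial transcript consistent with an honest commit by some $\Sc(b,\sigma)$; if yes, it follows $\Ss^*$, and otherwise it switches permanently to sending $\Sc(0,\sigma^*)$'s messages, guaranteeing that $(0,\sigma^*)$ is a valid fallback for the final transcript. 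The switching rule is efficiently computable because consistency with some honest commit reduces to checking satisfiability of the sender's predicate round-by-round from $Z$'s coins. On the $p$-probability winning event of $\Ss^*$ the rule is never triggered (the transcript is by definition decommittable), so $\Ss^\dagger$'s behavior coincides with $\Ss^*$'s there, costing only a $\negl$ loss from the hybrid's interaction with $\Ss^*$'s residual failure events.

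Combining these observations, $\Ss^\dagger$ is a \ppt non-failing sender winning with probability at least $p(n) - \negl(n)$, and the hypothesis yields $p(n) \le \alpha(n) + \negl(n)$, as required.
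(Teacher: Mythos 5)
Your high-level plan (construct a non-failing $\Ss^\dagger$ from a failing $\Ss^*$ by always carrying a fallback witness, then invoke the hypothesis) is the right one and matches the paper's strategy, but your construction of $\Ss^\dagger$ has two genuine gaps, and fixing them is where the real work of the claim lies.

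First, your round-$i$ test---``does there exist $(b,\sigma)$ such that $\Sc(b,\sigma)$ would have produced the partial transcript so far?''---is a witness-search problem and is not efficiently decidable in general; asserting that it ``reduces to checking satisfiability of the sender's predicate round-by-round'' is precisely the part that needs an algorithm, not a remark. The paper's construction replaces this impossible exact test with a Monte Carlo one that crucially uses the receiver-public-coin structure: at each round $\Sss$ rewinds and runs $\Ss$ to completion against \emph{freshly sampled} receiver coins, polynomially many ($3np(n)$) times, and checks whether $\Ss$ ever outputs a valid opening $w'$ at the end. Because the receiver's messages are just its coins, these internal continuations have exactly the distribution of real continuations, so if $\Ss$ has probability $\geq 1/3p(n)$ of eventually opening from the current prefix, the search succeeds except with probability $2^{-\Omega(n)}$. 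Failures can then only occur on prefixes where $\Ss$ had low chance of cheating, costing only $O(1/p)$ in the analysis.

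Second, even if you could decide your test, your fallback is broken: switching to the pre-sampled shadow execution $\Sc(0,\sigma^*)$ at round $i$ produces a transcript whose first $i-1$ messages came from $\Ss^*$ and whose later messages come from $\Sc(0,\sigma^*)$. Since $\sigma^*$ was sampled independently, $\Sc(0,\sigma^*)$ would generally have sent \emph{different} messages in rounds $1,\dots,i-1$, so $(0,\sigma^*)$ does not decommit the actual transcript and $\Ss^\dagger$ is in fact failing. The paper sidesteps this by only falling back to the witness $w$ found in the \emph{previous} round's rewinding phase: that $w$ is by construction consistent with everything already sent, so continuing as $\Sc(w)$ from round $i$ onward yields a transcript that $w$ genuinely decommits, making $\Sss$ non-failing. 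Both the efficient search and the prefix-consistent fallback are the content of the claim, and both are missing from your sketch.
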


\paragraph{Proving \cref{claim:GapToComUni:BindingNonFailing}.}
\begin{proof} Assume towards a contradiction that there exists a non-failing \ppt sender  $\Ss$ that breaks the $(1 - 1/2mp)$-binding of \Com.  We use $\Ss$ to construct an efficient, $\Gc$-consistent generator $\Gs$ that breaks the assumed bound on the accessible entropy of $\Gc$. We assume for  simplicity that $\Ss$ is deterministic.

	Fix $n \in \N$ for which $\Ss$ breaks the binding with probability at least $1 - 1/2mp(n)$, and omit $n$ from the notation when clear from the context. The following  generator   uses the  ability of $\Ss$ to  break the binding  of the embedded hashing protocol at all rounds, induced by its high probability of breaking the binding, to output  high sample-entropy transcript.
	
	\begin{samepage}
		\begin{algorithm}[$\Gs$---High entropy  generator from cheating sender $\Ss$]\label{alg:HEGen}
			\item[Security parameter:] $1^n$.
			\item [Input:] public parameter $z$.
			
			\item[Operation:]~

			\begin{enumerate}
						
				\item  Start a random execution of  $(\Ss,\Rc)(1^n)$ with $\Rc$'s first message  set to $z$  and $\is =m$.
				  
				\item For  $i=1$ to $m-1$: output the value of $y_i$ sent by $\Ss$ at round $i$ (as the \ith output block). 
				
				
				\item  Continue the emulation of $(\Ss,\Rc)$  until its  end. Let $\sigma=(\cdot,r)$ be the first decommitment string output by $\Ss$. Output $\Gc(r)_m$ as the \tth{m} output block.

			\end{enumerate} 
		\end{algorithm}
	\end{samepage}
	
	The efficiency of $\Gs$ is clear, and since $\Ss$ is non-failing,  it is also clear that $\Gs$ is $\Gc$-consistent.  In the rest of the proof we show that the computational binding  of $(\Sc_{\SB},\Rc_{\SB})^{\h^1,\h^2,\FFam}$ yields that $\Gs$  violates the assumed bounds on the accessible entropy of $\Gs$. 
	
	Let $T =  (Z,R_1,Y_1,\ldots,R_m,Y_m) =T_\Gs(1^n)$. That is,  $R_i$  are  the coins $\Rc$ uses in the \ith round of the above emulation, \ie its coins  used in the \ith invocation of $(\Sc_\SB,\Rc_\SB)$.

	For  $\vt = (z,r_1,y_1,\ldots) \in \Supp(T)$ and $i\in [m]$, let  $\Low_{\vt,i}$ be the set of all low-entropy \ith block of $\Gs$ given $r_{<i}$. That is, 
	\begin{align}\label{eq:LisSmall}
	\Low_{\vt,i}= \Low_{z,r_{<i}} \eqdef \set{y\colon \Hall_{Y_i|Z,R_{<i}}(z,y|r_{<i}) \le k - 3n}
	\end{align}
	We conclude the  proof  by showing that
	\begin{align}\label{eq:BindingNonFailing:1}
	\ppr{\vt =  (\ldots,y_i,\ldots) \la T}{\exists i\in [m] \colon  y_i \in  \Low_{\vt,i}} < 1/p
	\end{align}
in contradiction to the assumed bound on the accessible entropy of $\Gc$.

	
	Assuming \cref{eq:BindingNonFailing:1} does not hold,  we show that the assumption about the success probability of $\Ss$  yields an algorithm for breaking the computational binding of $(\Sc_\SB,\Rc_\SB)= (\Sc_{\SB},\Rc_{\SB})^{\h^1,\h^2,\FFam}$.  The idea is that when $y_i \in \Low_{\rim}$, then  for breaking the commitment for $\is=i$, the cheating sender $\Ss$ has to break the binding of $(\Sc_\SB,\Rc_\SB)$ \wrt the small, by definition, set $\Low_{\rim}$.  

	For $i\in [m]$ and  $(z,\rim)\in \Supp(Z,\Rim)$, consider the execution of  $(\Ss,\Rc)(1^n)$  induced by $z$, $\rim$ and $\is =i$:  the coins used  by $\Rc$ in the first \tth{j} execution of $(\Sc_\SB,\Rc_\SB)$, for $j\in [i]$, are set to $r_j$,  $c_1=\ldots =c_{i-1} =0$  and $c_i=1$.

	 Let $\hY_{z,\rim,0} = \Gc(z,s_0)_i$, for $\tau_0=(s_0,\cdot,0),\tau_1=(s_1,\cdot,1)$ being  the two  strings output by $\Ss$ at the end of the above interaction. Note that since  $\Ss$ is non-failing,  $s_0$ is always consistent with the interaction, and in particular $\hY_{z,\rim,0}$ is well defined. Similarly, if $\tau_1$ is a valid decommitment, let $\hY_{z,\rim,1} = \Gc(z,s_1)_i$; otherwise, let $\hY_{z,\rim,1} = \hY_{z,\rim,0}$.

	Let $\low(\vt)$ be the smallest  value of $i$ for which  $y_i \in \Low_{z,\rim}$, set to $\perp$ if there is no such $i$, and let $L = \low (T)$. The assumption  that \cref{eq:LisSmall} does not holds implies that $\pr{L \neq \perp} \ge 1/p$.  Thus, 
	\begin{align}\label{eq:LisSmall2}
	\pr{\hY_{Z,\RIm,0}\neq \hY_{Z,\RIm,1} \in \zo^\ell \mid   L \neq \perp}\ge 1/2
	\end{align}
	Indeed, if \cref{eq:LisSmall2} does not hold, then  $\Ss$ fails to break the commitment with probability at least $\pr{L= \is} \cdot 1/p = \pr{L \neq \perp} \cdot 1/m \cdot 1/2 \ge  1/2mp$, in contradiction to the  assumed success probability of $\Ss$. It follows that  
	\begin{align}\label{eq:BindingNonFailing:2}
	\pr{\hY_{Z,\RIm,0}\neq \hY_{Z,\RIm,1} \in \zo^\ell \land Y_L \in \Low_\RIm} \ge  \pr{L \neq \perp} \cdot 1/2  \ge 1/2p
	\end{align}
	We conclude the proof by using the above observation to  define an algorithm for breaking the  computational binding of $(\Sc_{\SB},\Rc_{\SB})$. 
	
	\begin{algorithm}[Algorithm $\Ssh$ for breaking the   binding of $(\Sc_{\SB},\Rc_{\SB})$.]\label{alg:BindingNonFailing}~
		
		\begin{description}
			\item[Input:] security parameter $1^n$.
		\end{description}
		
		\begin{enumerate}
			\item Sample $\vt = (z,r_1,\ldots) \la T_\Gs(1^n)$ and $i\getsr [m=m(n)]$. Output $v= r_{<i}$.
			
			\item Emulate $(\Ss,\Rc)(1^n)$ for its first $(i-1)$  rounds, with $\Rc$'s first message set to $z$,   and  the coins used  by $\Rc$ in the first \tth{j} execution of $(\Sc_\SB,\Rc_\SB)$, for $j\in [i-1]$, are set to $r_j$, and  $c_1=\ldots =c_{i-1} =0$.
			
			\item Interact in $(\Sc_{\SB},\Rc_{\SB})^{\h^1,\h^2,\FFam}(1^n)$, by forwarding $\Rc$'s  messages to $\Ss$, and $\Ss$'s answers back to $\Rc$.\label{step:BindingNonFailing:H}
			
			\item Send $c_i = 1$ to  $\Ss$. Let  $(\tau_0=(s_0,\cdot,0),\tau_1=(s_1,\cdot,1))$ be  the two decommitment strings  output by $\Ss$. 
			
			\item Set $y_{i,0} = \Gc(z,s_0)_i$. If  $\tau_1$ is a valid decommitment set $y_{i,1} = \Gc(z,s_1)_i$; otherwise, set $y_{i,1} = y_{i,0}$.
			
			\item If $i=m$, let $y_i = y_{i,0}$.
			
			  Otherwise,
			  
			  \begin{enumerate}
			  	\item Rewind $\Ss$ to its state just before it received  the message $c_i = 1$ above.
			  	\item Send $c_i = 0$ to  $\Ss$. Let $y_i$ be the next message sent by $\Ss$.
			  \end{enumerate}
			\item Output   $x_0 = y_i$ and  $x_1 \la \set{y_{i,0},y_{i,1} }$.

		\end{enumerate}
	\end{algorithm}
   Since $\Ss$ is non-failing, the pair $(x_0,x_1)$ output by $\Ssh$ is always consistent with its  interaction with $\Rc_{\SB}$ (happens in \Stepref{step:BindingNonFailing:H}).   In addition, for infinitely many $n$'s, it holds that
   
	\begin{align*}
	\pr{x_0 \in \Low_v \land x_0 \neq x_1} & \ge  \pr{i = \low(\vt) } \cdot  \pr{x_0 \in \Low_{z,\rim} \land x_0 \neq x_1 \mid i = \low(\vt) } \\
	&\ge \frac1{mp} \cdot  \frac12 \cdot  \pr{x_0 \in \Low_{z,\rim}\mid i = \low(\vt)} \cdot \pr{y_{i,0} \neq y_{i,1}\mid i = \low(\vt)} \\
	&\ge \frac1{2mp} \cdot  \frac12 \cdot 1 \cdot \pr{y_{i,0} \neq y_{i,1}\mid i = \low(\vt)} \\
	&\ge \frac1{4mp} \cdot  \frac12= \frac1{8mp}.
	\end{align*}
	The last inequality is due to \cref{eq:BindingNonFailing:2}.  Since by definition $\size{\Low_{\rim}} \le 2^{k(n) - 3n}$, algorithm $\Ssh$ violates the  soundness of  $(\Sc_{\SB},\Rc_{\SB})^{\h^1,\h^2,\FFam}$   guaranteed by \cref{lemma:Hashing}. 
\end{proof}

\paragraph{Proving \cref{claim:GapToComUni:FronNonFailingToArbitrary}.}
\begin{proof}
	Let $\Com= (\Sc,\Rc)$  be a  receiver public-coin commitment scheme, and assume there exists an efficient  cheating sender $\Ss$ that breaks the binding of $\Com$ with probability at least $\alpha(n) + 1/p(n)$, for some $p\in \poly$ and infinitely many $n$'s. We construct an efficient non-failing sender $\Sss$ that breaks the  binding of \Com with probability $\alpha(n) + 1/2p(n)$, for infinitely many $n$'s.  It follows that if $\Com$ is $\alpha(n)$-binding for non-failing senders, then it is $\left(\alpha(n) + \negl(n)\right)$-binding.

	We assume for simplicity that $\Ss$ is deterministic, and define the non-failing sender  $\Sss$ as follows: $\Sss$ starts acting as $\Ss$, but before forwarding the \ith message $y_i$  from $\Ss$ to $\Rc$, it first makes sure it will be able to ``justify'' this message --- to output an input for $\Sc$ that is consistent with $y_i$, and the message $y_1,\ldots,y_{i-i}$ it sent in the previous rounds.  To find such a justification string, $\Sss$ continues, in its head, the interaction between  the emulated $\Ss$ and $\Rc$ until its end, using fresh coins for the receiver's messages. Since the  receiver is public-coin, this efficient  random continuation has  the same distribution as  a (real) random continuation of $(\Ss,\Rc)$ has. The sender $\Sss$ applies such random continuations  polynomially many times, and if following  one of them $\Ss$ outputs a valid decommitment string (which by definition is a valid justification string), it keeps it for future use, and outputs $y_i$ as its \ith message. Otherwise (\ie it failed to find a justification string for $y_i$), $\Sss$ continues as the honest $\Sc$ whose   coins and input bit are set to the  justification string $\Sss$ found in the previous round. 

	Since $\Sss$ maintains the invariant that it can always justify its messages, it can also do that at the very end of the commitment stage, and thus outputting this string makes it a non-failing sender. In addition, note that $\Sss$  only fails to find a justification string if $\Ss$ has a very low probability to open the commitment at the end of the current interaction, and thus very low probability to cheat.  Hence, deviating from $\Ss$ on such transcripts will only slightly decrease the cheating probability of $\Sss$ compared with that of $\Ss$.

	Assume for concreteness that $\Rc$ sends the first message in $\Com$. The non-failing sender $\Sss$ is defined as follows:

	\begin{algorithm}[Non-failing sender $\Sss$ from failing sender $\Ss$]
		\item[Input:] $1^n$
		
		\item[Operation:]~

		\begin{enumerate}

			\item Set $w = (0^{s(n)},0)$, for $s(n)$ being a bound on the  number of coins used by $\Sc$, and set  $\Fail= \false$.
			
			
			\item Start an execution of $\Ss(1^n)$.

			\item Upon getting the \ith message $q_i$ from $\Rc$, do:

			\begin{enumerate}

				\item If $\Fail = \false$,
				
				\begin{enumerate}\setlength\itemsep{3pt}
					\item Forward $q_i$ to $\Ss$, and continue the execution of $\Ss$ until it sends its \ith message.
					
					
					\item // {\sf Try and get a justification string for this \ith message.}\vspace{.05in}
					
					Do the following for $3np(n)$ times:\label{step:loop}
					
					\begin{enumerate}
						
						\item Continue the execution of $(\Ss,\Rc)$ until its end, using uniform random messages for $\Rc$. 
						
						\item Let $z'$ and $w'$  be the transcript  and  first message  output by $\Ss$, respectively,  at the end of this execution.
						
						\item Rewind $\Ss$ to its state right after sending its \ith message.
						
						
						
						\item // {\sf Update the justification   string.}\vspace{.05in}

						If $\Rc(z',w') \neq \perp$.  Set $w= w'$ and break the loop.
					\end{enumerate}

					\item If the maximal number of attempts has been reached, set $\Fail = \true$. 
				\end{enumerate} 
				
				
				\item // {\sf Send the \ith message to $\Rc$. 
					
					If $\Fail = \false$, this will be the message sent by $\Ss$ in Step $3(a)$. Otherwise, the string will be computed according to the justification string found in a previous round.}\vspace{.05in}
				
				Send $a_i$ to $\Rc$, for $a_i$ being  the \ith message that $\Sc(1^n,w)$ sends to $\Rc$ upon getting the first $i$ messages sent by $\Rc$.
			\end{enumerate}

			\item If $\Fail =  \false$, output the same value that $\Ss$ does at the end of the execution.

			Otherwise, output $w$.		
			
		\end{enumerate} 
	\end{algorithm}
	It is clear that $\Sss$ is non-failing and runs in polynomial time. It is left to show that it breaks the binding of $\Com$ with high enough probability. We do that by coupling a random  execution of $(\Sss,\Rc)$ with that of $(\Ss,\Rc)$, by letting $\Rc$ send the same, uniformly chosen, messages in both executions. We will show that the probability that $\Ss$ breaks the binding, but $\Sss$ fails to do so, is at most  $1/3p(n) +  m\cdot 2^{-n}$, for $m$ being the round complexity of $\Com$. If follows that, for infinitely many $n$'s, $\Sss$ breaks the binding of $\Com$ with probability $\alpha(n) + 1/2p(n)$.
	
	Let  $\delta_i$ denote  the  probability of $\Ss$ to break the binding after sending its \ith message,  where the probability is over the messages to be sent by $\Rc$ in the next rounds.  By  definition of $\Sss$, the probability that $\delta_i \ge 1/3p(n)$ for all $i\in [m]$, and yet $\Sss$ set $\Fail =\true$, is at most $m\cdot 2^{-n}$.  We conclude that the probability that $\Sss$ does not break the commitment,  and yet $\Ss$ does, is at most  $1/2p(n) +  m\cdot 2^{-n}$.
\end{proof}

\subsubsection{Putting it Together}\label{sec:GapToCom:PutTogether}
Given the above, we prove  \cref{lemma:GapToCom} as follows:

\begin{proof}[Proof of \cref{lemma:GapToCom}]
We use efficient $\ell$-wise function family $\h^1 = \set{\h^1_n}$ and pairwise  function family  $\h^2 = \set{h^2_n}_{n\in \N}$ mapping strings of length $\ell(n)$ to strings of length $k(n)-3n$ and $n$, respectively (see \cite{CarterWe79,CarterWe81} for  constructions of such families). Since, by assumption,   one-way functions exist,  we use \cref{thm:pwf-to-uowhf} to construct universal hash function families $\FFam$ mapping strings of length $\ell(n)$ to strings of length $n$.

	\cref{claim:GapToCom:Hiding,claim:GapToCom:BindingWeak}  yield that the  invocation of  \cref{protocol:WeakBinding} with the generator $\Gc$ and the above function families is  an $O(m)$-round,  receiver public-coin commitment scheme $\Com$  that is  honest-receiver statistically hiding if the real entropy of $\Gc$ is sufficiently large, and is $(1-\Theta(1/mp))$-binding if the generator  accessible entropy is sufficiently small.  Let $t = \log(n)^2 pm$ be an efficiently computable function and let  $\Com^{\seq {t}} = (\Sc^{\seq {t}},\Rc^{\seq {t}})$ be the $t$ parallel  repetition of \Com: an execution of $(\Sc^{\seq {t}}(b),\Rc^{\seq {t}})(1^n)$ consists of  $t$-fold   parallel and independent executions of $(\Sc(b),\Rc)(1^n)$.  It is easy to see that since \Com is honest receiver statistically hiding, so is  $\Com^{\seq {t}}$. Finally,  since $\Com$ is $(1-\Theta(1/pm))$-binding,  by \cite{HastadPPW10} (recall that \Com is receiver public coin) $\Com^{\seq {t}}$ is  computationally binding. By \cite[Cor 6.1]{HaitnerHoKaKoMoSh09Jor}, this yields an  $O(m)$-round statistically hiding protocol.
	
	Assuming the bound on the real entropy of $\Gc$ is invariant to the public parameter,   	\cref{claim:GapToCom:Hiding} yields that $\Com^{\seq {t}}$ is already statistically hiding. Thus, we do not have to use  the (non-public-coin) reduction of \cite{HaitnerHoKaKoMoSh09Jor}, and immediately get a receiver public coin  statistically hiding commitment.

\end{proof}

\subsection{Constant-Round Commitments}\label{SHC:ConstantRound}
In this section we prove that an   inaccessible entropy generator of constant number of blocks yields a constant-round statistically hiding commitment.
 
\begin{theorem}[Inaccessible entropy generator to  statistically hiding commitment, constant-round version]\label{theorem:GapToComConst}
	Let  $\Gc$ be an efficient  block-generator  with a constant number of blocks.  Assume   $\Gc$'s real Shannon entropy 
	is at least $k(n)$ for some efficiently computable function $k$, and that its accessible entropy is bounded by $k(n) - 1/p(n)$ for some  $p\in \poly$. Then  there exists a constant-round  statistically hiding and computationally binding commitment scheme. Furthermore, if the bound on the real entropy is invariant to the public parameter, then  the commitment is receiver public-coin.
\end{theorem}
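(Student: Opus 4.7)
The strategy is to adapt the proof of \cref{theorem:GapToCom}, skipping the entropy-equalization step (which is the only source of non-constant round blow-up) and compensating for it by an exhaustive search over per-block entropy values; this is affordable since $m$ is a constant, and it is the approach sketched in \cref{remark:NonUniformCom}.

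First I would enumerate all tuples $(\tilde{k}_1,\ldots,\tilde{k}_m)$ on a grid of granularity $\varepsilon = 1/(4mp(n))$ in $[0,n]$, retaining only those with $\sum_i \tilde{k}_i \geq k(n) - 1/(4p(n))$. Since $m$ is constant and all values are polynomially bounded, the number $T=T(n)$ of such tuples is $\poly(n)$. Letting $k_i$ denote the true real Shannon entropy of the $i$-th block of $\Gc$ conditioned on the previous blocks, so that $\sum_i k_i\geq k(n)$, rounding these values down to the grid produces at least one \emph{admissible} tuple $(\tilde{k}_1^\ast,\ldots,\tilde{k}_m^\ast)$ with $\tilde{k}_i^\ast \leq k_i$ for every $i$.

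For each tuple $(\tilde{k}_1,\ldots,\tilde{k}_m)$, I would construct a commitment scheme $\Com_{\tilde{k}_1,\ldots,\tilde{k}_m}$ by applying the direct-product amplification of \cref{lem:PR} to $\Gc$ with a sufficiently large polynomial parameter $v$, and then running a variant of the weakly-binding protocol \cref{protocol:WeakBinding} in which the first hash function used in round $i$ has output length $v\tilde{k}_i - 3n$ (rather than the round-independent $k-3n$). When the tuple is admissible, \cref{lem:PR} gives each block $i$ of $\Gc^{\seq v}$ real min-entropy at least $v\tilde{k}_i - O(\log n\cdot \ell\sqrt{v})$, and the block-by-block argument of \cref{claim:GapToCom:Hiding} yields statistical hiding. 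For every retained tuple (admissible or not), the accessible entropy of $\Gc^{\seq v}$ is at most $v(k-1/p)$, which is smaller by $\Omega(v/p)$ than $v\sum_i\tilde{k}_i - 3mn$; the argument of \cref{claim:GapToComUni:BindingNonFailing} adapted to per-round hashing parameters then yields $\bigl(1-\Omega(1/p)\bigr)$-binding against efficient adversaries, which parallel repetition boosts to full computational binding exactly as in the proof of \cref{lemma:GapToCom}. Each such scheme uses $O(m)=O(1)$ rounds, and the one-way functions needed to instantiate the UOWHF component are obtained from the hypothesis by the same route as in \cref{theorem:GapToCom}.

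Finally I would combine the $T$ schemes $\Com_1,\ldots,\Com_T$ by XOR secret sharing: to commit to $b\in\zo$, sample uniform $b_1,\ldots,b_T\in\zo$ subject to $\bigoplus_i b_i = b$ and commit to $b_i$ under $\Com_i$, running all $T$ commit stages in parallel. Statistical hiding follows because the admissible scheme alone statistically hides its share; computational binding follows because opening the combined commitment to two different bits forces opening at least one $\Com_i$ to two different values. The combined scheme has $O(1)$ rounds and is receiver public-coin whenever the real-entropy bound on $\Gc$ is invariant to the public parameter. The main obstacle will be verifying that the binding argument of \cref{claim:GapToComUni:BindingNonFailing} generalizes cleanly to per-round hashing lengths; in particular, the Markov-style step locating a round in which the adversarial generator's output block is confined to a set of size at most $2^{v\tilde{k}_i-3n}$ with probability $\Omega(1/\poly)$ must be redone using the averaged gap between $v\sum_i \tilde{k}_i$ and the accessible entropy, in place of a uniform per-block gap.
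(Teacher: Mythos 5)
Your proposal is correct and follows essentially the same route as the paper: skip the entropy-equalization step, enumerate the polynomially many per-block entropy tuples (feasible because $m$ is constant), instantiate a per-tuple commitment with round-dependent hash output lengths, and combine the resulting schemes via XOR secret sharing. The paper packages the per-round-hash-length adaptation of \cref{lemma:GapToCom} as a separate lemma (\cref{lemma:GapToComConst}); the concern you flag about the binding argument of \cref{claim:GapToComUni:BindingNonFailing} generalizing to round-dependent thresholds is indeed the one detail to verify, and it goes through as you sketch via the averaged gap $\sum_i v\tilde{k}_i - 3mn$ versus the total accessible entropy.
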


The heart of the proof of \cref{theorem:GapToComConst} lies in the following lemma. In the following we use the natural generalization of commitment schemes for nonuniform protocols: the correctness, binding and hiding hold for adversaries seeing the nonuniform advice.  

\newcommand{\tk}{\widetilde{k}}
\begin{lemma}\label{lemma:GapToComConst}
	Let $\Gc$ be an efficient  block-generator  with a constant number of blocks $m$ and block length $\ell = \ell(n)$, and assume one-way functions exist. Then  for  every efficiently computable $p(n) \ge  1/\poly(n)$ there exists a polynomial-time, $O(m)$-round,  commitment scheme $\Com$   such that the following holds for any  polynomial size $\set{\tk_n = (\tk_n(1),\ldots,\tk_n(m))}_{n\in \N}$. 
	\begin{description}
		
		\item [Correctness:] $\Com(1^n,\tk_n)$ is  correct.
		
		\item[Hiding:] If for each $n\in \N$ and $i\in [m]$, either $\tk_n(i)=0$ or  the \ith block of $\Gc(U_{\pp(n)},U_{s(n)})$ has  real min-entropy at least $\tk_n(i) \ge 3n$, then $\Com(1^n,\tk_n)$ is statistically hiding.  Furthermore, if the bound on the real entropy is invariant to the public parameter, then  the commitment is receiver public-coin.
		
		\item[Binding:]

		If for  every efficient $\Gc$-consistent, online generator $\Gs$ and all large enough $n$,
		$$\ppr{\vt\getsr T_\Gs(1^n)} {\Haccsam_\Gs(\vt) > \left(\sum_{i\in [m]} \tk_n(i)\right) - 3mn} \le 1 - 1/p(n),$$
		
		then $\Com(1^n,\tk_n)$ is computationally binding. 
		
	\end{description}
\end{lemma}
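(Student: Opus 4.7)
The plan is to adapt the weakly binding receiver public-coin commitment of \cref{protocol:WeakBinding} to the per-block min-entropy setting specified by the nonuniform advice $\tk_n=(\tk_n(1),\ldots,\tk_n(m))$. In the modified commit stage, at each round $i$ with $\tk_n(i)\geq 3n$, the parties invoke the strongly binding hashing subprotocol $(\Sc_\SB,\Rc_\SB)^{\h^1_i,\h^2_i,\FFam}$ of \cref{protocol:Hashing}, where $\h^1_i$ is a $4\tk_n(i)$-wise independent hash family with output length $\tk_n(i)-3n$ (all other ingredients as before); rounds with $\tk_n(i)<3n$ (in particular $\tk_n(i)=0$) are ``inert''---the sender simply forwards $y_i$ in the clear. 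The receiver then picks a special round $\is$ uniformly from the useful rounds $\{i:\tk_n(i)\geq 3n\}$ by the standard sequential Bernoulli trick of \cref{protocol:WeakBinding}, and the committed bit is masked by the inner product $\iprod{u,y_\is}$ with a fresh random $u$.

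Hiding follows \cref{claim:GapToCom:Hiding} almost verbatim, with per-round parameters. Conditioning on $(z,\vy_{<\is})$, block $Y_\is$ has real min-entropy at least $\tk_n(\is)$; the sender emits at most $\tk_n(\is)-3n+n=\tk_n(\is)-2n$ bits in the hashing subprotocol, so by \cref{prop:CondNotReduce} combined with the smoothing \cref{lemma:smoothEntropies} the residual conditional min-entropy of $Y_\is$ is at least $n/2$, and the leftover hash lemma yields statistical hiding of the extracted bit. When the real-entropy bound is invariant to the public parameter, the argument goes through for an arbitrary (possibly cheating) receiver, preserving the public-coin property of the receiver.

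For binding, let $\Low_{\vt,i}=\{y:\Hall_{Y_i\mid Z,R_{<i}}(y\mid z,r_{<i})\leq \tk_n(i)-3n\}$, which is empty whenever $\tk_n(i)<3n$. The pigeonhole step of \cref{claim:GapToComUni:BindingNonFailing} becomes: if no $y_i$ lies in $\Low_{\vt,i}$ then $\Haccsam_\Gs(\vt)=\sum_i \Hall_{Y_i\mid Z,R_{<i}}(y_i\mid z,r_{<i}) > \sum_i \tk_n(i) - 3mn$, so the hypothesis forces $\pr{\exists i:\,y_i\in\Low_{\vt,i}}\geq 1/p(n)$. Any witnessing index $i$ necessarily satisfies $\tk_n(i)\geq 3n$, i.e.\ is a useful round, hence equals $\is$ with probability at least $1/m$. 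Invoking \cref{lemma:Hashing} on the set $\Low_{\vt,\is}$ of size at most $2^{\tk_n(\is)-3n}$ (the output length of $\h^1_\is$) in place of the uniform $\Low$ of the original proof, we obtain weak binding at parameter $1-\Omega(1/mp)$ against non-failing senders; the reduction from arbitrary to non-failing senders (\cref{claim:GapToComUni:FronNonFailingToArbitrary}) is syntactic and unchanged.

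Because $m$ is constant, the weakly binding scheme uses $O(m)=O(1)$ rounds, and $\Theta(\log^2 n\cdot mp)$-fold parallel repetition amplifies it to full computational binding via \cite{HastadPPW10} (applicable since the scheme is receiver public-coin) while preserving statistical hiding and constant round complexity; the \cite{HaitnerHoKaKoMoSh09Jor} transformation, which also adds only $O(1)$ rounds, upgrades honest-receiver hiding to full hiding when needed. The main subtlety---and the only step where care is required---is verifying that the per-round parameterization of the hashing subprotocol (the varying output lengths $\tk_n(i)-3n$ and varying levels of independence $4\tk_n(i)$) is consistent with the pigeonhole argument for binding; everything else is a direct re-run of the argument for \cref{lemma:GapToCom} with per-block thresholds.
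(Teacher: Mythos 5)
Your proof follows the same line as the paper's (brief) proof: parameterize Protocol~\ref{protocol:WeakBinding} with per-block hash output lengths $\tk_n(i)-3n$, make rounds with $\tk_n(i)=0$ inert, and re-run the hiding and binding arguments with per-block thresholds before amplifying via parallel repetition. Your choice of output length $\tk_n(i)-3n$ is the right one, matching the lemma's threshold $\sum_i \tk_n(i)-3mn$ and the instantiation in \cref{sec:GapToCom:PutTogether}; the paper's terse sketch writes ``$\tk_n(i)-n$,'' which appears to be a slip.

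The one substantive point your argument omits is exactly the issue the paper's proof singles out: the generator $\Gs$ you build from a cheating sender $\Ss$ in the binding reduction must emulate $\Com(1^n,\tk_n)$, whose round-$i$ subprotocol (the amount of $\h^1$-randomness, the hash output length) depends on $\tk_n(i)$; since $\tk_n$ need not be efficiently computable, $\Gs$ is a \emph{nonuniform} machine. The lemma's hypothesis, however, quantifies over (uniform) efficient $\Gc$-consistent generators, so the step ``the hypothesis forces $\pr{\exists i:\,y_i\in\Low_{\vt,i}}\geq 1/p(n)$'' is not directly licensed when applied to your $\Gs$. The paper resolves this by observing that because $m$ is constant, $\tk_n$ has only an $O(\log n)$-bit description, so passing from a nonuniform to a uniform generator costs at most a fixed polynomial factor in the relevant probability, and this is absorbed by increasing the number of parallel repetitions of the weakly binding scheme. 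Your write-up should make this bridge explicit (or, equivalently, restrict attention to efficiently computable $\tk_n$, which is what the application in \cref{theorem:GapToComConst} in fact supplies).
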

\begin{proof}
The proof of 	\cref{lemma:GapToComConst} follows the same line as the proof of 	\cref{lemma:GapToCom}, where in the \ith round, the  $\ell(n)$-wise independent hash function outputs $\tk_n(i) -n$ bits (rather than $k(n)-n$ as in   \cref{lemma:GapToCom}). If $\tk_n(j)=0$, the \jth block is skipped. The correctness and hiding are clear,  and the binding holds since the nonuniform advice $\tk_n$ is of logarithmic size and thus  only improves the probability of outputting a high entropy block by a (fixed) polynomial factor (an improvement that can  be accommodated by increasing the number of repetitions of the weakly binding commitment). 
\end{proof}
Given \cref{lemma:GapToComConst}, the proof of \cref{theorem:GapToComConst} is similar to that of \cref{theorem:GapToCom} adapted to exploit the constant number of blocks. 
\begin{proof}[Proof of \cref{theorem:GapToComConst}]

	Let $\pp = \pp(n)$ and $s = s(n)$ be the pubic parameter and input length of $\Gc$. We assume \wlg  that the  output blocks are of all of the same length $\ell = \ell(n)\in \Omega(\log n)$. We also assume for simplicity that $k$, the bound on the real entropy of $\Gc$  is an integer function.    For $n\in \N$, let $Z_n \getsr \zo^\pp$ and let $(Y_1(n),\ldots,Y_m(n)) = \Gc(Z_n,U_{s(n)})$, and for $i\in [m]$, let $f_n(i) = \floor{\Hall_{Y_i(n)| Z,Y_{<i}(n)}}_{1/2mp}$, for $\floor{x}_\delta \eqdef \floor{x/\delta}\cdot \delta$, and let $f = f(n) = \sum_{i \in [m]} f_n(i)$. By definition, $f(n) \ge k(n)(1 - 1/2p(n))$.

	In the following we omit $n$ when clear from the context. Let $\Gc^{\seq \vv}$ be the   direct product of $\Gc$ (see \cref{def:PR}),  with $\vv = \vv(n) = \max\set{32n mp,\ceil{c\cdot \left(\log n\cdot \ell mp \right)^2}}$, for $c>0$ to be determined by the analysis. By  \cref{lem:PR}:
	\begin{itemize}
		\item The \ith block of $\Gc^{\seq \vv}$ has real \emph{min-entropy}  at least  $f_n(i) =  \vv \cdot f_n(i) -  c'\log (n)\cdot  \ell \cdot  \sqrt{\vv}$ for some universal constant $c'$. In particular (by taking large enough $c$ in the definition of $\vv$),   $f_n(i) > 0$ implies  $f_n'(i)\ge 3n$.
		
		\item The  accessible entropy of $\Gc^{\seq \vv}$  is at most $a' =  \vv \cdot (k - 1/p)$.
	\end{itemize}
	Let  $f' = f'(n) = \sum_{i \in [m]} f'_n(i)$. The above yields that for large enough $n$,  
	\begin{align}
	f' - a'&\ge  \vv \cdot \left (k (1- 1/2p) - O\left(\log (n)\cdot  \ell/\sqrt{\vv} \right) -  k(1-1/p)\right)\\
	&\ge   \vv k /4p\nonumber\\
	&\ge 4m n.\nonumber
	\end{align}
	The penultimate inequality holds by taking a large enough value of $c$ in the definition of $\vv$. 
	Hence  by an averaging argument,  for any efficient $\Gc^{\seq \vv}$-consistent, online generator $\Gs$ and all large enough $n$, it holds that
	\begin{align}
	\ppr{\vt\getsr T_\Gs(1^n)} {\Haccsam_\Gs(\vt) > f'- 3mn} \le 1 - 1/p'(n)
	\end{align}
	for $p'(n) = (f'(n)  - 3mn)/n$.

	For $n\in \N$, let $\ck_n$ be the set of all possible values for $k_n$,  \ie all tuples of the form $\tk_n= (\tk[1],\ldots,\tk[m])$ with $\tk[i] \in \set{0,1/2mp,\ldots,\floor{k(n)}_{1/2mp}}$ and $\sum_{i\in [m]} \tk[i] \in [k(n)(1-1/2p(n),k(n)]$.  Since $m$ is constant, $ \size{\ck_n} \in \poly(n)$. For $\tk_n= (\tk[1],\ldots,\tk[m]) \in \ck_n$, define $\tk'_n= (\tk'[1],\ldots,\tk[m]')$ by $\tk'[i] = \min\set{0,\vv \tk'[i] - c'\log (n)\cdot  \ell \cdot  \sqrt{\vv}}$, for $c'$ being the constant from \cref{lem:PR}.

 By \cref{lemma:IEGtoOWF}, the existence  of $\Gc$ implies that of one-way functions. 	Hence, for any large enough $n$ and  $\tk_n \in \ck_n$,  we have that $\Com(1^n,\tk'_n)$  is:
 
 \begin{enumerate}
 	\item correct,
 	
 	\item computationally binding, and 
 	
 	\item statistically hiding   if $\tk_n = k_n$.
 \end{enumerate}
 Consider the commitment scheme that on security parameter $n$, the parties invoke $\Com(1^n,\tk'_n)$ in parallel, for all  choices of $\tk_n \in \ck_n$, where the  committed values  used by the sender are $ \size{\ck_n}$-out-of-$ \size{\ck_n}$ shares of the value the sender wishes to commit to (\ie their XOR is the committed value). By the above observation, the resulting commitment is the desired constant-round statistically hiding commitment.

Finally, it readily  follows from the above proof that  if the bound on the real entropy of $\Gc$ is invariant to the public parameter, then so is that of   $\Gc^{\seq \vv}$, and the resulting commitment is receiver public-coin.
\end{proof}

\section{One-Way Functions are Necessary for  an Accessible Entropy Generator}\label{sec:ATEtoOWF}
In \cref{sec:IAEGenFromOWF}, we  proved  that  the existence of one-way functions implies that of an inaccessible entropy  generator. The following theorem states that  the converse direction is also  true. 
\begin{theorem}\label{lemma:IEGtoOWF}
	Let $\Gc$ be an efficient block generator with real entropy $k(n)$. If  $\Gc$  has accessible entropy   at most $k(n) - 1/p(n)$ for some $p\in\poly$, then one-way functions exist.
\end{theorem}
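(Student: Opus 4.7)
The plan is to prove the contrapositive: assuming one-way functions do not exist, I will construct an efficient $\Gc$-consistent online generator $\Gs$ whose accessible entropy exceeds $k(n) - 1/p(n)$ for infinitely many $n$, contradicting the hypothesis. The guiding intuition is that if we can efficiently sample an approximately uniform preimage of any partial output $(z, y_{<i})$ of $\Gc$, then a cheating generator can, at each round, use fresh coins to resample such a preimage and then output the next block of $\Gc$ on it. The resulting joint distribution will be close to that of an honest execution of $\Gc$, and the accessible entropy will match the real entropy up to a small error. Concretely, I will apply \cref{lemma:NoDistOWF} to the efficient function $F(z, x, i) \eqdef (z, \Gc(z, x)_{\le i}, i)$ (with $i \in [m(n)]$ encoded in binary) to obtain, for any polynomial $q$, a \ppt $\Inv$ that is a $1/q(n)$-distributional inverter for $F$ on infinitely many $n$. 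Averaging over $i$, for each such $n$ and each $i \in [m(n)]$, the algorithm $\Inv_i(z, y) \eqdef \Inv(1^n, (z, y, i))$ is an $O(m(n)/q(n))$-inverter for $f_i(z, x) \eqdef (z, \Gc(z, x)_{\le i})$.

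I will then define $\Gs$ on public parameter $z$ and coins $r_1, \dots, r_m$ as follows: maintain a ``witness'' $\xi$ which always satisfies $\Gc(z, \xi)_{<i} = y_{<i}$. At round $i$, compute $\tilde{x}_i \eqdef \Inv_{i-1}(z, y_{<i}; r_i)$; if the consistency check $\Gc(z, \tilde{x}_i)_{<i} = y_{<i}$ passes, update $\xi \la \tilde{x}_i$, otherwise keep $\xi$ unchanged; then output $y_i \eqdef \Gc(z, \xi)_i$. The verify-then-update mechanism ensures by induction that $(y_1, \ldots, y_m) = \Gc(z, \xi)$ lies in the image of $\Gc$ for every choice of coins, so $\Gs$ is $\Gc$-consistent \emph{deterministically}, not merely with high probability.

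To analyze $\Gs$'s accessible entropy, a simple induction on $i$ shows that $(Z, Y_{\le i})$ under $\Gs$ is $O(i \cdot m(n)/q(n))$-close in statistical distance to the honest distribution of $(Z, Y_{\le i})$ under $\Gc(Z, U_s)$, since on the high-probability ``good'' event (the inverter succeeds and produces a near-uniform preimage) the block $Y_i^\Gs$ is $O(m/q)$-close to the real conditional $Y_i \mid Z, Y_{<i}$. Using concavity of Shannon entropy (to lower bound the entropy of a mixture by its restriction to the good event) together with continuity of Shannon entropy on the bounded support $\zo^{\ell(n)}$, I obtain
\begin{align*}
\Hall(Y_i^\Gs \mid Z, R_{<i}) \;\ge\; \Hall(Y_i \mid Z, Y_{<i}) - O\!\left(\frac{m(n)\,\ell(n)}{q(n)}\right),
\end{align*}
and summing via \cref{lem:accessible-as-conditional,lem:realShannondef} yields $\ex{\Haccsam_\Gs(T_\Gs(1^n))} \ge k(n) - O(m(n)^2 \ell(n)/q(n))$. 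Choosing $q$ to be a sufficiently large polynomial makes the right-hand side exceed $k(n) - 1/p(n)$, the desired contradiction.

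The main obstacle is the quantitative bookkeeping: the $1/q(n)$ inversion error must be tracked simultaneously through the induction on the statistical distance between the $\Gs$- and $\Gc$-distributions of $(Z, Y_{\le i})$ and through the mixture/continuity arguments converting TV-closeness into closeness of Shannon entropies, while aggregating $m(n)$ such losses. In particular, one must verify that the inverter's guarantee (stated for inputs drawn from $\Gc$) still yields useful information when applied to inputs drawn from the slightly shifted $\Gs$-distribution; this is handled by absorbing the $O(m/q)$ TV gap between the two distributions into the $O(m/q)$-fraction of ``bad'' inputs at each round, so the final bound retains the form $O(m^2\ell/q)$, which a sufficiently large polynomial $q$ renders smaller than $1/p(n)$.
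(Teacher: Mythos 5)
Your proposal is correct and follows essentially the same approach as the paper's proof: apply \cref{lemma:NoDistOWF} to the function that outputs the first $i$ blocks of $\Gc$, then build a $\Gc$-consistent online generator that at each round uses fresh coins to resample a near-uniform preimage and falls back to the previously held witness on failure. The only notable difference is in bookkeeping: the paper argues per-transcript with sample-entropies (its Claims bound $\Hall(\tY)-\Hall(Y)$ via TV-closeness and then show $\Hall_{\tY_i\mid R_{<i}}$ tracks $\Hall_{\tY_i\mid\tY_{<i}}$ except on a single bad value per $r_{<i}$), while you aggregate at the Shannon level via TV-closeness and entropy continuity; both routes incur the same $\poly(m,\ell)/q$ loss and are valid.
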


\begin{proof}
	The proof is  by reduction. We assume that one-way functions do not exist, and show that this  implies that $\Gc$ does not have  a noticeable gap between its real and accessible entropy.  For simplicity, we assume that $\Gc$ gets no public parameter. Consider the  efficient function that outputs the first $i$ blocks of  of $\Gc$:
	\begin{align}
g(x,i)= \Gc(x)_{1,\ldots,i}
	\end{align}
	We assume for ease of notation that the seed length of $\Gc$ on security parameter $n$  is just $n$ (\ie $s(n) = n$). Hence,  $g$ is defined over $\zn \times [n]$. 
	
	 Let  $m$ and $\ell$ be the block complexity and maximal block length of $\G$ respectively, and let $\alpha(n)  = 1/\left(20 m^3(n)p^2(n)(\ell(n)+n)\right)$.
	 Assuming  one-way functions do not exist,  by \cref{lemma:NoDistOWF}   there exists an efficient algorithm $\Inv$ that is an $\alpha$-inverter  for $g$ (see \cref{def:dowf}) on infinitely many $n$'s.  We assume \wlg that $\Inv$ either outputs a valid preimage of $g$, or $\perp$. Consider the following on-line generator $\Gs$:

	\begin{algorithm}[Online generator $\Gs$]~
		\item[Input: ] $1^n$.
		
		\item[Operation:]~
		\begin{enumerate}
			\item Let $\Fail = \false$ and  $x_0 = 0^n$.
			\item For $i=1$ to $m(n)$:
			\begin{enumerate}

			\item If  $\Fail$, set $x_i = x_{i-1}$. 
			
			Otherwise, 
			\begin{enumerate}
				
				\item  Let $(x_i,\cdot) \la \Inv(\Gc(x_{i-1})_{1,\ldots,i-1})$.
				
				\item If $x_i =\perp$, set $\Fail =\true$ and $x_i = x_{i-1}$.
	 
			\end{enumerate}
		
		 \item Output $\Gc(x_i)_i$.
		 
		 	\end{enumerate}
		\end{enumerate}
	\end{algorithm}
	 It is clear that $\Gs$ is efficient and $\Gc$-consistent. In the following we show that  for infinitely many $n$'s, the accessible entropy of $\Gs$ is at least  $k(n) - 1/p(n)$, contradicting the assumed bound on the accessible entropy  of $\Gc$.
	 
	 Let $\I \subseteq \N$ be the infinite sequence of input lengths on which $\Inv$ is an $\alpha$-inverter of $g$. Fix $n\in \I$ and omit it from the notation when clear from the context. Let $Y= (Y_1,\ldots,Y_m) = \Gc(U_n)$ and let $(R_1,\tY_1,\ldots,R_m,\tY_m) = \Ts = T_{\Gs}(1^n)$.   We first prove that  $\Hall(\tY = (\tY_1,\ldots,\tY_m))$ is almost as high as the real entropy of $\Gc$.
	 
	 \begin{claim}\label{claim:IEGtoOWF:1}
	 $\Hall(\tY) \ge \Hall(Y)-1/3p$. 
	 \end{claim}
  \begin{proof}
  	Since  $\Inv$ is an $\alpha$-inverter, a simple coupling argument yields that 
  	 \begin{align}\label{eq:SD}
  	 \delta \eqdef   \SD(Y, \tY) \leq 2m \alpha < 1/6pm\ell.
  	 \end{align}
  	Hence, \cite[Fact 3.3.9]{Vadhan-thesis} yields that  
	 \begin{align}\label{eq:RealEntofGs}
	 \Hall(\tY)  - \Hall(Y)  \ge - \delta\cdot \log( \size{\Supp(\tY \cup Y)} - h_2(\delta),
	 \end{align}
	 for $h_2(x)$ being the \Renyi entropy of the Boolean random variable taking the value $1$ wp $x$. Since $\Gs$ is non-failing, it holds that   $\Supp(\tY) \subseteq \Supp(Y)$, and by assumption,   $\size{\Supp(Y)} \le 2^{m\ell}$. It follows that 
	 \begin{align}
	  \Hall(\tY) - \Hall(Y) &\ge -\delta  m \ell -  h_2(\delta)\\
	  &\ge  - \delta  m \ell   - 2\delta \nonumber\\
	  &\ge - (m \ell  +2)/4pm\ell\nonumber\\
	  & \ge -1/3p.\nonumber
	 \end{align}
	 The second inequality holds since $h_2(\delta) < 2\delta$, and the third one since  \wlg $\ell m \ge2$.
 \end{proof}
	 
	   Recall that by definition,
	  
	   \begin{align}
	   \Haccsam_{\Gs}(\Ts) = \eex{(r_1,y_1,\ldots) \la \Ts}{\sum_{i=1}^{m} \Hall_{\tY_i|R_{<i}}(y_i|r_{<i})} 
	   \end{align}
	   and that,  by the chain rule, 
	   
	   \begin{align}
	   \Hall(\tY) = \eex{(r_1,y_1,\ldots) \la \Ts}{\sum_{i=1}^{m} \Hall_{\tY_i|\tY_{<i}}(y_i|y_{<i})}
	   \end{align}
	   We complete the proof showing that with save but very small probability over the choice of $(r_1,y_1,\ldots) \la \Ts$, it holds that $\Hall_{\tY_i|R_{<i}}(y_i|r_{<i})$ is very close to $\Hsam_{Y_i|Y_{<i}}(y_i|y_{<i})$ for every $j\in [m]$, and that the   complementary event does not contribute much to the entropy of $Y$. We do that by  focusing on the set of ``good'' transcripts  $\cs\subseteq \Supp(\Ts)$. The set $\cs$ contains  all   transcripts $\vt=(r_1,y_1,\ldots,r_m,y_m)$ such that 
	     
	   \begin{enumerate}
	   	\item  $\Fail(\vt) = \false$,  for $\Fail(\vt)$ being the event for which  the  flag $\Fail$ is set to $\true$ in the execution of  $\Gs$ reflected in  $\vt$.  
	   	
	   	\item  $\Hall_{\tY_i|R_{<i}}(y_i|r_{<i}) +1/3mp \ge \Hall_{\tY_i|\tY_{<i}}(y_i|y_{<i})$, for every $i\in [m]$.
	   	
	   \end{enumerate}
	  	We first prove that a random transcript is likely to be  in $\cs$. 
	  
	   \begin{claim}\label{claim:IEGtoOWF:2}
	   	$\pr{\Ts\notin \cs} \le 1/4p(m\ell +n)$.
	   \end{claim}
   \begin{proof}
   Let $\cs' = \set{\vt=(r_1,y_1,\ldots)\in \Supp(\Ts) \colon \forall i \in [m] \colon \SD\left((x)_{x\getsr g^{-1}(y_{\le i})},\Inv(y_{\le i})\right)  \le  \alpha \land \Fail(\vt) = \false}$. Since  $\Inv$ is an $\alpha$-inverter,  a simple coupling argument yields that 
   \begin{align}
   \pr{\Ts\notin \cs' } \le 2m\alpha
   \end{align}
   	We conclude the proof by showing that the second property of $\cs$ holds with high probability for  a random transcript of $\cs'$. Fix $\vt=(r_1,y_1,\ldots,r_m,y_m) \in \cs'$.   By definition, it holds that
   	\begin{align}
   	\pr{\tY_i = y_i \mid R_{<i} = r_{<i}} \le \pr{\tY_i = y_i \mid \tY_{<i} = y_{<i}} + \alpha
   	\end{align}
   	for every $\in [m]$. In particular, if $\pr{\tY_i = y_i \mid \tY_{<i} = y_{<i}} \ge 3pm\alpha$, then
   		\begin{align}
   	\Hall_{\tY_i|R_{<i}}(y_i|r_{<i}) +1/3mp \ge \Hall_{\tY_i|\tY_{<i}}(y_i|y_{<i})
   	\end{align}
   	Thus, we should only care about transcripts for which $\pr{\tY_i = y_i \mid \tY_{<i} = y_{<i}} <  3pm\alpha$ for some $i$.
   	
   	Note that for every possible value of $r_{<i}$, there exists at most a \emph{single}  value $y_i^\ast = y_i^\ast(r_{<i})$ such that  $\pr{\tY_i = y_i^\ast\mid R_{<i} = r_{<i}} > \pr{\tY_i = y_i^\ast \mid \tY_{<i} = y_{<i}}$ (\ie the value $\Gs$ outputs when $\Inv$ fails). Assuming  $\SD\left((x)_{x\getsr g^{-1}(y_{< i})},\Inv(y_{< i})\right)  \le  \alpha $, it holds that 
   	\begin{align}
   	\pr{\Inv(y_{< i}) = y}  \le 3pm\alpha + \alpha < 1/5mp(m\ell +n)
   	\end{align}
   	for every value $y$. Hence,
   	\begin{align}
   	\ppr{\vt=(r_1,y_1,\ldots,r_m,y_m)  \la \Ts}{\vt  \in \cs  \land \exists i\in [m] \colon y_i = y_i^\ast(r_{<i})} \le 1/5p(m\ell +n)
   	\end{align}
   	We conclude that  $\pr{\Ts \in \cs} \ge \pr{\Ts \in \cs'} -1/5p(m\ell+n) \ge 1- 1/4p(m\ell+n)$.
   \end{proof}
We now  use \cref{claim:IEGtoOWF:2} to  show that the expectation of  the sample-entropy $\tY$ is almost intact  when ignoring the contribution of  transcripts not in $\cs$. By  the second part of	  \cref{prop:ShanonoToSampleSeq},
\begin{align}
\ppr{\vt=(r_1,y_1,\ldots)\la \Ts}{\Hall_\tY(y = (y_1,\ldots,y_m) > \ell m  +n } \le 2^{-n}
\end{align}
Hence, the first part  of \cref{prop:HighAEContributionSeq} yields that for some universal constant $c$,
\begin{align}
\eex{\vt=(r_1,y_1,\ldots)\la \Ts}{(\Hall_\tY(y = (y_1,\ldots,y_m) > \ell m  +n)\cdot  \Hall_\tY(y)}  \le 2^{-n}(m \ell +n +n +c) <2^{-n/2}
\end{align}
for large enough $n$. It follows that
\begin{align*}
\eex{\vt=(r_1,y_1,\ldots)\la \Ts}{(\vt\notin \cs) \cdot  \Hall_\tY(y_1,\ldots,y_m)}  \le (\ell m +n)\cdot 1/4p(\ell m +n) + 2^{-n/2} < 1/3p.
\end{align*}
Thus,  
 	   \begin{align}\label{eq:IEGtoOWF1}
 	   \eex{\vt=(r_1,y_1,\ldots)\la \Ts}{(\vt\in \cs) \cdot  \Hall_\tY(y_1,\ldots,y_m)}  \ge \Hall(\tY) - 1/3p
 	   \end{align} 
 	   
 	    We conclude that
	   \begin{align*}
	   \Haccsam_{\Gs}(\Ts)& = \eex{\vt =(r_1,y_1,\ldots) \la \Ts}{\sum_{i=1}^{m} \Hall_{\tY_i|R_{<i}}(y_i|r_{<i})} \\
	   &\ge \eex{\vt =(r_1,y_1,\ldots) \la \Ts}{ (\vt \in \cs) \cdot (\sum_{i=1}^{m} \Hall_{\tY_i|R_{<i}}(y_i|r_{<i}))}\\
	    &\ge \eex{\vt =(r_1,y_1,\ldots) \la \Ts}{ (\vt \in \cs) \cdot (\sum_{i=1}^{m} \Hall_{\tY_i|\tY_{<i}}(y_i|y_{<i}) - 1/3mp)}\\
	    &\ge \eex{\vt=(r_1,y_1,\ldots)\la \Ts}{(\vt\in \cs) \cdot  \Hall_\tY(y_1,\ldots,y_m)}  - 1/3p\\
	    &\ge  \Hall(\tY) -   1/3p  - 1/3p\\
	    & >  \Hall(Y) - 1/p.
	   \end{align*} 
\end{proof}

\section*{Acknowledgements}
We thank Rosario Gennaro, Oded Goldreich and Muthuramakrishnan Venkitasubramaniam  for very helpful discussions.

\medskip

\bibliographystyle{abbrvnat}
\bibliography{crypto}

\appendix
\section{Maximal Accessible Entropy}\label{sec:MaxAE}
In this section we formally define the \emph{maximal (max)} accessible entropy  of a generator and provide basic observations and tools to work with this measure.  Working with accessible max-entropy raises some additional subtleties. In particular, to make  the manipulation defined in \cref{sec:manipulatingAE} applicable to this measure, we need to strengthen the notion of accessible entropy, so that it  takes into consideration  ``preprocessing randomness''.

\begin{definition}[Online block-generator,   preprocessing variant]
	Let $n$ be a security parameter, and let $\pp = \pp(n)$ and $m=m(n)$. An $m$-block  {\sf online} generator is a function $\Gs \colon \zo^\pp \times (\zo^{v})^{m+1} \mapsto  (\zs)^{m}$ for some $v =v(n)$, such that the \ith output block of $\Gs$ is a function of (only) its first $i+1$  input blocks. We denote the {\sf transcript} of $\Gs$ over random input by  $T_{\Gs}(1^n) = (Z,R_0,R_1,Y_1,\ldots,R_m,Y_m)$, for  $Z \ \zo^\pp$, $(R_0,R_1,\ldots,R_m) \getsr (\zo^{v})^{m+1}$ and $(Y_1,\ldots,Y_m) =\Gs(Z,R_0,R_1,\ldots,R_i)$.
\end{definition}

That is,  unlike the definition given in \cref{SHC:def:IE},  the generator's first block is a function of its  first \emph{two} random strings   (\ie $r_0$ and $r_1)$.  The role of the first string ($r_0$) is to allow the generator a (randomized) ``preprocessing stage'' before it starts computing  the output blocks, and the accessible   entropy of the generator  will be measured \wrt a random choice of this \emph{preprocessing randomness}. This preprocessing randomness becomes handy when bounding the accessible entropy of a generator constructed by manipulating (\eg repetition) of  another generator, as done in \cref{sec:manipulatingMaxAE}.\footnote{When considering \emph{nonuniform}  generators, as done in  \cite{HaitnerVadhan17}, there is no need for  the preprocessing randomness, since the generator can fix the best choice for this part of its random coins.}

\begin{definition}[Accessible sample-entropy, preprocessing variant]\label{def:AccessibleSampleEntropyMax}
	Let $n$ be a security parameter, and let $\Gs$ be an online $m=m(n)$-block online  generator. The {\sf accessible sample-entropy of $\vt =(z,r_0,r_1,y_1,\ldots,r_m,y_m)\in \Supp(Z,R_0,R_1,Y_1\ldots,R_m,Y_m) = T_{\Gs}(1^n)$} is defined by
	
	$$\Haccsam_{\Gs,n}(\vt) = \sum_{i=1}^{m} \Hall_{Y_i|Z,R_{<i}}(y_i|z,r_{<i}).$$
\end{definition}
\noindent 

That is,   the surprise  in $y_1$, \ie $\Hall_{Y_1|Z,R_{<1}}(y_1|z,r_{<1})$, is measured also \wrt the value of  $R_0$. This might  increase the sample-entropy of its output blocks when  a non-typical value  for $r_0$ is sampled.

The average accessible entropy of a generator with preprocessing is defined as in \cref{def:accessible-entropy}  \wrt  the above notion of sample-entropy. It is not hard to see that the two quantities are the same (for any generator).  For accessible max-entropy,  the preprocessing quantity defined next is more manipulation friendly, and we do not know that  is equivalent to its non-preprocessing variant.

\begin{definition}[Max accessible entropy,  preprocessing variant]\label{def:MaxAccessible-entropy}
	A block generator $\Gc$ has {\sf accessible max-entropy at most $k$} if for  every efficient $\Gc$-consistent, online generator $\Gs$ and all large enough $n$,
	$$\Pr_{\vt\getsr T_\Gs(1^n)} [\Haccsam_\Gs(\vt) > k] = \negl(n),$$
	for every such $\Gs$.
\end{definition}
We first note that the accessible max-entropy of a generator indeed bounds its accessible entropy.
\begin{lemma}\label{lem:MaxAvgAEAssym}
	Let $\Gc$ be an efficient block generator of accessible max-entropy $k$. Then its  accessible entropy is at most  $k(n) + 1/p(n)$, for any $p\in \poly$.
\end{lemma}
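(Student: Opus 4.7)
The plan is to reduce the statement directly to \cref{lem:MaxAvgAE}, which already translates a high-probability upper bound on $\Haccsam_\Gs$ into an expectation bound. First I would fix an arbitrary $p \in \poly$ and an arbitrary efficient $\Gc$-consistent online generator $\Gs$, and set $T = T_\Gs(1^n)$. By the accessible max-entropy hypothesis (\cref{def:MaxAccessible-entropy}), there is a negligible function $\eps(n)$ such that $\ppr{\vt \getsr T}{\Haccsam_\Gs(\vt) > k(n)} \le \eps(n)$ for all sufficiently large $n$.

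Next I would invoke \cref{lem:MaxAvgAE}, using its \emph{tighter} conclusion
$$\eex{\vt \getsr T}{\Haccsam_\Gs(\vt)} \le (1-\eps(n))\,k(n) + \eps(n)\cdot(m(n)\,\ell(n) - \log(1/\eps(n))),$$
where $m(n)$ and $\ell(n)$ denote the number of blocks and maximal block length of $\Gs$, respectively. Since $\Gs$ is efficient, both $m(n)$ and $\ell(n)$ are polynomially bounded, so $\eps(n)\cdot m(n)\,\ell(n) = \negl(n)$. The remaining terms $-\eps(n)\,k(n)$ and $-\eps(n)\log(1/\eps(n))$ are nonpositive (the first because $k(n) \ge 0$, the second because $\eps(n) \le 1$) and so can only help. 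Combining these observations yields
$$\eex{\vt \getsr T}{\Haccsam_\Gs(\vt)} \le k(n) + \negl(n) \le k(n) + 1/p(n)$$
for all sufficiently large $n$. Since $\Gs$ was an arbitrary efficient $\Gc$-consistent online generator, \cref{def:accessible-entropy} gives the desired bound on the accessible entropy of $\Gc$.

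The only subtle point — and the reason the $+1/p(n)$ slack suffices — is to use the first (tighter) form of the conclusion of \cref{lem:MaxAvgAE} rather than the loose form $k + \eps m\ell + 1$. The loose form would yield only a bound of $k(n) + O(1)$, which is not enough; the tighter form effectively replaces the constant $+1$ with $-\eps\log(1/\eps) - \eps k$, both of which vanish for polynomially bounded parameters and negligible $\eps$. Beyond this accounting, the argument is immediate and I do not anticipate any genuine obstacle.
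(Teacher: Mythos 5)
Your proposal is correct and takes essentially the same route as the paper: both reduce to \cref{lem:MaxAvgAE} (equivalently \cref{prop:HighAEContributionSeq}) and rely on its tighter conclusion $(1-\eps)k + \eps(m\ell - \log 1/\eps)$ rather than the loose $k + \eps m\ell + 1$ form, which is exactly the point you flag. The only cosmetic difference is that you keep $\eps(n)$ as the negligible function from the hypothesis, while the paper instantiates $\eps(n) = 1/(p(n)\ell(n)m(n)n)$ and observes the negligible probability eventually falls below it; both readings give $\eps(n) m(n)\ell(n) \le 1/p(n)$ for large $n$.
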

\begin{proof}
	Fix an efficient $\Gc$-consistent, online generator $\Gs$  and   $p\in \poly$.  Let $m$ and $\ell$ be the block complexity and maximal block length of $\Gc$. By assumption, for large enough $n$ it holds that
	$$\Pr_{\vt\getsr T_\Gs(1^n)} [\Haccsam_\Gs(\vt) > k(n)] \le \eps(n)$$
	for $\eps(n) = 1/p(n)\ell(n) m(n)n$. Since $\eps(  m \ell - \log 1/\eps) \le 1/p(n)$, a similar proof to that given in \cref{lem:MaxAvgAE} yields that $\ex{\Haccsam_{\Gs}(T_{\Gs}(1^n))} \le k(n) + 1/p(n)$.
\end{proof}

The following theorem is the accessible max-entropy entropy variant of \cref{thm:AEGfromOWF}, stating  that the accessible max-entropy of the one-way function generator described in \cref{ctr:AEG} is $n-\omega(\log n)$.  
\begin{theorem}[Max Inaccessible entropy generators from  one-way functions]\label{thm:MaxAEGfromOWF}
	If  $f\colon\zn \mapsto \zn$ is  one-way, then the efficient block-generator  $\Gc=\Gc^f$ defined in \cref{ctr:AEG} has accessible {\sf max}-entropy $n-\omega(\log n)$.
\end{theorem}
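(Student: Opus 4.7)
The plan is to strengthen the proof of \cref{thm:AEGfromOWF} from an expected to a high-probability bound on $\Haccsam_\Gs$. Assume for contradiction there exist an efficient $\Gc$-consistent online generator $\Gs$, a constant $c > 0$, and a polynomial $q$ such that $\pr{\Haccsam_\Gs(\tT) > n - c\log n} \ge 1/q(n)$ for infinitely many $n$, where $\tT = T_\Gs(1^n)$. Consider the same inverter $\Inv$ as in \cref{alg:Inv}, with the rewinding bound set to a sufficiently large polynomial $T = T(n)$ (to be determined). The first step is to analyze its unbounded variant $\Inv^*$ via a pointwise probability-ratio bound rather than the KL-divergence computation of \cref{claim:AEGfromOWF}.

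Unrolling the definitions, for any transcript $\vt = (r_1, y_1, \ldots, r_{m+1}, y_{m+1})$ consistent with $\Gs$,
\[
\frac{\pr{\hT = \vt}}{\pr{\tT = \vt}} = \frac{|f^{-1}(y_{\le m})|}{2^n} \cdot 2^{\sum_{i \le m} \Hall_{Y_i|R_{<i}}(y_i|r_{<i})} = \frac{|f^{-1}(y_{\le m})|}{2^n}\cdot 2^{\Haccsam_\Gs(\vt) - \Hall_{Y_{m+1}|R_{\le m}}(y_{m+1}|r_{\le m})},
\]
where $\hT$ denotes the transcript of $\Inv^*(f(U_n))$. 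The $\Gc$-consistency of $\Gs$ forces $\Supp(Y_{m+1} \mid R_{\le m} = r_{\le m}) \subseteq f^{-1}(y_{\le m})$, so by \cref{prop:ShanonoToSample} the event
\[
E_{\mathrm{tail}} = \bigl\{\Hall_{Y_{m+1}|R_{\le m}}(y_{m+1}|r_{\le m}) \le \log(2q(n)) + \log|f^{-1}(y_{\le m})|\bigr\}
\]
has $\tT$-probability at least $1 - 1/(2q(n))$. Letting $E = \{\Haccsam_\Gs > n - c\log n\}$, we get $\pr{\tT \in E \cap E_{\mathrm{tail}}} \ge 1/(2q(n))$, and on this event the ratio above is at least $1/(2q(n)\,n^c)$; summing yields $\pr{\hT \in E \cap E_{\mathrm{tail}}} \ge 1/(4q(n)^2\,n^c)$. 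Since every transcript in $\Supp(\hT)$ is $\Gc$-consistent, its final block is a preimage of the first $m$ blocks, so $\Inv^*$ inverts $f$ with non-negligible probability.

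The main obstacle is replacing $\Inv^*$ by the polynomial-time $\Inv$. Following \cref{claim:AEGfromOWF} I would introduce the event $e = \{\forall i \le m : \pr{\tY_i = y_i \mid \tR_{<i} = r_{<i}} \ge \alpha\}$ with $\alpha = 1/\poly(n)$ chosen so that $\pr{\overline e(\tT)} \le 1/(8q(n)^2\,n^c)$; this is achievable because each of the first $m = n/\log n$ blocks of $\Gc$ takes at most $n$ values, giving $\pr{p_i < \alpha \mid r_{<i}} \le n\alpha$ per round and hence $\pr{\overline e} \le mn\alpha$ by a union bound. Choosing $T$ polynomially large in $1/\alpha$ makes the per-round abort probability of bounded $\Inv$ negligible, so bounded $\Inv$ matches $\Inv^*$ on the event $e$ up to $\negl(n)$. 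Combining with the previous paragraph gives $\pr{\Inv \text{ inverts } f(U_n)} = \Omega(1/\poly(n))$, contradicting the one-wayness of $f$.
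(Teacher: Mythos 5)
Your proposal is correct and follows essentially the same route as the paper's proof of \cref{thm:MaxAEGfromOWF}: the same inverter, a pointwise computation of the likelihood ratio $\pr{\hT=\vt}/\pr{\tT=\vt}$ (matching the paper's identity $\ppr\hT\vt = \ppr\tT\vt\cdot\ppr{f(U_n)}{y_{<m}}\cdot\ppr{\tY_m|\tR_{<m}}{y_m|r_{<m}}\cdot 2^{\Haccsam_\Gs(\vt)}$), a ``good set'' of transcripts on which the ratio is polynomially bounded below, a transfer of non-negligible $\tT$-mass to $\hT$-mass, and a final step relating the bounded inverter to the unbounded one. The paper packages your three events $E$, $E_{\mathrm{tail}}$, $e$ as a single set $\cs$ with three defining properties, but that is cosmetic.

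There is, however, one point where your formulation is the more careful one. For the step from $\Inv^*$ to the bounded $\Inv$, the quantity that governs the rejection-sampling failure probability in round $i$ is $\pr{\tY_i = y_i \mid \tR_{<i} = r_{<i}}$ --- the inverter holds the accepted coins $r_{<i}$ fixed and resamples $r_i$ until the block matches --- and this is precisely what your event $e$ lower-bounds. The paper's third property of $\cs$ instead bounds $\Hsam_{\tY_i\mid \tY_{<i}}(y_i\mid y_{<i})$, i.e., the conditional probability given the previous \emph{outputs} rather than the previous \emph{coins}. These are not comparable in general (a fixed $r_{<i}$ may induce a very different conditional law for $\tY_i$ than the average over all coins consistent with $y_{<i}$), so the paper's stated property does not by itself support its claimed bound $\pr{\hT'=\vt}\ge \pr{\hT=\vt}\bigl(1-(m-1)(1-\tfrac{\eps}{4n^2})^{n^3/\eps}\bigr)$ for $\vt\in\cs$. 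Your choice of conditioning on $\tR_{<i}$ is the one the argument actually needs, and since $\tY_i\mid \tR_{<i}$ is also supported on at most $n$ values (by $\Gc$-consistency), the same union bound gives the high-probability guarantee with no extra cost.
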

By \cref{lem:MaxAvgAEAssym},  \cref{thm:MaxAEGfromOWF} implies \cref{thm:AEGfromOWF}, but working with max-entropy its proof is significantly more complicated.
\begin{proof}[Proof of \cref{thm:MaxAEGfromOWF}]	
Suppose \cref{thm:MaxAEGfromOWF}  does not hold, and let $\Gs$  be  an efficient,   $\Gc$-consistent online block-generator  such that
	\begin{align}
	\ppr{\vt\getsr \vT_\Gs(1^n)}{\Haccsam_\Gs(\vt) > n -  c \cdot \log n} > \eps(n)
	\end{align}
	for some $\eps(n) = 1/\poly(n)$, and infinitely many $n$'s. In the following, we fix $n\in \N$ for which the above equation holds,  and omit it from the notation when its value is clear from the context. Let $m = n/\log n+1$ and let $\vv$ be a bound on the number of coins used by $\Gs$ in each round. The inverter $\Inv$ for $f$ is defined as follows:
	\begin{algorithm}[Inverter $\Inv$ for $f$ from the accessible max-entropy generator $\Gs$]\label{alg:MaxInv}~
		\begin{description}
			\item[Input:] $z\in \zn$ 
			\item[Operation:] 
		\end{description}
		\begin{enumerate}

			\item For $i = 1$ to $m-1$: \label{alg:MaxInv:step:2}
			\begin{enumerate}
				\item Sample $r_i \getsr \zo^\vv$ and let $y_i = \Gs(r_1,\ldots,r_i)_i$.
				\item If $y_{i\cdot \log n +1,\ldots,(i+1)\cdot \log n} = z_i$, move to next value of $i$.
				
				\item Abort after  $n^3/\eps$ failed attempts for sampling  a good $r_i$.\label{alg:MaxInv:step:2:c}
			\end{enumerate}
			\item Sample $r_m \getsr \zo^\vv$ and output $\Gs(r_1,\ldots,r_m)_m$.
		\end{enumerate}
		
	\end{algorithm}
\remove{	Namely, $\Inv(y)$ does the only natural thing that  can be done with $\Gs$; it tries to make, via rewinding,  $\Gs$'s first $n$ output blocks equal to $y$, knowing that if this happens then, since  $\Gs$ is $\Gc$-consistent, $\Gs$'s $m$-th output block is a preimage of $y$.}
	
	It is clear that  $\Inv$ runs in polynomial time, so we will finish the proof by showing that 
	$$\ppr{y\getsr f(U_n)}{\Inv(y) \in f^{-1}(y)}  \ge \eps^2/16n.$$

	We prove the above by relating the transcript distribution  induced  by  the standalone execution  of $\Gs(1^n)$ to that induced by  the  execution of $\Gs$ embedded (emulated) in $\Inv(f(U_n))$.  In more detail, we show that  high-accessible-entropy transcripts \wrt  the standalone execution of $\Gc$, \ie $\Haccsam_\Gs(\vt) > n -  c \cdot \log n$, are produced with not much smaller  probability also in the embedded execution. Since whenever $\Inv$ does not abort it  inverts $y$, it follows that the success probability of $\Inv$ is lower bounded by the probability that  $\Gs(1^n)$  outputs a  high-accessible-entropy transcript, and thus is non-negligible.
	
	For intuition about why  the above statement about   high-accessible-entropy transcripts is true, consider the case of a one-way  \emph{permutation} $f$. By definition, high-accessible-entropy transcripts in the standalone execution of $\Gs$ are produced with probability  at most $\poly(n)/2^n$. On the other hand, the  probability that a ``typical" transcript is produced by the emulated execution of $\Gs$ is about $2^{-n}$ : the probability that a random output of $f$ equals the transcript's first $n$ output blocks.  
	
	We now formally prove the above  for arbitrary one-way functions.
	
	\paragraph{Standalone execution  $\Gs(1^n)$.}
	Let $\tT = \vT_\Gs$, and recall that $\tT =  (\tR_1,\tY_1,\ldots,\tR_m,\tY_m)$ is associated  with a random execution of $\Gs$ on security parameter $n$ by
	\begin{itemize}
		\item $\tR_i$ -- the random coins of $\Gs$ in the \ith round, and
		\item $\tY_i$ -- $\Gs$'s \ith output block.
	\end{itemize}
	Recall that for $\vt = (r_1,y_1,\ldots,r_m,y_m) \in \Supp(\tT)$,  we have defined 
	\begin{align*}
	\Haccsam_{\Gs}(\vt) \eqdef \sum_{i\in [m]}  \Hall_{Y_j|R_{<j}}(y_j|r_{<j}).
	\end{align*}
	Compute
	\begin{align}\label{eq:TildeT}
	\ppr{\tT}{\vt}&= \prod_{i=1}^m \ppr{\tY_i|\tR_{<i}}{y_i |r_{<i}}  \cdot \ppr{\tR_i| 
		\tR_{<i},\tY_i}{ r_i|r_{<i},y_i}\\
	&= 2^{- \sum_{i=1}^m \HSh_{\tY_i | \tR_{<i}} (y_i | r_{<i})} \cdot \prod_{i=1}^m \ppr{\tR_i| 
		\tR_{<i},\tY_i}{ r_i|r_{<i},y_i}\nonumber \\
	&= 2^{-\Haccsam_{\Gs}(\vt)} \cdot R(\vt)\nonumber
	\end{align}
	for 
	\begin{align}
	R(\vt) \eqdef\prod_{i=1}^m  \ppr{\tR_i| \tR_{<i},\tY_i}{ r_i|r_{<i},y_i}
	\end{align}

	\paragraph{Execution embedded in $\InvG(f(U_n))$.}
	Let $\hT = (\hR_1,\hY_1,\ldots,\hR_m,\hY_m)$ denote the value of $\Gs$'s coins and  output blocks, sampled in \Stepref{alg:MaxInv:step:2}  of a random execution of the \emph{unbounded} version of $\Inv$ (\ie \Stepref{alg:MaxInv:step:2:c}  is removed) on input $Z = (Z_1,\ldots, Z_{m-1}) = f(U_n)$.  (This unboundedness change is only an intermediate step in the proof that does not  significantly change  the inversion probability of $\Inv$, as shown below.) 
	
	Since $\Gs$ is $\Gc$-consistent, it holds that $(y_1,\ldots,y_{m-1})\in \Supp(f(U_n))$ for every $(r_1,y_1,\ldots,r_m,y_m)\in \Supp(\tT)$. It follows that every $\vt \in \Supp(\tT)$ can be ``produced'' by the unbounded version of $\Inv$, and therefore $\Supp(\tT) \subseteq \Supp(\hT)$. For $\vt = (r_1,y_1,\ldots,r_m,y_m) \in  \Supp(\tT)$,  compute
	\begin{align}\label{eq:HatT}
	\ppr\hT\vt&=  \ppr{\hR_0}{r_0}\cdot \prod_{i=1}^m \ppr{\hY_i | \hR_{<i}}{y_i | r_{<i}} \cdot \ppr{\hR_i| \hR_{<i},\hY_i} {r_i | r_{<i},y_i}\\
	&=  \left(\prod_{i=1}^{m-1} \ppr{Z_i |\hY_{<i}}{y_i | y_{<i}} \cdot \ppr{\hY_i| \hR_{<i},Z_i}{y_i |r_{<i}, y_i}\right )\cdot  \ppr{\hY_{m}| \hR_{<m}}{y_m | r_{<m} } \nonumber\cdot \prod_{i=0}^m \ppr{\hR_i | \hR_{<i},\hY_i}{r_i|r_{<i},y_i}\nonumber\\
	&=  \left(\prod_{i=1}^{m-1} \ppr{Z_i |\hY_{<i}}{y_i | y_{<i}} \cdot 1\right )\cdot  \ppr{\hY_{m}| \hR_{<m}}{y_m | r_{<m} } \cdot \prod_{i=0}^m \ppr{\hR_i | \hR_{<i},\hY_i}{r_i|r_{<i},y_i}\label{eq:HatT:0}\\
	&=  \ppr{f(U_n)}{y_{<m}} \cdot  \ppr{\hY_{m}| \hR_{<m}}{y_m | r_{<m} } \cdot \prod_{i=0}^m \ppr{\hR_i | \hR_{<i},\hY_i}{r_i|r_{<i},y_i}\nonumber\\
	&=  \ppr{f(U_n)}{y_{<m}} \cdot  \ppr{\tY_{m}| \tR_{<m}}{y_m | r_{<m} } \cdot \prod_{i=0}^m \ppr{\hR_i | \hR_{<i},\hY_i}{r_i|r_{<i},y_i}\label{eq:HatT:1}\\
	&=  \ppr{f(U_n)}{y_{<m}}\cdot  \ppr{\tY_{m}| \tR_{<m}}{y_m | r_{<m} } \cdot \prod_{i=0}^m \ppr{\tR_i | \tR_{<i},\tY_i}{r_i|r_{<i},y_i} \label{eq:HatT:2}\\
	&= \ppr{f(U_n)}{y_{<m}} \cdot \ppr{\tY_{m} | \tR_{<m}}{y_m | r_{<m} } \cdot R(\vt)\nonumber,
	\end{align}
	where again, we let $\hY_0$ and $y_0$ stand for the empty strings.
	
	\cref{eq:HatT:0} holds since $\vt \in \Supp(\tT)$ and $\Inv$ is unbounded.   \cref{eq:HatT:1} holds since in both $\tT$ and in $\hT$, the last output block has the same distribution conditioned on all but the last randomness block.  \cref{eq:HatT:2}  holds since when conditioning on the value of the \ith output block, the randomness used to create this block is distributed the same  in $\tT$ and in $\hT$.   
	
	\paragraph{Relating the two distributions.}
	Combining \cref{eq:TildeT,eq:HatT} yields that,   for   $\vt=(r_1,y_1,\ldots,r_m,y_m)\in \Supp(\tT)$, it holds that 
	\begin{align}\label{eq:HatTandTildeS}
	\ppr\hT\vt  =  \ppr\tT\vt \cdot \left(\ppr{f(U_n)}{ y_{<m}} \cdot \ppr{\tY_{m} | \tR_{<m}}{y_m |r_{<m} } \cdot 2^{\Haccsam_{\Gs}(\vt)}\right)
	\end{align}
	In particular, if  $\Haccsam_{\Gs}(\vt) \geq n - c \log n$, then
	\begin{align}\label{eq:HatTandTilde}
	\ppr\hT\vt &\ge \ppr\tT\vt  \cdot \frac{2^n\cdot \ppr{f(U_n)}{y_{<m}} }{n^c} \cdot \ppr{\tY_{m} | \tR_{<m}}{ y_m| r_{<m} }\\
	&= \ppr\tT\vt  \cdot \frac{\size{f^{-1}(y_{<m})}  }{n^c} \cdot \ppr{\tY_{m}| \tR_{<m}}{y_m | r_{<m} }\nonumber.
	\end{align}
	If it is also the case that  $\Hsam_{\tY_m | \tR_{<m}}(y_m | r_{<m}) \leq \log \size{f^{-1}(y_{<m})} + k$ for some $k>0$, then
	\begin{align}\label{eq:HatTandTildeTT}
	\ppr\hT\vt \ge \ppr\tT\vt  \cdot  \frac{\size{f^{-1}(y_{<m})}}{n^c} \cdot \frac{2^{-k}}{\size{f^{-1}(y      _{<m})}} = \frac{\ppr\tT\vt}{ 2^{k} n^c}
	\end{align}

	\paragraph{Lower bounding the inversion probability of $\Inv$.}
	We conclude the proof by  showing  that  \cref{eq:HatTandTildeTT} implies the existence of  a large set of transcripts that (the bounded version of) $\Inv$ performs well upon. 
	
	Let $\cs$ denote the set of transcripts $\vt=(r_1,y_1,\ldots,r_m,y_m)\in \Supp(\tT)$ with
	\begin{enumerate}
		\item $\Haccsam_{\Gs}(\vt) \geq n - c \log n$, 
		\item  $\Hsam_{\tY_m | \tR_{<m}}(y_m | r_{<m}) \leq \log \size{f^{-1}(y_{<m})} +  \log (4/\eps)$, and
		\item  $\Hsam_{\tY_i | \tY_{<i}}(y_i | y_{<i}) \leq \log (4n^2/\eps)$ for all $i\in [m-1]$.
	\end{enumerate}
	The first two properties will allow us to use \cref{eq:HatTandTilde,eq:HatTandTildeTT} to argue that, if   $\cs$ happens  with significant probability \wrt $\tT$, then  this holds also \wrt $\hT$. The last property will allow us to show that this  also holds \wrt  the bounded version of $\Inv$. We start by  showing that $\cs$ happens with significant  probability \wrt $\tT$, then   show that this holds also \wrt $\hT$, and finally  use it  to lowerbound the success probability of $\Inv$.

	By \cref{prop:ShanonoToSample}, 
	\begin{align}\label{eq:SitemThree}
	\ppr{(r_1,y_1,\ldots,r_m,y_m)  \getsr \tT}{\Hsam_{\tY_m | \tR_{<m}}(y_m | r_{<m}) > \log\size{f^{-1}(y_{<m})} + k }< 2^{-k}
	\end{align}
	for  any $k>0$. Since $\size{\Supp(\tY_i)}=n$ for all $i\in [m-1]$,  it follows that 
	\begin{align}\label{eq:Sitemtwo}
	\ppr{(y_1,\ldots,y_m) \getsr (\tY_1,\ldots,\tY_m)}{\exists i\in [m-1] \colon \Hsam_{\tY_i |
			\tY_{<i}}(y_i | y_{<i}) >  v} <  (m-1)\cdot  n  \cdot 2^{-v}
	\end{align}
	for  any $v>0$.
	
	Applying  \cref{eq:SitemThree,eq:Sitemtwo} with $k= \log (4/\eps)$ and $v= \log(4mn/\eps)$, respectively, and recalling that, by assumption,   $\ppr{\vt \getsr \tT}{\Haccsam_{\Gs}(\vt) \geq n - c\log n} \ge \eps$, yields that  
	\begin{align}\label{eq:SisLargeTilda}
	\Pr_\tT[\cs] \geq \eps - \frac{\eps}4 -  \frac{\eps}4  = \eps/2
	\end{align}
	By \cref{eq:HatTandTildeTT} and the first two properties of $\cs$, we have that       
	\begin{align}\label{eq:SisLargeHat}
	\Pr_\hT[\cs] &\geq \frac{\eps}{4n^c} \cdot \Pr_\tT[\cs] \ge \frac{\eps^2}{8n^c}
	\end{align}
	Finally, let $\hT'$ denote the final value of $\Gs$'s coins and output blocks,  induced by the \emph{bounded} version of $\Inv$ (set to $\bot$ if $\Inv$ aborts). The third property of $\cs$ yields that   
	\begin{align}
	\ppr{\hT'}{\vt } \geq \ppr{\hT}{\vt }  \cdot \left(1 - (m-1)\cdot (1- \tfrac{\eps}{4n^2})^{n^3/\eps}\right) \geq \ppr{\hT}{\vt } \cdot (1 - O(m\cdot 2^{-n})) \geq \ppr{\hT}{\vt }  /2
	\end{align}
	for every $\vt \in \cs$. We conclude that 
	\begin{align*}
	\ppr{z\getsr f(U_n)}{\Inv(z)\in f^{-1}(z)} &= \ppr{z\getsr f(U_n)}{\Inv(z) \mbox{ does not abort}}\\
	&\ge   \ppr{\hT'}{\cs }   \\
	& \ge  \frac 12 \cdot \Pr_\hT[\cs]\\
	&   \ge \frac{\eps^2}{16n^c}.
	\end{align*}
\end{proof}

\subsection{Manipulating  Accessible Max-Entropy}\label{sec:manipulatingMaxAE}
In this section we  analyze the effect of the  tools introduced in \cref{sec:manipulatingAE}  on the accessible max-entropy of the generator (rather than on average accessible entropy).  The following statements and proofs are similar to those in \cref{sec:manipulatingAE}, but are somewhat more complicated due to the more complicated nature of accessible max-entropy, and  the preprocessing string  of the online generator is critical to these  proofs.

\subsubsection{Truncated Sequential Repetition}

\begin{lemma}\label{lem:EqRealEntMaxEntropy}
	For security parameter $n$, let $m =m(n)$ be a power of $2$, let $s=s(n)$, let $\Gc$ be an efficient $m$-block generator over $\zo^{s}$, and let $\ee = \ee(n)$ be a polynomially computable and bounded integer function. Then $\Gbl$ defined according to \cref{def:EqRealEnt} is an efficient\footnote{Since $m$ is a power of $2$, standard techniques can be applied to change the input domain of $\Gbl$ to  $\zo^{s'}$ for some polynomial-bounded and polynomial-time computable $s'$, making it an efficient  block-generator  according to \cref{def:blockGenrator}.} $\left((\ee-1)\cdot m\right)$-block generator  such that the following holds:  if $\Gc$ has  accessible max-entropy at most $\kmax = \kmax(n)$, then $\Gbl$ has  accessible max-entropy at most 
		$$\kmaxp \eqdef (\ee-2)  \cdot \kmax + 2\cdot \Hmax(\Gc(U_s)) +  \log(m)  +d$$  
		 for any $d= d(n) \in \omega(\log n)$.
	\end{lemma}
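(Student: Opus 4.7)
The plan is to follow the template of the average-case proof of \cref{lem:EqRealEnt}, but to carry the argument out at the level of sample-entropies and to adapt the construction of the adversary $\Gs$ by stashing enough of $\Gbs$'s simulated coins inside $\Gs$'s preprocessing string $R_0$ to obtain a clean block-by-block correspondence of sample-entropies. Assume for contradiction that there is an efficient $\Gbl$-consistent online generator $\Gbs$ and a non-negligible $\eps$ with $\ppr{\vt \getsr \Tbs}{\Haccsam_\Gbs(\vt) > \kmaxp} \ge \eps(n)$ for infinitely many $n$. For a transcript $\vt$ of $\Gbs$ let $W_v(\vt)$ denote the sum of the per-block sample-entropies (in the sense of $\Haccsam_\Gbs$) over the window of $\Gbs$-output indices corresponding to the $v$-th copy of $\Gc$ in $\Gbl$ (a full, non-truncated copy for $v \in \{2,\ldots,\ee-1\}$). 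Call the assumed bad set $\cs_0$ and peel off two atypical subsets. Let $\cs_1$ be the transcripts whose sample-entropy contribution from the truncation index $J$ embedded in $\Gbs$'s first block exceeds $\log m + d/3$: since $J$ takes at most $m$ values, \cref{prop:ShanonoToSampleSeqAdv} applied to the sequence $(\vZ, J)$ yields $\Pr[\cs_1] \le 2^{-d/3}$. Let $\cs_2$ be the transcripts whose total sample-entropy contribution from the (at most $m$) blocks coming from the two truncated copies exceeds $2\Hmax(\Gc(U_s)) + d/3$: conditioning on each $J=j$ makes the truncated-block indices deterministic with joint max-entropy at most $2\Hmax(\Gc(U_s))$, and \cref{prop:ShanonoToSampleSeqAdv} then gives $\Pr[\cs_2] \le 2^{-d/3}$. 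Since $d = \omega(\log n)$, the set $\cs^\ast \eqdef \cs_0 \setminus (\cs_1 \cup \cs_2)$ still has probability at least $\eps/2$ for large $n$, and every $\vt \in \cs^\ast$ satisfies $\sum_{v=2}^{\ee-1} W_v(\vt) > (\ee-2)\kmax$, so some $v^\ast(\vt) \in \{2,\ldots,\ee-1\}$ has $W_{v^\ast(\vt)}(\vt) > \kmax$.

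I will then define an efficient $\Gc$-consistent online generator $\Gs$ that aims to extract such a window. In its preprocessing string $R_0$, $\Gs$ samples $V$ uniformly from $\{2,\ldots,\ee-1\}$, samples $\vZ_{-V}$ uniformly, and samples a buffer $\vR^\ast_{\mathrm{pre}}$ of simulated $\Gbs$-coins long enough to drive $\Gbs$ for $(\ee-1)m$ rounds on public parameter $(Z, \vZ_{-V})$. Given $R_0$ and the received public parameter $Z$, one can already read off $\Gbs$'s first block, extract $J$, compute $F \eqdef (V-1)m + 2 - J$, and identify the first $F-1$ rounds of $\Gbs$'s randomness. In round $i \in [m]$, $\Gs$ interprets its round-$i$ randomness $R_i$ as $\Gbs$'s randomness for round $F+i-1$, runs $\Gbs$'s $(F+i-1)$-th step, and outputs the resulting block. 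Because $V$ always indexes a non-truncated copy, the $m$ output blocks of $\Gs$ form $\Gc(Z,X)$ for some seed $X$, so $\Gs$ is $\Gc$-consistent. The key technical point is that since $F$ and the first $F-1$ rounds of $\Gbs$'s randomness are functions of $R_0$ alone, the conditional distribution of $\Gs$'s $i$-th output block given $(Z, R_{<i})$ agrees with the conditional distribution of $\Gbs$'s $(F+i-1)$-th output block given $(\vZ, R^\Gbs_{<F+i-1})$, so their sample-entropies are equal pointwise and $\Haccsam_\Gs(T_\Gs) = W_V(\vt)$ on the embedded $\Gbs$-transcript $\vt$.

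With probability at least $\Pr[\cs^\ast] \cdot \tfrac{1}{\ee-2} \ge \tfrac{\eps}{2(\ee-2)}$ (namely, the embedded $\Gbs$-transcript lies in $\cs^\ast$ and the uniformly chosen $V$ happens to equal $v^\ast(\vt)$), the above identity then forces $\Haccsam_\Gs(T_\Gs) > \kmax$, a non-negligible event contradicting the accessible max-entropy bound on $\Gc$. The main obstacle is the sample-entropy analysis of $\Gs$'s first block: if one simply imitated the average-case construction of \cref{lem:EqRealEnt} by sampling $V$ and $\vZ_{-V}$ in $R_0$ but leaving $\Gbs$'s pre-randomness to $\Gs$'s round-one randomness, $\Hall_{Y_1^\Gs \mid Z, R_0}$ would be an average over the random offset $F$ rather than matching the block-$F$ sample-entropy of the specific transcript $\vt$, and no clean pointwise bound recovers what is needed. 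Forcing $F$ to be a deterministic function of $R_0$ by stashing $\vR^\ast_{\mathrm{pre}}$ in the preprocessing string is exactly the device that produces the required pointwise identity; the rest is chain-rule bookkeeping together with the max-entropy bound $2\Hmax(\Gc(U_s))$ for controlling the two truncated copies.
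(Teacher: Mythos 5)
Your proof is correct and takes essentially the same route as the paper's: peel off the atypical contribution from the block index $J$ and from the two truncated copies via the sample-entropy tail bound, then build a $\Gc$-consistent $\Gs$ that embeds a run of $\Gbs$ and argue a pointwise identity between $\Gs$'s per-block sample-entropies and the corresponding window of $\Gbs$'s. The one place you are more explicit than the paper is in spelling out that the offset $F$ and all of $\Gbs$'s coins through round $F-1$ must be folded into $\Gs$'s \emph{preprocessing} randomness $R_0$ so that the first-block sample-entropy matches pointwise rather than only on average; the paper relies on the same device (its transcript parsing $R_0'=(R_0,F,\vZ_{-F})$ makes $F$ a function of the preprocessing) but leaves the mechanism largely implicit.
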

Roughly, each of the $(\ee-2)$ non-truncated executions of $\Gc$ embedded in $\Gbl$ contributes its accessible entropy to the overall accessible entropy of  $\Gbl$. As in the average accessible entropy case, we pay the max-entropy of  the two truncated executions of $\Gc$ embedded in $\Gbl$.  Working with the less friendly measure of accessible max-entropy costs us an additional super-logarithmic loss  $d$, a price we do not pay when working with  average accessible entropy.
 
\begin{proof}
	To avoid notational clutter, let $\Gb = \Gbl$. Let $\Gbs$ be  an efficient  $\Gb \ (= \Gbl)$-consistent generator, and let
	\begin{align}
	\eps = \eps(n) =  \ppr{\vt \getsr  \Tbs}{\Haccsam_{\Gbs}(\vt) > \kmaxp}
	\end{align}
	for $\Tbs = T_{\Gbs}(1^n)$.	Our goal is to show that $\eps$ is negligible in $n$. We do that by showing that a random sub-transcript of  $\Tbs$  contributes more than  $\kmax$ bits of  accessible entropy if the overall accessible entropy of $\Tbs$ is more than $\kmaxp$. We then use this observation to construct a cheating generator for $\Gc$ that achieves accessible entropy greater than $\kmax$  with probability that is negligibly close to $\eps$.

	Let $ (\vZ,\tRR_0,\tRR_1,\tYY_1,\ldots,\tRR_\mt,\tYY_\mt) = \Tbs$ and let $J$ be the first part of  $\tYY_1$ (recall that $\tYY_1$ is of the form $(j,\cdot)$). Fix $j\in [m]$, and let   $(\vZ, \tRR_0^j,\tRR_1^j,\tYY_1^j,\ldots,\tRR_\mt^j,\tYY^j_\mt) = \Tbs^j = \Tbs|_{J=j}$. Let $\I = \I(j)$ be the indices of the output blocks coming from the truncated executions of $\Gc$ in $\Gb$ (\ie $\I = \set{1,\ldots,m+1-j} \cup \set {\mt+2-j,\ldots,\mt}$). 
	
	Our first step  is to show that these blocks do not contribute much more entropy than  the max-entropy of $\Gc(U_n)$. Indeed, by \cref{prop:ShanonoToSampleSeq}, letting $\vX=(\vZ,\tRR_0^j,\tYY_1^j,\tRR_1^j,\ldots,\tYY_\mt^j,\tRR_\mt^j)$ and   $\cJ$  being the indices of the  blocks of $\I$ in $\vX$, it holds that 
	\begin{align}\label{EqRealEntMax:1}
	\ppr{\vt= (\vz,r_0,r_1,y_1,\ldots,r_{\mt},y_\mt) \getsr  \Tbs^j}{\sum_{i\in \I} \Hall_{\tYY_i^j|\vZ,\tRR_{<i}^j}(y_i|\vz,r_{<i}) > 2\cdot \Hmax(\Gc(U_s)) + d/2} \le 2\cdot 2^{-d/2} = \negl(n)
	\end{align}
	Namely, with save but negligible probability, the blocks that relate to the truncated executions of $\Gc$ in $\Gb$ do not contribute much more than their support size to the overall accessible entropy.

	Our next step is to remove the conditioning on $J=j$ (that we have introduced to have the indices of interest fixed, which enabled us to use \cref{prop:ShanonoToSampleSeq}). By  \cref{prop:ShanonoToSample}, for any $\vz \in \Supp(\vZ)$ and $r_0 \in \Supp(\tRR_0)$ it holds that 
	\begin{align}\label{EqRealEntMax:2}
	\ppr{j \getsr  J| \vZ = \vz,\tRR_0 = r_0}{\Hall_{J|\vZ,\tRR_0}(j|\vz,r_0) > \log(m)+ d/2               } \le 2^{-d/2} = \negl(n)
	\end{align}

	Let $(\vz,r_0,r_1,y_1 = (j,\cdot),\ldots,r_{\mt},y_\mt) \in \Supp(\Tbs)$. For $i >1$, it holds that $\Hall_{\tYY_i|\vZ,\tRR_{<i}}(y_i|\vz,r_{<i}) = \Hall_{\tYY_i^{j}|\vZ,\tRR_{<i}^j}(y_i|\vz,r_{<i})$, whereas  for $i=1$, it holds that $\Hall_{\tYY_1|\vZ,R_0}(y_1|\vz,r_0) =\Hall_{J|\vZ,\tRR_0}(j|\vz,r_0) + \Hall_{\tYY_1^{j}| \vZ,\tRR_0}(y_1|\vz,r_0)$. Hence, by \cref{EqRealEntMax:1,EqRealEntMax:2} it holds  that 
	\begin{align}
	&\ppr{\vt= (\vz,r_0,r_1,y_1,\ldots,r_{\mt},y_\mt) \getsr  \Tbs}{\sum_{i\in [\mt] \setminus \I(J)} \Hall_{\tYY_i|\vZ,\tRR_{<i}}(y_i|\vz,r_{<i})  > (\ee -2)\cdot \kmax } \ge \eps - \negl(n)
	\end{align}
	Let $\F(j) = \set{km + 2-j \colon k\in [\ee-2]}$, \ie the indices of the first blocks of the  non-truncated executions of $\Gc$ in $\Gb$, when the first block of $\Gb$ is $(j,\cdot)$. It follows that 
	\begin{align}\label{eq:EqAccEntMax}
	&\ppr{\vt= (\vz,r_0,r_1,y_1=(j,\cdot),\ldots,r_{\mt},y_\mt) \getsr  \Tbs; f\getsr\F(j)}{\sum_{i=f}^{f+m-1} \Hall_{\tYY_i|\vZ,\tRR_{<i}}(y_i|\vz,r_{<i})> \kmax} \ge (\eps - \negl(n))/w
	\end{align}

	\EEGen

	Let $(Z,R_0' = (R_0,F,\vZ_{-F}),R_1,Y_1,\ldots,R_m,Y_m) = T_\Gs$ be the transcript of $\Gs(Z)$. It is easy to verify that
	\begin{align}
	\Hall_{Y_i|Z,R_{<i}}(y_i|z,r'_{<i}) = \Hall_{\tYY_{f + i}|\vZ,\tRR_{<f + i}}(y_i|\vz = (z,\vz_{-f}),r'_{<i} = (r_0',r_1,\ldots,r_{<i}))
	\end{align}
	for every $(z,r_0' = (r_0,f,\vz_{-f}),r_1,y_1,\ldots,r_m,y_m) \in \Supp(T_\Gs)$ 	and $1 < i \le m$. Thus, \cref{eq:EqAccEntMax} yields  that
	\begin{align*}
	\ppr{\vt \getsr  T_\Gs} {\Haccsam_\Gs(\vt) > \kmax} > (\eps - \negl(n))/w.
	\end{align*}
	
	The assumption about the inaccessible entropy of $\Gc$ yields that $\eps$ is a negligible function of $n$, and the proof of the lemma follows.
\end{proof}

\subsubsection{Direct Product }
\begin{lemma}\label{lem:PRMaxAE}
	For security parameter $n$, let $m =m(n)$, let  $\vv= \vv(n)$ be polynomial-time   computable and bounded integer functions, and let $\Gc$ be an efficient\footnote{Again, standard techniques can be applied to change the input domain of $\G$ to  $\zo^{s'(n)}$ for some polynomial-bounded and polynomial-time computable $s'$, making it an efficient  block-generator  according to \cref{def:blockGenrator}.}  $m$-block generator. Then $\Gt$,  defined according to \cref{def:PR}, is an efficient $m$-block generator such that the following holds: if $\Gc$ has accessible max-entropy at most $\kmax = \kmax(n)$, then $\Gt$ has accessible max-entropy at most $\kmaxp(n) = \vv \cdot \kmax  + d\cdot m$, for any   $d= d(n) \in \omega(\log n)$.
\end{lemma}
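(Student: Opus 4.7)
The plan is to mirror the proof of \cref{lem:PR} (the average-entropy direct-product lemma) but adapted to accessible max-entropy, in the same way that the proof of \cref{lem:EqRealEntMaxEntropy} adapts \cref{lem:EqRealEnt}. Suppose for contradiction that there is an efficient $\Gt$-consistent online generator $\Gbs$ with
\[
\eps(n) \eqdef \ppr{\vt \getsr \Tbs}{\Haccsam_{\Gbs}(\vt) > \vv \kmax + dm}
\]
non-negligible on infinitely many $n$. Writing $(\vZ,\vR_0,\vR_1,\vY_1,\ldots,\vR_m,\vY_m) = \Tbs$ and decomposing each block $\vY_i = (Y_{i,1},\ldots,Y_{i,\vv})$ into its $\vv$ columns, we build a $\Gc$-consistent generator $\Gs$ that, in its preprocessing round, samples $J \getsr [\vv]$ and $\vZ_{-J}$ uniformly from $(\zo^\pp)^{\vv-1}$ and stores them in $\tR_0$, and thereafter in each round $i$ sends to $\Gbs$ the fresh randomness it receives as $\tR_i = R_i'$, locally running $\Gbs$ on $(\vZ = (Z,\vZ_{-J}), R_1',\ldots,R_i')$ and outputting only the $J$-th column of the resulting block. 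Here it is crucial that $J$ and $\vZ_{-J}$ live in the preprocessing string $\tR_0$, not in $\tR_1$, so that the $\log \vv$ bits needed to describe $J$ do not contribute to $\Gs$'s accessible sample-entropy; this is precisely why the max-entropy notion is formulated with a preprocessing block.

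By construction, for every transcript $\vt = (z,\tr_0 = (j,\vz_{-j}), r_1', y_1,\ldots,r_m',y_m)$ of $\Gs$ and every $i\in [m]$,
\[
\Hall_{\tY_i\mid Z,\tR_{\le 0},\tR_{<i}}(y_i\mid z,\tr_{\le 0},\tr_{<i}) = \Hall_{Y_{i,j}\mid \vZ,\vR_{<i}}(y_i\mid \vz,\vr_{<i}),
\]
so $\Haccsam_\Gs(\vt) = \sum_{i\in [m]} \Hall_{Y_{i,J}\mid \vZ,\vR_{<i}}(y_{i,J}\mid \vz,\vr_{<i})$. The next step, which replaces the chain rule used in the average-entropy argument by its sample-entropy analogue, is to apply \cref{prop:subaddativiy2} with $\eps' = 2^{-d/2}$ to the conditional distribution $(\vY_i\mid \vZ=\vz, \vR_{<i}=\vr_{<i})$ in each round. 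For a fixed $(\vz,\vr_{<i})$, with probability at least $1 - 2^{-d/2}$ over $\vy_i$,
\[
\Hall_{\vY_i\mid \vZ,\vR_{<i}}(\vy_i\mid \vz,\vr_{<i}) \le \sum_{j\in[\vv]} \Hall_{Y_{i,j}\mid \vZ,\vR_{<i}}(y_{i,j}\mid \vz,\vr_{<i}) + d/2.
\]
A union bound over the $m$ rounds gives that, with probability at least $1 - m\cdot 2^{-d/2}$ over $\Tbs$, the total accessible sample-entropy of $\Gbs$ is dominated by the sum of per-coordinate sample-entropies plus $m d /2$.

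Combining this with the assumption on $\eps$, we conclude that with probability at least $\eps - m\cdot 2^{-d/2}$ the quantity $\sum_{j\in[\vv]} \sum_{i\in [m]} \Hall_{Y_{i,j}\mid \vZ,\vR_{<i}}(y_{i,j}\mid \vz,\vr_{<i})$ exceeds $\vv \kmax + dm - m d/2 > \vv \kmax$, and hence for a uniformly random $J \in [\vv]$ (sampled inside $\tR_0$) the column sum $\sum_{i} \Hall_{Y_{i,J}\mid \vZ,\vR_{<i}}$ exceeds $\kmax$ with probability at least $(\eps - m \cdot 2^{-d/2})/\vv$. Since this column sum is exactly $\Haccsam_\Gs(T_\Gs(1^n))$, we obtain $\Pr[\Haccsam_\Gs(T_\Gs(1^n)) > \kmax] \ge (\eps - m \cdot 2^{-d/2})/\vv$. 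The bound on $\Gc$'s accessible max-entropy forces the left-hand side to be negligible, and because $d \in \omega(\log n)$ makes $m\cdot 2^{-d/2}$ negligible while $\vv$ is polynomially bounded, $\eps$ must be negligible too, contradicting our assumption.

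The main obstacle (and the only real difference from the average-entropy proof) is the sub-additivity step: whereas the chain rule for Shannon entropy is an equality on the nose, the corresponding bound for sample-entropy in \cref{prop:subaddativiy2} fails with some slack probability, so we pay an additive $m \cdot d/2$ in the sample-entropy estimate and an additive $m \cdot 2^{-d/2}$ in the failure probability. The slack term $d \in \omega(\log n)$ in the statement is chosen exactly to absorb this loss while keeping the resulting failure probability negligible; the preprocessing block in the definition of online generators is what allows $\Gs$ to charge the $J$ and $\vZ_{-J}$ sampling to $\tR_0$ rather than to the sample-entropy account.
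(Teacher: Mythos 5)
Your proof follows essentially the same approach as the paper's: the same cheating generator $\Gs$ that uses its preprocessing randomness $\tR_0$ to hold $J$ and $\vZ_{-J}$ so that they do not count toward accessible sample-entropy, the same application of the sample-entropy sub-additivity lemma (\cref{prop:subaddativiy2}) per round with a negligible slack event, and the same averaging over the $\vv$ columns to extract a single column contributing more than $\kmax$. The only nontrivial difference is that you carry the $1/\vv$ factor through the averaging step (the paper's \cref{eq:PrAccEntMax} and its consequence appear to silently drop it, stating $\ge \eps - \negl(n)$ where $\ge (\eps-\negl(n))/\vv$ is what the averaging gives), which is actually the more careful reading; since $\vv$ is polynomially bounded the conclusion is unaffected, so both arrive at the same contradiction.
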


 As in the truncated sequential repetition \cref{lem:EqRealEntMaxEntropy}, working with the less friendly measure of accessible max-entropy costs us an additional super-logarithmic loss , a price we do not pay when working with  average accessible entropy.

\begin{proof}
	 Let $\Gb = \Gt$, let $\Gbs$  be  an efficient  $\Gb$-consistent generator, and let
	\begin{align}
	\eps = \eps(n) =  \ppr{\vt \getsr  \Tbs}{\Haccsam_{\Gbs}(\vt) > \kmaxp}
	\end{align}
	for $\Tbs = T_{\Gbs}(1^n)$. 	Our goal is to show that $\eps$ is negligible in $n$.
	
	Let $(\vZ,R_0,R_1,Y_1,\ldots,R_m,Y_m)= \Tbs$.  Recall that  for $\vt= (\vz,r_0,r_1,y_1,\ldots,r_m,y_m)\in \Supp(\Tbs)$, we  have defined
	\begin{align}
	\Haccsam_{\Gbs}(\vt) =\sum_{i\in [m]} \Hall_{Y_i \mid \vZ.R_{<i}}(y_i \mid \vz,r_{<i})
	\end{align}
	Since $\Gbs$ is   $\Gb$-consistent, each  $Y_i$ is of the form $(Y_{i,1},\ldots, Y_{i,\vv})$. Hence \cref{prop:subaddativiy2}, taking $\vX=Y_i|_{\vZ= \vz, R_{<i}=r_{<i}}$ yields that 
	\begin{align}
	\ppr{\vt = (\vz,r_0,r_1,y_1,\ldots,r_m,y_m) \getsr\Tbs}{\Hall_{Y_i|\vZ,R_{<i}}(y_i|\vz,r_{<i}) >d + \sum_{j=1}^\vv \Hall_{Y_{i,j}|\vZ,R_{<i}}(y_{i,j}|\vz,r_{<i}) }  \le 2^{-d}  = \negl(n)
	\end{align}
	for every $i\in [m]$. Summing over all $i\in[m]$, we get that
	\begin{align}
	\ppr{\vt = (\vz,r_0,r_1,y_1,\ldots,r_m,y_m) \getsr\Tbs}{\sum_{i\in [m]}\Hall_{Y_i|\vZ,R_{<i}}(y_i|\vz,r_{<i}) >m d + \sum_{i\in [m]}\sum_{j\in [\vv]} \Hall_{Y_{i,j}|\vZ,R_{<i}}(y_{i,j}|\vz,r_{<i}) } = \negl(n)
	\end{align}
	It follows that 
	\begin{align*}
	\ppr{\vt = (\vz,r_0,r_1,y_1,\ldots,r_m,y_m) \getsr\Tbs}{\sum_{i\in [m]}\sum_{j\in [\vv]}\Hall_{Y_{i,j}|\vZ,R_{<i}} (y_{i,j}|\vz,r_{<i})  \ge  \kmaxp - m\cdot d} \ge \eps - \negl(n),
	\end{align*}
	and therefore
	\begin{align}\label{eq:PrAccEntMax}
	\ppr{\vt = (\vz.r_0,r_1,y_1,\ldots,r_m,y_m) \getsr\Tbs, j\getsr [\vv]}{\sum_{i\in [m]} \Hall_{Y_{i,j}|\vZ,R_{<i}}(y_{i,j}|\vz,r_{<i}) >\kmax  = (\kmaxp - m \cdot d)/\vv} \ge \eps - \negl(n)
	\end{align}
	
	\PRGen
	
	Let $(Z,R_0'= (R_0,J,\vZ_{-J}),R_1',Y_1',\ldots,R_m',Y_m') = T_\Gs$. It is easy to verify that
	\begin{align}
	\Hall_{Y_i'|Z,R'_{<i}}(y_i| z,(r_0,j,\vz_{-j}),r_1,\ldots,r_{i-1}) = \Hall_{Y_{i,j}|\vZ,R_{<i}}(y_i|\vz = (z,\vz_{-j}),r'_{<i}= (r_0'.r_1,\ldots,r_{i-1}))
	\end{align}
	for every $(z,r_0' = (r_0,j,\vz_{-j}),r_1,y_1,\ldots,r_m,y_m) \in \Supp(T_\Gs)$ and $1 < i \le m$.
	Thus, \cref{eq:PrAccEntMax} yields  that
	\begin{align*}
	\ppr{\vt \getsr  T_\Gs} {\Haccsam_\Gs(\vt) > \kmax} \ge \eps - \negl(n).
	\end{align*}
	The assumption about the inaccessible entropy of $\Gc$ yields that $\eps$ is negligible in $n$, and the proof of the lemma follows.
\end{proof}

\remove{

\subsection{Working with max entropy}
\Inote{To be removed}
\begin{theorem}[Inaccessible entropy generator to  statistically hiding commitment, max entropy version]\label{theorem:GapToComMax}
	Let $k= k(n)$, $\pp=\pp(n)$, $s=s(n)$, and $\delta= \delta(n)$ be  polynomial-time computable functions. Let  $\Gc$ be an efficient  $m=m(n)$-block generator over $\zo^{\pp} \times \zo^{s}$, and assume that  $\Gc$'s real Shannon entropy is at least $k$, that  its accessible max-entropy is at most $(1-\delta)\cdot k$, and that  $k\delta \in  \omega(\log n/n)$. Then  for any polynomial-time computable $g= g(n) \in \omega(\log n)$ with  $g \ge \Hmax(G(U_\pp,U_s))$, there exists an  $O(m \cdot g/\delta  k)$-round,  receiver public-coin, statistically hiding and computationally binding commitment scheme. Furthermore, the construction is black box, and on security parameter $1^n$, the commitment invokes $\Gc$ on inputs of length $s(n)$.\footnote{Given a, per $n$, polynomial-size advice, the commitment round complexity can be reduced to $O(m)$. See \cref{remark:constantRound} for  details.} 
\end{theorem}

\begin{proof}[Proof of \cref{theorem:GapToComMax}]
	We prove \cref{theorem:GapToCom} by  manipulating the real and accessible entropy of  $\Gc$ using the tools described in \cref{sec:manipulatingAE}, and then applying \cref{lemma:GapToCom}  on the resulting generator.

	\paragraph{Truncated sequential repetition: real entropy equalization.}  	In this step we use  $\Gc$ to define  a generator  $\Gc^{\brk\ee}$ whose  \emph{each block has the  same amount} of  real entropy: the average of the real entropy of the blocks of $\Gc$.   In relative terms, the entropy gap of $\Gc^{\brk \ee}$  is essentially  that of $\Gc$.  
	
	We assume \wlg that $m(n)$ is a power of  two.\footnote{Adding $2^{\ceil{\log m(n)}} -m(n)$ final blocks of constant value transforms a  block-generator  to one whose block complexity is a power of two, while maintaining the same amount of real and accessible entropy.} Consider the  efficient  $m'= m'(n)=(\ee-1)\cdot m)$-block generator $\Gc^{\brk \ee}$ resulting by  applying  truncated sequential repetition (see \cref{def:EqRealEnt}) on $\Gc$ with parameter $\ee = \ee(n)=\max\set{4,\ceil{16g/\delta k}} \le \poly(n)$. By \cref{lem:EqRealEnt}, taking $d= g$, it holds that
	
	\begin{itemize}
		\item  \emph{Each  block}  of $\Gc^{\brk \ee}$ has real entropy  at least $k= k'(n)  = k/m$.
		
		\item   The accessible max-entropy   of $\Gc^{\brk \ee}$ is at most  
		\begin{align*}
		a'= a'(n) &= (\ee-2)   \cdot ((1-\delta)\cdot k  +  \log m + 2 g +g\\
		&\le (\ee-2)  \cdot (1-\delta)\cdot k   + 4g\\
		&\le (\ee-2)  \cdot (1-\delta/2)\cdot k - (\ee -2) \cdot k\cdot \delta/2 +  4g\\
		&\le (\ee-2)  \cdot (1-\delta/2)\cdot k  - \ee\cdot k\cdot \delta/4 +  4g\\
		&\le  (\ee-2)  \cdot (1-\delta/2)\cdot k\\
		&<  m'  \cdot (1-\delta/2)\cdot k'.
		\end{align*}
	\end{itemize}

	\paragraph{Direct product : converting real entropy to   min-entropy   and gap amplification.} In this step we use   $\Gc^{\brk \ee}$ to define a generator  $(\Gc^{\brk \ee})^{\seq \vv}$ whose each  block has the same amount of \emph{min-entropy}---about $\vv$ times the per-block entropy of  $\Gc^{\brk \ee}$. The accessible entropy of  $(\Gc^{\brk \ee})^{\seq \vv}$  is also about $\vv$ times that of $\Gc^{\brk \ee}$.  
	
	We assume \wlg  that the  output blocks are of all of the same length $\ell = \ell(n)\in \Omega(\log n)$.\footnote{Using padding technique one can transform a block-generator to one whose all  blocks are of the same length, without changing its real  and its accessible entropy.}   Consider the efficient  $m'$-block generator $(\Gc^{\brk \ee})^{\seq \vv}$ resulting from applying the  gap amplification transformation (see \cref{def:PR}) on $\Gc^{\brk \ee}$ with $\vv = \vv(n) = \max\set{24n/k'\delta,\ceil{c\cdot \left(\log n\cdot \ell)/k'\delta\right)^2}}$, for $c>0$ to be determined by the analysis. By  \cref{lem:PR} it holds that 
	\begin{itemize}
		\item Each block of $(\Gc^{\brk \ee})^{\seq \vv}$ has real \emph{min-entropy}  at least  $k'' = k''(n) =  \vv \cdot k' - O\left(\log (n)\cdot  \ell \cdot  \sqrt{\vv} \right)$.
		
		\item The  accessible max-entropy of $(\Gc^{\brk \ee})^{\seq \vv}$  is at most $a'' = a''(n) =  \vv \cdot a' +   d \cdot m'$, for $d = d(n) = n/8k\delta$.
	\end{itemize}
	Hence for large enough $n$, it holds that 
	\begin{align}
	m' \cdot k'' - a''&\ge  m' \cdot \left(\vv \cdot k' - O\left(\log (n)\cdot  \ell \cdot  \sqrt{\vv} \right)\right) - \left(v\cdot a' +  d\cdot m'\right) \nonumber\\
	&>  m' \cdot \left(\vv \cdot k' - O\left(\log (n)\cdot  \ell \cdot  \sqrt{\vv} \right)\right) - \left(v\cdot (m'  \cdot (1-\delta/2)\cdot k')+  d\cdot m'\right)\nonumber\\
	&\ge  m' \cdot v  \cdot \left(k'\delta/4  -  d/\vv \right)\label{eq:GapToCom:prf:Max:1}\\
	&\ge  m' \cdot v  \cdot k'\delta/8 \nonumber\\
	&\ge 3m' n.\nonumber
	\end{align}
	Inequality (\ref{eq:GapToCom:prf:Max:1}) holds by taking a large enough value of $c$ in the definition of $\vv$. Namely, the overall real entropy of $(\Gc^{\brk \ee})^{\seq \vv}$ is larger than its accessible  max-entropy  by at least $3m' n$. 
	
	By \cref{lemma:IEGtoOWF}, the existence of $\Gc$ implies that of one-way functions. Thus by \cref{thm:pwf-to-uowhf}, its existence  implies the existence of   a universal one-way function from the string whose length if the block length of $(\Gc^{\brk \ee})^{\seq \vv}$  to strings of length $n/4$. Hence, by applying  \cref{lemma:GapToCom} with $(\Gc^{\brk \ee})^{\seq \vv}$,  $k =k''$ and $p =2$, we get  the claimed  $\left(m'= m\cdot (\ee-1) = O(m \cdot g/\delta  k)\right)$-round,  receiver public-coin,  statistically hiding and computationally binding commitment.
\end{proof}
}

\end{document}